\newcommand{\Add}[1]{\textcolor{black}{#1}}	% when submission
\newcommand{\ReviseFurther}[1]{\textcolor{black}{#1}}	% when submission
\newcommand{\ReviseFurtherM}[1]{\textcolor{black}{#1}}	% when submission
\let\csname equation*\endcsname\relax        % This removes the \equation* definition from iopart so that there isn't a clash when amsmath is imported.
\let\csname endequation*\endcsname\relax
\newcommand{\ketbra}[2]{\vert #1 \rangle \langle #2 \vert}
\newcommand{\etal}{\textit{et al}. }
\newcommand{\ie}{\textit{i.e.}}
\newcommand{\eg}{\textit{e.g.}}
\newcommand{\doublewidetilde}[1]{{%
  \mathpalette\double@widetilde{#1}%
}}
\newcommand{\double@widetilde}[2]{%
  \sbox\z@{$\m@th#1\widetilde{#2}$}%
  \ht\z@=.9\ht\z@
  \widetilde{\box\z@}%
}
\newcommand{\map}[1]{\widetilde{\mathcal{#1}}{}}
\newcommand{\smap}[1]{\doublewidetilde{{\mathcal{#1}}_{}}{}}
\newcommand{\choi}{\mathfrak{C}}
\newcommand{\lnsp}[1]{{\mathcal{#1}}}
\DeclareMathOperator*{\bigplus}{\scalerel*{+}{\sum}}
\newtheoremstyle{indented}
  {30pt}% space before
  {30pt}% space after
  {
	\addtolength{\@totalleftmargin}{3.5em}
   \addtolength{\linewidth}{-3.5em}
   \parshape 1 2em \linewidth}% body font
  {-1.5em}% indent
  {\bfseries}% header font
  {.}% punctuation
  {.5em}% after theorem header
  {}% header specification (empty for default)
\theoremstyle{indented}
\newtheorem{theo}{Theorem}
\newtheorem*{main1*}{Main result 1}
\newtheorem*{main2*}{Main result 2}
\newtheorem{defi}{Definition}
\newtheorem{lemm}{Lemma}
\newtheorem{coro}{Corollary}
\newtheorem{prop}{Proposition}
\newcommand{\1}{\mbox{1}\hspace{-0.25em}\mbox{l}}
\newcommand{\relmiddle}[1]{\mathrel{}\middle#1\mathrel{}}
\newcommand{\dket}[1]{\ket{#1} \rangle}
\newcommand{\dbra}[1]{\langle \bra{#1}}
\newcommand{\bradket}[1]{\braket{#1} \rangle}
\newcommand{\tket}[1]{\ket{#1} \rangle \rangle}
\newcommand{\dbratket}[1]{\langle \braket{#1} \rangle \rangle}
\newcommand{\partsp}{reduced subspace}
\newcommand{\Partsp}{Reduced subspace}
\newcommand{\vspan}[1]{\mathop{\mathrm{SPAN}} \left[ #1 \right]}
\DeclareRobustCommand{\redsp}[3]{{{\vphantom{#1}}^{#2}\!\!\left( #1 \right)}}
\DeclareRobustCommand{\redspn}[3]{{{}^{#2}\!{#1}}}
\DeclareMathOperator{\Tr}{Tr}
\DeclareMathOperator{\Ima}{Im}
\DeclareMathOperator{\projop}{Proj}
\DeclareMathOperator{\suppa}{supp}
\def\@mkboth#1#2{}
\newlength\appendixwidth
\preto\appendix{\addtocontents{toc}{\protect\patchl@section}}
\newcommand{\patchl@section}{%
  \settowidth{\appendixwidth}{\textbf{Appendix }}%
  \addtolength{\appendixwidth}{1.5em}%
  \patchcmd{\l@section}{1.5em}{\appendixwidth}{}{\ddt}%
}
\appto\appendix{\addtocontents{toc}{\protect\setcounter{tocdepth}{1}}}
\appto\listoffigures{\addtocontents{lof}{\protect\setcounter{tocdepth}{1}}}
\appto\listoftables{\addtocontents{lot}{\protect\setcounter{tocdepth}{1}}}
\newcommand{\todai}{Department of Physics, Graduate School of Science, The University of Tokyo, Hongo 7-3-1, Bunkyo-ku, Tokyo 113-0033, Japan}
\begin{document}
\title{Consequences of preserving reversibility in quantum superchannels}

\author{Wataru Yokojima}
\affiliation{\todai}
\author{Marco T\'{u}lio Quintino}
\affiliation{\todai}
\author{Akihito Soeda}
\affiliation{\todai}
\author{Mio Murao}
\affiliation{\todai}
\maketitle

\begin{abstract}
	Similarly to quantum states, quantum operations can also be transformed by means of quantum superchannels, also known as process matrices. Quantum superchannels with multiple slots are deterministic transformations which take \Add{independent} quantum operations as inputs\Add{. While they} are enforced to respect the laws of quantum mechanics\Add{,} the use of input operations may lack a definite causal order\Add{\ReviseFurther{, }and characterizations of general superchannels in terms of quantum objects with a physical implementation have been missing.}
	In this paper\ReviseFurther{, }we provide a mathematical characterization for pure superchannels with two slots (also known as bipartite pure processes), which are superchannels preserving the reversibility of quantum operations. We show that the reversibility preserving condition restricts all pure superchannels with two slots to be either a quantum circuit only consisting of unitary operations or a coherent superposition of two \Add{unitary quantum circuits where the two input operations are differently ordered}. The latter may be seen as a generalization of the {quantum switch}, allowing a physical interpretation for pure two-slot superchannels. An immediate corollary is that purifiable bipartite processes cannot violate device-independent causal inequalities.
\end{abstract}

%%%%%%%%%%%%%%%%%%%%%%%%%%%%%%%%%%%%%%%%%%%%%%%%%%%%%%%%%%%%%%%%%%%%%%%%
%%%%%%%%%%%%%%%%%%%%%%%%      NEW    SECTION    %%%%%%%%%%%%%%%%%%%%%%%%
%%%%%%%%%%%%%%%%%%%%%%%%%%%%%%%%%%%%%%%%%%%%%%%%%%%%%%%%%%%%%%%%%%%%%%%%

\section{Introduction} \label{sec:intro}

%%%%%%%%%%%%%%%%%%%%%%%%%%%%%%%%%%%%%%%%%%%%%%%%%%%%%%%%%%%%%%%%%%%%%%%%
%%%%%%%%%%%%%%%%%%%%%%%%      NEW    SECTION    %%%%%%%%%%%%%%%%%%%%%%%%
%%%%%%%%%%%%%%%%%%%%%%%%%%%%%%%%%%%%%%%%%%%%%%%%%%%%%%%%%%%%%%%%%%%%%%%%

Understanding physical transformations between quantum systems is one of the central pillars of quantum mechanics. In quantum information theory, transformations between states are viewed as devices which take a quantum state as inputs and outputs. The properties of such devices, mathematically formalized by quantum channels (deterministic transformations) and quantum instruments (probabilistic transformations), are a research field on \ReviseFurther{their} own and play a fundamental role in quantum information processing \cite{holevobook,wildebook,wolflecturenotes}. Similarly to states, quantum operations and the devices implementing the operations may also be subjected to transformations. In these higher-order operations, quantum operations play the role of inputs \cite{chiribella08,zyczkowski08}. Deterministic higher-order operations are named \textit{quantum superchannels} or \textit{process matrices} \cite{oreshkov11,araujo16}. They are the most general deterministic transformations between multiple \Add{independent} operations allowed by quantum mechanics.

The quantum circuit formalism provides a concrete way to transform quantum operations. For instance, one can prepare a quantum circuit \ReviseFurther{with} a few empty slots where operations may be ``plugged'' to construct a new operation as output, \Add{called a quantum comb} \cite{chiribella07}. \Add{Quantum combs are a special class of superchannels.}
Recently, it has been noted that the standard quantum circuit formalism imposes restrictions on how to transform operations. In particular, the standard quantum circuit formalism has a strict notion of causal order between input operations which recent research suggests to be an \textit{ad hoc} hypothesis \cite{chiribella09,oreshkov11}. Namely, quantum theory admits superchannels which make use of input operations in an indefinite causal order. \Add{Mathematical \ReviseFurther{frameworks} for higher-order quantum theory \ReviseFurther{have} been studied \cite{perinotti17,bisio19,kissinger19}.}

A seminal and illustrative example of a superchannel with indefinite causal order is the quantum switch \cite{chiribella09}.
The quantum switch is a superchannel with two slots that maps a pair of unitary operators $U_A$ and $U_B$ into the unitary operator which is a coherent superposition of operator $U_A$ before $U_B$ with the operator $U_B$ before $U_A$.
The set of superchannels with indefinite causal order is in principle compatible with quantum mechanics and is not restricted to elements equivalent to the quantum switch.
This set includes superchannels which allow device-independent indefinite causal order certification \cite{oreshkov11} and cannot be decomposed \ReviseFurther{into a} simple coherent superposition of ordered circuits. Hence, a possible physical interpretation for arbitrary superchannels is even less obvious than the quantum switch.

In addition to the fundamental interest, quantum superchannels with indefinite causal order have \ReviseFurther{been} shown to be \ReviseFurther{valuable resources} for information processing tasks.
The quantum switch has found applications in several tasks such as discriminating channels with memory \cite{chiribella11}, reducing complexity of quantum computing \cite{araujo14}, exponential reduction of certain communication costs \cite{guerin16},  improving classical and quantum communication \cite{ebler18,salek18}, and semi-device-independent certification of indefinite causal order \cite{bavaresco19}. The quantum switch has also found limitations in the tasks of device-independent indefinite causal order certification \cite{oreshkov11,araujo15} and transforming unitary operations \cite{quintino19a,quintino19b} where only non-switch indefinite quantum superchannels display an advantage.

References \cite{procopio15,rubino17,goswami18} have exploited quantum interferometry to obtain experimental implementations of the quantum switch, and although there is an ongoing  debate on what would be a ``fair implementation'' of the quantum switch, the physical implementation/interpretation and the mathematical structure of the quantum switch are simpler than general superchannels with indefinite causal order \cite{oreshkov2018timedelocalized,araujo16}. For instance, there is no simple classification or universal procedure to analyze general superchannels.

Reversibility is a key concept in several physical theories.
Motivated by the role of reversibility in physical theories, Ref.\,\cite{araujo16}
introduced the definition of pure superchannels as superchannels preserving the reversibility of input operations. More formally, a superchannel is said to be pure if it transforms \Add{independent} unitary channels
into a unitary channel.
The authors proposed a purification postulate stating that superchannels which cannot be purified do not have a fair physical implementation. Reference\,\cite{oreshkov2018timedelocalized} shows that pure two-slot superchannels have a realization on suitably defined ``time-delocalized'' subsystems and Ref.\,\cite{guerin18} proves an equivalence between pure processes and multilinear maps that admit a description in terms of consistent causal reference frames.

In this paper we analyze the restrictions imposed by the reversibility preserving property of superchannels and present a simple decomposition for this class of processes. We show that pure quantum superchannels with two slots can be divided \ReviseFurther{into} two cases: \Add{1) quantum combs realizable only with unitary operations (pure combs); 2) coherent superposition of two pure combs of which input operations are differently ordered}, which may be seen as a generalization of the quantum switch\footnote{\ReviseFurther{Related results are recently presented in Ref.\,\cite{lorenz2020causal} and Ref.\,\,\cite{barrett20} by Barrett \etal using different proof techniques. While they employ techniques developed to investigate quantum common causes and quantum causal models \cite{allen2016quantum}, independently developed alternative techniques to specifically analyze quantum superchannels are used in this paper and the proofs are self-contained. See \nameref{sec:NoteAdded} in detail.}}. \ReviseFurther{As case 1) is a special case of case 2) where one of the two pure combs is trivial, we present the result in the form of case 2) in the following sections.} Our characterization provides a potential interpretation for every pure superchannel with two slots.

This paper is structured as follows. In Sec.\,\ref{sec:overview} we present a non-technical overview of key concepts on quantum superchannels and state our main results.
In Sec.\,\ref{sec:preliminary} we review mathematical definitions of quantum superchannels required for the formal statement of our results. In Sec.\,\ref{sec:main} we present the formal statement of our main results and a few direct implications.
{We conclude in Sec.\,\ref{sec:discussions}.}
In Appendix \ref{secApp:notation} we present relevant properties for decomposing linear spaces.
{Appendix \ref{secApp:ProofOfPureCombDecomposition} presents a proof of Thm.\,\ref{thm:PureCombDecomposition}. Appendix \ref{secApp:ProofsOfpreparation} and Appendix \ref{secApp:idea} present a proof of Thm.\,\ref{thm:equivalent}.}

%%%%%%%%%%%%%%%%%%%%%%%%%%%%%%%%%%%%%%%%%%%%%%%%%%%%%%%%%%%%%%%%%%%%%%%%
%%%%%%%%%%%%%%%%%%%%%%%%      NEW    SECTION    %%%%%%%%%%%%%%%%%%%%%%%%
%%%%%%%%%%%%%%%%%%%%%%%%%%%%%%%%%%%%%%%%%%%%%%%%%%%%%%%%%%%%%%%%%%%%%%%%

\section{Non-technical overview of concepts and main results} \label{sec:overview}

%%%%%%%%%%%%%%%%%%%%%%%%%%%%%%%%%%%%%%%%%%%%%%%%%%%%%%%%%%%%%%%%%%%%%%%%
%%%%%%%%%%%%%%%%%%%%%%%%      NEW    SECTION    %%%%%%%%%%%%%%%%%%%%%%%%
%%%%%%%%%%%%%%%%%%%%%%%%%%%%%%%%%%%%%%%%%%%%%%%%%%%%%%%%%%%%%%%%%%%%%%%%

In this \ReviseFurther{section,} we exploit the quantum circuit model to illustrate fundamental concepts and state our main results. In quantum circuits, wires describe quantum systems and boxes describe \textit{quantum channels} which are general deterministic quantum operations%
\footnote{We stress that, unless explicitly specified, channels represented by white boxes should not be assumed to be unitary. In Figure \ref{fig:stinespring} we present a connection between general quantum channels and unitary channels.}
(Figure \ref{fig:circuitChannel}). Wires take quantum states through boxes from left to right.
\begin{figure}[H]
	\centering \includegraphics[keepaspectratio, scale=0.30]{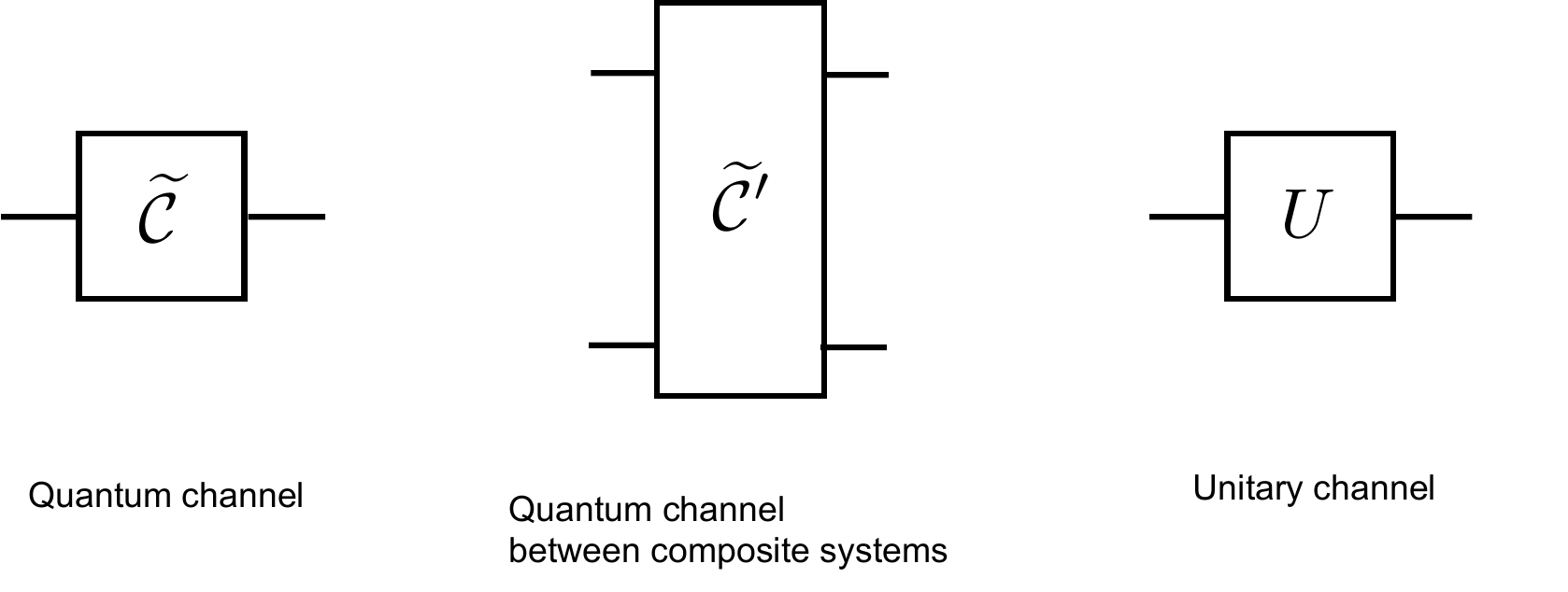}
	\caption{Circuit representation of quantum channels.}
	\label{fig:circuitChannel}
\end{figure}
We represent quantum states by putting the semi-circular objects from the left and discarding systems by putting the inverted semi-circular objects from the right (Figure \ref{fig:circuitState}).
\begin{figure}[H]
	\centering \includegraphics[keepaspectratio, scale=0.30]{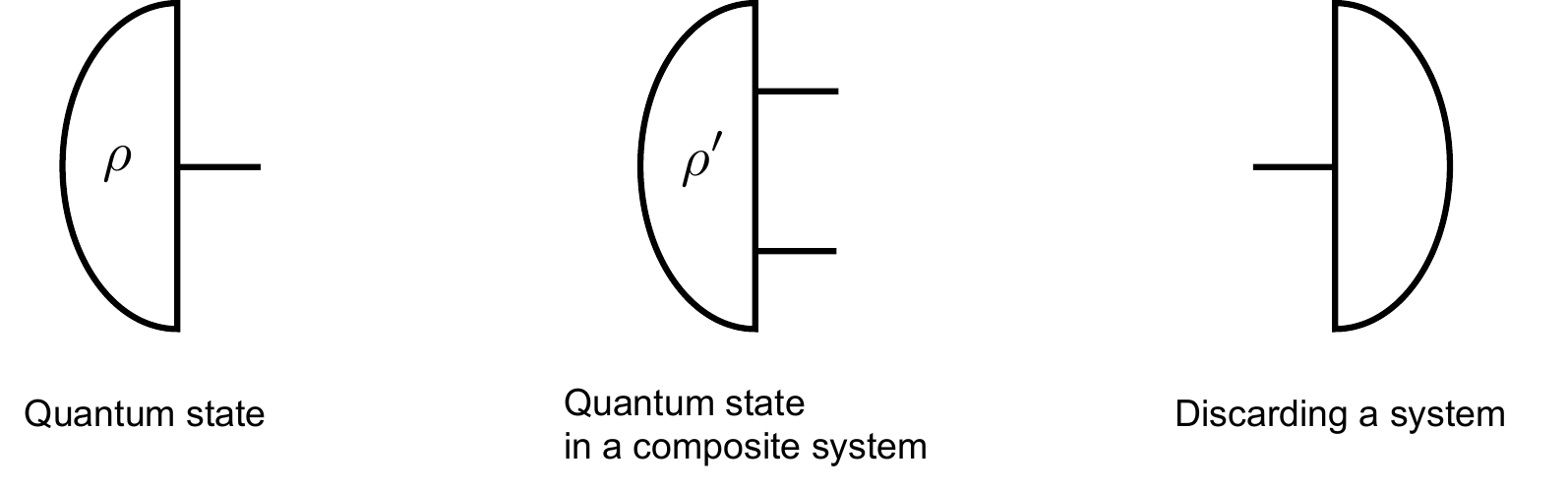}
	\caption{Circuit representation of quantum states and the process of discarding a quantum system. }
	\label{fig:circuitState}
\end{figure}

An important result in characterizing quantum operation is given by the Stinespring dilation \cite{stinespring55}, a theorem which proves that every quantum channel can be realized by a unitary channel with an auxiliary state and possibly discarding a part of the quantum system (see Figure \ref{fig:stinespring}).
\begin{figure}[H]
	\centering \includegraphics[keepaspectratio, scale=0.30]{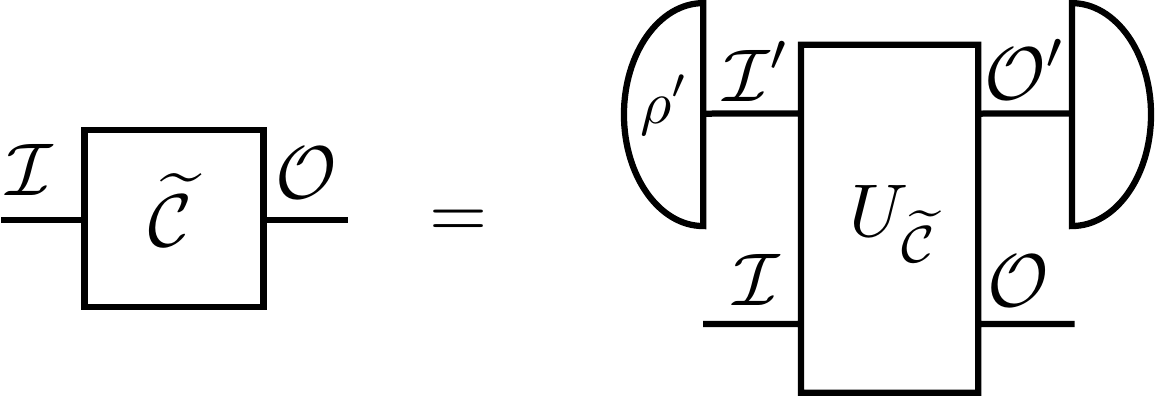}
	\caption{Stinespring dilation: every quantum channel can be realized by a unitary channel combined with an auxiliary state and possibly discarding a part of the quantum system. }
	\label{fig:stinespring}
\end{figure}

In addition to these components, we use colored boxes to represent abstract \textit{superchannels}, which are  deterministic higher-order quantum operations that can transform quantum channels, and boxes with color white in figures represent quantum channels. Strictly speaking, general \ReviseFurther{multiple-slot} superchannels may not be valid components of quantum circuits, but they are depicted as if so \ReviseFurther{for convenience}.

%%%%%%%%%%%%%%%%%%%%%%%%%%%%%%%%%%%%%%%%%%%%%%%%%%%%%%%%%%%%%%%%%%%%%%%%
%%%%%%%%%%%%%%%%%%%%%%%%      NEW SUBSECTION    %%%%%%%%%%%%%%%%%%%%%%%%
%%%%%%%%%%%%%%%%%%%%%%%%%%%%%%%%%%%%%%%%%%%%%%%%%%%%%%%%%%%%%%%%%%%%%%%%

\subsection{Causally ordered superchannels}  \label{sec:aboutCausallyOrdered}

%%%%%%%%%%%%%%%%%%%%%%%%%%%%%%%%%%%%%%%%%%%%%%%%%%%%%%%%%%%%%%%%%%%%%%%%
%%%%%%%%%%%%%%%%%%%%%%%%      NEW SUBSECTION    %%%%%%%%%%%%%%%%%%%%%%%%
%%%%%%%%%%%%%%%%%%%%%%%%%%%%%%%%%%%%%%%%%%%%%%%%%%%%%%%%%%%%%%%%%%%%%%%%

\begin{itemize}
	\item A \textit{quantum superchannel with one slot} is a linear supermap which maps quantum channels into quantum channels (even if the supermap acts on a subsystem of the input quantum channels)  (Figure \ref{fig:superchan}).
	      \begin{figure}[H]
		      \centering \includegraphics[keepaspectratio, scale=0.30]{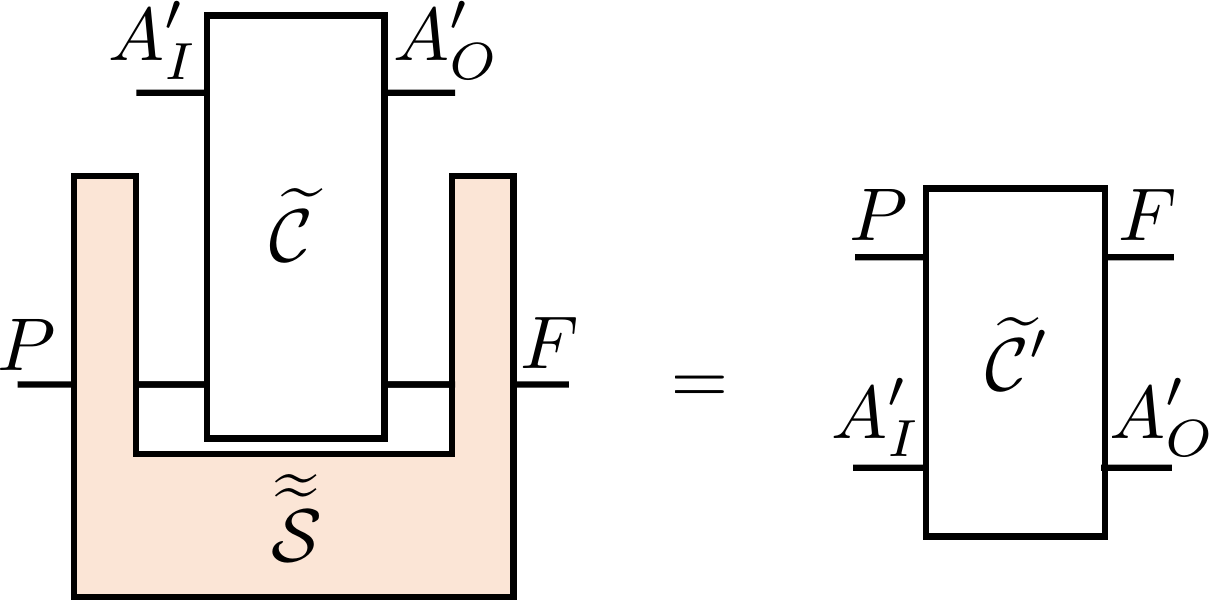}
		      \caption{$\smap{S}$ is a quantum superchannel with one slot, which maps a quantum channel $\map{C}$ into a quantum channel $\map{C}^\prime$.}
		      \label{fig:superchan}
	      \end{figure}
	      It has been shown that any quantum superchannel with one slot can be realized by a quantum circuit (Figure \ref{fig:supermap=2comb}) \cite{chiribella08}.
	      \begin{figure}[H]
		      \centering \includegraphics[keepaspectratio, scale=0.30]{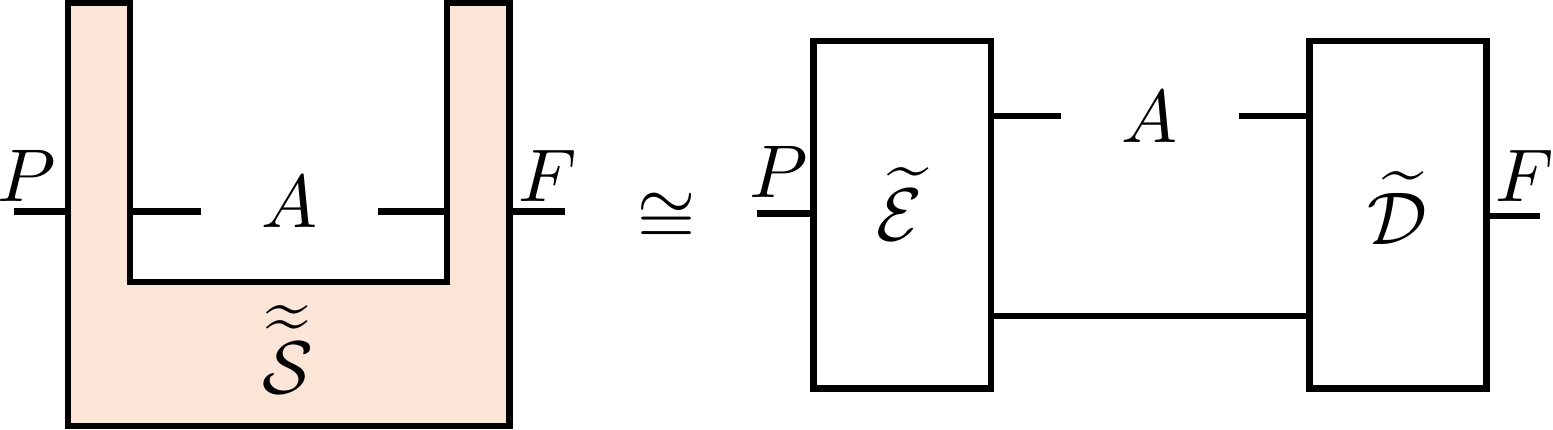}
		      \caption{Every quantum superchannel $\smap{S}$ with one slot is realized with quantum channels $\map{E}, \map{D}$.}
		      \label{fig:supermap=2comb}
	      \end{figure}

	      A quantum superchannel with one slot is \textit{pure} if the superchannel maps unitary channels into unitary channels (even if the superchannel acts on subsystems of the input unitary channels) (Figure \ref{fig:pureSuperchan}).
	      \begin{figure}[H]
		      \centering \includegraphics[keepaspectratio, scale=0.30]{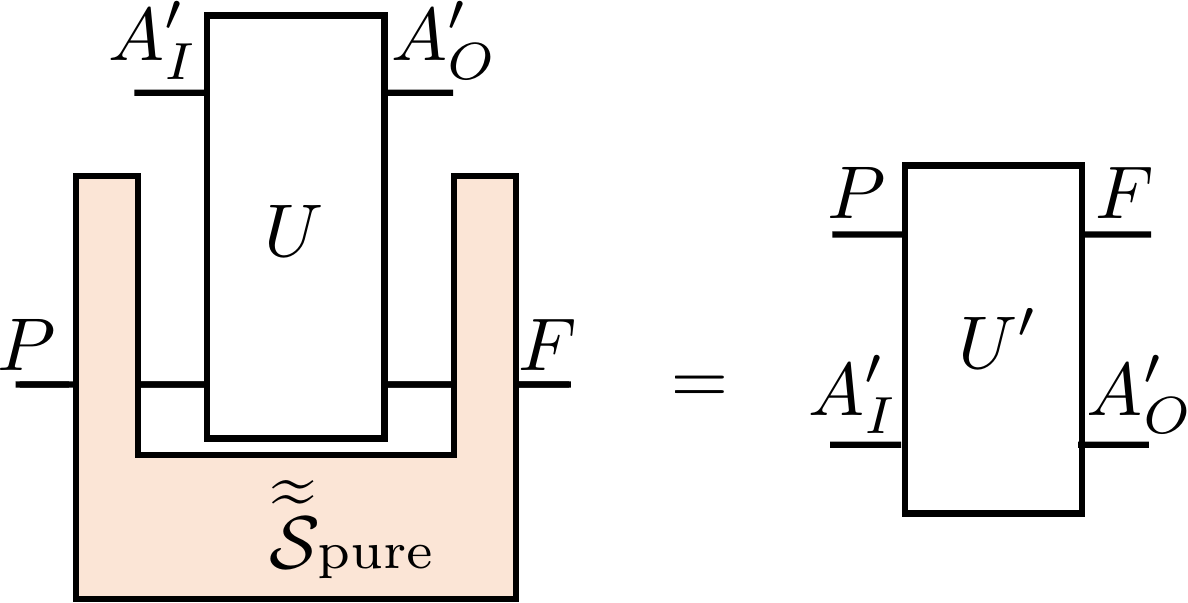}
		      \caption{$\smap{S}_{\mathrm{pure}}$ is a pure superchannel, which maps a unitary channel $U$ into a unitary channel $U^\prime$.}
		      \label{fig:pureSuperchan}
	      \end{figure}

	\item \Add{A \textit{quantum comb} \cite{chiribella07} (see also \cite{kretschmann05,gutoski07}) is a superchannel that can be realized by a quantum circuit (Figure \ref{fig:quantum_comb}).}
	      A quantum comb with no slots is a quantum channel and a quantum comb with $N$ slot(s) is a linear supermap which maps every quantum comb with $N-1$ slot(s) into a quantum channel (Figure \ref{fig:comb_maps_comb_to_comb}).
	      \Add{We call quantum combs \textit{causally ordered} superchannels.}
	      \begin{figure}[H]
		      \centering \includegraphics[keepaspectratio, scale=0.3]{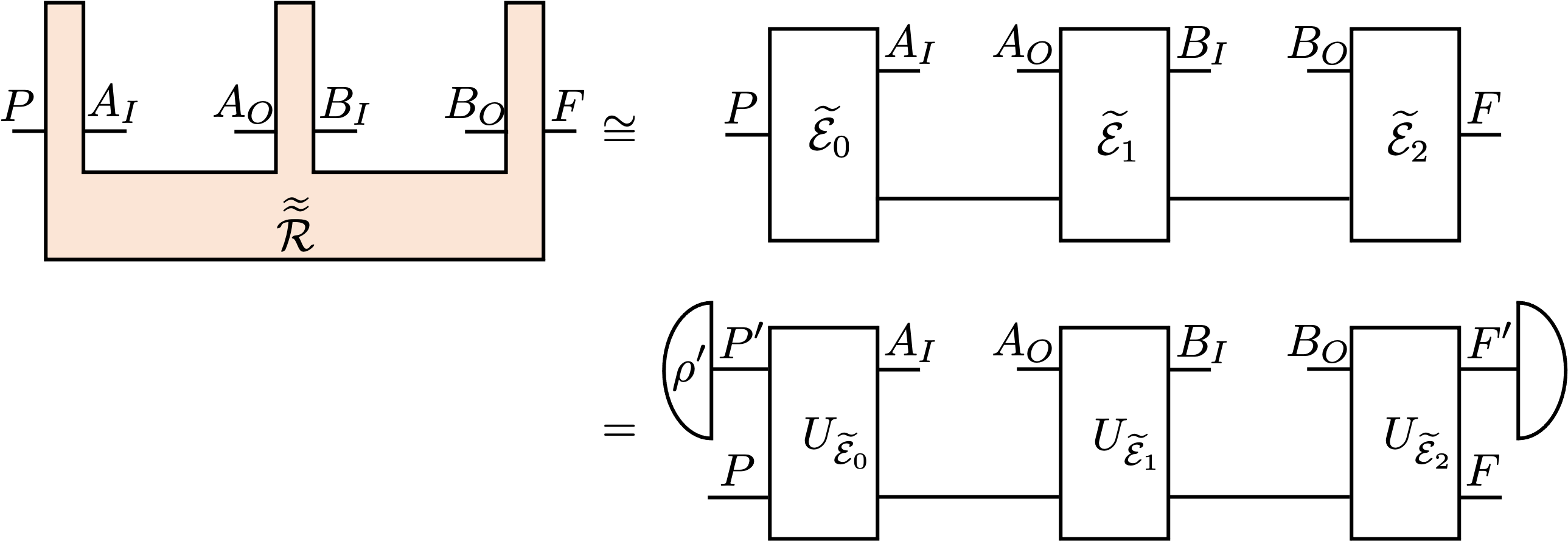}
		      \caption{Every quantum comb can be realized by quantum circuits. This picture shows a two-slot comb $\smap{R}$ which can be realized by quantum channels $\map{E}_0, \map{E}_1,$ and $\map{E}_2$, which are not necessarily unitary. When additional auxiliary states and discarding subsystems are allowed, the Stinespring dilation (see Figure \ref{fig:stinespring}) ensures every quantum circuit can be realized by unitary channels.}
		      \label{fig:quantum_comb}
	      \end{figure}
	      \begin{figure}[H]
		      \centering \includegraphics[keepaspectratio, scale=0.30]{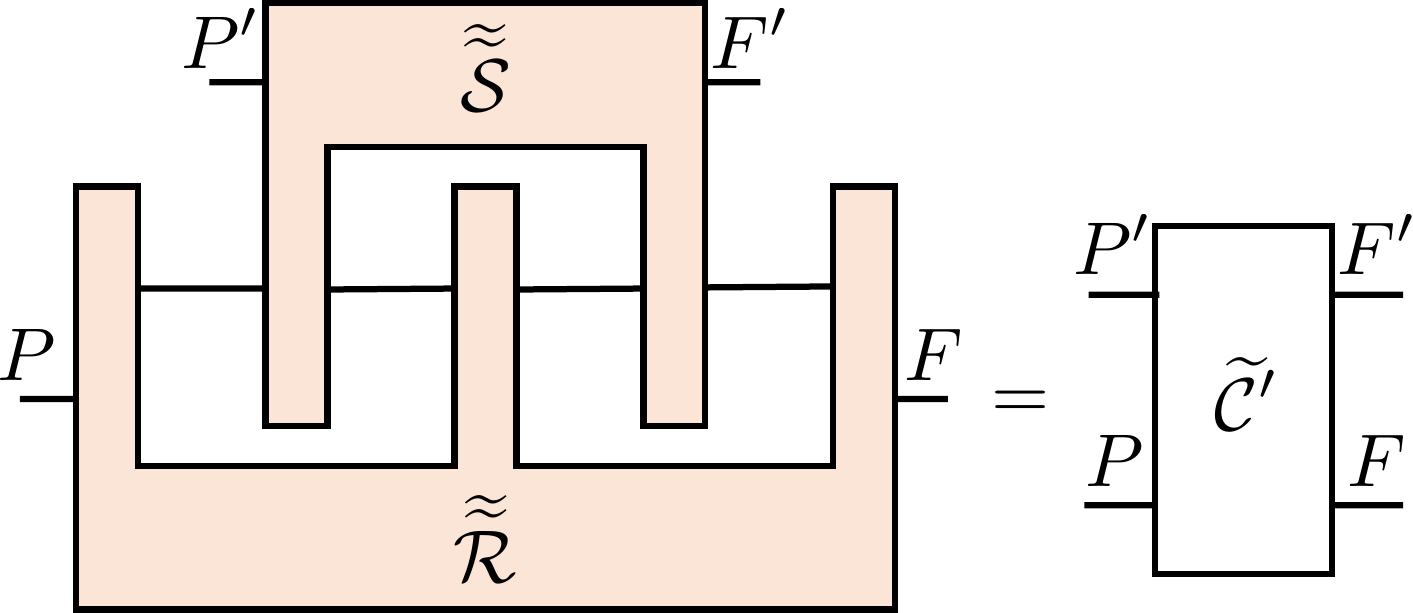}
		      \caption{$\smap{R}$ is a quantum comb with two slots, which maps a quantum comb $\smap{S}$ with a slot into a quantum channel $\map{C}^\prime$.}
		      \label{fig:comb_maps_comb_to_comb}
	      \end{figure}

	      A quantum comb is \textit{pure} if the quantum comb maps a group of \Add{independent} unitary channels into unitary channels (even if the comb acts on subsystems of the input unitary channels) \Add{\cite{araujo16}}.
\end{itemize}

\begin{main1*} (Thm.\,\ref{thm:PureCombDecomposition})
	Every pure quantum comb, independently of the number of slots, can be realized by a quantum circuit containing only unitary channels without any need of auxiliary systems \Add{in global past} or discarding of subsystems \Add{in global future} (Figure \ref{fig:f89ga89hujap}).
\end{main1*}
\begin{figure}[H]
	\centering \includegraphics[keepaspectratio, scale=0.30]{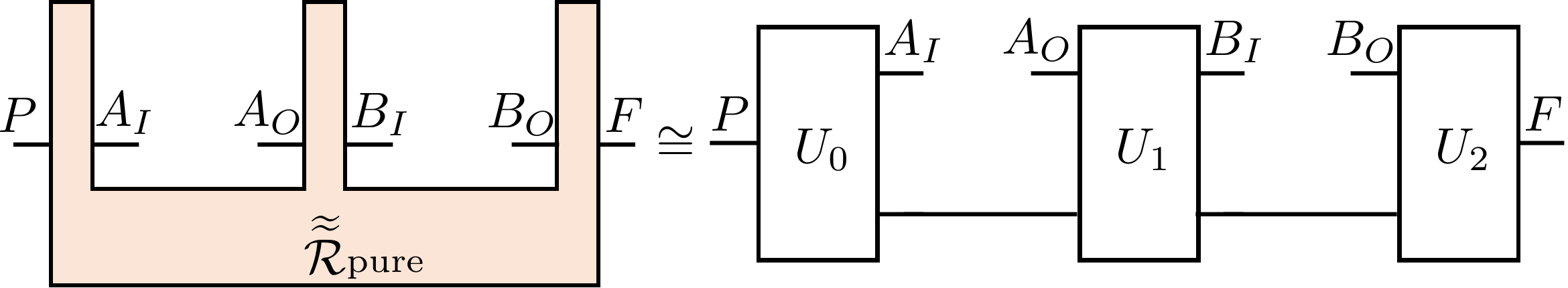}
	\caption{Every pure quantum comb with two slots $\smap{R}_{\mathrm{pure}}$  can be realized by a quantum circuit with some unitary channels $U_0$, $U_1$, and $U_2$ and does not require discarding part of the quantum system. This figure should be contrasted with Figure \ref{fig:quantum_comb}, which presents a circuit construction for general quantum combs and makes use of auxiliary systems $P'$ and $F'$. From this perspective, a quantum comb is pure if and only if it can be realized by quantum circuits with unitary channels  $U_0$, $U_1$, and $U_2$  without making use of the auxiliary systems $P'$ and $F'$. }
	\label{fig:f89ga89hujap}
\end{figure}
We emphasize that the dimensions of systems $P$, $A_I$, $A_O$, $B_I$, $B_O$, and $F$ represented in Figure \ref{fig:f89ga89hujap} are not necessarily the same. Note that, since the comb $\smap{R}_{\mathrm{pure}}$ {is represented by a sequence of unitary operators as} in Figure \ref{fig:f89ga89hujap}, these dimensions should satisfy certain constraints. {For instance, when $\dim A_I = \dim A_O (=: d_A)$ and $\dim B_I = \dim B_O (=: d_B)$,}
we must have $\dim P=\dim F$.
Moreover, $\dim P$ must be a multiple of {$d_A$} and $\dim F$ must be a multiple of {$d_B$}. In particular, when the dimension of systems $P$, $A_I$, $A_O$, $B_I$, $B_O$, and $F$ are all equivalent, pure combs can be represented without any additional wires, see Figure \ref{fig:pureCombWithSameDimension} and Cor.\,\ref{cor:dddanddd2d} (1).

\begin{figure}[H]
	\centering \includegraphics[keepaspectratio, scale=0.30]{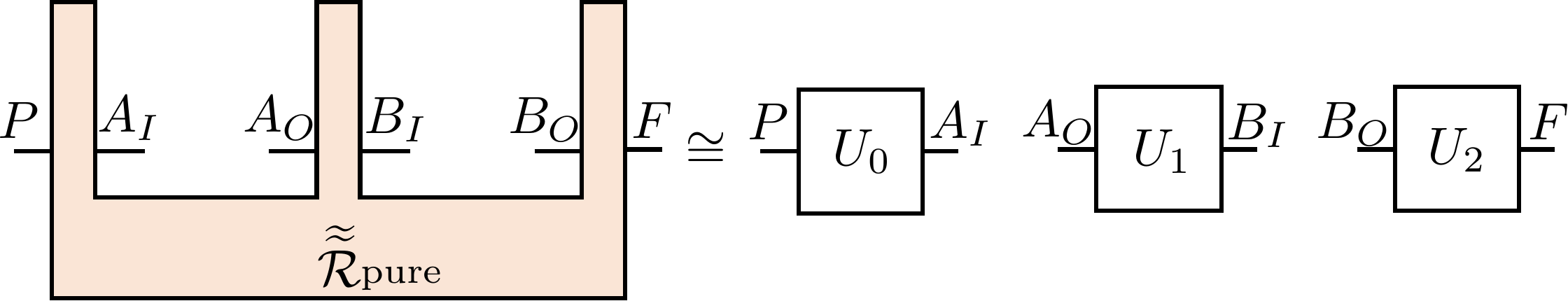}
	\caption{{When the dimensions of involved systems satisfy $\dim P = \dim A_I = \dim A_O = \dim B_I = \dim B_O = \dim F$, the pure comb $\smap{R}_{\mathrm{pure}}$ can be realized by a quantum circuit with some unitary channels $U_0$, $U_1$, and $U_2$ without any additional wires and without discarding any quantum system.}}
	\label{fig:pureCombWithSameDimension}
\end{figure}
%%%%%%%%%%%%%%%%%%%%%%%%%%%%%%%%%%%%%%%%%%%%%%%%%%%%%%%%%%%%%%%%%%%%%%%%
%%%%%%%%%%%%%%%%%%%%%%%%      NEW SUBSECTION    %%%%%%%%%%%%%%%%%%%%%%%%
%%%%%%%%%%%%%%%%%%%%%%%%%%%%%%%%%%%%%%%%%%%%%%%%%%%%%%%%%%%%%%%%%%%%%%%%

\subsection{General superchannels with two slots}
\label{sec:aboutGeneral}

%%%%%%%%%%%%%%%%%%%%%%%%%%%%%%%%%%%%%%%%%%%%%%%%%%%%%%%%%%%%%%%%%%%%%%%%
%%%%%%%%%%%%%%%%%%%%%%%%      NEW SUBSECTION    %%%%%%%%%%%%%%%%%%%%%%%%
%%%%%%%%%%%%%%%%%%%%%%%%%%%%%%%%%%%%%%%%%%%%%%%%%%%%%%%%%%%%%%%%%%%%%%%%

\begin{itemize}
	\item A \textit{quantum superchannel} with $N$ slots ($N \geq 0$)
	      is a linear supermap which maps every group of $N$ \Add{independent} quantum channel(s) into a quantum channel (even if the supermap acts on subsystems of the input channels) (Figure \ref{fig:processt4qg}).
	      \begin{figure}[H]
		      \centering \includegraphics[keepaspectratio, scale=0.30]{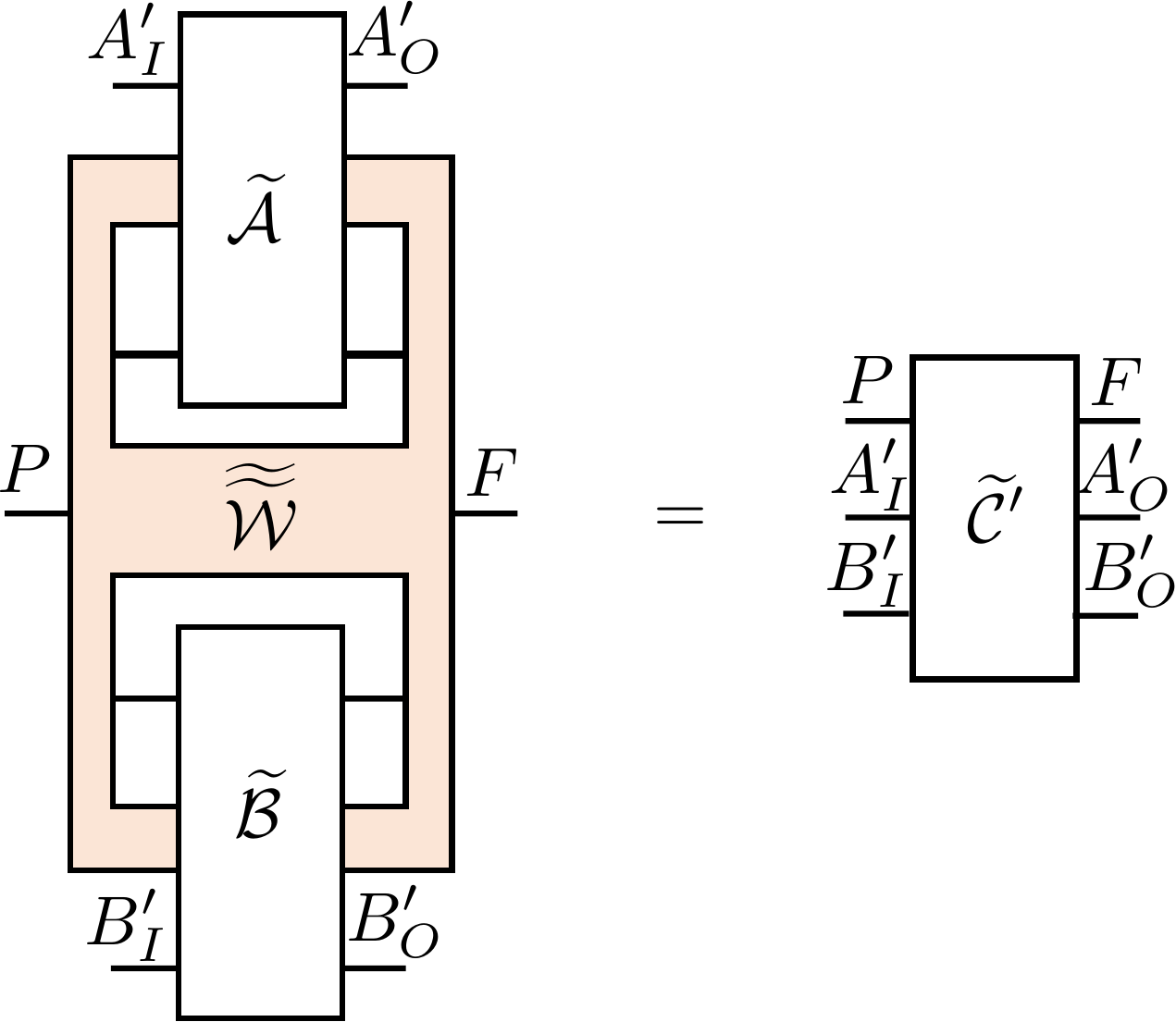}
		      \caption{$\smap{\mathcal{W}}$ represents a quantum superchannel with two slots. A two slot superchannel maps any pair of \Add{independent} input-channels represented by the tensor product $ \map{\mathcal{A}} \otimes \map{\mathcal{B}} $ into an output-channel $\map{\mathcal{C}}$. The shape of $\smap{W}$ in this figure illustrates that $\smap{W}$ may have indefinite causal order.}
		      \label{fig:processt4qg}
	      \end{figure}
	\item A \textit{quantum switch} is a quantum superchannel with two slots whose causal order is coherently controlled by a control qubit (Figure \ref{fig:qswitch_uaub}). It is defined to transform any pair of two \Add{independent} unitary channels represented by unitary operators $U_A$ and $U_B$ into a unitary channel represented by $\ketbra{0}{0} \otimes U_B U_A + \ketbra{1}{1} \otimes U_A U_B \Add{=U_BU_A\oplus U_AU_B}$\footnote{\Add{$\oplus$ denotes the orthogonal direct sum of linear subspaces. See Appendix \ref{secApp:linearAlgebra}.}}.
	      \begin{figure}[H]
		      \centering \includegraphics[keepaspectratio, scale=0.30]{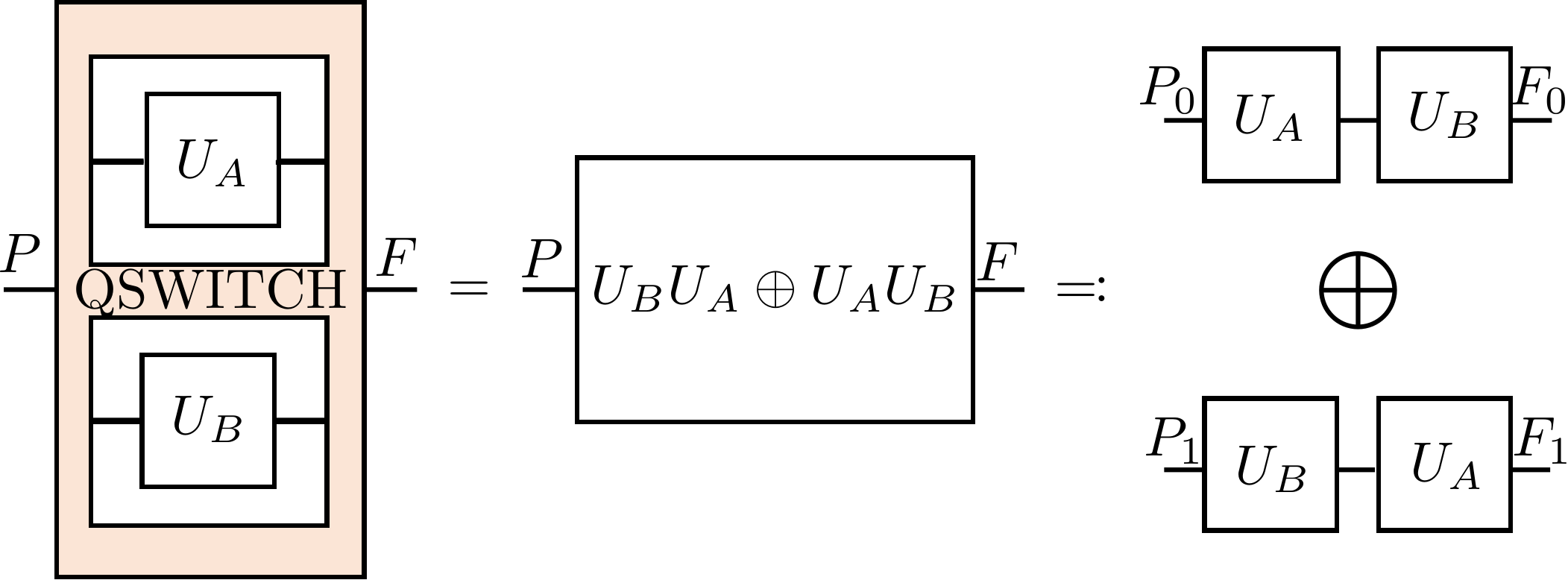}
		      \caption{$\mathrm{QSWITCH}$ is a quantum switch, which maps a pair of two \Add{independent} unitary channels represented by unitary operators $U_A$ and $U_B$ into a unitary channel represented by
			      $\ketbra{0}{0} \otimes U_B U_A + \ketbra{1}{1} \otimes U_A U_B = U_B U_A \oplus U_A U_B$. On the right hand side of this image, $ U_B U_A \oplus U_A U_B$ is illustrated as a direct sum of two circuits
			      where $P = P_0 \oplus P_1$ and $F = F_0 \oplus F_1$.}
		      \label{fig:qswitch_uaub}
	      \end{figure}

	\item A quantum superchannel is \textit{pure} if the quantum superchannel maps every group of \Add{independent} unitary channels into unitary channels (even if the superchannel acts on subsystems of the input unitary channels).

	      It has been shown that every pure superchannel corresponds to a unitary operator \cite{araujo16} (Figure \ref{fig:pureprocess_unitary}).
	      \begin{figure}[H]
		      \centering \includegraphics[keepaspectratio, scale=0.30]{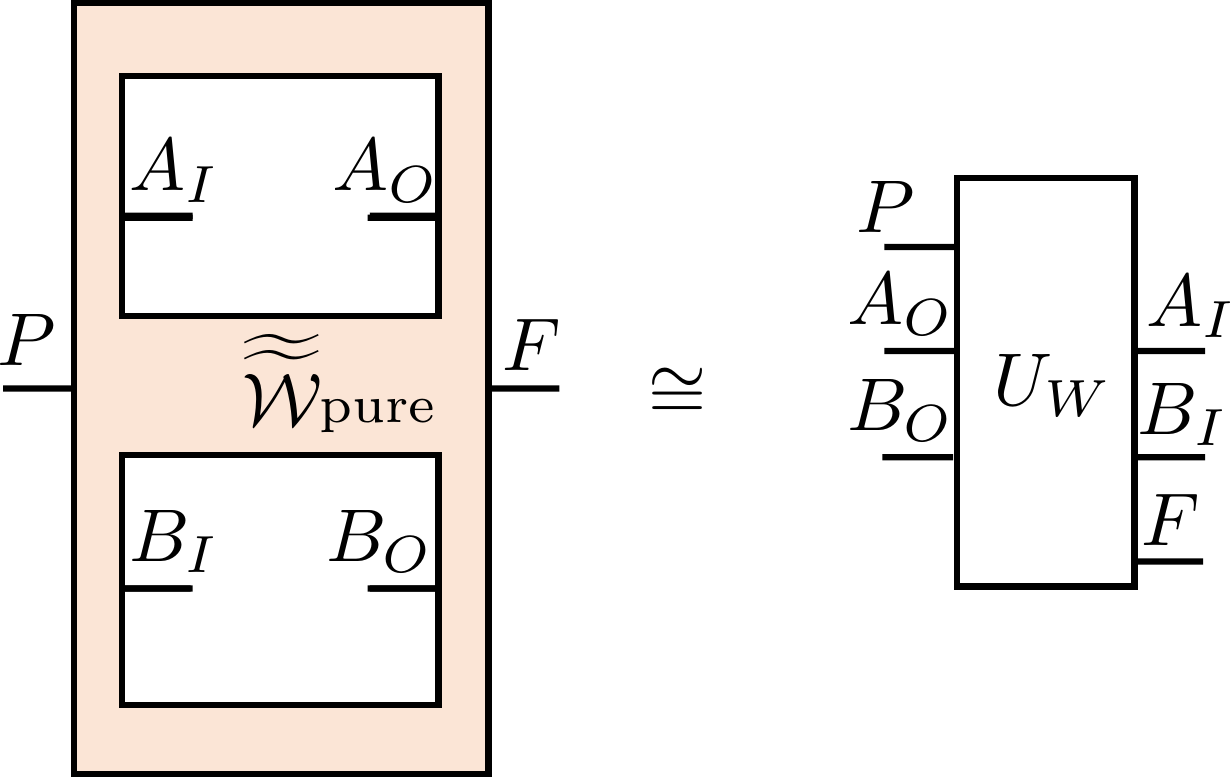}
		      \caption{$\smap{W}_{\mathrm{pure}}$ is a pure superchannel with two slots, which is represented by a unitary operator
		      $U_{W} \colon P \otimes A_O \otimes B_O \to A_I \otimes B_I \otimes F$ as $W_{\mathrm{pure}} = \dket{U_{{W}}} \dbra{U_{{W}}}$,
		      where $W_{\mathrm{pure}}$ and $\dket{U_W}$ are the Choi operator of the superchannel $\smap{W}_{\mathrm{pure}}$ and the Choi vector of the unitary operator $U_W$, respectively, defined in Sec.\,\ref{sec:CJisomorphism}.}
		      \label{fig:pureprocess_unitary}
	      \end{figure}

	      A quantum switch is an example of pure superchannels and can be represented by the unitary operator
	      \begin{align} \label{eq:U_QS}
		      U_{\mathrm{QS}}:= & \ket{0}^{F_c} \bra{0}^{P_c} \otimes \1^{P_t \to A_I} \otimes \1^{A_O \to B_I} \otimes \1^{B_O \to F_t}            \\
		      +                 & \ket{1}^{F_c} \bra{1}^{P_c} \otimes \1^{P_t \to B_I} \otimes \1^{B_O \to A_I} \otimes \1^{A_O \to F_t}, \nonumber
	      \end{align}
	      where \Add{$P_c$ and $P_t$ are component spaces of $P = P_c \otimes P_t$, $F_c$ and $F_t$ are those of $F=F_c\otimes F_t$, $\dim P_c = \dim F_c = 2$, and} $\1^{\mathcal{H}_I \to \mathcal{H}_O} : \mathcal{H}_I \to \mathcal{H}_O$ is the identity operator (Figure \ref{fig:qswitch}).

	      \begin{figure}[H]
		      \centering \includegraphics[keepaspectratio, scale=0.30]{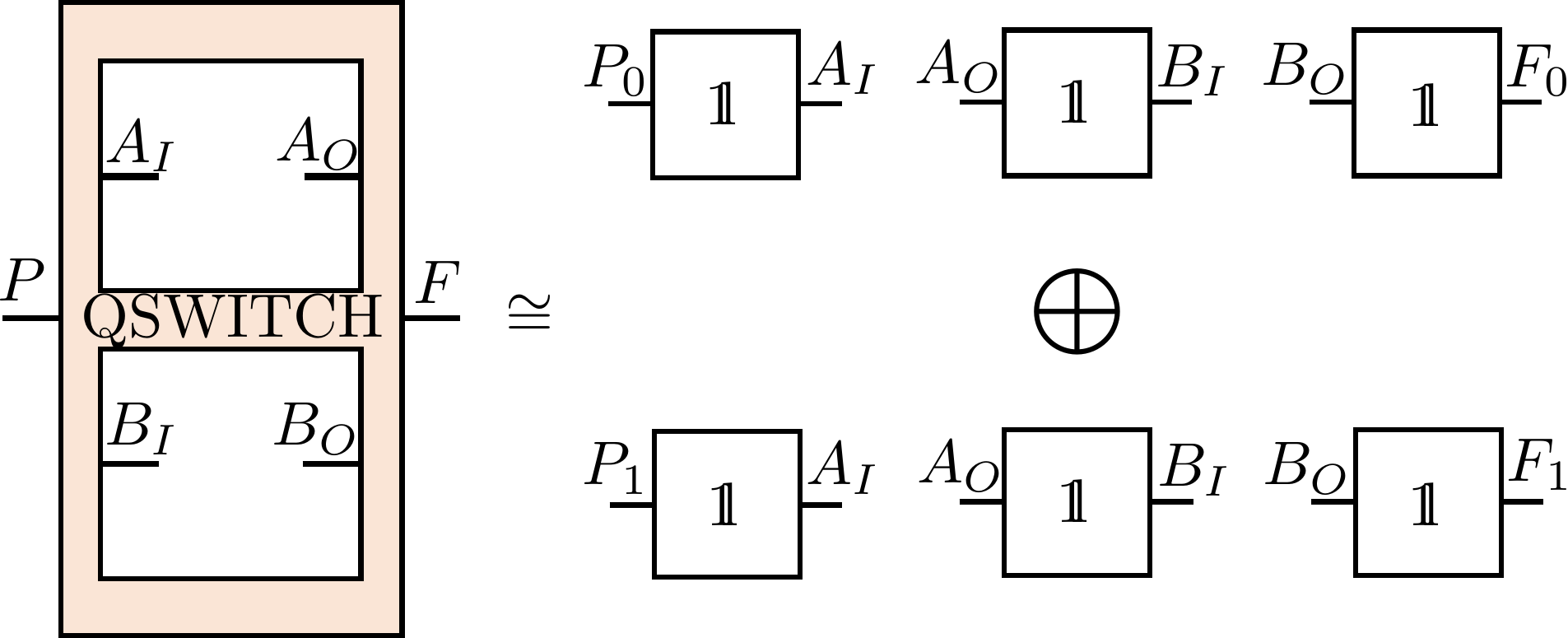}
		      \caption{The quantum switch $\mathrm{QSWITCH} = \dket{U_{\mathrm{QS}}} \dbra{U_{\mathrm{QS}}}$ is represented by the unitary operator
			      $U_{\mathrm{QS}}$ defined in Eq.\,\eqref{eq:U_QS}.}
		      \label{fig:qswitch}
	      \end{figure}
\end{itemize}

\begin{main2*} (Thm.\,\ref{thm:newgoal})
	Any pure superchannel with two slots is represented by a direct sum of pure quantum combs with different causal orders (Figure \ref{fig:main_result_2}).
\end{main2*}
\begin{figure}[H]
	\centering \includegraphics[keepaspectratio, scale=0.30]{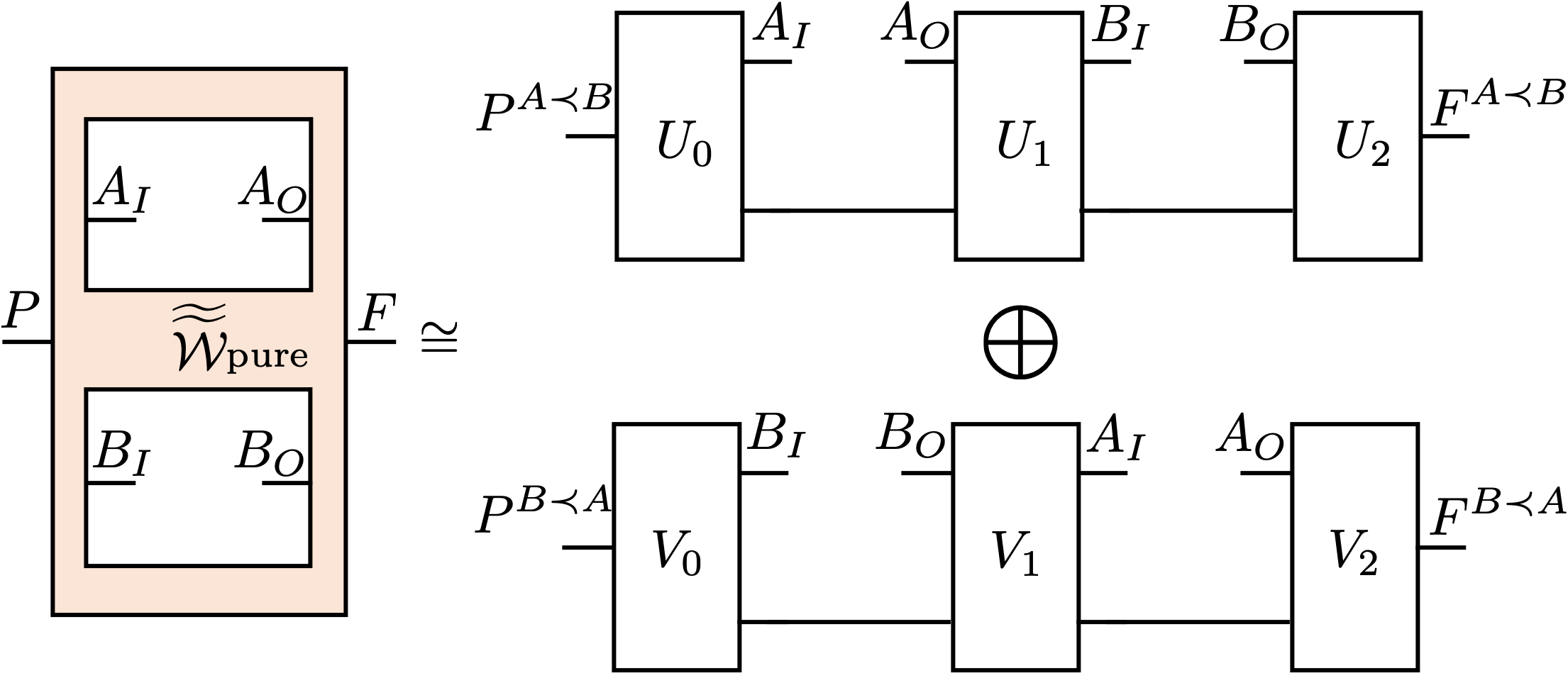}
	\caption{A pure superchannel $W_{\mathrm{pure}} = \dket{U_{W}} \dbra{U_{W}}$ with two slots is represented by a unitary operator $U_{W} = U_{\mathrm{comb}}^{A \prec B} \oplus V_{\mathrm{comb}}^{B \prec A}$
		with
		$U_{\mathrm{comb}}^{A \prec B} = U_2 U_1 U_0$ (identity matrices on auxiliary systems are abbreviated) representing a quantum comb of $A \prec B$ and $V_{\mathrm{comb}}^{B \prec A} = V_2 V_1 V_0$ representing a quantum comb of $B \prec A$
		where $P = P^{A \prec B} \oplus P^{B \prec A}$ and $F = F^{A \prec B} \oplus F^{B \prec A}$.
	}
	\label{fig:main_result_2}
\end{figure}

%%%%%%%%%%%%%%%%%%%%%%%%%%%%%%%%%%%%%%%%%%%%%%%%%%%%%%%%%%%%%%%%%%%%%%%%
%%%%%%%%%%%%%%%%%%%%%%%%      NEW    SECTION    %%%%%%%%%%%%%%%%%%%%%%%%
%%%%%%%%%%%%%%%%%%%%%%%%%%%%%%%%%%%%%%%%%%%%%%%%%%%%%%%%%%%%%%%%%%%%%%%%

\section{Review on quantum superchannel} 	\label{sec:preliminary}

%%%%%%%%%%%%%%%%%%%%%%%%%%%%%%%%%%%%%%%%%%%%%%%%%%%%%%%%%%%%%%%%%%%%%%%%
%%%%%%%%%%%%%%%%%%%%%%%%      NEW    SECTION    %%%%%%%%%%%%%%%%%%%%%%%%
%%%%%%%%%%%%%%%%%%%%%%%%%%%%%%%%%%%%%%%%%%%%%%%%%%%%%%%%%%%%%%%%%%%%%%%%

In this section, we review a mathematical formalism required to present our main results.
In this paper we consider only finite dimensional complex Hilbert spaces.
Notations in the rest of this paper are set as follows.
We sometimes denote a Hilbert space $\mathcal{H}_X$ just by its index $X$ ($X = P, A_I, A_O, B_I, B_O, F$).
When we refer to an operator, map, and supermap, we mean a linear operator, a linear map, and a linear supermap, respectively.
$\mathcal{L}(\mathcal{H})$ denotes the set of operators on a Hilbert space $\mathcal{H}$ and $[\mathcal{L}({\mathcal{H}_I}) \to \mathcal{L}({\mathcal{H}_O})]$ denotes the set of maps from $\mathcal{L}({\mathcal{H}_I})$ to $\mathcal{L}({\mathcal{H}_O})$.
We often add superscripts on vectors, operators, maps, and supermaps such as
\begin{itemize}
	\item a vector $\ket{\phi}^{\mathcal{H}} \in \mathcal{H}$,
	\item an operator $\rho^{\mathcal{H}} \in \mathcal{L}(\mathcal{H})$,
	\item a map $\map{E}^{\mathcal{H}_I \to \mathcal{H}_O} \colon \mathcal{L}(\mathcal{H_I}) \to \mathcal{L}(\mathcal{H_O})$, and
	\item a supermap $\smap{S}{}^{(A_I \to A_O) \to (P \to F)} \colon [\mathcal{L}({A_I}) \to \mathcal{L}({A_O})] \to [\mathcal{L}({P}) \to \mathcal{L}({F})]$,
\end{itemize}
to specify their attributions.
The domain and the range of a unitary operator are allowed to differ as long as their dimensions are equal.

%%%%%%%%%%%%%%%%%%%%%%%%%%%%%%%%%%%%%%%%%%%%%%%%%%%%%%%%%%%%%%%%%%%%%%%%
%%%%%%%%%%%%%%%%%%%%%%%%      NEW SUBSECTION    %%%%%%%%%%%%%%%%%%%%%%%%
%%%%%%%%%%%%%%%%%%%%%%%%%%%%%%%%%%%%%%%%%%%%%%%%%%%%%%%%%%%%%%%%%%%%%%%%

\subsection{Choi-Jamio{\l}kowski isomorphism and linear supermaps}
\label{sec:CJisomorphism}

%%%%%%%%%%%%%%%%%%%%%%%%%%%%%%%%%%%%%%%%%%%%%%%%%%%%%%%%%%%%%%%%%%%%%%%%
%%%%%%%%%%%%%%%%%%%%%%%%      NEW SUBSECTION    %%%%%%%%%%%%%%%%%%%%%%%%
%%%%%%%%%%%%%%%%%%%%%%%%%%%%%%%%%%%%%%%%%%%%%%%%%%%%%%%%%%%%%%%%%%%%%%%%

We use the Choi-Jamio{\l}kowski (CJ) isomorphism \cite{jamiolkowski1972linear, choi1975completely} to represent maps, which act on operators, as operators themselves on a larger Hilbert space. The Choi operator of a map $\map{E} \colon \mathcal{L}(\mathcal{H}_I) \to \mathcal{L}(\mathcal{H}_O)$ is defined as
\begin{equation}
	E := \sum_{ii^\prime} \ketbra{i}{i^\prime} \otimes \map{E}(\ketbra{i}{i^\prime}) \in \mathcal{L}(\mathcal{H}_I \otimes \mathcal{H}_O),
\end{equation}
where $\{ \ket{i} \}_i$ is the computational basis \ReviseFurther{of} $\mathcal{H}_I$.
The action of $\map{E}$ on $\rho \in \mathcal{L}(\mathcal{H}_I)$ can be obtained via the relation
\begin{equation}
	\map{E}(\rho) = \rho * E,
\end{equation}
where $*$ is an operation called the link product \cite{chiribella07} defined as follows:
\begin{defi} (\textit{Link product})
	Let $\mathcal{H}_0$, $\mathcal{H}_1$, $\mathcal{H}_2$ be different Hilbert spaces.
	The \textit{link product} $E_{01} * F_{12}$ between two operators $E_{01} \in \mathcal{L}(\mathcal{H}_0 \otimes \mathcal{H}_1)$ and $F_{12} \in \mathcal{L}(\mathcal{H}_1 \otimes \mathcal{H}_2)$ is
	\begin{equation}
		E_{01} * F_{12} := \Tr_1[ (E_{01}^{T_1} \otimes \1_{2}) (\1_{0} \otimes F_{12})]
	\end{equation}
	where $\Tr_1$ denotes the partial trace on $\mathcal{L}(\mathcal{H}_1)$ and $\bullet^{T_1}$ denotes the partial transposition on $\mathcal{L}(\mathcal{H}_1)$ in terms of the computational basis.
\end{defi}
We can consider a pure version of the Choi isomorphism. The Choi vector of a operator $A \colon \mathcal{H}_I \to \mathcal{H}_O$ is
\begin{equation}
	\dket{A} := \sum_{i} \ket{i} \otimes (A\ket{i}) \in \mathcal{H}_I \otimes \mathcal{H}_O.
\end{equation}
Then, the Choi operator of the operation $\map{E}_A \colon \mathcal{L}(\mathcal{H}_I) \to \mathcal{L}(\mathcal{H}_O)$ defined as $\map{E}_A(\rho) := A \rho A^\dagger$ is equal to $\dket{A} \dbra{A}$.
The action of $A$ on $\ket{\phi} \in \mathcal{H}_I$ can be obtained from
\begin{equation}
	A \ket{\phi} = \bradket{\phi^*|A}
\end{equation}
where $\ket{\phi^*}$ denotes the complex conjugation of $\ket{\phi}$ in the computational basis, that is, if $\ket{\phi} = \sum_{i} \phi_i \ket{i}$, then $\ket{\phi^*} = \sum_{i} \phi_i^* \ket{i}$.

A supermap is a linear transformation from a set of maps to a set of maps.
Similarly to maps, supermaps can also be represented by operators.
This isomorphism between supermaps and operators can be shown by first
showing that supermaps can be represented by maps \cite{oreshkov11, chiribella09}, and
then, by using the CJ isomorphism on this map
representation. In this paper, our map representation of supermaps
is slightly different from those used in previous references
\cite{oreshkov11, chiribella09, gour2019comparison}. Our convention for representing supermaps as maps provides
good insights for pure superchannels, which will be evident after
considering Thm.\,\ref{thm:W=UWUW} or Figure \ref{fig:pureprocess_unitary}.

Let
$\smap{S} \colon [\mathcal{L}({A_I}) \to \mathcal{L}({A_O})] \to [\mathcal{L}({P}) \to \mathcal{L}({F})]$ be a supermap from the set of
maps $\mathcal{L}({A_I}) \to \mathcal{L}({A_O})$ to the set of maps $\mathcal{L}({P}) \to \mathcal{L}({F})$.
Define a map $\map{S} \colon \mathcal{L}(P \otimes A_O) \to \mathcal{L}(A_I \otimes F)$ as follows:
\begin{multline}
	\label{eq:f8a98f9afaaa}
	\map{S}^{P A_O \to A_I F}(\rho^{P} \otimes \sigma^{A_O} ) := \\
	\left( \left(\smap{I}^{(\mathbb{C} \to A_I) \to (\mathbb{C} \to A_I)} \otimes \smap{S}^{(A_I \to A_O) \to (P \to F)} \right) \left( \map{I}^{A_I \to A_I} \otimes \tilde{\sigma}^{\mathbb{C} \to A_O} \right) \right)
	\left( \ketbra{0}{0}^{\mathbb{C}} \otimes \rho^{P} \right)
\end{multline}
for all $\rho^{P} \in \mathcal{L}(P)$ and $\sigma^{A_O} \in \mathcal{L}(A_O)$
where $\ket{0}^{\mathbb{C}}$ is a unit vector in $\mathbb{C}$, $\smap{I}^{(\mathbb{C} \to A_I) \to (\mathbb{C} \to A_I)} \colon [\mathcal{L}(\mathbb{C}) \to \mathcal{L}(A_I)] \to [\mathcal{L}(\mathbb{C}) \to \mathcal{L}(A_I)]$ is the identity supermap on the set of maps $\mathcal{L}(\mathbb{C}) \to \mathcal{L}(A_I)$,
$\map{I}^{A_I \to A_I} \colon \mathcal{L}(A_I) \to \mathcal{L}(A_I)$ is the identity map on $\mathcal{L}(A_I)$ and
$\tilde{\sigma}^{\mathbb{C} \to A_O} \colon \mathcal{L}(\mathbb{C}) \to \mathcal{L}(A_O)$ is an isometry which creates $\sigma^{A_O}$ defined as
\begin{equation}
	\tilde{\sigma}^{\mathbb{C} \to A_O}(\tau^\mathbb{C}) := \Tr[\tau^\mathbb{C}] \sigma^{A_O} \quad \forall \tau^\mathbb{C} \in \mathcal{L}(\mathbb{C})
\end{equation}
(see Figure \ref{fig:mapFromSupermap}).
Note that there is only one density operator on $\mathbb{C}$, which is $\ketbra{0}{0}^{\mathbb{C}}$.
\Add{Also note that the type of the tensor product of the channels and the argument type of the tensor product of the superchannels on the right hand side in Eq.\,\eqref{eq:f8a98f9afaaa} are equal due to a relation $(A_I \to A_I) \otimes (\mathbb{C} \to A_O) = (\mathbb{C} \to A_I) \otimes (A_I \to A_O)$.}
\begin{figure}
	\centering \includegraphics[keepaspectratio, scale=0.30]{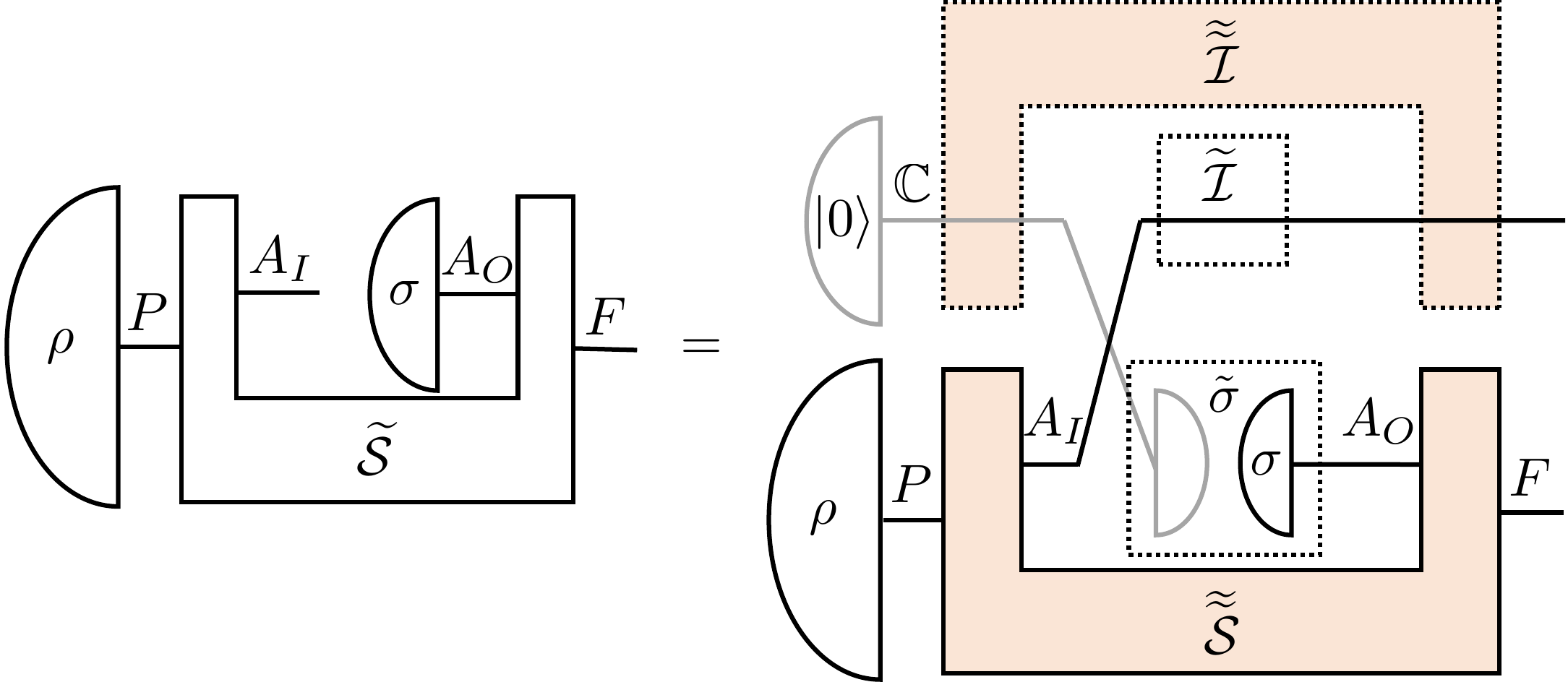}
	\caption{A supermap $\smap{S}$ can be seen as a map $\map{S}$ defined in Eq.\,\eqref{eq:f8a98f9afaaa}.
		The gray line represents $\mathbb{C}$, which is a one-dimensional quantum system.}
	\label{fig:mapFromSupermap}
\end{figure}
Let $S$ be the Choi operator of $\map{S}$.
The action of $\smap{S}$ on $\map{E} : \mathcal{L}(A_I) \to \mathcal{L}(A_O)$ can be retrieved via the relation in the CJ isomorphism
\begin{equation}
	E^\prime = S * E
\end{equation}
where $E$, $E^\prime$ are the Choi operators of $\map{E}$, $\smap{S}(\map{E})$, respectively. Now that $\smap{S}$, $\map{S}$ and $S$ are isomorphic, we can define $S$ as the Choi operator of $\smap{S}$.

In Refs.\,\cite{oreshkov11, chiribella09},
the authors represent a supermap $\smap{S} \colon [\mathcal{L}({A_I}) \to \mathcal{L}({A_O})] \to [\mathcal{L}({P}) \to \mathcal{L}({F})]$ as a map
$\map{S}_{\mathrm{prev}} := \choi^\prime \circ \smap{S} \circ \choi^{-1} \colon \mathcal{L}(A_I \otimes A_O) \to \mathcal{L}(P \otimes F)$, where
$\choi \colon [\mathcal{L}({A_I}) \to \mathcal{L}({A_O})] \to \mathcal{L}(A_I \otimes A_O)$ and $\choi^\prime \colon [\mathcal{L}({P}) \to \mathcal{L}({F})] \to \mathcal{L}(P \otimes F)$ transform maps into their Choi operators and they obtain the Choi operator of $\map{S}_{\mathrm{prev}}$ as the Choi operator of $\smap{S}$.
Although our map representation $\map{S}$ of the supermap $\smap{S}$ is different from previous one $\map{S}_{\mathrm{prev}}$, the Choi operators of $\map{S}_{\mathrm{prev}}$ and $\map{S}$ coincide. In Ref.\,\cite{gour2019comparison}, there is another representation of supermap as a map, but the Choi operator of that map also coincides with ours.

%%%%%%%%%%%%%%%%%%%%%%%%%%%%%%%%%%%%%%%%%%%%%%%%%%%%%%%%%%%%%%%%%%%%%%%%
%%%%%%%%%%%%%%%%%%%%%%%%      NEW SUBSECTION    %%%%%%%%%%%%%%%%%%%%%%%%
%%%%%%%%%%%%%%%%%%%%%%%%%%%%%%%%%%%%%%%%%%%%%%%%%%%%%%%%%%%%%%%%%%%%%%%%

\subsection{Quantum superchannel}  	\label{sec:PureProcess}

%%%%%%%%%%%%%%%%%%%%%%%%%%%%%%%%%%%%%%%%%%%%%%%%%%%%%%%%%%%%%%%%%%%%%%%%
%%%%%%%%%%%%%%%%%%%%%%%%      NEW SUBSECTION    %%%%%%%%%%%%%%%%%%%%%%%%
%%%%%%%%%%%%%%%%%%%%%%%%%%%%%%%%%%%%%%%%%%%%%%%%%%%%%%%%%%%%%%%%%%%%%%%%

We review the notions related to quantum superchannels.
We recall that quantum channels are completely positive and trace-preserving maps (CPTP maps).
Quantum superchannels with one slot are supermaps which transform quantum channels into quantum channels even if the supermaps act on a subsystem of the input quantum channels (see Figure \ref{fig:superchan}).

\begin{defi}
	(\textit{Quantum superchannel with one slot})
	Let $W$ be the Choi operator of a supermap $\smap{W} \colon [\mathcal{L}(A_I) \to \mathcal{L}(A_O)] \to [\mathcal{L}(P) \to \mathcal{L}(F)]$.
	We say that $\smap{W}$ is a \textit{quantum superchannel with one slot}
	if
	for all auxiliary Hilbert spaces $A_I^\prime$, $A_O^\prime$ and
	all quantum channels $\map{E} \colon \mathcal{L}(A_I \otimes A_I^\prime) \to \mathcal{L}(A_O \otimes A_O^\prime)$, $G = W \ast E$ is the Choi operator of
	a quantum channel $\map{G} \colon \mathcal{L}(P \otimes A_I^\prime) \to \mathcal{L}(F \otimes A_O^\prime)$,
	where $E$ is the Choi operator of $\map{E}$.
\end{defi}

In this paper we also consider superchannels with more than a single slot. \ReviseFurther{Multiple-slot} superchannels are supermaps transforming \Add{independent} quantum channels, which are represented by \ReviseFurther{the} tensor product of channels, into a quantum channel\footnote{Note that due to linearity, the action of superchannels with multiple slots is also defined for linear combinations of \ReviseFurther{the} tensor product of quantum channels. The set of multipartite quantum \ReviseFurther{channels} which can be written as a linear combination of \ReviseFurther{the} tensor product of single-party quantum channels is the set of \ReviseFurther{non-signaling} channels \cite{{chiribella09}}.} (see Figure \ref{fig:processt4qg}).

\begin{defi}
	(\textit{Quantum superchannel with multiple slots})
	\label{def:ProcessMatrix}
	Let $N$ be a positive integer, $\mathcal{H}_{m}$ be Hilbert spaces, $m \in \{ 0, \cdots, 2N+1\}$, and $W$ be
	the Choi operator of a supermap $\smap{W} \colon \bigotimes_{n=1}^{N} [ \mathcal{L}(\mathcal{H}_{2n-1}) \to \mathcal{L}(\mathcal{H}_{2n})] \to [\mathcal{L}(\mathcal{H}_{0}) \to \mathcal{L}(\mathcal{H}_{2N+1})]$.
	We say that $\smap{W}$ is a \textit{quantum superchannel with $N$ slots}
	if
	for all auxiliary Hilbert spaces $\mathcal{H}_{m}^\prime$, $m \in \{ 1, \cdots, 2N\}$, and
	all quantum channels $\map{E}_n \colon \mathcal{L}(\mathcal{H}_{2n-1} \otimes \mathcal{H}_{2n-1}^\prime) \to \mathcal{L}(\mathcal{H}_{2n} \otimes \mathcal{H}_{2n}^\prime)$, $n \in \{ 1, \cdots, N\}$,		$G = W \ast (\bigotimes_{n=1}^N E_n)$ is the Choi operator of
	a quantum channel $\map{G} \colon \mathcal{L}(\mathcal{H}_{0} \otimes \bigotimes_{n=1}^N \mathcal{H}_{2n-1}^\prime) \to \mathcal{L}(\mathcal{H}_{2N+1} \otimes \bigotimes_{n=1}^N \mathcal{H}_{2n}^\prime)$,
	where $E_n$ are the Choi operators of $\map{E}_n.$
\end{defi}

Quantum superchannels with a single slot are introduced in Ref.\,\cite{chiribella08} as deterministic quantum supermaps.
Quantum superchannels with $N$ slots appeared in Ref.\,\cite{chiribella09} as quantum supermaps on product channels.
The Choi operators of superchannels with $N$ slot are equal to $N$-partite process matrices \cite{oreshkov11} with a global past and global future \cite{araujo15, araujo16}.
$N$-partite process matrices with no global past and no global future are equivalent to the Choi operators of quantum superchannels with $N$ slots, where the first input space $\mathcal{H}_0$ and the last output space $\mathcal{H}_{2N+1}$ is the one-dimensional linear space $\mathbb{C}$.
In this paper, we mainly focus on quantum superchannels with two slots $[ \mathcal{L}(A_I) \to \mathcal{L}(A_O)] \otimes [\mathcal{L}(B_I) \to \mathcal{L}(B_O)] \to [\mathcal{L}(P) \to \mathcal{L}(F)]$ (see Figure \ref{fig:prsm}).
We name slots for maps $\mathcal{L}(A_I) \to \mathcal{L}(A_O)$, $\mathcal{L}(B_I) \to \mathcal{L}(B_O)$ as $A$, $B$, respectively. Slots like $A$ and $B$ in a quantum superchannel can be identified as parties of Alice and Bob in a process matrix.

\begin{figure}
	\centering \includegraphics[keepaspectratio, scale=0.30]{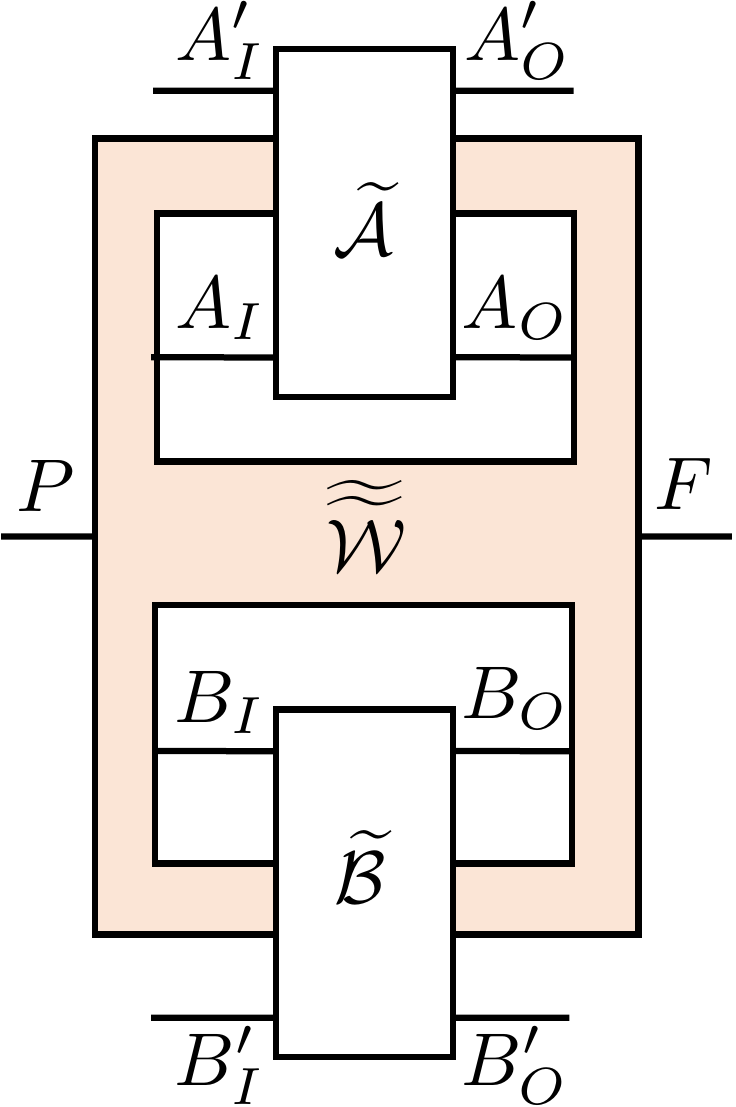}
	\caption{
		\label{fig:prsm}
		A quantum superchannel $\smap{W} \colon [\mathcal{L}(A_I \otimes B_I) \to \mathcal{L}(A_O \otimes B_O) ] \to [\mathcal{L}(P) \to \mathcal{L} (F)]$
		with two slots $A$ and $B$.
	}
\end{figure}

Next, we review the concept of \textit{pure superchannels}, which \ReviseFurther{are} first defined in Ref.\,\cite{araujo16} as \ReviseFurther{pure processes}\footnote{Note that Ref.\,\cite{araujo15} considers a different definition of pure process, where the authors define the pure process matrix to be a rank-one process matrix.}. We define unitary channels as maps $\map{U}(\rho) := U \rho U^\dagger$ for all $\rho \in \mathcal{L} (\mathcal{H}_I)$ with a unitary operator $U \colon \mathcal{H}_I \to \mathcal{H}_O$. Pure superchannels are quantum \ReviseFurther{superchannels} which transform a group of \Add{independent} unitary channels into \ReviseFurther{a unitary channel}.
\begin{defi}
	\label{def:PureProcessMatrix}
	(\textit{Pure} superchannel)
	Let $W$ be the Choi operator of
	a quantum superchannel $\smap{W} \colon \bigotimes_{n=1}^{N} [ \mathcal{L}(\mathcal{H}_{2n-1}) \to \mathcal{L}(\mathcal{H}_{2n})] \to [\mathcal{L}(\mathcal{H}_{0}) \to \mathcal{L}(\mathcal{H}_{2N+1})]$
	with $N$ slots.
	We say that $W$ (and $\smap{W}$) is \textit{pure} if for all auxiliary Hilbert spaces $\mathcal{H}_{m}^\prime$, $m \in \{ 1, \cdots, 2N\}$, and
	unitary operators $U_n : \mathcal{H}_{2n-1} \otimes \mathcal{H}_{2n-1}^\prime \to \mathcal{H}_{2n} \otimes \mathcal{H}_{2n}^\prime$, $n \in \{ 1, \cdots, N\}$,
	there exists a unitary operator $U_G : \mathcal{H}_{0} \otimes \bigotimes_{n=1}^N \mathcal{H}_{2n-1}^\prime \to \mathcal{H}_{2N+1} \otimes \bigotimes_{n=1}^N \mathcal{H}_{2n}^\prime$ such that
	$\dket{U_G} \dbra{U_G} = W \ast \bigotimes_{n=1}^N \dket{U_n} \dbra{U_n}$.
\end{defi}

Any pure superchannel can be represented as a unitary channel (see Figure \ref{fig:pureprocess_unitary}) from the following theorem:

\begin{theo} \
	\label{thm:W=UWUW}
	Let $W$ be the Choi operator of
	a quantum superchannel $\smap{W} \colon \bigotimes_{n=1}^{N} [ \mathcal{L}(\mathcal{H}_{2n-1}) \to \mathcal{L}(\mathcal{H}_{2n})] \to [\mathcal{L}(\mathcal{H}_{0}) \to \mathcal{L}(\mathcal{H}_{2N+1})]$.
	$W$ is pure if and only if there exists a unitary operator $U_W \colon \bigotimes_{n=0}^N \mathcal{H}_{2n} \to \bigotimes_{n=0}^N \mathcal{H}_{2n+1}$ such that
	$W = \dket{U_W} \dbra{U_W}$.
\end{theo}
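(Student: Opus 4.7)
My plan is to handle the two directions of the if-and-only-if separately, with the reverse direction being the substantive one.

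\textbf{The ``if'' direction.} Assume $W = \dket{U_W}\dbra{U_W}$ for some unitary $U_W \colon \bigotimes_{n=0}^{N}\mathcal{H}_{2n} \to \bigotimes_{n=0}^{N}\mathcal{H}_{2n+1}$, and assume that $W$ is the Choi operator of a valid superchannel in the sense of Def.\,\ref{def:ProcessMatrix}. Fix arbitrary auxiliary spaces $\mathcal{H}'_m$ and unitaries $U_n \colon \mathcal{H}_{2n-1}\otimes\mathcal{H}'_{2n-1} \to \mathcal{H}_{2n}\otimes\mathcal{H}'_{2n}$. The key fact is that the link product of two rank-one positive operators $\dket{A}\dbra{A}$ and $\dket{B}\dbra{B}$ is again rank-one and equals $\dket{A\star B}\dbra{A\star B}$, where $A\star B$ is the operator obtained by the ``partial link product'' of the Choi vectors (this is essentially Choi-level channel composition and is a routine computation from the definition of $*$). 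Iterating over the $N$ slots yields
\begin{equation}
W\ast\bigotimes_{n=1}^{N}\dket{U_n}\dbra{U_n}=\dket{V}\dbra{V}
\end{equation}
for some operator $V$. Since $W$ is a superchannel, the left-hand side is the Choi operator of a CPTP map, so $V$ must be an isometry. Comparing input and output dimensions, which are equal because $U_W$ and all $U_n$ are unitary, forces $V$ to be unitary, so $W$ is pure.

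\textbf{The ``only if'' direction.} Assume $W$ is pure. As the Choi operator of a superchannel, $W$ is PSD, so spectrally decompose $W=\sum_{k=1}^{r}\dket{A_k}\dbra{A_k}$ with $r=\mathrm{rank}(W)$ and $\{\dket{A_k}\}$ orthogonal. Pick auxiliary spaces with $\dim\mathcal{H}'_{2n-1}=\dim\mathcal{H}_{2n}$ and $\dim\mathcal{H}'_{2n}=\dim\mathcal{H}_{2n-1}$, and set each $U_n$ to be the SWAP between the slot register and its ancilla. Purity forces
\begin{equation}
W\ast\bigotimes_{n=1}^{N}\dket{U_n}\dbra{U_n}=\sum_{k=1}^{r}\dket{A_k\star\textstyle\bigotimes_n U_n}\dbra{A_k\star\textstyle\bigotimes_n U_n}
\end{equation}
to have rank one, so all terms must be proportional. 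Because the SWAP link product is (up to the obvious rearrangement of tensor factors) the identity on each $A_k$, the resulting operators inherit the linear independence of the $A_k$'s; hence $r=1$ and $W=\dket{U_W}\dbra{U_W}$ for some operator $U_W$. Finally, plugging any unitary inputs back into the purity condition yields an output of the form $\dket{U_G}\dbra{U_G}$ with $U_G$ unitary. Choosing the inputs to be SWAPs with ancillas as above reduces the relation between $U_W$ and $U_G$ to a dimension-preserving isometry condition on $U_W$ itself, forcing $U_W$ to be unitary.

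\textbf{Expected main obstacle.} The substantive step is the rank-collapse argument in the only-if direction: one must verify carefully that the link product with the chosen SWAP-type unitaries really does preserve the linear independence of $\{A_k\}$ on the relevant subspaces, so that proportionality of $\{A_k\star\bigotimes_n U_n\}_k$ implies proportionality of $\{A_k\}_k$. Being sloppy here is dangerous because for generic inputs the link product can easily map distinct $A_k$ to proportional operators; the SWAP-with-ancilla choice (or, equivalently, an appeal to the fact that unitaries span $\mathcal{L}(\mathcal{H})$ combined with linearity of $*$ in its second argument) is what makes the argument go through. The unitarity of $U_W$ at the end is then a short dimension-count once $W$ is known to be rank-one.
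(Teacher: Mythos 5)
Your proof is correct and takes essentially the same route as the paper, which for this theorem simply defers to Theorem 2 of Ref.\,\cite{araujo16}: the ``if'' direction via the rank-one composition lemma for Choi vectors plus a dimension count, and the ``only if'' direction via inserting SWAP-with-ancilla (identity-routing) unitaries so that the Choi operator of the induced channel is $W$ itself, which purity then forces to equal $\dket{U_W}\dbra{U_W}$ with $U_W$ unitary. Your spectral-decomposition/rank-collapse step is a harmless detour: since the SWAP insertion returns $W$ exactly, up to a permutation of tensor factors that maps inputs to inputs and outputs to outputs, purity immediately delivers rank-one-ness and unitarity in a single stroke.
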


In this paper, we say that $U_W$ represents the superchannel $\smap{W}$ in such cases.
A proof of this theorem follows the same steps of Theorem 2 in Ref.\,\cite{araujo16} which considers the case $N = 2$.
Note that the unitary operator representing a quantum superchannel is uniquely determined up to a global phase factor.

\begin{lemm}
	Let $U_W \colon P \otimes A_O \to A_I \otimes F$ and $U_A \colon A_I \otimes A_I^\prime \to A_O \otimes A_O^\prime$ be operators, not necessarily unitary. There exists an operator $U_G \colon P \otimes A_I^\prime \to F \otimes A_O^\prime$ such that
	\begin{equation}
		\dket{U_G} \dbra{U_G} = \dket{U_W} \dbra{U_W} \ast \dket{U_A} \dbra{U_A}.
	\end{equation}
	In addition,
	\begin{equation}
		U_G = \Tr_{A_O} \left[ \left( \1^{{F}} \otimes U_A \right) \left( U_W \otimes \1^{A_I^\prime} \right) \right]
		.
	\end{equation}
	A proof of this lemma follows from direct calculations and the fact that a global phase factor is irrelevant.
\end{lemm}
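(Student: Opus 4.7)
The plan is to verify the lemma by direct computation, exploiting the fact that the link product of two rank-one operators is again rank-one.

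First, I would establish the following general fact. For any two vectors $\ket{\alpha} \in \mathcal{H}_0 \otimes \mathcal{H}_1$ and $\ket{\beta} \in \mathcal{H}_1 \otimes \mathcal{H}_2$ with components $\alpha_{ij}$ and $\beta_{jk}$ in the computational basis, one has
\[
    \ketbra{\alpha}{\alpha} \ast \ketbra{\beta}{\beta} = \ketbra{\gamma}{\gamma}, \qquad \gamma_{ik} = \sum_{j} \alpha_{ij}\,\beta_{jk}.
\]
This follows by unfolding the definition of the link product: the partial transposition on $\mathcal{H}_1$ applied to $\ketbra{\beta}{\beta}$ swaps the ket-index with the bra-index of $\mathcal{H}_1$, and the subsequent partial trace over $\mathcal{H}_1$ then contracts the $\mathcal{H}_1$ index of $\ket{\alpha}$ against that of $\ket{\beta}$ (and similarly on the bra side), so the result factorises as an outer product of the contracted vector with itself.

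Second, applying this general fact to $\ket{\alpha} = \dket{U_W}$ and $\ket{\beta} = \dket{U_A}$, with the role of $\mathcal{H}_1$ now played by the pair of common subsystems $A_I \otimes A_O$, immediately shows that $\dket{U_W}\dbra{U_W} \ast \dket{U_A}\dbra{U_A}$ is rank-one, of the form $\dket{U_G}\dbra{U_G}$, where $U_G \colon P \otimes A_I^\prime \to F \otimes A_O^\prime$ has matrix elements
\[
    (U_G)_{(f,\,a_O^\prime),\,(p,\,a_I^\prime)} \;=\; \sum_{a_I,\,a_O} (U_W)_{(a_I,\,f),\,(p,\,a_O)}\,(U_A)_{(a_O,\,a_O^\prime),\,(a_I,\,a_I^\prime)}.
\]

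Third, I would check that the explicit formula stated in the lemma reproduces exactly these matrix elements. With the obvious reordering of tensor factors, the composition $(\1^F \otimes U_A)(U_W \otimes \1^{A_I^\prime})$ is an operator $P \otimes A_O \otimes A_I^\prime \to F \otimes A_O \otimes A_O^\prime$ whose matrix element at $(f, a_O, a_O^\prime),(p, \tilde a_O, a_I^\prime)$ is $\sum_{a_I} (U_A)_{(a_O, a_O^\prime),(a_I, a_I^\prime)}(U_W)_{(a_I, f),(p, \tilde a_O)}$. Interpreting $\Tr_{A_O}$ as identifying the input $A_O$ index with the output $A_O$ index (and summing) produces exactly the double sum over $a_I$ and $a_O$ above, so the two expressions for $U_G$ agree.

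The only real obstacle is notational bookkeeping: one must keep careful track of (i) the partial transposition built into the link product, which together with the complex conjugation implicit in $\dbra{\,\cdot\,}$ produces an ordinary matrix-style contraction rather than one involving $U_A^T$ or $U_W^*$; (ii) the reordering of tensor factors needed to pair the $A_I$ output of $U_W \otimes \1^{A_I^\prime}$ with the $A_I$ input of $\1^F \otimes U_A$; and (iii) the non-standard meaning of $\Tr_{A_O}$ on an operator whose input and output both contain $A_O$. Once these conventions are fixed the identity becomes manifest, and since only $\dket{U_G}\dbra{U_G}$ enters the statement, the global phase freedom in $U_G$ is immaterial.
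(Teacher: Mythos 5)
Your proof is correct and is precisely the ``direct calculation'' the paper merely asserts: the paper offers no written proof beyond that one-line remark, and your index computation of the rank-one link product, the matching of $\Tr_{A_O}\left[\left(\1^{F}\otimes U_A\right)\left(U_W\otimes\1^{A_I'}\right)\right]$ against the contracted Choi vector, and the observation that the phase freedom is immaterial fill in exactly the intended argument. In particular your general rank-one fact $\ketbra{\alpha}{\alpha}\ast\ketbra{\beta}{\beta}=\ketbra{\gamma}{\gamma}$ with $\gamma_{ik}=\sum_j\alpha_{ij}\beta_{jk}$ correctly accounts for the partial transposition in the link product, so no conjugates or transposes survive in the final formula, as required.
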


From this lemma,
$U_G$ in Def.\,\ref{def:PureProcessMatrix} can be calculated using $U_W$ in Thm.\,\ref{thm:W=UWUW}:
\begin{equation}
	\label{eq:ug8d0a0a0a0g9fg9}
	U_G = \Tr_{\mathcal{H}_{2} \mathcal{H}_{4} \cdots \mathcal{H}_{2N}} \left[ \left( \1^{\mathcal{H}_{2N+1}} \otimes \bigotimes_{n=1}^N U_n \right) \left( U_W \otimes \bigotimes_{n=1}^N \1^{\mathcal{H}_{2n-1}^\prime} \right) \right]
	.
\end{equation}

%%%%%%%%%%%%%%%%%%%%%%%%%%%%%%%%%%%%%%%%%%%%%%%%%%%%%%%%%%%%%%%%%%%%%%%%
%%%%%%%%%%%%%%%%%%%%%%%%      NEW SUBSECTION    %%%%%%%%%%%%%%%%%%%%%%%%
%%%%%%%%%%%%%%%%%%%%%%%%%%%%%%%%%%%%%%%%%%%%%%%%%%%%%%%%%%%%%%%%%%%%%%%%

\subsection{Quantum comb}

%%%%%%%%%%%%%%%%%%%%%%%%%%%%%%%%%%%%%%%%%%%%%%%%%%%%%%%%%%%%%%%%%%%%%%%%
%%%%%%%%%%%%%%%%%%%%%%%%      NEW SUBSECTION    %%%%%%%%%%%%%%%%%%%%%%%%
%%%%%%%%%%%%%%%%%%%%%%%%%%%%%%%%%%%%%%%%%%%%%%%%%%%%%%%%%%%%%%%%%%%%%%%%

Quantum combs (causally ordered superchannels) \Add{forms a special class of superchannels that can be realized by a quantum circuit (see Figure \ref{fig:quantum_comb}),} which are first introduced in Ref.\,\cite{chiribella07} and are mathematically equivalent to quantum channels with memory \cite{kretschmann05} and to quantum strategies \cite{gutoski07}.

\begin{defi}
	(\textit{Quantum comb}, \Add{see Figure \ref{fig:comb_maps_comb_to_comb}})
	\label{def:ChoiOpOfQuantumComb}

	We say that $\smap{R}{}^{(2)}$ is a \textit{quantum comb with one slot} if $\smap{R}{}^{(2)}$ is a quantum superchannel with one slot.
	Let $N \geq 2$ be a positive integer and let $\mathcal{H}_{m}$ be Hilbert spaces, $m \in \{ 0, \cdots, 2N+1\}$, and $R^{(N+1)}$ be the Choi operator of
	a supermap $\smap{R}{}^{(N+1)} \colon \bigotimes_{n=1}^{N} [\mathcal{L}(\mathcal{H}_{2n-1}) \to \mathcal{L}(\mathcal{H}_{2n})] \to [\mathcal{L}(\mathcal{H}_{0}) \to \mathcal{L}(\mathcal{H}_{2N+1})]$.
	We say that $\smap{R}{}^{(N+1)}$ is a \textit{quantum comb with $N$ slots} if for all auxiliary Hilbert spaces $\mathcal{H}_{1}^\prime$ and $\mathcal{H}_{2N}^\prime$ and
	all quantum combs with $N-1$ slots
	$\smap{S}{}^{(N)} \colon \bigotimes_{n=1}^{N-1}[ \mathcal{L}(\mathcal{H}_{2n}) \to \mathcal{L}(\mathcal{H}_{2n+1})] \to [\mathcal{L}(\mathcal{H}_{1} \otimes \mathcal{H}_{1}^\prime) \to \mathcal{L}(\mathcal{H}_{2N} \otimes \mathcal{H}_{2N}^\prime)]$,
	$G = R^{(N+1)} * S^{(N)}$ is the Choi operator of a quantum channel
	$\map{G} \colon \mathcal{L}(\mathcal{H}_{0} \otimes \mathcal{H}_{1}^\prime) \to \mathcal{L}(\mathcal{H}_{2N+1} \otimes \mathcal{H}_{2N}^\prime)$ where
	$S^{(N)}$ is the Choi operator of $\smap{S}{}^{(N)}$.
\end{defi}

Note that the slots of a quantum comb have an order. In the above definition, the $n$-th slot is a slot whose input space is $\mathcal{H}_{2n-1}$ and whose output space is $\mathcal{H}_{2n}$
($n=1 \cdots N$). The orders of slots of quantum combs are suppressed throughout the remaining text unless otherwise misunderstood.

Similarly to pure superchannels, we can define the pure combs.
\begin{defi}
	(\textit{Pure} comb)

	A \textit{pure} comb is a quantum comb which is a pure superchannel.
\end{defi}

Any quantum superchannel with one slot is a quantum comb with one slot \cite{chiribella08}.
Suppose a quantum comb with two slots has
the first slot $A$ and the second slot $B$, then we call it a quantum comb of $A \prec B$ (read as ``$A$ to $B$'')\footnote{\ReviseFurther{We stress that $A \prec B$ does not necessarily imply that $A$ signals to $B$ but that $B$ is forbidden to signal to $A$. In Supplementary Information of Ref.\,\cite{oreshkov11}, signaling from $A$ to $B$ is formally defined as ``the existence of statistical correlations between a random variable at $A$ which can be chosen freely, and another random variable at $B$'', denoted by ``$A \preceq B$''. $A \prec B$ in this paper is equal to ``$A \nsucceq B$'' in that reference. }}.
Similarly, we can consider the notion of a quantum comb of $B \prec A$.
If a quantum superchannel can be regarded as a quantum comb of $A \prec B$ and as a quantum comb of $B \prec A$, it is parallel and we call it a quantum comb of $A \parallel B$ (read as ``$A$ parallel to $B$'').

A necessary and sufficient condition for a supermap to be a quantum comb can be obtained from the following theorem \cite{chiribella09b} and any quantum comb is equivalent to a quantum circuit \cite{chiribella07} (see Figure \ref{fig:quantum_comb}):

\begin{theo}
	\label{thm:purecombcausalityforall}
	Let $N$ be a positive integer and let $\mathcal{H}_{m}$ be Hilbert spaces, $m \in \{ 0, \cdots, 2N+1\}$, and $R^{(N+1)}$ be the Choi operator of
	a supermap $\smap{R}{}^{(N+1)} \colon \bigotimes_{n=1}^{N} [ \mathcal{L}(\mathcal{H}_{2n-1}) \to \mathcal{L}(\mathcal{H}_{2n})] \to [\mathcal{L}(\mathcal{H}_{0}) \to \mathcal{L}(\mathcal{H}_{2N+1})]$.
	$\smap{R}{}^{(N+1)}$ is a quantum comb with $N$ slots
	if and only if $R^{(N+1)} \geq 0$ and for each $n \in \{ 0, \cdots, N\}$,
	\begin{equation}
		\Tr_{\mathcal{H}_{2n+1} \cdots \mathcal{H}_{2N-1} \mathcal{H}_{2N+1}} R^{(N+1)} = \1^{\mathcal{H}_{2n} \cdots \mathcal{H}_{2N-2} \mathcal{H}_{2N}} \otimes R^{(n)}
	\end{equation}
	holds
	where $R^{(n)}$ ($n \geq 2$) is the Choi operator of a quantum comb with $n-1$ slots, $R^{(1)}$ is the Choi operator of a quantum channel (``quantum comb with $0$ slot'') and  $R^{(0)} = 1$ (``quantum comb with $-1$ slot'').
\end{theo}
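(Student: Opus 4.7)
The plan is to prove both implications by induction on $N$, using the standard CJ characterization of channels as the base case: the Choi operator $R^{(1)}$ of a CPTP map is exactly an operator satisfying $R^{(1)} \ge 0$ and $\Tr_{\mathcal{H}_1} R^{(1)} = \1^{\mathcal{H}_0}$, which matches the stated conditions under the convention $R^{(0)} = 1$. Assuming the theorem for $N - 1$ slots, I would then prove it for $N$ slots.

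For the forward direction, suppose $\smap{R}{}^{(N+1)}$ is a quantum comb with $N$ slots. Positivity $R^{(N+1)} \ge 0$ is established by feeding in tester combs composed of maximally entangled states on ancillary systems linking the $N-1$ slots (whose Choi operators are rank-one projectors onto positive vectors): the output must be a positive Choi operator of a channel, and by varying over such testers one deduces positivity of $R^{(N+1)}$ itself. For the $n$-th trace condition, I would test $\smap{R}{}^{(N+1)}$ on families of combs in which the last $N-n$ slots are occupied by trace-and-prepare channels (discarding the input and re-preparing a fixed state on an ancilla); the trace-preservation of the resulting output channel translates directly into the factorization $\Tr_{\mathcal{H}_{2n+1}\cdots \mathcal{H}_{2N+1}} R^{(N+1)} = \1^{\mathcal{H}_{2n}\cdots\mathcal{H}_{2N}} \otimes R^{(n)}$ for some positive operator $R^{(n)}$. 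By recursion on $n$ together with the inductive hypothesis, each $R^{(n)}$ is itself the Choi operator of a quantum comb with $n-1$ slots.

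For the backward direction, assume $R^{(N+1)} \ge 0$ together with the nested trace identities, let $\smap{S}{}^{(N)}$ be an arbitrary quantum comb with $N-1$ slots, and (by the inductive hypothesis) note that $S^{(N)}$ satisfies the analogous trace hierarchy. Positivity of $G := R^{(N+1)} * S^{(N)}$ follows because the link product implements composition of CP maps via the CJ isomorphism, and composition of CP maps is CP. Trace-preservation of $G$ as a channel from $\mathcal{H}_0 \otimes \mathcal{H}_1'$ to $\mathcal{H}_{2N+1} \otimes \mathcal{H}_{2N}'$ follows from a telescoping argument: push the output trace onto $R^{(N+1)}$, apply its $n = N$ trace condition to replace that factor with $\1^{\mathcal{H}_{2N}} \otimes R^{(N)}$, contract the identity factor against the corresponding normalization marginal of $S^{(N)}$ via the link product, and iterate, peeling off one identity at each step until the expression collapses down to $R^{(0)} = 1$ tensored with $\1^{\mathcal{H}_0 \otimes \mathcal{H}_1'}$.

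The main technical obstacle is the index bookkeeping for the telescoping argument: at each iteration the identity factor produced by a trace condition on $R^{(N+1)}$ must cleanly align with the corresponding normalization marginal of $S^{(N)}$ under the link product, and tracking which subsystems are linked, traced, or carried as identities is delicate. A subsidiary concern in the forward direction is the choice of tester channels that isolate each trace condition without spoiling positivity or introducing spurious correlations; trace-and-prepare channels combined with maximally entangled ancillas are well-suited precisely because their Choi operators have especially simple link-product algebra with any positive operator.
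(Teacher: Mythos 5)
First, a framing point: the paper does not prove this theorem at all --- it is imported verbatim from Ref.\,\cite{chiribella09b} --- so your proposal can only be measured against the standard proof there. Your backward direction essentially reproduces it and is sound: positivity of $G = R^{(N+1)} * S^{(N)}$ from positivity of both factors under the link product, and trace preservation by the interleaved telescoping in which the conditions on $R^{(N+1)}$ produce identity factors on the even spaces that are then contracted against the normalization conditions of $S^{(N)}$ on the odd spaces (note the two hierarchies alternate, since the spaces $\mathcal{H}_{2n}$ are outputs of $R^{(N+1)}$'s teeth but inputs of the channels plugged into $S^{(N)}$'s slots). The bookkeeping you flag is tedious but not dangerous.

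The genuine gap is in the forward direction. A trace-and-prepare channel $\rho \mapsto \Tr[\rho]\,\xi$ has Choi operator $\1 \otimes \xi$, so the family of such testers affinely spans only operators of the form $\1 \otimes X$; testing trace preservation of the output against them yields averaged constraints on the marginal of $R^{(N+1)}$ and cannot force the factorization $\Tr_{\mathcal{H}_{2n+1} \cdots \mathcal{H}_{2N+1}} R^{(N+1)} = \1 \otimes R^{(n)}$ (any $R^{(N+1)}$ whose marginal is merely correct ``on average'' passes these tests). What the cited proof actually uses is a duality step: the set of Choi operators of channels, $\left\{ E \geq 0 \relmiddle| \Tr_{\mathrm{out}} E = \1_{\mathrm{in}} \right\}$, contains a relative-interior point of the positive cone (the depolarizing channel), so its affine hull is the entire subspace cut out by the normalization condition; differences of admissible inputs therefore span all Hermitian operators with vanishing output trace, and orthogonality of the marginal of $R^{(N+1)}$ to that subspace is what yields the identity factor. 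Two further repairs are needed: (i) under the paper's recursive Def.\,5 you may not simply ``occupy the last $N-n$ slots'' --- the hypothesis constrains $R^{(N+1)}$ only on complete $(N-1)$-slot combs, so your chosen channels must first be packaged, together with identity wires routed through the ancillas $\mathcal{H}_1'$ and $\mathcal{H}_{2N}'$, into a genuine $(N-1)$-slot comb; and (ii) ``rank-one projectors onto positive vectors'' is not a meaningful positivity argument --- the correct statement is that a comb built from swaps and halves of maximally entangled pairs lets one read off $R^{(N+1)}$ itself as a reshuffling of the Choi operator of an output channel, whence $R^{(N+1)} \geq 0$. With the affine-span/duality step supplied, your induction closes and coincides with the proof in \cite{chiribella09b}.
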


%%%%%%%%%%%%%%%%%%%%%%%%%%%%%%%%%%%%%%%%%%%%%%%%%%%%%%%%%%%%%%%%%%%%%%%%
%%%%%%%%%%%%%%%%%%%%%%%%      NEW    SECTION    %%%%%%%%%%%%%%%%%%%%%%%%
%%%%%%%%%%%%%%%%%%%%%%%%%%%%%%%%%%%%%%%%%%%%%%%%%%%%%%%%%%%%%%%%%%%%%%%%

\section{Main results and their implications} \label{sec:main}

%%%%%%%%%%%%%%%%%%%%%%%%%%%%%%%%%%%%%%%%%%%%%%%%%%%%%%%%%%%%%%%%%%%%%%%%
%%%%%%%%%%%%%%%%%%%%%%%%      NEW    SECTION    %%%%%%%%%%%%%%%%%%%%%%%%
%%%%%%%%%%%%%%%%%%%%%%%%%%%%%%%%%%%%%%%%%%%%%%%%%%%%%%%%%%%%%%%%%%%%%%%%

We are now ready to precisely state our main results and their implications.

%%%%%%%%%%%%%%%%%%%%%%%%%%%%%%%%%%%%%%%%%%%%%%%%%%%%%%%%%%%%%%%%%%%%%%%%
%%%%%%%%%%%%%%%%%%%%%%%%      NEW SUBSECTION    %%%%%%%%%%%%%%%%%%%%%%%%
%%%%%%%%%%%%%%%%%%%%%%%%%%%%%%%%%%%%%%%%%%%%%%%%%%%%%%%%%%%%%%%%%%%%%%%%

\subsection{Main result 1: Pure comb}

%%%%%%%%%%%%%%%%%%%%%%%%%%%%%%%%%%%%%%%%%%%%%%%%%%%%%%%%%%%%%%%%%%%%%%%%
%%%%%%%%%%%%%%%%%%%%%%%%      NEW SUBSECTION    %%%%%%%%%%%%%%%%%%%%%%%%
%%%%%%%%%%%%%%%%%%%%%%%%%%%%%%%%%%%%%%%%%%%%%%%%%%%%%%%%%%%%%%%%%%%%%%%%

Every pure comb with $N$ slots can be realized by $(N + 1)$ unitary channels and does not require discarding any system (see Figure \ref{fig:f89ga89hujap}).
\begin{theo} \
	\label{thm:PureCombDecomposition}
	Let $U^{(N+1)} : \bigotimes_{n=0}^N \mathcal{H}_{2n} \to \bigotimes_{n=0}^N \mathcal{H}_{2n+1}$ be a unitary operator representing a pure quantum comb
	$\smap{R}{}^{(N+1)} \colon \bigotimes_{n=1}^{N} [ \mathcal{L}(\mathcal{H}_{2n-1}) \to \mathcal{L}(\mathcal{H}_{2n})] \to [\mathcal{L}(\mathcal{H}_{0}) \to \mathcal{L}(\mathcal{H}_{2N+1})]$
	with $N$ slots.
	There exist integers $\{ k_n \}_{n=0}^{N+1}$ such that $k_0 = k_{N+1} = 1$ and
	\begin{equation}
		d_{2n} k_n = d_{2n+1} k_{n+1}, \quad n \in \{ 0, \cdots, N\},
	\end{equation}
	where $d_{m} := \dim \mathcal{H}_{m}$.
	$U^{(N+1)}$ can be decomposed into
	\begin{equation}
		\label{eq:UR=UN...U1U0}
		U^{(N+1)} = U_N \cdots U_1 U_0
	\end{equation}
	by using some unitary operators $ U_n:\mathcal{H}_{2n} \otimes \mathcal{A}_{n} \to \mathcal{H}_{2n+1} \otimes \mathcal{A}_{n+1}$ ($n=0 \cdots N+1$)
	for a Hilbert spaces $\mathcal{A}_{n} $ ($n=0 \cdots N+1$) such that $\dim \mathcal{A}_{n} = k_n$. (In Eq.\,\eqref{eq:UR=UN...U1U0}, it is implicitly assumed that $U_n$ trivially acts on all the spaces but $\mathcal{H}_{2n} \otimes \mathcal{A}_n$.)
\end{theo}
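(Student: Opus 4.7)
My plan is to prove the theorem by induction on the number of slots $N$, peeling off the last unitary $U_N$ at each step using the comb condition of Thm.\,\ref{thm:purecombcausalityforall} together with the uniqueness of minimal Stinespring dilations. The base case $N=0$ is immediate: a pure comb with no slots is itself a unitary channel $\mathcal{H}_0\to\mathcal{H}_1$, so $U^{(1)}=U_0$ with $\mathcal{A}_0=\mathcal{A}_1=\mathbb{C}$ trivially works.

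For the inductive step, I would apply Thm.\,\ref{thm:purecombcausalityforall} at level $n=N$ to obtain
\begin{equation}
\tr_{\mathcal{H}_{2N+1}}\dket{U^{(N+1)}}\dbra{U^{(N+1)}} = \1^{\mathcal{H}_{2N}}\otimes R^{(N)},
\end{equation}
where $R^{(N)}$ is the Choi operator of a quantum comb $\map{R}^{(N)}$ with $N-1$ slots. In channel language this states that the map $\rho\mapsto \tr_{\mathcal{H}_{2N+1}}[U^{(N+1)}\rho(U^{(N+1)})^\dagger]$ equals $\map{R}^{(N)}\circ\tr_{\mathcal{H}_{2N}}$. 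A short rank computation then pins down the auxiliary dimension: since $U^{(N+1)}$ is unitary on the full space, the reduced density matrix of $\dket{U^{(N+1)}}$ on $\mathcal{H}_{2N+1}$ is proportional to the identity, hence the Schmidt rank of $\dket{U^{(N+1)}}$ across the $\mathcal{H}_{2N+1}$-versus-rest bipartition equals $d_{2N+1}$. Comparing the two sides of the equation yields $\text{rank}(R^{(N)})=d_{2N+1}/d_{2N}$, which is therefore automatically a positive integer; I define $k_N$ to be this integer.

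Next I take a minimal Stinespring dilation of $\map{R}^{(N)}$, producing an isometry $V^{(N)}\colon\bigotimes_{n=0}^{N-1}\mathcal{H}_{2n}\to\bigotimes_{n=0}^{N-1}\mathcal{H}_{2n+1}\otimes\mathcal{A}_N$ with $\dim\mathcal{A}_N=k_N$; the overall dimension balance of $U^{(N+1)}$ forces $V^{(N)}$ to be square, hence unitary. Both $U^{(N+1)}$ and $V^{(N)}\otimes\1^{\mathcal{H}_{2N}}$ are then minimal Stinespring dilations of the same channel (each with environment of dimension $d_{2N+1}$), so Stinespring uniqueness yields a unitary $U_N\colon\mathcal{H}_{2N}\otimes\mathcal{A}_N\to\mathcal{H}_{2N+1}$ on the environments such that $U^{(N+1)}=(\1\otimes U_N)(V^{(N)}\otimes\1^{\mathcal{H}_{2N}})$. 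To close the induction I verify that $V^{(N)}$ itself represents a pure comb with $N-1$ slots whose global output is the enlarged space $\mathcal{H}_{2N-1}\otimes\mathcal{A}_N$: purity is automatic since $\dket{V^{(N)}}\dbra{V^{(N)}}$ is rank one, and the comb conditions at each earlier level reduce, upon tracing out $\mathcal{A}_N$, to the corresponding conditions for $R^{(N)}$, which are inherited from those for $U^{(N+1)}$ via Thm.\,\ref{thm:purecombcausalityforall}. Applying the induction hypothesis gives $V^{(N)}=U_{N-1}\cdots U_0$ with auxiliary dimensions that satisfy exactly the same recursion $d_{2n}k_n=d_{2n+1}k_{n+1}$, so concatenation delivers $U^{(N+1)}=U_N U_{N-1}\cdots U_0$ as claimed.

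The main obstacle I expect is the Stinespring-uniqueness step: one must carefully track which tensor factor plays the role of environment on each side and confirm that \emph{both} dilations are minimal, so that the connecting map is a genuine unitary rather than a mere partial isometry. The rank computation for $R^{(N)}$ is the critical lever here, since it simultaneously establishes that $k_N$ is a positive integer equal to $d_{2N+1}/d_{2N}$ and that the environments of the two Stinespring dilations share this common minimal dimension. A secondary, bookkeeping obstacle is making sure that the auxiliary spaces produced iteratively by the induction hypothesis thread consistently through the recursion and reproduce the $k_n$'s of the outer claim at every step.
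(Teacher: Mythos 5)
Your proposal is correct, and although it shares the paper's inductive skeleton---at each step one peels off a final unitary $U_N$, establishes the integer $k_N$ with $d_{2N}k_N=d_{2N+1}$, and verifies that the remaining unitary represents a pure comb with one fewer slot---the key step is carried out with genuinely different tools. The paper works at the level of vectors in its reduced-subspace formalism: from Lem.\,\ref{lem:2mperp->2m+1perp} it derives $\lnsp{V}_{\alpha}=A_I\otimes\redspn{\lnsp{V}_{\alpha}}{A_I}{F}$, then builds explicit orthonormal bases $\ket{i,\,x}^P$ and $\ket{x,\,a}^F$ by hand (using the sesquilinear-form Lem.\,\ref{lem:sesquiperp} to show the relevant inner products do not depend on the input $\ket{a}^{A_O}$), defines $U^{(N)}$ and $U_N$ from these bases, and checks part (iii) through further reduced-subspace orthogonality relations. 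You instead argue at the Choi level: the causality condition $\Tr_{\mathcal{H}_{2N+1}}\dket{U^{(N+1)}}\dbra{U^{(N+1)}}=\1^{\mathcal{H}_{2N}}\otimes R^{(N)}$ of Thm.\,\ref{thm:purecombcausalityforall}, a Schmidt-rank count (the reduced state of $\dket{U^{(N+1)}}$ on $\mathcal{H}_{2N+1}$ is proportional to $\1$, so $d_{2N}\,\mathrm{rank}\,R^{(N)}=d_{2N+1}$, making $k_N$ an integer), and uniqueness of minimal Stinespring dilations, where your rank count correctly certifies that both environments, $\mathcal{H}_{2N+1}$ and $\mathcal{A}_N\otimes\mathcal{H}_{2N}$, have the minimal dimension $d_{2N+1}$, so the intertwiner $U_N$ is a genuine unitary rather than a partial isometry. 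Your closing of the induction is in fact slicker than the paper's part (iii): since $\mathcal{A}_N$ sits inside the final output space, every condition in the comb hierarchy for $\dket{V^{(N)}}\dbra{V^{(N)}}$ traces out $\mathcal{A}_N$ and hence reduces, via $\Tr_{\mathcal{A}_N}\dket{V^{(N)}}\dbra{V^{(N)}}=R^{(N)}$, to the corresponding condition on $R^{(N)}$, which Thm.\,\ref{thm:purecombcausalityforall} guarantees. What each approach buys: yours is shorter and leans on standard dilation theory; the paper's is self-contained within the reduced-subspace machinery it must develop anyway for Thm.\,\ref{thm:equivalent}. One imprecision to fix: purity of the sub-comb is not ``automatic since $\dket{V^{(N)}}\dbra{V^{(N)}}$ is rank one''---by the paper's own footnote, a rank-one Choi operator is not the definition of pure superchannel used here; the correct justification is Thm.\,\ref{thm:W=UWUW}, which applies precisely because you have already shown that $V^{(N)}$ is unitary and that it represents a valid quantum comb.
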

\begin{proof}
	See Appendix \ref{secApp:ProofOfPureCombDecomposition}.
\end{proof}
\Add{Proof techniques used in Appendix \ref{secApp:ProofOfPureCombDecomposition} are extensibly used to show the next main result.}

In Ref.\,\cite{Bisio_2011} the authors prove that every quantum comb with $N$ slots can be realized by $(N + 1)$ isometries followed by a partial trace which represents discarding part of the quantum system. Theorem \ref{thm:PureCombDecomposition} shows that a quantum comb is pure if and only if all these isometries can be replaced with unitaries and no quantum system is discarded. We refer to Figure \ref{fig:quantum_comb} for a pictorial illustration of general combs and Figure \ref{fig:f89ga89hujap} for a pictorial illustration of pure combs.

%%%%%%%%%%%%%%%%%%%%%%%%%%%%%%%%%%%%%%%%%%%%%%%%%%%%%%%%%%%%%%%%%%%%%%%%
%%%%%%%%%%%%%%%%%%%%%%%%      NEW SUBSECTION    %%%%%%%%%%%%%%%%%%%%%%%%
%%%%%%%%%%%%%%%%%%%%%%%%%%%%%%%%%%%%%%%%%%%%%%%%%%%%%%%%%%%%%%%%%%%%%%%%

\subsection{Main result 2: Pure superchannel}

%%%%%%%%%%%%%%%%%%%%%%%%%%%%%%%%%%%%%%%%%%%%%%%%%%%%%%%%%%%%%%%%%%%%%%%%
%%%%%%%%%%%%%%%%%%%%%%%%      NEW SUBSECTION    %%%%%%%%%%%%%%%%%%%%%%%%
%%%%%%%%%%%%%%%%%%%%%%%%%%%%%%%%%%%%%%%%%%%%%%%%%%%%%%%%%%%%%%%%%%%%%%%%

Before our second main statement, we introduce \ReviseFurther{the concept of a} \textit{direct sum of pure combs} as a generalization of the quantum switch (see Figure \ref{fig:qswitch_uaub}).
We observe that the quantum switch is represented by the following unitary operator (see Figure \ref{fig:qswitch} and Eq.\,\eqref{eq:U_QS}):
\footnotesize
\begin{align}
	\label{eq:unitaryofqs}
	U_{\mathrm{QS}} = & \ket{0}^{F_c} \bra{0}^{P_c} \otimes \1^{P_t \to A_I} \otimes \1^{A_O \to B_I} \otimes \1^{B_O \to F_t} + \ket{1}^{F_c} \bra{1}^{P_c} \otimes \1^{P_t \to B_I} \otimes \1^{B_O \to A_I} \otimes \1^{A_O \to F_t} \\
	\cong             & \1^{P^{A \prec B} \to A_I} \otimes \1^{A_O \to B_I} \otimes \1^{B_O \to F^{A \prec B}} \oplus \1^{P^{B \prec A} \to B_I} \otimes \1^{B_O \to A_I} \otimes \1^{A_O \to F^{B \prec A}} \nonumber,
\end{align}
\normalsize
where\footnote{$\oplus$ denotes the orthogonal direct sum of linear subspaces. See Appendix \ref{secApp:linearAlgebra}.}
$X = X_c \otimes X_t = X^{A \prec B} \oplus X^{B \prec A}$,
$\dim X_c = 2$, $\dim X_t = \dim X^{A \prec B} = \dim X^{B \prec A} = \dim A_I = \dim A_O = \dim B_I = \dim B_O$ ($X = P, F$).
Clearly, $\1^{P^{A \prec B} \to A_I} \otimes \1^{A_O \to B_I} \otimes \1^{B_O \to F^{A \prec B}}$ is a pure quantum comb of $A \prec B$ and
$\1^{P^{B \prec A} \to B_I} \otimes \1^{B_O \to A_I} \otimes \1^{A_O \to F^{B \prec A}}$ is a pure quantum comb of $B \prec A$.
The Choi vector form of this operator is
\begin{equation}
	\dket{U_{\mathrm{QS}}} =
	\ket{0}^{P_c} \dket{\1}^{P_t A_I} \dket{\1}^{A_O B_I} \dket{\1}^{B_O F_t} \ket{0}^{F_c} +
	\ket{1}^{P_c} \dket{\1}^{P_t B_I} \dket{\1}^{B_O A_I} \dket{\1}^{A_O F_t} \ket{1}^{F_c}.
\end{equation}

Now we define the direct sum of pure combs.

\begin{defi} (\textit{Direct sum of pure combs})
	\label{def:qsoqc}
	Let $U^{A \prec B} : P^{A \prec B} \otimes A_O \otimes B_O \to A_I \otimes B_I \otimes F^{A \prec B}$ be a unitary operator representing a quantum comb of $A \prec B$ and $U^{B \prec A} : P^{B \prec A} \otimes A_O \otimes B_O \to A_I \otimes B_I \otimes F^{B \prec A}$ be a unitary operator representing a quantum comb of $B \prec A$.
	A supermap represented by a unitary operator $U \colon P \otimes A_O \otimes B_O \to A_I \otimes B_I \otimes F$ defined as
	\begin{equation}
		\label{eq:defUqsoqc}
		U = U^{A \prec B} \oplus U^{B \prec A}
	\end{equation}
	is a \textit{direct sum of pure combs with two slots $A$ and $B$}, where $P := P^{A \prec B} \oplus P^{B \prec A}$ and $F := F^{A \prec B} \oplus F^{B \prec A}$.

\end{defi}
This definition can also be written in terms of Choi vectors. Equation \eqref{eq:defUqsoqc} is equivalent to $\dket{U} = \dket{U^{A \prec B}} + \dket{U^{B \prec A}}$.
Note that $\dket{U^{A \prec B}} \in P^{A \prec B} \otimes A_O \otimes B_O \otimes A_I \otimes B_I \otimes F^{A \prec B}$ and $\dket{U^{B \prec A}} \in P^{B \prec A} \otimes A_O \otimes B_O \otimes A_I \otimes B_I \otimes F^{B \prec A}$
although the sum of the two Choi vectors is taken in $P \otimes A_O \otimes B_O \otimes A_I \otimes B_I \otimes F$, which is valid because $P^{A \prec B} \otimes A_O \otimes B_O \otimes A_I \otimes B_I \otimes F^{A \prec B}$ and $P^{B \prec A} \otimes A_O \otimes B_O \otimes A_I \otimes B_I \otimes F^{B \prec A}$ are subspaces of $P \otimes A_O \otimes B_O \otimes A_I \otimes B_I \otimes F$.
Moreover, $P = P^{A \prec B} \oplus P^{B \prec A}$ and $F = F^{A \prec B} \oplus F^{B \prec A}$ are satisfied. If not satisfied, then $U$ is not unitary.
The quantum switch (Eq.\,\eqref{eq:unitaryofqs})
is an example of a direct sum of pure combs such that
\begin{equation}
	U^{A \prec B} = \ket{0}^{F_c} \bra{0}^{P_c} \otimes \1^{P_t \to A_I} \otimes \1^{A_O \to B_I} \otimes \1^{B_O \to F_t},
\end{equation}
\begin{equation}
	U^{B \prec A} = \ket{1}^{F_c} \bra{1}^{P_c} \otimes \1^{P_t \to B_I} \otimes \1^{B_O \to A_I} \otimes \1^{A_O \to F_t}
\end{equation}
where
\begin{equation}
	\label{PABinQS}
	P^{A \prec B} = \text{SPAN}[\ket{0}^{P_c}] \otimes P_t,
\end{equation}
\begin{equation}
	P^{B \prec A} = \text{SPAN}[\ket{1}^{P_c}] \otimes P_t,
\end{equation}
\begin{equation}
	F^{A \prec B} = \text{SPAN}[\ket{0}^{F_c}] \otimes F_t,
\end{equation}
\begin{equation}
	\label{QBAinQS}
	F^{B \prec A} = \text{SPAN}[\ket{1}^{F_c}] \otimes F_t,
\end{equation}
and $\text{SPAN}[\ket{\psi}]$ is the one-dimensional subspace spanned by $\ket{\psi}$, \eg,
\begin{equation}
	\text{SPAN}[\ket{0}^{P_c}] := \{ c \ket{0}^{P_c} | c \in \mathbb{C} \}.
\end{equation}

Our second main result is that a pure superchannel with two slots is a direct sum of pure combs.

\begin{theo}
	\label{thm:newgoal}
	All pure superchannels with two slots are equivalent to direct sums of pure combs.
\end{theo}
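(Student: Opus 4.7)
The plan is to start from the unitary representation $U_W \colon P \otimes A_O \otimes B_O \to A_I \otimes B_I \otimes F$ of the pure superchannel guaranteed by Thm.\,\ref{thm:W=UWUW}, and to extract from the purity condition an orthogonal decomposition $P = P^{A\prec B}\oplus P^{B\prec A}$ and $F = F^{A\prec B}\oplus F^{B\prec A}$ that splits $U_W$ into a direct sum of two unitaries, each of which is a pure comb by Def.\,\ref{def:qsoqc}. Once such a decomposition is in hand, the fact that each summand is itself a pure comb (hence by Thm.\,\ref{thm:PureCombDecomposition} realizable by unitary circuits) follows because the restriction of a pure superchannel to a direct summand inherits both the unitarity condition on unitary inputs and the appropriate one-sided trace conditions of Thm.\,\ref{thm:purecombcausalityforall}.

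First I would probe $U_W$ by plugging in unitary channels via Eq.\,\eqref{eq:ug8d0a0a0a0g9fg9}. Taking $U_A = \1$ with varying $U_B$ (and vice versa) shows that the marginal transformations obtained after tracing out one party must themselves be pure one-slot superchannels on the remaining systems. These marginal pieces, by the one-slot analogue of the causality constraints, carve out the candidate subspaces $P^{A\prec B}, P^{B\prec A}\subseteq P$ (respectively $F^{A\prec B}, F^{B\prec A}\subseteq F$): intuitively, $P^{A\prec B}$ is the part of the global past on which $U_W$ feeds into $A_I$ first, and $F^{A\prec B}$ is the part of the global future receiving output after $B_O$. The key technical move is to show that these subspaces are mutually orthogonal within $P$ and $F$ respectively and that together they span all of $P$ and $F$. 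Orthogonality should come from applying the purity condition to carefully chosen unitary inputs on $A$ and $B$ (for instance unitaries swapping computational basis states) and observing that the resulting output unitary $U_G$, together with its inverse $U_G^\dagger$, forces cross terms between the two candidate subspaces to vanish — otherwise $U_G$ could not preserve norms for all choices of inputs simultaneously.

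The main obstacle will be ruling out genuinely indefinite-causal-order behavior beyond coherent superpositions of two orderings. Concretely, the difficulty is proving that for a rank-one process the bipartite process-matrix decomposition (analogous to the $W^{A\prec B}+W^{B\prec A}+W^{A\Vert B}$ splits of Ref.\,\cite{araujo15}) collapses to an orthogonal direct sum at the Choi-vector level rather than merely at the Choi-operator level. I expect to attack this by combining the rank-one condition $W=\dket{U_W}\dbra{U_W}$ with the trace constraints from Def.\,\ref{def:ProcessMatrix} applied to both orderings of marginalization: these two sets of linear constraints, taken together with unitarity of $U_W$, should force the Choi vector to lie in $(P^{A\prec B}\otimes A_O\otimes B_O\otimes A_I\otimes B_I\otimes F^{A\prec B})\oplus(P^{B\prec A}\otimes A_O\otimes B_O\otimes A_I\otimes B_I\otimes F^{B\prec A})$, yielding $\dket{U_W}=\dket{U^{A\prec B}}+\dket{U^{B\prec A}}$ with orthogonal supports. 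Any hypothetical $A\Vert B$ contribution can then be absorbed into either summand without loss of generality, since a pure parallel comb is simultaneously a pure $A\prec B$ comb and a pure $B\prec A$ comb.

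Finally, I would verify that each summand $U^{A\prec B}$ and $U^{B\prec A}$ is unitary on its respective domain — this follows because the sum is unitary and the two images are orthogonal — and that each satisfies Def.\,\ref{def:PureProcessMatrix} as a pure comb with the appropriate order. The conclusion $U_W = U^{A\prec B}\oplus U^{B\prec A}$ then matches Def.\,\ref{def:qsoqc}, proving the theorem. I anticipate that the quantitative details of the orthogonality argument — specifically, showing that the "control degree of freedom" distinguishing the two orderings is genuinely carried by orthogonal subspaces of $P$ and $F$ rather than by some more entangled structure — will constitute the bulk of the technical work.
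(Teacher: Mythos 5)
Your overall skeleton matches the paper's: start from the unitary $U_W$ of Thm.\,\ref{thm:W=UWUW}, extract orthogonality relations from purity by inserting judiciously chosen unitaries (your swap/phase probes correspond to the paper's Lem.\,\ref{lem:3-1}, proved exactly by sign and $i$ substitutions on basis vectors), decompose $P$ and $F$ into ordered pieces, absorb the parallel part into one of the two combs, and read off unitarity of each summand from orthogonality of the images. The genuine gap sits precisely at the step you yourself flag as the ``main obstacle'': you propose that the rank-one condition plus ``the trace constraints from Def.\,\ref{def:ProcessMatrix} applied to both orderings of marginalization'' force the Choi vector into $(P^{A\prec B}\otimes A_O \otimes B_O \otimes A_I \otimes B_I \otimes F^{A\prec B})\oplus(P^{B\prec A}\otimes A_O \otimes B_O \otimes A_I \otimes B_I \otimes F^{B\prec A})$. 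Neither reading of this works. If you mean the comb causality constraints of Thm.\,\ref{thm:purecombcausalityforall} for both orders, they fail outright for indefinite pure processes --- the quantum switch satisfies neither one-sided hierarchy. If you mean the superchannel validity constraints, they do hold but cannot by themselves yield the conclusion: the same constraints are available for three slots, where pure superchannels violating causal inequalities exist and the direct-sum statement is false, so any correct proof must exploit the two-slot structure far more finely than ``rank one plus validity''.

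The missing idea is the paper's local-to-global construction. One first fixes a product input $\ket{\alpha}^{A_O}\otimes\ket{\beta}^{B_O}$ and splits the reduced subspace $\redspn{\lnsp{V}_{\alpha\beta}}{A_I B_I}{F}$ three ways into $A\prec B$, parallel, and $B\prec A$ pieces (Prop.\,\ref{prop:FabDecomposition}), pulling this back to a decomposition $P = P^|_{\alpha\beta}\oplus P^\square_{\alpha\beta}\oplus P^-_{\alpha\beta}$ that a priori depends on $(\alpha,\beta)$. The crux --- which your sketch silently assumes when speaking of ``the part of the global past on which $U_W$ feeds into $A_I$ first'' --- is the invariance Prop.\,\ref{prop:P|00=P|0kp}: $P^|_{\alpha\beta}$ is independent of $\beta$ and $P^-_{\alpha\beta}$ is independent of $\alpha$, proved through the sesquilinear-form rigidity of Lem.\,\ref{lem:sesquiperp} applied to condition \eqref{a}. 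Only with this invariance can the global subspaces $\tilde P^|$, $\tilde P^\square$, $\tilde P^-$ be formed by sums and intersections, and even then the orthogonality of the corresponding future subspaces (Props.\,\ref{prop:FABperpFBA} and \ref{prop:FeqFABoplFBAoplFpal}) requires separate separability arguments about vectors of the form $\ket{\phi}^{A_I}\dket{\Psi}^{B_I F}$, not just unitarity of the output $U_G$. Note also that your marginal one-slot probes yield only conditions \eqref{b} and \eqref{c} of Thm.\,\ref{thm:equivalent}; the genuinely bipartite condition \eqref{a}, without which the invariance proposition has no starting point, needs the separate two-slot phase-flip argument of Lem.\,\ref{lem:3-1} together with the sesquilinear extension steps of \ref{appendix:ProofsOfpreparation}. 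Your closing paragraph correctly predicts that this is where the bulk of the work lies, but the mechanism you offer for it would not go through.
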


More formally, we prove the following theorem.

\begin{theo} \label{thm:equivalent}
	Let $U : P \otimes A_O \otimes B_O \to A_I \otimes B_I \otimes F$ be a unitary operator. The following conditions are equivalent:
	\begin{description}
		\item[(1)] $U$ represents a pure \ReviseFurther{quantum} superchannel with two slots $A$ and $B$.
		\item[(2)] For all $\ket{\pi}^P,\, \ket{{\pi}^\prime}^P \in P$,
		$\ket{\alpha}^{A_O},\, \ket{\alpha^\prime}^{A_O},\, \ket{\alpha^\perp}^{A_O} \in A_O$ such that $\ket{\alpha}^{A_O} \perp \ket{\alpha^\perp}^{A_O}$,
		$\ket{\beta}^{B_O},\, \ket{\beta^\prime}^{B_O},\, \ket{\beta^\perp}^{B_O} \in B_O$ such that $\ket{\beta}^{B_O} \perp \ket{\beta^\perp}^{B_O}$,
		$\dket{\Gamma}^{A_I B_I},\,\\ \dket{{\Gamma}^\prime}^{A_I B_I} \in A_I \otimes B_I$,
		$\ket{\phi}^{A_I},\, \ket{{\phi}^\prime}^{A_I} \in A_I$ and $\ket{\psi}^{B_I},\, \ket{{\psi}^\prime}^{B_I} \in B_I$,
		\begin{equation}
			\label{a}
			\left(
			{}^{A_I B_I}\dbra{\Gamma} U \left( \ket{\pi}^P \ket{\alpha}^{A_O} \ket{\beta}^{B_O} \right), {}^{A_I B_I}\dbra{{\Gamma}^\prime}
			U \left( \ket{{\pi}^\prime}^P \ket{\alpha^\perp}^{A_O} \ket{\beta^\perp}^{B_O} \right)
			\right) = 0,
			\tag{a}
		\end{equation}
		\begin{equation}
			\label{b}
			\left(
			{}^{B_I}\bra{\psi} U \left( \ket{\pi}^P \ket{\alpha}^{A_O} \ket{\beta}^{B_O} \right), {}^{B_I}\bra{{\psi}^\prime}
			U \left( \ket{{\pi}^\prime}^P \ket{\alpha^\prime}^{A_O} \ket{\beta^\perp}^{B_O} \right)
			\right) = 0,
			\tag{b}
		\end{equation}
		\begin{equation}
			\label{c}
			\left(
			{}^{A_I}\bra{\phi} U \left( \ket{\pi}^P \ket{\alpha}^{A_O} \ket{\beta}^{B_O} \right), {}^{A_I}\bra{{\phi}^\prime}
			U \left( \ket{{\pi}^\prime}^P \ket{\alpha^\perp}^{A_O} \ket{\beta^\prime}^{B_O} \right)
			\right) = 0.
			\tag{c}
		\end{equation}
		\item[(3)] $U$ represents a direct sum of pure combs with two slots $A$ and $B$.
	\end{description}
\end{theo}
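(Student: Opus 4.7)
The plan is to establish the three-way equivalence by a cycle $(3) \Rightarrow (1) \Rightarrow (2) \Rightarrow (3)$, since $(3) \Rightarrow (1)$ is the easiest and $(2) \Rightarrow (3)$ is likely the most demanding.

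For $(3) \Rightarrow (1)$, I would apply Eq.~\eqref{eq:ug8d0a0a0a0g9fg9} directly. Given $U = U^{A \prec B} \oplus U^{B \prec A}$, the partial trace over $A_O \otimes B_O$ respects the direct-sum decompositions of $P$ and $F$, so $U_G$ splits as $U_G^{A \prec B} \oplus U_G^{B \prec A}$. Each summand is unitary because it is the action of a pure comb on unitary inputs (Thm.~\ref{thm:W=UWUW} applied to combs, together with Thm.~\ref{thm:PureCombDecomposition}), and a direct sum of unitaries is unitary, so $U_G$ is unitary and $U$ represents a pure superchannel.

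For $(1) \Rightarrow (2)$, I would translate the unitarity identity $U_G^\dagger U_G = \1$---which must hold for \emph{every} pair of unitaries $U_A, U_B$ with arbitrary auxiliaries---into the three explicit inner-product conditions. The strategy is to engineer $U_A$ and $U_B$ so that, using basis vectors of suitably chosen auxiliary systems, they output the prescribed vectors $\ket{\alpha}, \ket{\alpha^\perp}, \ket{\alpha'}$ on $A_O$ and $\ket{\beta}, \ket{\beta^\perp}, \ket{\beta'}$ on $B_O$, with controlled action on the input side that pulls out the remaining data $\ket{\phi}, \ket{\phi'}, \ket{\psi}, \ket{\psi'}, \dket{\Gamma}, \dket{\Gamma'}$. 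Reading off the relevant matrix element of $U_G^\dagger U_G - \1 = 0$ then yields the three orthogonality conditions, according to whether both $A_O$ and $B_O$ inputs are constrained to be orthogonal (condition~(a)), only the $B_O$ input is (condition~(b)), or only the $A_O$ input is (condition~(c)).

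For $(2) \Rightarrow (3)$, I would construct orthogonal decompositions $P = P^{A \prec B} \oplus P^{B \prec A}$ and $F = F^{A \prec B} \oplus F^{B \prec A}$ with respect to which $U$ is block-diagonal, each block being the unitary representation of a pure comb of the indicated causal order. Concretely, I would define $P^{A \prec B}$ as the subspace of $P$ whose image under $U$ satisfies the causality constraint of Thm.~\ref{thm:purecombcausalityforall} for the order $A \prec B$, and dually for $P^{B \prec A}$. Condition~(a) should be precisely what forces these two subspaces to be mutually orthogonal, while (b) and (c) should enforce that on each summand the variation of $B_O$ (respectively $A_O$) propagates only to the output register $F$ and never back to the other party's input wire---the hallmark of the corresponding causal order. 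The analogous decomposition of $F$ follows by applying the same argument on the image side of $U$, and unitarity of $U$ glues the two constructions consistently.

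The main obstacle will be $(2) \Rightarrow (3)$: it is not \emph{a priori} clear that the candidate subspaces so defined are honestly linear, pairwise orthogonal, and together exhaust $P$ and $F$. I expect this step to rely on carefully isolating one slot at a time---reducing to the pure-comb decomposition of Thm.~\ref{thm:PureCombDecomposition}---and then stitching the two partial structures together using the linear-algebraic machinery on reduced subspaces developed in~\ref{sec:notation}, which is precisely why the authors devote an appendix to that toolkit.
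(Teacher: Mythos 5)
Your cycle $(3) \Rightarrow (1) \Rightarrow (2) \Rightarrow (3)$ is exactly the paper's, and your treatments of $(3) \Rightarrow (1)$ and $(1) \Rightarrow (2)$ are directionally faithful to \ref{appendix:ProofsOfpreparation}: the paper does engineer $U_A = \sum_j \ket{j}^{A_O}\bra{\phi_j}^{A_I}$, $U_B = \sum_k \ket{k}^{B_O}\bra{\psi_k}^{B_I}$ and exploit unitarity of $U_G$. One caveat: you cannot simply ``read off the relevant matrix element'' of $U_G^\dagger U_G = \1$, since that identity only gives the sum $\sum_{j,k,j',k'} U_{jk}^\dagger U_{j'k'} = \1$ over all cross terms; the paper isolates $U_{00}^\dagger U_{11} = O$ by a polarization trick (re-running the identity with $\ket{\phi_0} \mapsto -\ket{\phi_0}$, $i\ket{\phi_0}$, etc., and taking linear combinations), and then must upgrade from orthogonal product vectors to the arbitrary, possibly entangled $\dket{\Gamma}, \dket{\Gamma'}$ of condition \eqref{a} via the sesquilinear-form Lemma \ref{lem:sesquiperp} --- an upgrade that itself consumes conditions \eqref{b} and \eqref{c}, which the paper obtains separately by observing that fixing a unitary in one slot leaves a one-slot pure superchannel, hence a pure comb (Lem.\,\ref{lem:2mperp->2m+1perp}). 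These are fixable omissions of detail rather than wrong ideas.

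The genuine gap is in $(2) \Rightarrow (3)$. Defining $P^{A \prec B}$ as the vectors whose image satisfies the $A \prec B$ causality constraint, and dually $P^{B \prec A}$, cannot yield an orthogonal direct sum: any \emph{parallel} component of the superchannel (a pure comb of $A \parallel B$, e.g.\ $U_{P_1 \to A_I} \otimes U_{P_2 \to B_I} \otimes U_{A_O B_O \to F}$) satisfies \emph{both} causality constraints, so your two candidate subspaces generically intersect, and condition \eqref{a} cannot ``force them to be mutually orthogonal'' --- no condition can. The paper's resolution is a genuinely three-part decomposition $P = \tilde{P}^| \oplus \tilde{P}^\square \oplus \tilde{P}^-$ in which the parallel block $\tilde{P}^\square$ is isolated and only merged into $P^{A \prec B} := \tilde{P}^| \oplus \tilde{P}^\square$ at the very end (Prop.\,\ref{prop:finaleqs}). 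Moreover, the construction is bottom-up, not top-down: for each product input $(\alpha,\beta)$ one decomposes the reduced future subspace as $F^|_{\alpha\beta} \oplus F^\square_{\alpha\beta} \oplus F^-_{\alpha\beta}$ (Prop.\,\ref{prop:FabDecomposition}), lifts this to $\lnsp{V}_{\alpha\beta}$ and to pointwise pieces $P^|_{\alpha\beta}, P^\square_{\alpha\beta}, P^-_{\alpha\beta}$ (Props.\,\ref{prop:DecompositionOfVab}, \ref{prop:DecompositionOfPab}), and the linearity and well-definedness you rightly worry about are secured precisely by the independence results --- $P^|_{\alpha\beta}$ independent of $\beta$ and $P^-_{\alpha\beta}$ independent of $\alpha$ (Prop.\,\ref{prop:P|00=P|0kp}, proved via the sesquilinear Props.\,\ref{prop:fgijoiaojhb} and \ref{prop:7g7hq9wah}) --- a mechanism your proposal does not supply. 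Nor does the $F$-side decomposition ``follow by the same argument on the image side'': the orthogonality statements Props.\,\ref{prop:FABperpFBA} and \ref{prop:FeqFABoplFBAoplFpal} are among the hardest steps, requiring the characterization of $\tilde{\lnsp{V}}^|$ via vectors with separable image and the transport Props.\,\ref{prop:a_bf0->a_bf1} and \ref{prop:g78awyjivb3}. Finally, Thm.\,\ref{thm:PureCombDecomposition} plays no role in this implication, so ``reducing to the pure-comb decomposition'' is not an available shortcut; causality of the blocks is instead verified at the end through Cor.\,\ref{cor:N=2purecomb}.
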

The implication (1) $\implies$ (2) is proved from straightforward calculations (Appendix \ref{secApp:ProofsOfpreparation}).
The condition (2) can be simply rephrased in terms of the reduced subspace, introduced in Appendix \ref{secApp:partsp}. After rephrasing, we provide the proof of (2) $\implies$ (3) in Appendix \ref{secApp:idea}.
The last implication (3) $\implies$ (1) is clear since if we insert two arbitrary unitary operators $U_A : A_I \otimes A_I^\prime \to A_O \otimes A_O^\prime$, $U_B : B_I \otimes B_I^\prime \to B_O \otimes B_O^\prime$, in the two slots $A$ and $B$, the output is also a unitary operation. This \Add{implication (3) $\implies$ (1)} can be rigorously verified via Lem.\,\ref{lem:Uisprocess<=>UGisunitary}.

\Add{Eq.\,\eqref{a} in the condition (2) can be interpreted as follows.
	First, we consider ``feeding'' a pure state $\ket{\alpha}^{A_O}$ into $A_O$ and a pure state $\ket{\beta}^{B_O}$ into $B_O$, that is, $\ket{\alpha}^{A_O}$ and $\ket{\beta}^{B_O}$ are used as input-states in $A_O$ and $B_O$ for $U$, respectively, and an arbitrary input-state $\ket{\pi}^{P}$ in $P$ for $U$ \ReviseFurther{is} used.
	Next, we ``feed'' another pure state $\ket{\alpha^\perp}^{A_O}$ orthogonal to $\ket{\alpha}^{A_O}$ into $A_O$ and another pure state $\ket{\beta^\perp}^{B_O}$ orthogonal to $\ket{\beta}^{B_O}$ into $B_O$.
	\ReviseFurther{Eq.\,\eqref{a} means that the reduced states on $F$ of the two corresponding output-states are orthogonal.}
	Briefly, \ReviseFurther{Eq.\,\eqref{a}} means that ``feeding'' orthogonal pure states into $A_O$ and $B_O$ causes orthogonal reduced states on $F$. Similarly, \ReviseFurther{Eq.\,\eqref{b} (Eq.\,\eqref{c})} means that ``feeding'' orthogonal pure states into \ReviseFurther{$B_O$, ($A_O$)} causes orthogonal reduced states on  \ReviseFurther{$A_I \otimes F$ ($B_I \otimes F$)}, respectively.
	In the proof of (2) $\implies$ (3), 
	we ``feed'' pure states into $A_O$ and $B_O$, analyze the actions of $U$ with the condition (2) changing the fed states.}
	\ReviseFurther{Then we specify the subspaces characterized by ``$A \prec B$ vectors'' and the subspaces characterized by ``$B \prec A$ vectors''  in $P$ and $F$ to decompose $P$ and $F$ into $P = P^{A \prec B} \oplus P^{B \prec A}$ and $F = F^{A \prec B} \oplus F^{B \prec A}$.}
	\ReviseFurther{More detailed explanations for the core ideas of the proof are given in Appendix \ref{secApp:idea}.}

%%%%%%%%%%%%%%%%%%%%%%%%%%%%%%%%%%%%%%%%%%%%%%%%%%%%%%%%%%%%%%%%%%%%%%%%
%%%%%%%%%%%%%%%%%%%%%%%%      NEW SUBSECTION    %%%%%%%%%%%%%%%%%%%%%%%%
%%%%%%%%%%%%%%%%%%%%%%%%%%%%%%%%%%%%%%%%%%%%%%%%%%%%%%%%%%%%%%%%%%%%%%%%

\subsection{Implications of the main results}

%%%%%%%%%%%%%%%%%%%%%%%%%%%%%%%%%%%%%%%%%%%%%%%%%%%%%%%%%%%%%%%%%%%%%%%%
%%%%%%%%%%%%%%%%%%%%%%%%      NEW SUBSECTION    %%%%%%%%%%%%%%%%%%%%%%%%
%%%%%%%%%%%%%%%%%%%%%%%%%%%%%%%%%%%%%%%%%%%%%%%%%%%%%%%%%%%%%%%%%%%%%%%%

\Add{
	Recall that the Choi operators of quantum superchannels with two slots $[[ \mathcal{L}(A_I) \to \mathcal{L}(A_O)] \otimes [\mathcal{L}(B_I) \to \mathcal{L}(B_O)]] \to [\mathcal{L}(P) \to \mathcal{L}(F)]$ are equivalent to bipartite process matrices \cite{oreshkov11,araujo16,araujo15} with two parties $A$ and $B$, global past $P$ and global future $F$.
	More precisely, a bipartite process matrix with global past and global future is the Choi operator $W\in \mathcal{L}(P\otimes A_I\otimes A_O \otimes B_I\otimes B_O \otimes F)$ of a two-slot superchannel presented in Def.\,\ref{def:ProcessMatrix}, that is, $W\geq0$ and for any pair of Choi operators of channels $E_A\in \mathcal{L}(A_I\otimes A_O)$ and $E_B\in \mathcal{L}(B_I\otimes B_O)$, the operator
	$G=W*(E_{\ReviseFurther{A}}\otimes E_{\ReviseFurther{B}})\in\mathcal{L}(P\otimes F)$ is the Choi operator of a quantum channel. Process matrices with no global past are equivalent to process matrices where the dimension of the linear space $P$ is one, process matrices with no global future are equivalent to process matrices where the dimension of the linear space $F$ is one.
}

\Add{
With the aid of Thm.\,\ref{thm:newgoal}, we have three implications about a bipartite process matrix, in particular, about \textit{purifiable} process matrices\,\cite{araujo16}.
A process matrix with a global past and global future is \textit{pure} if it is a pure superchannel.
A process matrix $W$ with a global past and global future is \textit{purifiable} if it can be obtained from a pure process matrix $W^\prime_{\mathrm{pure}}$ with global past $P$ and global future $F$ by
inserting a fixed state $\ket{0}^{P_{\mathrm{sub}}}$ in a subsystem $P_{\mathrm{sub}}$ of $P$ and tracing out a subsystem $F_{\mathrm{sub}}$ of $F$, that is,
$W = W^\prime_{\mathrm{pure}} * (\ketbra{0}{0}^{P_{\mathrm{sub}}} \otimes \1^{F_{\mathrm{sub}}})$.
The concept of  purifiable process matrix was introduced based on the principle of ``reversibility of the transformations between quantum states'' in Ref.\,\cite{araujo16} and the authors propose a postulate that a process matrix admits a ``fair'' physical realization only if it is purifiable.
}

\Add{
	The first implication of our main result connects \textit{causally separable process matrices}\,\cite{oreshkov11} without global past and without global future and purifiable process matrices.
	A bipartite process matrix $W\in \mathcal{L}(A_I\otimes A_O \otimes B_I\otimes B_O)$  without global past and without global future is said to be causally separable if it can be written as is a probabilistic mixture of causally ordered process matrices\footnote{A bipartite process matrix $W^{A \prec B} \in \mathcal{L}(P\otimes A_I\otimes A_O \otimes B_I\otimes B_O \otimes F)$ is causally ordered from $A$ to $B$ if it is the Choi operator of a quantum comb where the slot $A$ comes before the slot $B$. For the definition of causal separability presented here we only consider the case where the dimension of linear spaces $P$ and $F$ is one.}  $W^{A \prec B}\in \mathcal{L}(A_I\otimes A_O \otimes B_I\otimes B_O)$  and $W^{B\prec A} \in \mathcal{L}(A_I\otimes A_O \otimes B_I\otimes B_O)$, \textit{i.e.,} $W=q W^{A\prec B} + (1-q)W^{B\prec A}$ for some $q\in [0,1]$.
}

\Add{
	\begin{coro} \label{cor:purifiableImplyCausalsep}
		All purifiable bipartite process matrices without a global past and without global future are causally separable.
	\end{coro}
}
\begin{proof}
	\Add{
		From Thm.\,\ref{thm:newgoal}, all pure processes $W_{\mathrm{pure}}\in \mathcal{L}(P\otimes A_I\otimes A_O \otimes B_I\otimes B_O \otimes F)$ can be written as
		\begin{equation}
			W_{\mathrm{pure}}=\dket{U_A}\dbra{U_A} + \dket{U_A}\dbra{U_B} +\dket{U_B}\dbra{U_A} +\dket{U_B}\dbra{U_B}
		\end{equation}
		where the future space of $\dket{U_A}$ is orthogonal to the future space of $\dket{U_B}$. Due to this orthogonality relation, pure processes always satisfy
		\begin{align}
			\Tr_F[W_{\mathrm{pure}}] & = \Tr_F[ \dket{U_A}\dbra{U_A} + \dket{U_A}\dbra{U_B} +\dket{U_B}\dbra{U_A} +\dket{U_B}\dbra{U_B} ] \\
			                         & = \Tr_F[ \dket{U_A}\dbra{U_A} + \dket{U_B}\dbra{U_B} ],
		\end{align}
		where $\dket{U_A}\dbra{U_A} $ and $\dket{U_B}\dbra{U_B} $ are causally ordered process matrices from $A$ to $B$ and from $B$ to $A$ respectively.
	}

	\Add{
		By definition, if a bipartite process matrix $W\in \mathcal{L}(A_I\otimes A_O \otimes B_I\otimes B_O )$  without global past and future is purifiable, there exists a pure process $W_{\mathrm{pure}}$ such that
		\begin{align}
			W = & W^\prime_{\mathrm{pure}} * (\ketbra{0}{0}^P \otimes \1^F)                                                                       \\
			=   & \ketbra{0}{0}^P * \Tr_{F} \left[ W^\prime_{\mathrm{pure}}\right]                                                                \\
			=   & \ketbra{0}{0}^P * \Tr_{F} \left[ \dket{U_A}\dbra{U_A} \right]  +  \ketbra{0}{0}^P * \Tr_{F} \left[ \dket{U_B}\dbra{U_B} \right].
		\end{align}
		Since $\dket{U_A}\dbra{U_A} $ is an ordered process matrix from $A$ to $B$, the operator $\ketbra{0}{0}^P * \Tr_{F} \left[ \dket{U_A}\dbra{U_A} \right] $ is proportional to an ordered process matrix without global past and global future, and an analogous argument can be used for $\dket{U_B}\dbra{U_B} $. This ensures that $W$ is a probabilistic mixture of differently causally ordered processes, hence, causally separable.
	}
\end{proof}
\Add{
	The same result as Cor.\,\ref{cor:purifiableImplyCausalsep} is independently obtained in Ref.\,\cite{barrett20} by Barrett \etal They present it as a main result and use the same fact as Thm.\,\ref{thm:newgoal} in their proof. See ``\nameref{sec:NoteAdded}'' in detail.
}

\Add{
	When considering bipartite process matrices with a non-trivial global past, the definition of causal separability does not resume to a simple probabilistic mixture of ordered process matrices but have extra nuances \cite{wechs18} which will not be covered in this manuscript. We note however that with the aid of Proposition B7 of Ref.\,\cite{wechs18}, Cor.\,\ref{cor:purifiableImplyCausalsep} can be extended to ``All purifiable bipartite process matrices without global future are causally separable.''.
}

\ReviseFurther{The next implication considers non-causal process matrices, which are process matrices which lead to non-causal sets of probability distributions \cite{oreshkov11,araujo15}. In a nutshell, let $p(ab|xy)$ be the probability of Alice and Bob respectively obtain outcomes $a$, $b$ when choosing an input configuration $x$, $y$. A set of probabilities $p^{A\prec B}(ab|xy)$ respects the causal order $A$ before $B$ if $\sum_{b}p^{A\prec B}(ab|xy)$ does not depend on $y$. Analogously, a set of probabilities $p^{B\prec A}(ab|xy)$ respects the causal order $B$ before $A$ if $\sum_{a}p^{B\prec A}(ab|xy)$ does not depend on $x$. In this way, a set of probabilities $p_\text{causal}(ab|xy)$ is causal if it can be written as a convex combination $p_\text{causal}(ab|xy)=qp^{A\prec B}(ab|xy)+(1-q)p^{B\prec A}(ab|xy)$ with $0\leq q\leq1$. It can also be shown that a set of probabilities is non-causal if and only if it violates a causal inequality \cite{oreshkov11,branciard2015simplest,araujo15}.}

\ReviseFurther{From this perspective, a bipartite process matrix $W\in \mathcal{L}(A_I\otimes A_O \otimes B_I\otimes B_O)$  is  \textit{not} causal if there \ReviseFurtherM{exist} a set of \ReviseFurtherM{the Choi representations} of \ReviseFurtherM{quantum} instruments%
\footnote{\ReviseFurther{A set of positive semidefinite operators $\{A_a \}_a $, $A_a \in \mathcal{L} (A_I \otimes A_O)$ is \ReviseFurtherM{the Choi representation} of a quantum instrument if $\Tr_{A_O}\left[\sum_a A_a\right]=\1_{A_I}$. In this notation, $a$ stands for the label of the output of the instrument element $A_a$. When we write a set of instruments $\{A_{a|x}\}_{a,x}$, $x$ is a label for \ReviseFurtherM{an} instrument $\{A_{a|x}\}_a$. 
}}
$\{A_{a|x}\}_{a,x}$, $A_{a|x}\in \mathcal{L} (A_I \otimes A_O)$ \ReviseFurtherM{and} a set of \ReviseFurtherM{the Choi representations} of \ReviseFurtherM{quantum} instruments $\{B_{b|y}\}_{b,y}$, $B_{b|y}\in \mathcal{L} (B_I \otimes B_O)$ such that the probabilities 
$p(ab|xy)= \ReviseFurtherM{\Tr [W^T (A_{a|x}\otimes B_{b|y})]}$ are \ReviseFurtherM{not} causal. When considering process matrices with \ReviseFurtherM{global} past and \ReviseFurtherM{global} future, a bipartite process matrix $W\in \mathcal{L}(P\otimes A_I\otimes A_O \otimes B_I\otimes B_O \otimes F)$ is not causal if there \ReviseFurtherM{exist} a quantum state $\rho \in \mathcal{L}(P)$, a set of \ReviseFurtherM{quantum} instruments $\{A_{a|x}\}_{a,x}$, $A_{a|x}\in \mathcal{L} (A_I \otimes A_O)$, a set of \ReviseFurtherM{quantum} instruments $\{B_{b|y}\}_{b,y}$, $B_{b|y}\in \mathcal{L} (B_I \otimes B_O)$, and a set of \ReviseFurtherM{quantum} measurements\footnote{\ReviseFurther{A set of positive semidefinite operators $\{C_c\}_c$, $C_c \in  \mathcal{L} (F) $ is a quantum measurement if $\sum_c C_c = \1_F$.}}
$\{C_{c|z}\}_{c,z}$, $\ReviseFurtherM{C_{c|z}} \in \mathcal{L}(F)$ such that the probabilities $p(abc|xyz)=\ReviseFurtherM{ \Tr [W^T (\rho \otimes A_{a|x}\otimes B_{b|y}\otimes C_{c|z})]}$ \ReviseFurtherM{are not} causal\footnote{\ReviseFurther{More precisely, the set of probabilities cannot be written as the convex combination $p(abc|xyz)=qp^{A\prec B\prec C}(abc|xyz) + (1-q)p^{B\prec A\prec C}(abc|xyz)$ where $\sum_{c}p^{A\prec B\prec C}(abc|xyz)$ and $\sum_{c}p^{B\prec A\prec C}(abc|xyz)$ do not depend on $z$, $\sum_{bc}p^{A\prec B\prec C}(abc|xyz)$ does not depend on $y$ and $z$, and $\sum_{ac}p^{B\prec A\prec C}(abc|xyz)$ does not depend on $x$ and $z$.}}.}%

\Add{
	A process matrix is \textit{extensibly causal}\,\cite{oreshkov2016causal} if it cannot violate causal inequalities even when the parties have access to additional arbitrary entangled input state. The set of extensibly causal process matrix is known to be strictly smaller than the set of device-independent causal ones \cite{feix2016causally}.
	Reference\,\cite{oreshkov2016causal} shows that the quantum switch is extensibly causal. Here we expand this result by showing that every purifiable bipartite process matrix is extensibly causal. Note that in Ref.\,\cite{araujo16} the authors write ``In fact, we haven't found a purification for any bipartite process that was able to violate a causal inequality.'', here we show that one can never find a purification for bipartite process matrices which violate a causal inequality.
}

\begin{coro}\label{cor:DI}
	\Add{
		All purifiable bipartite process matrices (including the ones with a non-trivial global past and non-trivial global future) are extensibly causal.
	}
\end{coro}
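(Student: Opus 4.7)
The plan is to invoke Thm.\,\ref{thm:newgoal} to obtain a direct sum decomposition of the pure purification of $W$, and then argue that this decomposition encodes a classical ``which-causal-order'' flag living entirely in the global past and global future, so that the process acts as a classical mixture of causally ordered processes from the joint perspective of Alice, Bob, and any external party holding entangled ancillas.

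First, I would take the purifiable process $W$ and a pure purification $W'_{\mathrm{pure}} = \dket{U_W}\dbra{U_W}$. By Thm.\,\ref{thm:newgoal}, $U_W = U^{A \prec B} \oplus U^{B \prec A}$ with respect to orthogonal direct sum decompositions $P = P^{A \prec B} \oplus P^{B \prec A}$ and $F = F^{A \prec B} \oplus F^{B \prec A}$, where each summand represents a pure comb of the corresponding causal order. Next, consider an arbitrary extended scenario in which Alice and Bob apply instruments on $A_I \otimes A_I'$ and $B_I \otimes B_I'$ with auxiliary systems $A_I', B_I'$ prepared in an arbitrary entangled state, possibly shared with an external reference. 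I would then introduce a virtual projective measurement $\{\Pi^{A \prec B}, \Pi^{B \prec A}\}$ onto $P^{A \prec B}$ and $P^{B \prec A}$ performed on $P$ just before $U_W$ is applied. Conditional on the outcome $\lambda \in \{A \prec B, B \prec A\}$, the resulting effective process is the pure comb represented by $U^\lambda$, which, when composed with the fixed input $\ket{0}^{P_{\mathrm{sub}}}$ and the partial trace on $F_{\mathrm{sub}}$ implicit in the purification, yields a causally ordered quantum comb.

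The key step is to verify that introducing this virtual projection does not alter any observable statistics. The central identity is $U_W(P^\lambda \otimes A_O \otimes B_O) \subseteq A_I \otimes B_I \otimes F^\lambda$, which makes the past projection on $P$ statistically equivalent to the ``mirror'' projection $\{\Pi_{F^{A \prec B}}, \Pi_{F^{B \prec A}}\}$ on $F$ performed after $U_W$. Both versions commute with every instrument of Alice and Bob, since the projectors act only on the global past and global future; the future-side version can be absorbed into any external observer's measurement on the remaining subsystem of $F$, or is annihilated by the partial trace on $F_{\mathrm{sub}}$. Consequently every correlation produced by the extended process admits the decomposition $\sum_\lambda p_\lambda P_\lambda(\text{outcomes} \mid \text{settings})$, where each $P_\lambda$ comes from a causally ordered process, and thus satisfies every causal inequality, establishing extensible causality.

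The main obstacle is controlling the coherent cross terms $\dket{U^{A \prec B}}\dbra{U^{B \prec A}}$ (and its adjoint) in $W'_{\mathrm{pure}}$, which are the sole obstruction to a manifest mixture representation. The argument sketched above amounts to showing that these cross terms only connect strictly orthogonal subspaces of the global past with strictly orthogonal subspaces of the global future, so they become statistically invisible to Alice and Bob no matter what entangled ancillas they share. Making this precise will likely be a direct link-product computation exploiting $P^{A \prec B} \perp P^{B \prec A}$ and $F^{A \prec B} \perp F^{B \prec A}$, after which the corollary follows at once.
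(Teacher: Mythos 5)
Your proposal is correct and its engine is the same as the paper's: everything hinges on Thm.\,\ref{thm:newgoal} together with the orthogonality $F^{A \prec B} \perp F^{B \prec A}$, which kills the coherent cross terms $\dket{U^{A \prec B}}\dbra{U^{B \prec A}}$ under the trace over the total global future, leaving a setting-independent convex mixture of causally ordered processes. You differ from the paper in two places, and both differences are instructive. First, the paper handles the entangled extension \emph{before} decomposing: it observes that $W \otimes \rho_{A_I'B_I'}$ is itself purifiable (a state is a process with trivial output spaces, and the tensor product of purifications purifies the tensor product), and then applies Thm.\,\ref{thm:newgoal} to the purification of the extended process; you instead decompose only the purification of $W$ and commute your which-order dephasing past the ancillas. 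Your version silently uses the lemma that a comb of order $A \prec B$ tensored with a shared state $\rho_{A_I'B_I'}$ is again a comb of order $A \prec B$ on the enlarged slots (true --- the $B_I'$ part can be held in the comb's internal memory and released at Bob's slot, as one checks from the causality constraints of Thm.\,\ref{thm:purecombcausalityforall}) --- you should state this explicitly, since stability of causal notions under entangled extensions is exactly the delicate point that motivates ``extensibly causal'' in the first place. Second, where the paper closes by citing Theorem 4 of Ref.\,\cite{araujo15} (causally separable processes cannot violate causal inequalities), you rederive that step via the virtual-measurement argument; your route is more self-contained and makes the physical mechanism (a classical which-order flag readable in the global past and future) explicit, at the cost of length, while the paper's route is shorter and stays entirely inside the purifiable class. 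One sentence of yours is wrong as stated, though harmlessly so: the dephasing over $F = F^{A \prec B} \oplus F^{B \prec A}$ is \emph{not} annihilated by the partial trace over $F_{\mathrm{sub}}$ alone, nor can it in general be ``absorbed into any external observer's measurement'' on the rest of $F$, because the direct-sum sectors need not respect the tensor factorization of $F$ into $F_{\mathrm{sub}}$ and its complement; what actually saves the argument is that in the device-independent causal-inequality scenario the \emph{entire} global future is discarded, so the full $\Tr_F$ is taken --- precisely the ``partial trace over the total global future space'' in the paper's wording. Also note, as you do, that $p_\lambda$ is fixed by the past input state alone and hence setting-independent; without that the mixture decomposition would not certify causality of the correlations.
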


\begin{proof}
\ReviseFurther{
We start the theorem pointing that if a bipartite process matrix $W$ is purifiable, $W'=W\otimes \rho_{A_I'B_I'}$ is also purifiable for any quantum state%
\footnote{\ReviseFurther{To see this, first note that the quantum state $\rho_{A_I'B_I'}$ is a process where the dimensions $A_O$ and $B_O$ are set to one. Now let  $W_\text{pure}$  be a purification of $W$ and \ReviseFurtherM{$\rho_\text{pure}$} be a process purification of the process represented by the state $\rho_{A_I'B_I'}$. 
Straightforward calculation shows that  $W_\text{pure}\otimes \rho_\text{pure}$  is a valid purification for $W'=W\otimes \rho_{A_I'B_I'}$}} %
 $\rho_{A_I'B_I'}$.} 
	\ReviseFurther{If $W'_\text{pure}$ is the purification of $W'$, Thm.\,\ref{thm:newgoal} ensures that 
$W'_\text{pure}$ is a direct sum of pure processes. The same argument presented in the proof of Cor.\,\ref{cor:purifiableImplyCausalsep} \ReviseFurtherM{ensures} that the marginal process $\Tr_F[W'_\text{pure}]$ is causally separable. Theorem 4 of Ref.\,\cite{araujo15} states that if the marginal process $\Tr_F[W'_\text{pure}]$ of a bipartite process matrix $W'$ is causally separable, $W'$ is necessarily causal. Now, since $W'=W\otimes \rho_{A_I'B_I'}$ is causal for any quantum state $ \rho_{A_I'B_I'}$, the process $W$ is extensibly causal by definition.}
\end{proof}

\Add{
	If a process matrix is not causal, one can argue that its causal nonseparability property may be revealed in a \textit{device-independent} manner, since the causal property only depends on the probabilities $p(abc|xyz)$ and does not require the instruments involved to be characterized.
	From this perspective, when certifying non causal separability in a device-dependent manner (say, by a causal nonseparability witness), all instruments involved are assumed to be characterized, and when certifying non causal separability in a device-independent manner (say, by a causal inequality), none of the instruments involved are assumed to be characterized.
	The \textit{semi-device-independent} scenario  \cite{bavaresco19} is the intermediary case where some of the instruments performed are characterized and some instruments are not characterized.
	Reference  \cite{bavaresco19} shows that, although the quantum switch is extensibly causal in a device-independent scenario, its indefinite causally order property can be certified in several semi-device independent cases.
}

\Add{
	We now prove that all purifiable bipartite process matrices are extensibly causal in a semi-device-independent scenario where the instruments performed by Alice and Bob are not characterized but the measurement performed in the global future may be characterized. Note that this claim is stronger than Cor.\,\ref{cor:DI}, which states that all purifiable \ReviseFurther{bipartite} process matrices are (device-independent) extensibly causal.
}

\begin{coro}
	\Add{
		All purifiable bipartite process matrices are semi-device-independent extensibly causal in a scenario where the instruments performed by Alice and Bob are uncharacterized but the measurement performed in the global future may be characterized.
	}
\end{coro}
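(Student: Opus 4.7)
The plan is to adapt the proof of Cor.\,\ref{cor:DI} to the semi-device-independent setting in which the global future is probed by a characterized POVM rather than being fully traced out. Following the same first step as in Cor.\,\ref{cor:DI}, if $W$ is a purifiable bipartite process and $\rho_{A_I' B_I'}$ is any auxiliary entangled input, then $W \otimes \rho_{A_I' B_I'}$ is again purifiable, with a pure extension obtained as the tensor product of a purification of $W$ and a purification of the state $\rho_{A_I' B_I'}$ regarded as a zero-output process. By Thm.\,\ref{thm:newgoal}, this pure extension is represented by a unitary $U = U^{A \prec B} \oplus U^{B \prec A}$, and the pure process's global past and future split orthogonally as $P = P^{A \prec B} \oplus P^{B \prec A}$ and $F = F^{A \prec B} \oplus F^{B \prec A}$.

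I would then express the joint distribution $p(a,b,f \mid x,y,z)$ of the semi-DI scenario as a link product of the purified process with the fixed insertion on $P_{\mathrm{sub}}$, the Choi operators of Alice's and Bob's uncharacterized instruments, and the characterized POVM element $E_f^z$ on the observable part of $F$, together with the partial trace over $F_{\mathrm{sub}}$. Expanding the Choi vector as $\dket{U} = \dket{U^{A \prec B}} + \dket{U^{B \prec A}}$ then produces two ``diagonal'' contributions, each corresponding to a distribution obtainable from a pure causally ordered comb with input confined to $P^{A \prec B}$ or $P^{B \prec A}$, together with two ``cross'' contributions coupling the two branches via inner products of the components of $\dket{U}$ across the subspaces of $F$ and $P$.

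The central step is to show that, after tracing over $F_{\mathrm{sub}}$ and substituting the fixed state on $P_{\mathrm{sub}}$, the cross contributions either vanish or are jointly absorbable into a convex combination of causally ordered semi-DI distributions. The underlying mechanism is the orthogonality of $F^{A \prec B}$ and $F^{B \prec A}$ (and correspondingly of $P^{A \prec B}$ and $P^{B \prec A}$): in Cor.\,\ref{cor:DI} the same orthogonality sufficed to kill the cross terms once the whole future was discarded, and here the combination of the hidden-subsystem trace and the fixed-state insertion should play the analogous role for the observable semi-DI statistics. Once the distribution is rewritten as $\sum_{t\in\{A\prec B, B\prec A\}} q(t)\, p_t(a,b,f\mid x,y,z)$ with each $p_t$ produced by a causally ordered comb, an argument analogous to that of Theorem 4 of Ref.\,\cite{araujo15}, lifted to include the extra (characterized) classical outcome $f$, rules out any semi-DI causal-inequality violation and establishes that $W$ is semi-DI extensibly causal.

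The main obstacle is precisely this dephasing step. In Cor.\,\ref{cor:DI} the entire global future is traced out, so orthogonality of $F^{A \prec B}$ and $F^{B \prec A}$ immediately annihilates every cross term. Here only $F_{\mathrm{sub}}$ is discarded while the observable part of $F$ is probed coherently by an arbitrary characterized $\{E_f^z\}$, so one must analyse carefully how the direct-sum decomposition of $F$ interacts with the tensor-product decomposition into observable and hidden subsystems, and likewise on $P$ with the fixed insertion on $P_{\mathrm{sub}}$. Controlling this interaction uniformly over all characterized POVMs and over all uncharacterized Alice–Bob instruments with arbitrary auxiliary entanglement $\rho_{A_I' B_I'}$ is where the bulk of the technical work will lie; the key lemma to isolate is that the off-block-diagonal terms of $E_f^z$ across $F^{A\prec B}$ and $F^{B\prec A}$ pair up only with correspondingly off-block-diagonal components of the effective input state on $P$, and these are forced to be absent by the purification structure.
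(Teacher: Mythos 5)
Your overall reduction (purifiability of $W\otimes\rho_{A_I'B_I'}$, then Thm.\,\ref{thm:newgoal} to obtain $U=U^{A\prec B}\oplus U^{B\prec A}$ with $P=P^{A\prec B}\oplus P^{B\prec A}$ and $F=F^{A\prec B}\oplus F^{B\prec A}$) matches the paper's first step, but the central ``dephasing'' step you isolate is where the proposal breaks, and the mechanism you conjecture for it is false. The cross terms between the two branches do \emph{not} vanish, and they are not ``forced to be absent by the purification structure'': the quantum switch itself is the counterexample. Feed the control part of its global past with $\ket{+}^{P_c}$; the effective input on $P$ then has maximal coherence between $P^{A\prec B}$ and $P^{B\prec A}$, and a characterized measurement of the control in the $\ket{\pm}^{F_c}$ basis produces genuinely nonzero interference terms between $F^{A\prec B}$ and $F^{B\prec A}$ in the observed statistics $p(a,b,f\mid x,y,z)$ --- this interference is precisely what makes the switch semi-device-independently certifiable in other trust scenarios. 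So your key lemma (that the off-block-diagonal components of the effective input state on $P$ are absent) is contradicted, the statistics cannot be rewritten branch-by-branch as $\sum_t q(t)\,p_t(a,b,f\mid x,y,z)$ in the way you propose, and what must actually be shown is that the full assemblage, cross terms included, still admits a causal model when Alice and Bob produce only classical outcomes from uncharacterized instruments.

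That nontrivial step is exactly what the paper outsources rather than proves: its argument is a two-line reduction. As in Cor.\,\ref{cor:DI}, the direct-sum structure implies that the partial trace of $W'$ over the \emph{total} global future is causally separable, and then Theorem 7 of Ref.\,\cite{bavaresco19} --- which is precisely the ``lifted'' analogue of Theorem 4 of Ref.\,\cite{araujo15} that you remark would need to be developed --- guarantees that any process whose future-traced version is causally separable generates only causal assemblages in the scenario with untrusted Alice and Bob and a trusted future measurement. If you insist on a self-contained proof along your lines, you would in effect have to reprove that theorem, i.e., explain why the coherence surviving in the trusted future cannot be converted into a semi-device-independent causal-inequality violation by parties restricted to classical outcomes; the orthogonality of the $P$ and $F$ sectors alone, which sufficed in Cor.\,\ref{cor:DI} because the entire future was discarded, does not accomplish this here.
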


\begin{proof}
	\Add{
		The proof for the semi-device-independent case for the scenario where measurements performed in the future part are trusted follows the same steps used when proving Cor.\,\ref{cor:DI}. Since the partial trace over the total global future of $W'$ leads to a causally separable process matrix, Theorem 7 of Ref.\,\cite{bavaresco19} ensures that $W$ is extensibly causal \ReviseFurther{even if the measurements performed in the global future are characterized}.
	}
\end{proof}

%%%%%%%%%%%%%%%%%%%%%%%%%%%%%%%%%%%%%%%%%%%%%%%%%%%%%%%%%%%%%%%%%%%%%%%%
%%%%%%%%%%%%%%%%%%%%%%%%      NEW SUBSECTION    %%%%%%%%%%%%%%%%%%%%%%%%
%%%%%%%%%%%%%%%%%%%%%%%%%%%%%%%%%%%%%%%%%%%%%%%%%%%%%%%%%%%%%%%%%%%%%%%%

\subsection{Properties of pure superchannels}

%%%%%%%%%%%%%%%%%%%%%%%%%%%%%%%%%%%%%%%%%%%%%%%%%%%%%%%%%%%%%%%%%%%%%%%%
%%%%%%%%%%%%%%%%%%%%%%%%      NEW SUBSECTION    %%%%%%%%%%%%%%%%%%%%%%%%
%%%%%%%%%%%%%%%%%%%%%%%%%%%%%%%%%%%%%%%%%%%%%%%%%%%%%%%%%%%%%%%%%%%%%%%%

In this subsection, we analyze general properties of pure superchannels with two slots and discuss a few examples.
We start by presenting necessary conditions on the dimensions of quantum systems for a two-slot superchannel to be pure.
\begin{coro}
	Let $U : P \otimes A_O \otimes B_O \to A_I \otimes B_I \otimes F$ be the unitary operator representing a direct sum of pure combs. Suppose $\dim  A_I = \dim  A_O =: d_A$, $\dim  B_I = \dim  B_O =: d_B$ and $\dim  P = \dim  F =: D$. Then, both $d_A$ and $d_B$ have to be a divisor of $D$.
\end{coro}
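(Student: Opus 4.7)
The plan is to unpack the definition of a direct sum of pure combs and apply Thm.\,\ref{thm:PureCombDecomposition} to each summand, from which the divisibility conditions on $D$ fall out of a simple dimension count.

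First I would invoke Def.\,\ref{def:qsoqc} to write $U = U^{A \prec B} \oplus U^{B \prec A}$ with $P = P^{A \prec B} \oplus P^{B \prec A}$ and $F = F^{A \prec B} \oplus F^{B \prec A}$, so that $D = \dim P^{A \prec B} + \dim P^{B \prec A}$. The key observation is that each summand, being a unitary operator representing a pure two-slot comb, must be decomposable as in Thm.\,\ref{thm:PureCombDecomposition}. Applied to $U^{A \prec B}$ (with ordered Hilbert spaces $\mathcal{H}_0 = P^{A \prec B}$, $\mathcal{H}_1 = A_I$, $\mathcal{H}_2 = A_O$, $\mathcal{H}_3 = B_I$, $\mathcal{H}_4 = B_O$, $\mathcal{H}_5 = F^{A \prec B}$ and $N=2$), the theorem yields positive integers $k_0 = k_3 = 1$ and $k_1, k_2$ satisfying
\begin{equation}
\dim P^{A \prec B} \cdot 1 = d_A \, k_1, \qquad d_A \, k_1 = d_B \, k_2, \qquad d_B \, k_2 = \dim F^{A \prec B} \cdot 1.
\end{equation}
In particular $\dim P^{A \prec B}$ is simultaneously a multiple of $d_A$ (with cofactor $k_1$) and of $d_B$ (via $d_A k_1 = d_B k_2$), hence a common multiple of $d_A$ and $d_B$.

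Applying the same theorem to $U^{B \prec A}$ (with the slot order swapped so that $\mathcal{H}_1 = B_I$, $\mathcal{H}_2 = B_O$, $\mathcal{H}_3 = A_I$, $\mathcal{H}_4 = A_O$) yields positive integers $k_1', k_2'$ with $\dim P^{B \prec A} = d_B \, k_1' = d_A \, k_2'$, so $\dim P^{B \prec A}$ is also a common multiple of $d_A$ and $d_B$. Summing, $D = \dim P^{A \prec B} + \dim P^{B \prec A}$ is a common multiple of $d_A$ and $d_B$, so both $d_A \mid D$ and $d_B \mid D$, as claimed.

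I do not anticipate a real obstacle here: the only subtlety is the degenerate case where one of $P^{A \prec B}$, $P^{B \prec A}$ vanishes (i.e.\ $U$ is itself a pure comb of a single order), which is handled by declaring $0$ a multiple of any positive integer, and the need to check that applying Thm.\,\ref{thm:PureCombDecomposition} to the surviving summand truly forces the required divisibility on its global past dimension. The same dimension-counting argument could be carried out starting from $F$ instead of $P$, giving an independent confirmation via $D = \dim F^{A \prec B} + \dim F^{B \prec A}$.
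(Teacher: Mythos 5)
Your proof is correct and takes essentially the same route as the paper, which justifies this corollary precisely by applying Thm.\,\ref{thm:PureCombDecomposition} to each comb in the direct sum. Your dimension count ($\dim P^{A \prec B} = d_A k_1 = d_B k_2$ and symmetrically for $P^{B \prec A}$, so that $D$ is a sum of common multiples of $d_A$ and $d_B$) merely spells out in full generality what the paper's remark sketches only for the case $d_A = d_B = d$.
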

This corollary can be proved from Thm.\,\ref{thm:PureCombDecomposition}. Namely, if $d_A = d_B = d$, then $D = k d$ should hold for some integer $k \geq 1$ due to Thm.\,\ref{thm:PureCombDecomposition}.
Next, we look at simple examples of direct sums of pure combs satisfying the dimensional conditions $\dim  A_I = \dim  A_O = \dim  B_I = \dim  B_O$ and $\dim  P = \dim  F$.

\begin{coro}
	\label{cor:dddanddd2d}
	Let $U : P \otimes A_O \otimes B_O \to A_I \otimes B_I \otimes F$ be the unitary operator representing a direct sum of pure combs
	and suppose $\dim  A_I = \dim  A_O = \dim  B_I = \dim  B_O = d$ and $\dim  P = \dim  F = D$. \\
	(1) If $D = d$, then $U$ \Add{represents a quantum comb} and $U$ can be decomposed into
	\begin{equation}
		U^{P A_O B_O \to A_I B_I F} = U_0^{P \to A_I} \otimes U_1^{A_O \to B_I} \otimes U_2^{B_O \to F}
	\end{equation}
	for unitary operators $U_0 : P \to A_I$, $U_1 :A_O \to B_I$, $U_2 : B_O \to F$ or $U$ can be decomposed into
	\begin{equation}
		U^{P A_O B_O \to A_I B_I F} = V_0^{P \to B_I} \otimes V_1^{B_O \to A_I} \otimes V_2^{A_O \to F}
	\end{equation}
	for unitary operators $V_0 : P \to B_I$, $V_1 :B_O \to A_I$, $V_2 : A_O \to F$. \\
	(2) If $D = 2d$, then $U$ \Add{represents a quantum comb} or $U$ is called switch-like, that is, $U$ is decomposed into
	\Add{
		\begin{equation}
			\label{eq:QSwitchlike}
			U = \ket{0}^{F_c} \bra{0}^{P_c} \otimes U^{A \prec B}	+ \ket{1}^{F_c} \bra{1}^{P_c} \otimes V^{B \prec A},
		\end{equation}
		for a unitary operator $U^{A \prec B} : P^t \otimes A_O \otimes B_O \to F^t \otimes A_I \otimes B_I$ representing a quantum comb of $A \prec B$ and a unitary operator $V^{B \prec A} : P^t \otimes A_O \otimes B_O \to F^t \otimes A_I \otimes B_I$ representing a quantum comb of $B \prec A$ where $P = P_c \otimes P_t$, $F = F_c \otimes F_t$, $\dim P_c = \dim F_c = 2$.
		In this case, since $U^{A \prec B}$ and $V^{A \prec B}$ also can be decomposed like the case $D = d$, $U$ is decomposed into
		\begin{multline}
			\label{eq:QSlikeasjfdiajig88}
			U^{P A_O B_O \to A_I B_I F} = \ket{0}^{F_c} \bra{0}^{P_c} \otimes U_0^{P_t \to A_I} \otimes U_1^{A_O \to B_I} \otimes U_2^{B_O \to F_t} \\
			+ \ket{1}^{F_c} \bra{1}^{P_c} \otimes V_0^{P_t \to B_I} \otimes V_1^{B_O \to A_I} \otimes V_2^{A_O \to F_t},
		\end{multline}
		for unitary operators $U_0 : P_t \to A_I$, $U_1 :A_O \to B_I$, $U_2 : B_O \to F_t$, $V_0 : P_t \to B_I$, $V_1 :B_O \to A_I$, $V_2 : A_O \to F_t$ where $\dim P_t = \dim F_t = d$.
	}
	The Choi vector form of Eq.\,\eqref{eq:QSlikeasjfdiajig88} is given by
	\begin{multline}
		\dket{U} =
		\ket{0}^{P_c} \dket{U_0}^{P_t A_I} \dket{U_1}^{A_O B_I} \dket{U_2}^{B_O F_t} \ket{0}^{F_c} \\
		+ \ket{1}^{P_c} \dket{V_0}^{P_t B_I} \dket{V_1}^{B_O A_I} \dket{V_2}^{A_O F_t} \ket{1}^{F_c}.
	\end{multline}
\end{coro}
We refer to pure superchannels of the form of \ReviseFurther{Eq.\,\eqref{eq:QSwitchlike}} as switch-like because, similarly to the quantum switch, the linear spaces $P$ and $F$ admit the decomposition $P=P_c\otimes P_t$ and $F=F_c\otimes F_t$ where $P_c$ and $F_c$ play the role of a control system, and $P_t$ and $F_t$ play the role of a target system.

Finally, we show that not all pure superchannels are ``switch-like''. That is, we may not be able to decompose $P$ and $F$ as $P=P_c\otimes P_t$ and $F=F_c\otimes F_t$.
In Ref.\,\cite{taddei2019quantum} the authors ask whether any \ReviseFurther{non-causally} ordered two-slot superchannels with a rank-1 Choi operator\footnote{In Ref.\,\cite{taddei2019quantum} the authors define pure process as superchannels where its Choi representation is a rank-1 projector. Although this is not equivalent to the definition of pure superchannels used here, it holds true that all pure superchannels are represented by rank-1 Choi operators.} has a control subsystem of causal orders in $P$ and $F$.
While  Cor.\,\ref{cor:dddanddd2d} ensures this is true in the case of $D = 2d$, a general pure superchannel with two slots may not admit a separate control system.
For instance, when $D = 3d$, there exists a unitary operator representing a quantum superchannel with two slots which cannot be written in such a form with the control system.
For example, suppose a direct sum of pure combs $U = U^{A \prec B} \oplus U^{B \prec A}$ such that
\begin{equation}
	\label{eq:da8g9ag}
	U^{A \prec B} \left( \ket{c,t}^P \ket{a}^{A_O} \ket{b}^{B_O} \right) = \ket{t}^{A_I} \ket{c \oplus a}^{B_I} \ket{a, b}^F,
\end{equation}
\begin{equation}
	\label{eq:da8g9ag2}
	U^{B \prec A} \left( \ket{2, t}^P \ket{a}^{A_O} \ket{b}^{B_O} \right) = \ket{t}^{B_I} \ket{b}^{A_I} \ket{2, a}^F
\end{equation}
(note the order of the superscripts) for each $c = 0, 1$, $t = 0, 1$, $a = 0, 1$, $b = 0, 1$
where $\{ \ket{c, t}^P \}_{c = 0,1,\, t=0,1,2 }$ and $\{ \ket{c, t}^F \}_{c = 0,1,\, t=0,1,2 }$ are orthonormal bases in $P$ and $F$, respectively, and
\begin{align}
	P^{A \prec B} & = \text{SPAN} \left[ \ket{c, t}^P \relmiddle| c=0,1, t=0,1 \right], \\
	P^{B \prec A} & = \text{SPAN} \left[ \ket{2, t}^P \relmiddle| t=0,1 \right],        \\
	F^{A \prec B} & = \text{SPAN} \left[ \ket{c, t}^F \relmiddle| c=0,1, t=0,1 \right], \\
	F^{B \prec A} & = \text{SPAN} \left[ \ket{2, t}^F \relmiddle| t=0,1 \right].
\end{align}
In Figure \ref{fig:D3dprocess} and Figure \ref{fig:D3dprocessBtoA}, the corresponding quantum circuits for $U^{A \prec B}$ and $U^{B \prec A}$ are shown, respectively.
In this case, $U$ cannot have a separate control subsystem of causal orders in $P$ and $F$, since $U$ is not causally ordered and the ``control parameter'' $c$ of $U^{A \prec B}$ depends on the output $a$ from $A$.
Note however that we can always embed a pure superchannel in a larger space where this superchannel becomes switch-like. For instance, we can embed $U^{B \prec A} $ of Eq.\,\eqref{eq:da8g9ag2} in a larger linear space by defining the unitary $U'^{B \prec A}:= U^{B \prec A} \oplus 1$. With this embedding, we have ${ U^{A \prec B} \oplus U'^{B \prec A} = \ket{0}^{F_c} \bra{0}^{P_c} \otimes U^{A \prec B} +  \ket{1}^{F_c} \bra{1}^{P_c} \otimes U'^{B \prec A}} $.

\begin{figure}
	\centering \includegraphics[keepaspectratio, scale=0.30]{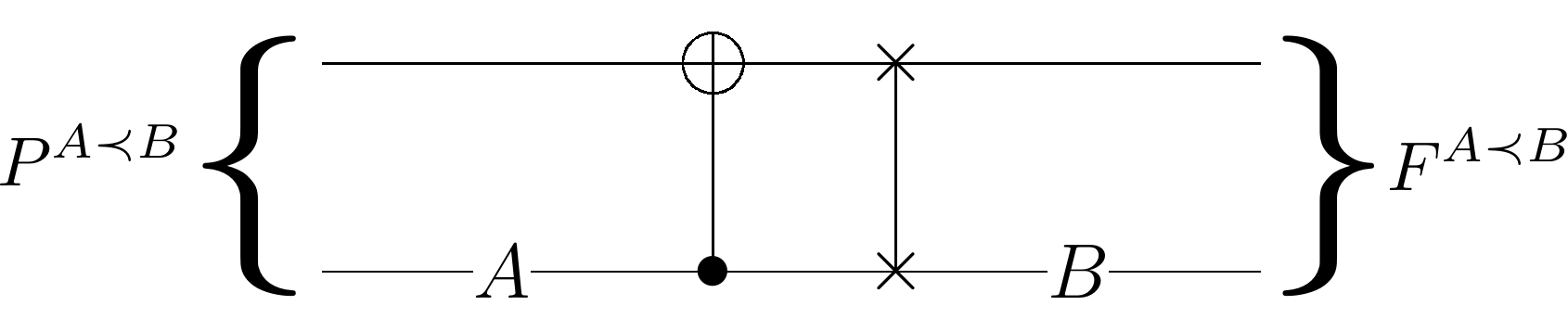}
	\caption{The circuit of $U^{A \prec B}$ in Eq.\,\eqref{eq:da8g9ag}.}
	\label{fig:D3dprocess}
\end{figure}
\begin{figure}
	\centering \includegraphics[keepaspectratio, scale=0.30]{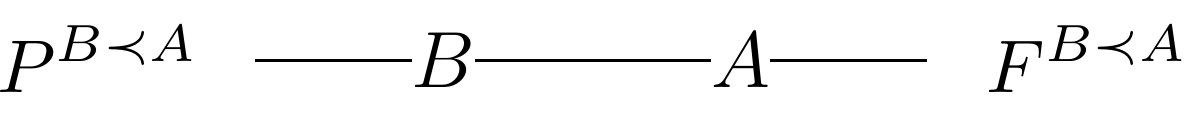}
	\caption{The circuit of $U^{B \prec A}$ in Eq.\,\eqref{eq:da8g9ag2}.}
	\label{fig:D3dprocessBtoA}
\end{figure}

%%%%%%%%%%%%%%%%%%%%%%%%%%%%%%%%%%%%%%%%%%%%%%%%%%%%%%%%%%%%%%%%%%%%%%%%
%%%%%%%%%%%%%%%%%%%%%%%%      NEW    SECTION    %%%%%%%%%%%%%%%%%%%%%%%%
%%%%%%%%%%%%%%%%%%%%%%%%%%%%%%%%%%%%%%%%%%%%%%%%%%%%%%%%%%%%%%%%%%%%%%%%

\section{Conclusions} \label{sec:discussions}

%%%%%%%%%%%%%%%%%%%%%%%%%%%%%%%%%%%%%%%%%%%%%%%%%%%%%%%%%%%%%%%%%%%%%%%%
%%%%%%%%%%%%%%%%%%%%%%%%      NEW    SECTION    %%%%%%%%%%%%%%%%%%%%%%%%
%%%%%%%%%%%%%%%%%%%%%%%%%%%%%%%%%%%%%%%%%%%%%%%%%%%%%%%%%%%%%%%%%%%%%%%%

Quantum superchannels with indefinite causal order have revealed fundamental properties of quantum theory and have proved useful to enhance performance in several \ReviseFurther{information-theoretic} tasks.
Nevertheless, when compared to quantum channels, we are still taking our first steps towards understanding superchannels.
For instance, despite superchannels without a definite causal order being allowed by quantum theory, it is still \ReviseFurther{unclear} if there is a fair universal procedure to \ReviseFurther{implement such objects physically}.
In addition, apart from the quantum switch and some straightforward generalizations, we still lack a simple interpretation for indefinite causal order in quantum theory.

In this paper, we have presented a simple mathematical decomposition for two-slot pure superchannels, that is, two-slot superchannels which preserve the reversibility of \ReviseFurther{input quantum operations}. This characterization allows us to understand every two-slot pure \ReviseFurther{superchannel} as a coherent superposition of ordered quantum \ReviseFurther{circuits}, \ReviseFurther{though} this does not hold true for general superchannels. Roughly speaking, our results show that all two-slot pure superchannels are similar to the quantum switch and admit a similar physical interpretation. We believe that our findings may contribute to the debate \ReviseFurther{on} interpretations and physical \ReviseFurther{realizations} of quantum superchannels.

We have also shown that purifiable two-slot superchannels, also referred \ReviseFurther{to} as purifiable bipartite processes, cannot violate causal inequalities, even if the parties share additional unlimited entanglement, and that purifiable bipartite processes with no global future are necessarily causally separable. These implications provide partial answers to the questions raised in Ref.\,\cite{araujo16}.

A clear future direction is to investigate when superchannels with more than two slots admit a simple decomposition and which methods/results can be generalized. We know however that our main results cannot be generalized in a straightforward way. There exist pure superchannels with three slots which violate causal inequalities \cite{baumeler2016space, araujo16, araujo2017quantum}, showing that pure superchannels with more than two slots are not always \ReviseFurther{direct sums} of pure combs. A more technical future direction is to investigate if condition (2) in Thm.\,\ref{thm:equivalent} can be proved without exploiting condition (3) of Thm.\,\ref{thm:equivalent}.
If that is the case, our methods may be applied to more general scenarios and superchannels with multiple slots.

\section*{Acknowledgements}
We acknowledge Alessandro Bisio and Michal Sedl\'ak for valuable discussions.
This work was supported by MEXT Quantum Leap Flagship Program (MEXT Q-LEAP) Grant Number JPMXS0118069605 \ReviseFurther{and JPMXS0120351339}, Japan Society for the Promotion of Science (JSPS) by KAKENHI grant No. 15H01677, 16F16769, 17H01694, 18H04286 and 18K13467.
\ReviseFurther{M.T. Quintino acknowledges the Austrian Science Fund (FWF) through the SFB project BeyondC (sub-project F7103), a grant from the Foundational Questions Institute (FQXi) as part of the  Quantum Information Structure of Spacetime (QISS) Project (qiss.fr). The opinions expressed in this publication are those of the authors and do not necessarily reflect the views of the John Templeton Foundation. This project has received funding from the European Union's Horizon 2020 research and innovation programme under the Marie Sk\l odowska-Curie grant agreement No 801110 and the Austrian Federal Ministry of Education, Science and Research (BMBWF). It reflects only the authors' view, the EU Agency is not responsible for any use that may be made of the information it contains.}

\section*{Note added} \label{sec:NoteAdded}
\Add{Upon completion of our work, we became aware of two independent works, Ref.\,\cite{lorenz2020causal} proving the decomposition of the reversibility preserving superchannels and Ref.\,\cite{barrett20} on one of our implications on the equivalence between causal separability and unitarily extendibility of bipartite processes.  Reference \cite{lorenz2020causal} exploits the decomposition derived in Ref.\,\cite{allen2016quantum}, which first proposes a generalization of Reichenbach's common cause principle to quantum theory and proves the existence of a decomposition for a quantum channel from input A to output B and C, such that A serves as a quantum common cause of B and C.  The proof of Ref.\,\cite{allen2016quantum} is based on Ref.\,\cite{hayden2003structure}.  The latter shows that a class of tripartite quantum \ReviseFurther{states}	 called short quantum Markov chain is the only tripartite states that saturate the strong subadditivity inequality for the von Neumann entropy.  Reference \cite{allen2016quantum} adapts the result of Ref.\,\cite{hayden2003structure} to the Choi operator of a quantum channel, despite the difference in the normalization between Choi operators and short quantum Markov chains.  Reference \cite{hayden2003structure}, on the other hand, relies on the existence of a particular mathematical property of states saturating the inequality. It invokes Ref.\,\cite{koashi2001possible}, which aims to identify \ReviseFurther{a} kind of information that can or cannot be extracted from a quantum state without changing it under the additional knowledge that it belongs to a known set of states.  In doing so, Ref.\,\cite{koashi2001possible} obtains a particular decomposition of the Hilbert space on which these states are defined.  Reference \cite{hayden2003structure} applies Ref.\,\cite{koashi2001possible} on a set of states induced from any tripartite state saturating the strong subadditivity.}
We also became aware of a related result by David Trillo, forthcoming.

\clearpage

\providecommand{\href}[2]{#2}\begingroup\raggedright\endgroup

\clearpage
\appendix

%%%%%%%%%%%%%%%%%%%%%%%%%%%%%%%%%%%%%%%%%%%%%%%%%%%%%%%%%%%%%%%%%%%%%%%%
%%%%%%%%%%%%%%%%%%%%%%%%      NEW    SECTION    %%%%%%%%%%%%%%%%%%%%%%%%
%%%%%%%%%%%%%%%%%%%%%%%%%%%%%%%%%%%%%%%%%%%%%%%%%%%%%%%%%%%%%%%%%%%%%%%%

\section{Relevant properties of composite Hilbert space}
\label{secApp:notation}

%%%%%%%%%%%%%%%%%%%%%%%%%%%%%%%%%%%%%%%%%%%%%%%%%%%%%%%%%%%%%%%%%%%%%%%%
%%%%%%%%%%%%%%%%%%%%%%%%      NEW    SECTION    %%%%%%%%%%%%%%%%%%%%%%%%
%%%%%%%%%%%%%%%%%%%%%%%%%%%%%%%%%%%%%%%%%%%%%%%%%%%%%%%%%%%%%%%%%%%%%%%%

In our proofs of the main theorems, we perform various operations on composite Hilbert spaces.
In Appendix \ref{secApp:linearAlgebra} and Appendix \ref{secApp:propertiesForElem}, we respectively provide notations and properties for elementary linear algebra relevant to the operations.
In Appendix \ref{secApp:NotationfVectorsincompositeHilbertspace} and Appendix \ref{secApp:Function from vectors to a subspace}, we provide notations particular to this paper.
In Appendix \ref{secApp:partsp}, we introduce the notion of ``reduced subspace'' generated from composite Hilbert spaces.
In Appendix \ref{secApp:unitaryRepresentingPureProcess} we derive properties of unitary operators representing quantum superchannels.
In Appendix \ref{secApp:BipartiteProcessMatrix}, we focus on such properties about pure superchannels with two slots.
\subsection{Notations for linear algebra}
\label{secApp:linearAlgebra}

We consider only finite-dimensional Hilbert spaces.
Let $\mathcal{X}, \mathcal{Y}$ be subspaces of a finite-dimensional Hilbert space $\mathcal{H}$ and let $\ket{x}, \ket{y}$  be vectors (which are not necessarily normalized) in $\mathcal{H}$.
\begin{itemize}
	\item $\left( \ket{x}, \ket{y} \right)$ and $\braket{x|y}$ denote the inner product of $\ket{x}$ and $\ket{y}$.
	\item $\| \ket{x} \|$ denotes the norm $\sqrt{\braket{x|x}} $ of $\ket{x}$.
	\item $\ket{x}$ and $\ket{y}$ are \textit{orthogonal} (denoted by $\ket{x} \perp \ket{y}$) if $\braket{x|y} = 0$.
	\item $\ket{x}$ and $\mathcal{X}$ are \textit{orthogonal} (denoted by $\ket{x} \perp \mathcal{X}$) if $\ket{x} \perp \ket{y}$ for all $\ket{y} \in \mathcal{X}$.
	\item $\mathcal{X}$ and $\mathcal{Y}$ are \textit{orthogonal} (denoted by $\mathcal{X} \perp \mathcal{Y}$) if $\ket{x} \perp \ket{y}$ for all $\ket{x} \in \mathcal{X}$ and for all $\ket{y} \in \mathcal{Y}$.

	\item $\mathcal{X} + \mathcal{Y}$ denotes the \textit{sum} of $\mathcal{X}$ and $\mathcal{Y}$, which is the subspace of $\mathcal{H}$ defined as follows:
	      \begin{equation}
		      \mathcal{X} + \mathcal{Y} := \left\{ \ket{x} + \ket{y} \relmiddle| \ket{x} \in \mathcal{X}, \ket{y} \in \mathcal{Y} \right\}.
	      \end{equation}
	      Since the operation $+$ is associative and commutative, this definition has a straightforward extension to the case of more than two subspaces.
	\item When the linear subspaces $\mathcal{X}$ and $\mathcal{Y}$ are orthogonal, it holds that  $\mathcal{X} + \mathcal{Y} =\mathcal{X} \oplus \mathcal{Y}$, where $\oplus$ stands for the \textit{orthogonal direct sum}\footnote{The orthogonal direct sum is also referred as internal direct sum  and it is isomorphic to the external orthogonal direct sum, which constructs a Hilbert space from two or more Hilbert spaces \cite{fletcher1973elementary, halmos2017finite}.} \cite{blyth2013further}. In order to emphasize the orthogonal relations of subspaces, when  $\mathcal{X} \perp \mathcal{Y}$ we will often write $\mathcal{X} \oplus \mathcal{Y}$ instead of $\mathcal{X} + \mathcal{Y}$.
	\item $\mathcal{X}^{\perp}$ denotes the \textit{orthogonal complement} of $\mathcal{X}$, which is the subspace of $\mathcal{H}$ defined as follows:
	      \begin{equation}
		      \mathcal{X}^{\perp} \coloneqq \left\{ \ket{v} \in \mathcal{H} \relmiddle| \ket{v} \perp \mathcal{X} \right\}.
	      \end{equation}
\end{itemize}

Let $(\mathcal{X}_\lambda)_{\lambda \in \Lambda}$ be a family of subspaces of $ \mathcal{H} $ indexed by a set $\Lambda$, which may be continuous.
\begin{itemize}
	\item $+_{\lambda \in \Lambda} \mathcal{X}_\lambda$ denotes the \textit{sum} of $(\mathcal{X}_\lambda)_{\lambda \in \Lambda}$. In this work we will also consider the case where the sum runs over continuous indices, \ie, $\Lambda$ is not countable. Note however that, since $+_{\lambda \in \Lambda} \mathcal{X}_\lambda$ is, by definition, a subset of the finite linear space $\mathcal{H}$, this sum can always be restricted to have only finite terms\footnote{In order to overcome this technicality, when $\Lambda$ has infinity elements we define $+_{\lambda \in \Lambda} \mathcal{X}_\lambda$ as follows:
	      \begin{align}
		      \bigplus_{\lambda \in \Lambda} \mathcal{X}_\lambda & := \left\{ \sum_{i=0}^{n-1} \ket{x_i} \relmiddle| n \in \mathbb{N},\  (\lambda_{i})_{i=0}^{n-1} \in \Lambda^n \ ,  \ket{x_i}\in \mathcal{X}_{\lambda_{i}}\right\},
	      \end{align}
	      where $\Lambda^n$ is the set of ordered n-tuples of elements of $\Lambda$ or the set of functions from the set $\{ 0,1,\ldots,n-1\}$ to the set $\Lambda$.}.

\end{itemize}
We have some properties about this $\bigplus$. The subset inclusion
$\mathcal{X}_{\lambda^\prime} \subset \bigplus_{\lambda \in \Lambda} \mathcal{X}_\lambda$
holds true for every  $\lambda^\prime \in \Lambda$. Also, if $\mathcal{X}_{\lambda} \subset \mathcal{Y}$ for all $\lambda \in \Lambda$, then $\bigplus_{\lambda \in \Lambda} \mathcal{X}_\lambda \subset \mathcal{Y}$.

%%%%%%%%%%%%%%%%%%%%%%%%%%%%%%%%%%%%%%%%%%%%%%%%%%%%%%%%%%%%%%%%%%%%%%%%
%%%%%%%%%%%%%%%%%%%%%%%%      NEW SUBSECTION    %%%%%%%%%%%%%%%%%%%%%%%%
%%%%%%%%%%%%%%%%%%%%%%%%%%%%%%%%%%%%%%%%%%%%%%%%%%%%%%%%%%%%%%%%%%%%%%%%

\subsection{Useful properties}
\label{secApp:propertiesForElem}

%%%%%%%%%%%%%%%%%%%%%%%%%%%%%%%%%%%%%%%%%%%%%%%%%%%%%%%%%%%%%%%%%%%%%%%%
%%%%%%%%%%%%%%%%%%%%%%%%      NEW SUBSECTION    %%%%%%%%%%%%%%%%%%%%%%%%
%%%%%%%%%%%%%%%%%%%%%%%%%%%%%%%%%%%%%%%%%%%%%%%%%%%%%%%%%%%%%%%%%%%%%%%%

\begin{lemm}
	\label{lem:g7g9ga7g0}
	Let $\mathcal{X}, \mathcal{Y}$ be subspaces of a finite-dimensional Hilbert space $\mathcal{H}$ and suppose $\mathcal{X}_0, \mathcal{X}_1$ are subspaces of $\mathcal{H}$ such that $\mathcal{X} = \mathcal{X}_0 \oplus \mathcal{X}_1$.
	If $\mathcal{X}_0 \subset \mathcal{Y}$ and $\mathcal{X}_1 \perp \mathcal{Y}$, then $\mathcal{X}_0 = \mathcal{X} \cap \mathcal{Y}$ and $\mathcal{X}_1 = \mathcal{X} \cap \mathcal{Y}^{\perp}$.
\end{lemm}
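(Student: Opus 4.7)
The plan is to prove the two set equalities by double inclusion, using uniqueness of decomposition in the orthogonal direct sum $\mathcal{X}_0 \oplus \mathcal{X}_1$ together with the hypotheses $\mathcal{X}_0 \subset \mathcal{Y}$ and $\mathcal{X}_1 \perp \mathcal{Y}$. The key tool throughout will be the fact that a vector which belongs to both $\mathcal{Y}$ and $\mathcal{Y}^\perp$ must be the zero vector, since $\braket{v|v} = 0$ forces $\ket{v} = 0$ in a Hilbert space.

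For $\mathcal{X}_0 = \mathcal{X} \cap \mathcal{Y}$, the inclusion $\mathcal{X}_0 \subset \mathcal{X} \cap \mathcal{Y}$ is immediate: $\mathcal{X}_0 \subset \mathcal{X}$ because $\mathcal{X} = \mathcal{X}_0 \oplus \mathcal{X}_1$, and $\mathcal{X}_0 \subset \mathcal{Y}$ by hypothesis. For the reverse inclusion, I would take $\ket{v} \in \mathcal{X} \cap \mathcal{Y}$ and uniquely decompose it as $\ket{v} = \ket{v_0} + \ket{v_1}$ with $\ket{v_0} \in \mathcal{X}_0$ and $\ket{v_1} \in \mathcal{X}_1$. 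Since $\ket{v_0} \in \mathcal{X}_0 \subset \mathcal{Y}$ and $\ket{v} \in \mathcal{Y}$, we get $\ket{v_1} = \ket{v} - \ket{v_0} \in \mathcal{Y}$. But $\ket{v_1} \in \mathcal{X}_1 \subset \mathcal{Y}^\perp$ by the hypothesis $\mathcal{X}_1 \perp \mathcal{Y}$, so $\ket{v_1}$ is orthogonal to itself and hence $\ket{v_1} = 0$, giving $\ket{v} = \ket{v_0} \in \mathcal{X}_0$.

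For $\mathcal{X}_1 = \mathcal{X} \cap \mathcal{Y}^\perp$, the inclusion $\mathcal{X}_1 \subset \mathcal{X} \cap \mathcal{Y}^\perp$ again follows directly from the definitions, since $\mathcal{X}_1 \subset \mathcal{X}$ and $\mathcal{X}_1 \perp \mathcal{Y}$ means $\mathcal{X}_1 \subset \mathcal{Y}^\perp$. For the reverse, I would take $\ket{v} \in \mathcal{X} \cap \mathcal{Y}^\perp$, write $\ket{v} = \ket{v_0} + \ket{v_1}$ as before, and note that $\ket{v_1} \in \mathcal{Y}^\perp$ together with $\ket{v} \in \mathcal{Y}^\perp$ gives $\ket{v_0} = \ket{v} - \ket{v_1} \in \mathcal{Y}^\perp$. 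Since $\ket{v_0} \in \mathcal{X}_0 \subset \mathcal{Y}$ as well, we obtain $\ket{v_0} = 0$ and hence $\ket{v} = \ket{v_1} \in \mathcal{X}_1$.

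There is no real obstacle in this lemma: it is a purely elementary statement about orthogonal direct sums, and the two equalities are proved by the same short argument. The only point to be careful about is explicitly invoking the uniqueness of the $\mathcal{X}_0 \oplus \mathcal{X}_1$ decomposition (guaranteed by orthogonality of $\mathcal{X}_0$ and $\mathcal{X}_1$) when rewriting an arbitrary $\ket{v} \in \mathcal{X}$ as $\ket{v_0} + \ket{v_1}$, and noting that orthogonality of a vector with itself forces it to vanish.
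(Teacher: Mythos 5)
Your proof is correct and follows essentially the same route as the paper's: both establish the trivial inclusions $\mathcal{X}_0 \subset \mathcal{X} \cap \mathcal{Y}$ and $\mathcal{X}_1 \subset \mathcal{X} \cap \mathcal{Y}^\perp$ directly and obtain the reverse inclusions from an orthogonality argument, with your vector-level decomposition $\ket{v} = \ket{v_0} + \ket{v_1}$ merely spelling out the step the paper states tersely at the subspace level (that a subspace of $\mathcal{X}$ orthogonal to $\mathcal{X}_1$ must lie in $\mathcal{X}_0$, and symmetrically). No gaps; your explicit appeal to the uniqueness of the orthogonal decomposition and to the fact that a self-orthogonal vector vanishes is exactly what the paper's compressed argument relies on implicitly.
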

\begin{proof}
	Assume $\mathcal{X}_0 \subset \mathcal{Y}$ and $\mathcal{X}_1 \perp \mathcal{Y}$.
	From $\mathcal{X} \cap \mathcal{Y} \subset \mathcal{X}$ and $\mathcal{X} \cap \mathcal{Y} \perp \mathcal{X}_1$,
	$\mathcal{X} \cap \mathcal{Y} \subset \mathcal{X}_0$. $\mathcal{X}_0 \subset \mathcal{X} \cap \mathcal{Y}$ is trivial so $\mathcal{X}_0 = \mathcal{X} \cap \mathcal{Y}$.
	From $\mathcal{X} \cap \mathcal{Y}^\perp \subset \mathcal{X}$ and $\mathcal{X} \cap \mathcal{Y}^\perp \perp \mathcal{X}_0$,
	$\mathcal{X} \cap \mathcal{Y}^\perp \subset \mathcal{X}_1$. $\mathcal{X}_1 \subset \mathcal{X} \cap \mathcal{Y}^\perp$ is trivial so $\mathcal{X}_1 = \mathcal{X} \cap \mathcal{Y}^\perp$.
\end{proof}

\begin{lemm} \label{lem:UperpV->Uop(U^perpcapV^perp)opV}
	Let $\mathcal{X}$, $\mathcal{Y}$ be subspaces of a finite-dimensional Hilbert space $\mathcal{H}$.
	If $\mathcal{X} \perp \mathcal{Y}$,
	then $\mathcal{H} = \mathcal{X} \oplus (\mathcal{X}^{\perp} \cap \mathcal{Y}^{\perp}) \oplus \mathcal{Y}$.
\end{lemm}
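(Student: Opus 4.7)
The statement to prove is purely an elementary fact about finite-dimensional inner product spaces, so the plan is to chain two standard orthogonal decompositions. The plan is as follows. First, I would observe that the hypothesis $\mathcal{X}\perp\mathcal{Y}$ is equivalent to the inclusion $\mathcal{Y}\subset\mathcal{X}^{\perp}$, which is the crucial geometric input. Second, I would invoke the standard orthogonal decomposition $\mathcal{H}=\mathcal{X}\oplus\mathcal{X}^{\perp}$ available in any finite-dimensional Hilbert space, reducing the problem to showing $\mathcal{X}^{\perp}=\mathcal{Y}\oplus(\mathcal{X}^{\perp}\cap\mathcal{Y}^{\perp})$.

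To establish that remaining identity, I would treat $\mathcal{X}^{\perp}$ itself as a finite-dimensional Hilbert space (with the restricted inner product) and decompose it along the subspace $\mathcal{Y}$, which lives inside $\mathcal{X}^{\perp}$ by the first step. The orthogonal complement of $\mathcal{Y}$ inside $\mathcal{X}^{\perp}$ consists of those $\ket{v}\in\mathcal{X}^{\perp}$ that are also orthogonal to $\mathcal{Y}$, and this set equals $\mathcal{X}^{\perp}\cap\mathcal{Y}^{\perp}$ essentially by definition. Hence $\mathcal{X}^{\perp}=\mathcal{Y}\oplus(\mathcal{X}^{\perp}\cap\mathcal{Y}^{\perp})$, and substituting back yields $\mathcal{H}=\mathcal{X}\oplus\mathcal{Y}\oplus(\mathcal{X}^{\perp}\cap\mathcal{Y}^{\perp})$, which, up to reordering the summands (allowed since all three are pairwise orthogonal), is the claimed formula.

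Alternatively, and perhaps more in the spirit of the preceding Lem.\,\ref{lem:g7g9ga7g0}, one can apply that lemma with the roles $\mathcal{X}_0\leftarrow\mathcal{Y}$ and $\mathcal{X}_1\leftarrow\mathcal{X}^{\perp}\cap\mathcal{Y}^{\perp}$ inside the ambient space $\mathcal{X}^{\perp}$ to identify the two pieces of the decomposition of $\mathcal{X}^{\perp}$ directly. Either way, pairwise orthogonality of the three summands is immediate: $\mathcal{X}\perp\mathcal{Y}$ by assumption, $\mathcal{X}\perp(\mathcal{X}^{\perp}\cap\mathcal{Y}^{\perp})$ because the latter is a subspace of $\mathcal{X}^{\perp}$, and analogously $\mathcal{Y}\perp(\mathcal{X}^{\perp}\cap\mathcal{Y}^{\perp})$.

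There is no real obstacle here; the only subtle point to be careful about is that the symbol $\oplus$ in the paper denotes orthogonal direct sum rather than internal direct sum, so the write-up must explicitly verify pairwise orthogonality of the three summands in addition to the set-theoretic equality $\mathcal{H}=\mathcal{X}+\mathcal{Y}+(\mathcal{X}^{\perp}\cap\mathcal{Y}^{\perp})$. Both are immediate from the two-step decomposition sketched above.
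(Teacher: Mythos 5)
Your proof is correct and follows essentially the same route as the paper's: both use $\mathcal{H}=\mathcal{X}\oplus\mathcal{X}^{\perp}$ together with $\mathcal{Y}\subset\mathcal{X}^{\perp}$ to reduce everything to the identity $\mathcal{X}^{\perp}=\mathcal{Y}\oplus(\mathcal{X}^{\perp}\cap\mathcal{Y}^{\perp})$. The paper verifies that identity by decomposing an arbitrary $\ket{x^{\perp}}\in\mathcal{X}^{\perp}$ along $\mathcal{H}=\mathcal{Y}\oplus\mathcal{Y}^{\perp}$ and checking the $\mathcal{Y}^{\perp}$-component stays in $\mathcal{X}^{\perp}$, which is just a vector-level rendering of your identification of the relative orthogonal complement of $\mathcal{Y}$ inside $\mathcal{X}^{\perp}$.
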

\begin{proof} \
	Let $\ket{x^\perp} \in \mathcal{X}^{\perp}$. $\ket{x^\perp}$ can be decomposed into $\ket{x^\perp} =: \ket{y} + \ket{y^\perp}$ with some $\ket{y} \in \mathcal{Y}$ and some $\ket{y^\perp} \in \mathcal{Y}^{\perp}$.
	Since $\mathcal{Y} \subset \mathcal{X}^{\perp}$, $\ket{y} \in \mathcal{X}^{\perp}$ and then we obtain $\ket{y^\perp} = \ket{x^\perp} - \ket{y} \in \mathcal{X}^{\perp}$.
	Therefore, $\ket{x^\perp} \in (\mathcal{X}^{\perp} \cap \mathcal{Y}^{\perp}) \oplus \mathcal{Y}$, which implies $\mathcal{X}^{\perp} \subset (\mathcal{X}^{\perp} \cap \mathcal{Y}^{\perp}) \oplus \mathcal{Y}$.
	Clearly $\mathcal{X}^{\perp} \supset (\mathcal{X}^{\perp} \cap \mathcal{Y}^{\perp}) \oplus \mathcal{Y}$ and then
	we obtain $\mathcal{H} = \mathcal{X} \oplus \mathcal{X}^{\perp} = \mathcal{X} \oplus (\mathcal{X}^{\perp} \cap \mathcal{Y}^{\perp}) \oplus \mathcal{Y}$.
\end{proof}

\begin{lemm}
	\label{lem:sesquiperp}
	Let $f(\ket{x},\, \ket{x^\prime})$ be a sesquilinear form on a Hilbert space $\mathcal{H}$.
	If for all $\ket{x},\, \ket{x^\perp} \in \mathcal{H}$ such that $\ket{x} \perp \ket{x^\perp}$,
	\begin{equation}
		\label{eq:xxperp}
		f(\ket{x},\, \ket{x^\perp}) = 0,
	\end{equation}
	then for all $\ket{x_0},\, \ket{x_1} \in \mathcal{H}$ such that $\| \ket{x_0} \| = \| \ket{x_1} \|$,
	\begin{equation}
		f(\ket{x_0},\, \ket{x_0}) = f(\ket{x_1},\, \ket{x_1}).
	\end{equation}
\end{lemm}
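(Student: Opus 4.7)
The plan is to show the hypothesis forces $f(\ket{x},\ket{x})$ to depend only on the norm of $\ket{x}$, which gives the conclusion immediately. The strategy is to pick an orthonormal basis, use the hypothesis to kill all off-diagonal entries of $f$, and then pick a clever pair of orthogonal vectors built from two basis elements to show the diagonal entries all agree.

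Concretely, I would first fix an orthonormal basis $\{\ket{e_i}\}$ of $\mathcal{H}$. Applying the hypothesis \eqref{eq:xxperp} to the orthogonal pair $(\ket{e_i},\ket{e_j})$ with $i\neq j$ gives $f(\ket{e_i},\ket{e_j})=0$. Next, for any pair $i\neq j$, set $\ket{u}=\tfrac{1}{\sqrt{2}}(\ket{e_i}+\ket{e_j})$ and $\ket{v}=\tfrac{1}{\sqrt{2}}(\ket{e_i}-\ket{e_j})$. These are orthogonal, so $f(\ket{u},\ket{v})=0$ by hypothesis. Expanding by sesquilinearity and using the off-diagonal vanishing,
\begin{equation}
0 = f(\ket{u},\ket{v}) = \tfrac{1}{2}\bigl(f(\ket{e_i},\ket{e_i}) - f(\ket{e_j},\ket{e_j})\bigr),
\end{equation}
which gives $f(\ket{e_i},\ket{e_i}) = f(\ket{e_j},\ket{e_j})$. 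Call this common value $c$.

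Finally, for an arbitrary $\ket{x}=\sum_i x_i \ket{e_i}$, sesquilinearity together with the off-diagonal vanishing and the common diagonal value $c$ yields
\begin{equation}
f(\ket{x},\ket{x}) = \sum_{i,j} x_i^{*} x_j\, f(\ket{e_i},\ket{e_j}) = c\sum_i |x_i|^2 = c\,\|\ket{x}\|^{2}.
\end{equation}
Hence $f(\ket{x_0},\ket{x_0}) = c\,\|\ket{x_0}\|^{2} = c\,\|\ket{x_1}\|^{2} = f(\ket{x_1},\ket{x_1})$ whenever $\|\ket{x_0}\|=\|\ket{x_1}\|$, which is the claim.

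There is no real obstacle here; the only mildly delicate point is the off-diagonal reduction $f(\ket{u},\ket{v})=\tfrac{1}{2}(f(\ket{e_i},\ket{e_i})-f(\ket{e_j},\ket{e_j}))$, where one must be careful that the cross terms $f(\ket{e_i},\ket{e_j})$ and $f(\ket{e_j},\ket{e_i})$ both drop out (they do, by the previous step). Everything else is just bookkeeping with sesquilinearity, and the argument works verbatim in any finite-dimensional Hilbert space.
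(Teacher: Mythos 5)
Your proof is correct and follows essentially the same route as the paper's: both use the hypothesis to kill the off-diagonal terms and then compare diagonal values via the orthogonal pair $\frac{1}{\sqrt{2}}(\ket{e_i} \pm \ket{e_j})$. The only cosmetic difference is that you expand an arbitrary vector in a full orthonormal basis to obtain the closed form $f(\ket{x},\ket{x}) = c\,\| \ket{x} \|^2$, whereas the paper reduces the general (non-orthogonal) case to a two-dimensional expansion $\ket{x_1} = a\ket{0} + b\ket{1}$; both steps are routine sesquilinearity bookkeeping and yield the same conclusion.
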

\begin{proof}
	When $\dim \mathcal{H} = 1$, it is trivial. In the following, we assume $\dim \mathcal{H} \geq 2$.
	First, we show the case of $\ket{x_0} \perp \ket{x_1}$. Take the computational basis in $\mathcal{H}$ and assume $\ket{x_0} = \ket{0}$ and $\ket{x_1} = \ket{1}$ without loss of generality.
	In Eq.\,\eqref{eq:xxperp}, substitute $\ket{0} + \ket{1}$ for $\ket{x}$ and $\ket{0} - \ket{1}$ for $\ket{x^\perp}$
	and then using Eq.\,\eqref{eq:xxperp} we obtain
	\begin{equation}
		f(\ket{0},\, \ket{0}) - f(\ket{0},\, \ket{1}) + f(\ket{1},\, \ket{0}) - f(\ket{1},\, \ket{1}) = 0.
	\end{equation}
	\begin{equation}
		\label{eq:x0x0=x1x1}
		\therefore f(\ket{0},\, \ket{0}) = f(\ket{1},\, \ket{1}). \quad (\because \mathrm{Eq.\,\eqref{eq:xxperp}})
	\end{equation}
	Next, we show the general case where $\ket{x_0}$ and $\ket{x_1}$ may not be orthogonal. Without loss of generality, assume $\ket{x_0} = \ket{0}$ and $\ket{x_1} = a \ket{0} + b \ket{1}$ where $a$, $b$ are arbitrary complex numbers such that $|a|^2 + |b|^2 = 1$.
	Then,
	\begin{alignat}{2}
		  &   &       & f(\ket{x_1},\, \ket{x_1}) \notag                                                                                       \\
		  & = & |a|^2 & f(\ket{0},\, \ket{0}) + a^* b f(\ket{0},\, \ket{1}) + b^* a f(\ket{1},\, \ket{0}) + |b|^2 f(\ket{1},\, \ket{1}) \notag \\
		  & = & |a|^2 & f(\ket{0},\, \ket{0}) + |b|^2 f(\ket{1},\, \ket{1}) \quad (\because \text{Eq.\,\eqref{eq:xxperp}}) \notag              \\
		  & = & |a|^2 & f(\ket{0},\, \ket{0}) + |b|^2 f(\ket{0},\, \ket{0}) \quad (\because \text{Eq.\,\eqref{eq:x0x0=x1x1}}) \notag           \\
		  & = &       & f(\ket{0},\, \ket{0}).
	\end{alignat}
\end{proof}

%%%%%%%%%%%%%%%%%%%%%%%%%%%%%%%%%%%%%%%%%%%%%%%%%%%%%%%%%%%%%%%%%%%%%%%%
%%%%%%%%%%%%%%%%%%%%%%%%      NEW SUBSECTION    %%%%%%%%%%%%%%%%%%%%%%%%
%%%%%%%%%%%%%%%%%%%%%%%%%%%%%%%%%%%%%%%%%%%%%%%%%%%%%%%%%%%%%%%%%%%%%%%%

\subsection{Notation of multiple-ket}
\label{secApp:NotationfVectorsincompositeHilbertspace}

%%%%%%%%%%%%%%%%%%%%%%%%%%%%%%%%%%%%%%%%%%%%%%%%%%%%%%%%%%%%%%%%%%%%%%%%
%%%%%%%%%%%%%%%%%%%%%%%%      NEW SUBSECTION    %%%%%%%%%%%%%%%%%%%%%%%%
%%%%%%%%%%%%%%%%%%%%%%%%%%%%%%%%%%%%%%%%%%%%%%%%%%%%%%%%%%%%%%%%%%%%%%%%

$\ket{\phi}^{\mathcal{X}}$ denotes a vector in a Hilbert space $\mathcal{X}$. The superscript $\mathcal{X}$ of $\ket{\phi}^{\mathcal{X}}$ represents the space where it belongs.
$\dket{\Phi}^{\mathcal{X} \mathcal{Y}}$ denotes a vector in the tensor product of Hilbert spaces $\mathcal{X}$ and $\mathcal{Y}$. We use $\dket{\Phi}^{\mathcal{X} \mathcal{Y}}$, not with a single-ket, to emphasize the fact that it belongs to the tensor product of two Hilbert spaces.
Similarly, we use $\tket{\eta}^{\mathcal{X} \mathcal{Y} \mathcal{Z}}$, not with a single-ket, to emphasize on the fact that it belongs to the tensor product of {three} Hilbert spaces.
Please note that $\Phi$ in $\dket{\Phi}^{\mathcal{X} \mathcal{Y}}$ is just a label and $\dket{\Phi}^{\mathcal{X} \mathcal{Y}}$ is not a Choi vector of some operator $\Phi$, that is, $\dket{\Phi}^{\mathcal{X} \mathcal{Y}} \neq I \otimes \Phi \dket{I}$.

%%%%%%%%%%%%%%%%%%%%%%%%%%%%%%%%%%%%%%%%%%%%%%%%%%%%%%%%%%%%%%%%%%%%%%%%
%%%%%%%%%%%%%%%%%%%%%%%%      NEW SUBSECTION    %%%%%%%%%%%%%%%%%%%%%%%%
%%%%%%%%%%%%%%%%%%%%%%%%%%%%%%%%%%%%%%%%%%%%%%%%%%%%%%%%%%%%%%%%%%%%%%%%

\subsection{Function from subspaces to a subspace}
\label{secApp:Function from vectors to a subspace}

%%%%%%%%%%%%%%%%%%%%%%%%%%%%%%%%%%%%%%%%%%%%%%%%%%%%%%%%%%%%%%%%%%%%%%%%
%%%%%%%%%%%%%%%%%%%%%%%%      NEW SUBSECTION    %%%%%%%%%%%%%%%%%%%%%%%%
%%%%%%%%%%%%%%%%%%%%%%%%%%%%%%%%%%%%%%%%%%%%%%%%%%%%%%%%%%%%%%%%%%%%%%%%

In this paper, we deal with functions mapping linear subspaces to linear subspaces where some particular spaces are often used. We define two useful subspaces.

Let $\ket{\alpha}^{A_O} \in A_O$, we  define the one-dimensional subspace spanned by $\ket{\alpha}$ as
\begin{equation}
	\alpha:=\vspan{\ket{\alpha}^{A_O}}
\end{equation}
and its orthogonal complement as
\begin{equation}
	\overline{\alpha}:=\vspan{\ket{\alpha}^{A_O}}^{\perp}.
\end{equation}
Analogously, if $\ket{\beta}^{B_O} \in B_O$, we define
\begin{equation}
	\beta:=\vspan{\ket{\beta}^{B_O}}, \quad \overline{\beta}:=\vspan{\ket{\beta}^{B_O}}^{\perp}.
\end{equation}

Let $\lnsp{V}$ be a function which takes a pair of linear space as arguments such that $\lnsp{V} \left(A_O, B_O \right)$ is a linear space.
For simplicity of notation, we will often write the \ReviseFurther{arguments} of such functions by sub-indexes, for example,
$\lnsp{V}_{A_O B_O}$ for $\lnsp{V} \left(A_O, B_O \right)$,
$\lnsp{V}_{A_O \beta} $ for $ \lnsp{V} (A_O, \beta)$,
$\lnsp{V}_{\alpha \overline{\beta}}$ for
$ \lnsp{V} \left(\alpha ,\overline{\beta}\right)$, \textit{etc}.

We call $\lnsp{V}$ bi-additive if
$\lnsp{V}(A_{0} + A_{1}, B_{0} + B_{1}) = \lnsp{V}(A_{0}, B_{0}) + \lnsp{V}(A_{0}, B_{1}) +
	\lnsp{V}(A_{1}, B_{0}) + \lnsp{V}(A_{1}, B_{1})$ for all subspaces $A_{0}, A_{1}$ in $A_O$ and $B_{0}, B_{1}$ in $B_O$.

%%%%%%%%%%%%%%%%%%%%%%%%%%%%%%%%%%%%%%%%%%%%%%%%%%%%%%%%%%%%%%%%%%%%%%%%
%%%%%%%%%%%%%%%%%%%%%%%%      NEW SUBSECTION    %%%%%%%%%%%%%%%%%%%%%%%%
%%%%%%%%%%%%%%%%%%%%%%%%%%%%%%%%%%%%%%%%%%%%%%%%%%%%%%%%%%%%%%%%%%%%%%%%

\subsection{\Partsp}
\label{secApp:partsp}

%%%%%%%%%%%%%%%%%%%%%%%%%%%%%%%%%%%%%%%%%%%%%%%%%%%%%%%%%%%%%%%%%%%%%%%%
%%%%%%%%%%%%%%%%%%%%%%%%      NEW SUBSECTION    %%%%%%%%%%%%%%%%%%%%%%%%
%%%%%%%%%%%%%%%%%%%%%%%%%%%%%%%%%%%%%%%%%%%%%%%%%%%%%%%%%%%%%%%%%%%%%%%%

Let $\mathcal{E}, \mathcal{F}$ be two finite-dimensional Hilbert spaces and $\mathcal{W}$ be a subspace of $\mathcal{E} \otimes \mathcal{F}$. Define the following notation $\redspn{\bullet}{\mathcal{E}}{\mathcal{F}}$ which \ReviseFurther{generates} some subspace of $\mathcal{F}$:
\begin{equation}
	\redspn{\mathcal{W}}{\mathcal{E}}{\mathcal{F}} := \vspan{ {}^{\mathcal{E}}\bradket{\epsilon | \eta}^{\mathcal{EF}} \relmiddle|
	\dket{\eta}^{\mathcal{EF}} \in \mathcal{W}, \ket{\epsilon}^{\mathcal{E}} \in \mathcal{E} }.
\end{equation}
For example, \ReviseFurther{if}
\begin{equation}
	\mathcal{W} = \vspan{\ket{0}^{\mathcal{E}} \ket{0}^{\mathcal{F}} + 2 \ket{1}^{\mathcal{E}} \ket{1}^{\mathcal{F}}},
\end{equation}
then
\begin{equation}
	\redspn{\mathcal{W}}{\mathcal{E}}{\mathcal{F}} = \vspan{\ket{0}^{\mathcal{F}},\, \ket{1}^{\mathcal{F}}}.
\end{equation}
$\redspn{\mathcal{W}}{\mathcal{E}}{\mathcal{F}}$ consists of ``$\mathcal{F}$ parts'' of $c$ in some sense. Note that the subspace $\redspn{\mathcal{W}}{\mathcal{E}}{\mathcal{F}}$ is equal to
the sum of the supports of the reduced density operators $\rho_{\eta}^{\mathcal{F}} := \Tr_{\mathcal{E}} \dket{\eta}\dbra{\eta}$ for system $\mathcal{F}$ corresponding to each state $\dket{\eta}^{\mathcal{EF}} \in \ReviseFurther{\mathcal{W}}$:
\begin{equation}
	\redspn{\mathcal{W}}{\mathcal{E}}{\mathcal{F}} = \bigplus_{\dket{\eta}^{\mathcal{EF}} \in \mathcal{W}} \suppa \left( \rho_{\eta}^{\mathcal{F}} \right).
\end{equation}
It is also interesting to remark that  the simple union $\bigcup_{\dket{\eta}^{\mathcal{EF}} \in \mathcal{E} \otimes \mathcal{F}} \suppa \left( \rho_{\eta}^{\mathcal{F}} \right)$ is not a linear space in general. One of the counter examples is in the case $\mathcal{W} = \vspan{\ket{0}^{\mathcal{E}} \ket{0}^{\mathcal{F}}, \ket{0}^{\mathcal{E}} \ket{1}^{\mathcal{F}} + \ket{1}^{\mathcal{E}} \ket{2}^{\mathcal{F}}}$.
Since this operation $\redspn{\bullet}{\mathcal{E}}{\mathcal{F}}$, which transforms \ReviseFurther{subspaces of product spaces} into \ReviseFurther{subspaces of component spaces}, is like the partial trace, which transforms density operators \ReviseFurther{in product spaces} into reduced density operators \ReviseFurther{in component spaces}, we give the name ``\textit{\partsp}'' to its product $\redspn{\mathcal{W}}{\mathcal{E}}{\mathcal{F}}$.

Some useful properties are following where $\lnsp{W}$, $\lnsp{W}^\prime$ are subspaces of $\mathcal{E} \otimes \mathcal{F}$ and $\mathcal{X}$ is a subspace of $\mathcal{F}$:
\begin{itemize}
	\item $\redsp{\mathcal{E} \otimes \mathcal{X}}{\mathcal{E}}{\mathcal{F}} = \mathcal{X}$.
	\item $\mathcal{W} \subset \mathcal{E} \otimes \redspn{\mathcal{W}}{\mathcal{E}}{\mathcal{F}}$.
	\item If $\lnsp{W} \subset \lnsp{W}^\prime$, then $\redspn{\lnsp{W}}{\mathcal{E}}{\mathcal{F}} \subset \redspn{\lnsp{W}^\prime}{\mathcal{E}}{\mathcal{F}}$.
	\item $\redspn{\mathcal{W}}{\mathcal{E}}{\mathcal{F}} \subset \mathcal{X}$ if and only if $\mathcal{W} \subset \mathcal{E} \otimes \mathcal{X}$ .
	\item If $\redspn{\lnsp{W}}{\mathcal{E}}{\mathcal{F}} \perp \redspn{\lnsp{W}^\prime}{\mathcal{E}}{\mathcal{F}}$, then $\lnsp{W} \perp \lnsp{W}^\prime$.
	\item $\redspn{\mathcal{W}}{\mathcal{E}}{\mathcal{F}} \perp \mathcal{X}$ if and only if $\mathcal{W} \perp \mathcal{E} \otimes \mathcal{X}$.
	\item $\redsp{\lnsp{W} + \lnsp{W}^\prime}{\mathcal{E}}{\mathcal{F}} = \redspn{\lnsp{W}}{\mathcal{E}}{\mathcal{F}} + \redspn{\lnsp{W}^\prime}{\mathcal{E}}{\mathcal{F}}$.
\end{itemize}

\begin{lemm}
	\label{lem:UperpV0FeqEotUsubV1}
	Let $\mathcal{E}$, $\mathcal{F}$ be two finite-dimensional Hilbert spaces and $\lnsp{W}_0$, $\lnsp{W}_1$ be two subspaces of $\mathcal{E} \otimes \mathcal{F}$ such that $\mathcal{E} \otimes \mathcal{F} = \lnsp{W}_0 \oplus \lnsp{W}_1$.
	Suppose $\mathcal{X}$ is a subspace of $\mathcal{F}$. Then,
	$\mathcal{X} \perp \redspn{\lnsp{W}_1}{\mathcal{E}}{\mathcal{F}}$ if and only if $\mathcal{E} \otimes \mathcal{X} \subset \lnsp{W}_0$.
\end{lemm}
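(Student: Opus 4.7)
The plan is to chain together two observations that are essentially already available from the bulleted list in \ref{sec:partsp}, together with the defining property of the orthogonal direct sum decomposition $\mathcal{E}\otimes\mathcal{F} = \lnsp{W}_0 \oplus \lnsp{W}_1$.

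First I would invoke the bulleted equivalence ``$\redspn{\mathcal{W}}{\mathcal{E}}{\mathcal{F}} \perp \mathcal{X}$ if and only if $\mathcal{W} \perp \mathcal{E}\otimes \mathcal{X}$'' applied with $\mathcal{W} = \lnsp{W}_1$. This rewrites the hypothesis $\mathcal{X} \perp \redspn{\lnsp{W}_1}{\mathcal{E}}{\mathcal{F}}$ as the cleaner condition $\lnsp{W}_1 \perp \mathcal{E}\otimes \mathcal{X}$, working entirely inside $\mathcal{E}\otimes\mathcal{F}$.

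Next I would use the fact that $\mathcal{E}\otimes\mathcal{F} = \lnsp{W}_0 \oplus \lnsp{W}_1$ is an orthogonal direct sum, so $\lnsp{W}_0 = \lnsp{W}_1^{\perp}$ (the orthogonal complement being taken inside the ambient Hilbert space $\mathcal{E}\otimes\mathcal{F}$). A subspace $\mathcal{Y}\subset \mathcal{E}\otimes\mathcal{F}$ satisfies $\lnsp{W}_1\perp \mathcal{Y}$ if and only if $\mathcal{Y}\subset \lnsp{W}_1^{\perp}=\lnsp{W}_0$. Applying this with $\mathcal{Y}=\mathcal{E}\otimes\mathcal{X}$ converts $\lnsp{W}_1 \perp \mathcal{E}\otimes\mathcal{X}$ into $\mathcal{E}\otimes\mathcal{X}\subset \lnsp{W}_0$, which is the desired conclusion.

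Both directions are handled simultaneously by this chain of equivalences, so no separate argument is needed. There is no real obstacle here; the only minor point to be careful about is that the ``if and only if'' in the quoted bulleted property genuinely goes both ways, and that $\lnsp{W}_1^{\perp}$ is taken relative to $\mathcal{E}\otimes\mathcal{F}$ (not some larger space), which is exactly what the hypothesis $\mathcal{E}\otimes\mathcal{F}=\lnsp{W}_0\oplus\lnsp{W}_1$ guarantees.
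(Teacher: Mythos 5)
Your proof is correct and is essentially identical to the paper's own one-line argument, which chains the same two equivalences: $\mathcal{X} \perp \redspn{\lnsp{W}_1}{\mathcal{E}}{\mathcal{F}} \iff \mathcal{E} \otimes \mathcal{X} \perp \lnsp{W}_1 \iff \mathcal{E} \otimes \mathcal{X} \subset \lnsp{W}_0$. Your added remarks about the bidirectionality of the bulleted property and the ambient space for the orthogonal complement are sound and merely make explicit what the paper leaves implicit.
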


\begin{proof}
	$\mathcal{X} \perp \redspn{\lnsp{W}_1}{\mathcal{E}}{\mathcal{F}}\iff \mathcal{E} \otimes \mathcal{X} \perp \lnsp{W}_1 \iff \mathcal{E} \otimes \mathcal{X} \subset \lnsp{W}_0$.
\end{proof}

\begin{coro}
	\label{cor:F0perpF1=>V0=EtensorF0}
	Let $\mathcal{E}$, $\mathcal{F}$ be two finite-dimensional Hilbert spaces and $\lnsp{W}_0$, $\lnsp{W}_1$ be two subspaces of $\mathcal{E} \otimes \mathcal{F}$ such that $\mathcal{E} \otimes \mathcal{F} = \lnsp{W}_0 \oplus \lnsp{W}_1$.
	$\redspn{\lnsp{W}_0}{\mathcal{E}}{\mathcal{F}} \perp \redspn{\lnsp{W}_1}{\mathcal{E}}{\mathcal{F}}$ if and only if $\mathcal{E} \otimes \redspn{\lnsp{W}_0}{\mathcal{E}}{\mathcal{F}} = \lnsp{W}_0$.
\end{coro}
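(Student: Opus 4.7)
The plan is to reduce this corollary to a direct application of Lemma \ref{lem:UperpV0FeqEotUsubV1} combined with one of the bullet properties of the reduced subspace listed earlier in \ref{sec:partsp}. For convenience, I will denote $\mathcal{X}_0 := \redspn{\lnsp{W}_0}{\mathcal{E}}{\mathcal{F}}$ and $\mathcal{X}_1 := \redspn{\lnsp{W}_1}{\mathcal{E}}{\mathcal{F}}$, so that the statement becomes the equivalence $\mathcal{X}_0 \perp \mathcal{X}_1 \iff \mathcal{E} \otimes \mathcal{X}_0 = \lnsp{W}_0$.

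For the forward direction, I would assume $\mathcal{X}_0 \perp \mathcal{X}_1$ and instantiate Lemma \ref{lem:UperpV0FeqEotUsubV1} with the choice $\mathcal{X} = \mathcal{X}_0$. Since $\mathcal{X}_0 \perp \redspn{\lnsp{W}_1}{\mathcal{E}}{\mathcal{F}}$ by hypothesis, the lemma yields $\mathcal{E} \otimes \mathcal{X}_0 \subset \lnsp{W}_0$. The reverse inclusion $\lnsp{W}_0 \subset \mathcal{E} \otimes \mathcal{X}_0$ is then immediate from the bullet property $\mathcal{W} \subset \mathcal{E} \otimes \redspn{\mathcal{W}}{\mathcal{E}}{\mathcal{F}}$ applied to $\mathcal{W} = \lnsp{W}_0$. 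Combining both inclusions gives the desired equality.

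For the backward direction, I would assume $\mathcal{E} \otimes \mathcal{X}_0 = \lnsp{W}_0$, so in particular $\mathcal{E} \otimes \mathcal{X}_0 \subset \lnsp{W}_0$. Applying Lemma \ref{lem:UperpV0FeqEotUsubV1} again with $\mathcal{X} = \mathcal{X}_0$, the ``only if'' half of that lemma delivers $\mathcal{X}_0 \perp \mathcal{X}_1$, closing the equivalence.

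There is no genuine obstacle here: the work has already been done by Lemma \ref{lem:UperpV0FeqEotUsubV1}, which translates orthogonality of reduced subspaces into containment of tensored subspaces in $\lnsp{W}_0$, and the real content of the corollary is simply the observation that containment can be promoted to equality via the general property $\lnsp{W} \subset \mathcal{E} \otimes \redspn{\lnsp{W}}{\mathcal{E}}{\mathcal{F}}$. If anything merits care, it is only to check that the particular choice $\mathcal{X} = \mathcal{X}_0 \subset \mathcal{F}$ is legitimate in the lemma's statement, which it is since $\mathcal{X}_0$ is a subspace of $\mathcal{F}$ by construction.
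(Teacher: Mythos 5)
Your proof is correct and matches the paper's intended argument: the corollary is stated immediately after Lem.\,\ref{lem:UperpV0FeqEotUsubV1} precisely because it follows from instantiating that lemma with $\mathcal{X} = \redspn{\lnsp{W}_0}{\mathcal{E}}{\mathcal{F}}$ in both directions, with the inclusion $\lnsp{W}_0 \subset \mathcal{E} \otimes \redspn{\lnsp{W}_0}{\mathcal{E}}{\mathcal{F}}$ supplied by the listed bullet property, exactly as you do. Your closing check that $\redspn{\lnsp{W}_0}{\mathcal{E}}{\mathcal{F}}$ is indeed a subspace of $\mathcal{F}$ is a sound precaution and nothing is missing.
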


\begin{lemm}
	\label{lem:5-1-2-1}
	Let $\mathcal{E}$, $\mathcal{F}$ be two finite-dimensional Hilbert spaces and $\lnsp{W}_0$, $\lnsp{W}_1$ be subspaces of $\mathcal{E} \otimes \mathcal{F}$ such that $\mathcal{E} \otimes \mathcal{F} = \lnsp{W}_0 \oplus \lnsp{W}_1$. Then
	$\mathcal{F} = (\redspn{\lnsp{W}_1}{\mathcal{E}}{\mathcal{F}})^{\perp}
		\oplus (\redspn{\lnsp{W}_0}{\mathcal{E}}{\mathcal{F}} \cap \redspn{\lnsp{W}_1}{\mathcal{E}}{\mathcal{F}})
		\oplus (\redspn{\lnsp{W}_0}{\mathcal{E}}{\mathcal{F}})^{\perp}$ holds.
\end{lemm}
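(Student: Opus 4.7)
The plan is to prove the three-way orthogonal decomposition by first establishing a key inclusion between each reduced subspace and the orthogonal complement of the other, and then combining two two-way decompositions. I abbreviate $\lnsp{F}_i := \redspn{\lnsp{W}_i}{\mathcal{E}}{\mathcal{F}}$ for $i = 0, 1$, so the claim becomes
\begin{equation}
\mathcal{F} = \lnsp{F}_1^{\perp} \oplus (\lnsp{F}_0 \cap \lnsp{F}_1) \oplus \lnsp{F}_0^{\perp}.
\end{equation}

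The central step is to show $\lnsp{F}_1^{\perp} \subset \lnsp{F}_0$ and, symmetrically, $\lnsp{F}_0^{\perp} \subset \lnsp{F}_1$. For the first inclusion I take any $\ket{f} \in \lnsp{F}_1^{\perp}$ and note that $\vspan{\ket{f}} \perp \lnsp{F}_1$. Invoking Lem.\,\ref{lem:UperpV0FeqEotUsubV1} with $\mathcal{X} = \vspan{\ket{f}}$ (which uses the hypothesis $\mathcal{E} \otimes \mathcal{F} = \lnsp{W}_0 \oplus \lnsp{W}_1$) gives $\mathcal{E} \otimes \vspan{\ket{f}} \subset \lnsp{W}_0$, and applying $\redspn{\cdot}{\mathcal{E}}{\mathcal{F}}$ to both sides together with the listed property $\redsp{\mathcal{E} \otimes \mathcal{X}}{\mathcal{E}}{\mathcal{F}} = \mathcal{X}$ yields $\vspan{\ket{f}} \subset \lnsp{F}_0$. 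Hence $\ket{f} \in \lnsp{F}_0$. The same argument with the roles of $\lnsp{W}_0, \lnsp{W}_1$ swapped gives $\lnsp{F}_0^{\perp} \subset \lnsp{F}_1$.

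From these inclusions the three pairwise orthogonalities follow immediately: $\lnsp{F}_1^{\perp} \perp (\lnsp{F}_0 \cap \lnsp{F}_1)$ and $(\lnsp{F}_0 \cap \lnsp{F}_1) \perp \lnsp{F}_0^{\perp}$ hold by definition of orthogonal complement, while $\lnsp{F}_1^{\perp} \perp \lnsp{F}_0^{\perp}$ holds because $\lnsp{F}_1^{\perp} \subset \lnsp{F}_0 = (\lnsp{F}_0^{\perp})^{\perp}$. So the right-hand side is a legitimate orthogonal direct sum of subspaces of $\mathcal{F}$.

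It remains to verify the sum exhausts $\mathcal{F}$. I would decompose in two stages. First, $\mathcal{F} = \lnsp{F}_1 \oplus \lnsp{F}_1^{\perp}$ by orthogonal complement. Second, within $\lnsp{F}_1$, the subspace $\lnsp{F}_0 \cap \lnsp{F}_1$ has orthogonal complement (relative to $\lnsp{F}_1$) equal to $\lnsp{F}_0^{\perp} \cap \lnsp{F}_1$; but the inclusion $\lnsp{F}_0^{\perp} \subset \lnsp{F}_1$ established above reduces this to $\lnsp{F}_0^{\perp}$. Hence $\lnsp{F}_1 = (\lnsp{F}_0 \cap \lnsp{F}_1) \oplus \lnsp{F}_0^{\perp}$, and substituting gives the desired three-term decomposition. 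The only non-trivial step is the initial inclusion, which, however, reduces to a direct application of Lem.\,\ref{lem:UperpV0FeqEotUsubV1}; all other bookkeeping is routine once that is in hand, so I expect no substantive obstacle.
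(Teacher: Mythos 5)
Your proof is correct and follows essentially the same route as the paper's: both hinge on Lem.\,\ref{lem:UperpV0FeqEotUsubV1} applied to the orthogonal complements of the reduced subspaces, your inclusions $(\redspn{\lnsp{W}_1}{\mathcal{E}}{\mathcal{F}})^{\perp} \subset \redspn{\lnsp{W}_0}{\mathcal{E}}{\mathcal{F}}$ and $(\redspn{\lnsp{W}_0}{\mathcal{E}}{\mathcal{F}})^{\perp} \subset \redspn{\lnsp{W}_1}{\mathcal{E}}{\mathcal{F}}$ being equivalent to the orthogonality $(\redspn{\lnsp{W}_0}{\mathcal{E}}{\mathcal{F}})^{\perp} \perp (\redspn{\lnsp{W}_1}{\mathcal{E}}{\mathcal{F}})^{\perp}$ that the paper derives, after which the paper simply cites Lem.\,\ref{lem:UperpV->Uop(U^perpcapV^perp)opV} while you re-derive its content inline via your two-stage decomposition. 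One small caution: your assertion that the complement of $\lnsp{F}_0 \cap \lnsp{F}_1$ relative to $\lnsp{F}_1$ equals $\lnsp{F}_0^{\perp} \cap \lnsp{F}_1$ is not a generic fact (in general it is $(\lnsp{F}_0^{\perp} + \lnsp{F}_1^{\perp}) \cap \lnsp{F}_1$, which can be strictly larger) — it holds here precisely because of the inclusion $\lnsp{F}_0^{\perp} \subset \lnsp{F}_1$ you established, so that inclusion is doing work at this step as well, not merely in the final simplification.
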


\begin{proof} \
	First, we show $(\redspn{\lnsp{W}_0}{\mathcal{E}}{\mathcal{F}})^{\perp} \perp (\redspn{\lnsp{W}_1}{\mathcal{E}}{\mathcal{F}})^{\perp}$.
	From $\lnsp{W}_0 \oplus \lnsp{W}_1 = \mathcal{E} \otimes \mathcal{F}$,
	$(\redspn{\lnsp{W}_1}{\mathcal{E}}{\mathcal{F}})^{\perp} \perp \redspn{\lnsp{W}_1}{\mathcal{E}}{\mathcal{F}}$ and
	Lem.\,\ref{lem:UperpV0FeqEotUsubV1},
	we obtain $\mathcal{E} \otimes (\redspn{\lnsp{W}_1}{\mathcal{E}}{\mathcal{F}})^{\perp} \subset \lnsp{W}_0$.
	Similarly, we can obtain $\mathcal{E} \otimes (\redspn{\lnsp{W}_0}{\mathcal{E}}{\mathcal{F}})^{\perp} \subset \lnsp{W}_1$.
	Thus, since $\lnsp{W}_0 \perp \lnsp{W}_1$, we obtain
	$\mathcal{E} \otimes (\redspn{\lnsp{W}_1}{\mathcal{E}}{\mathcal{F}})^{\perp} \perp \mathcal{E} \otimes (\redspn{\lnsp{W}_0}{\mathcal{E}}{\mathcal{F}})^{\perp}$,
	which implies $(\redspn{\lnsp{W}_1}{\mathcal{E}}{\mathcal{F}})^{\perp} \perp (\redspn{\lnsp{W}_0}{\mathcal{E}}{\mathcal{F}})^{\perp}$.
	Therefore, from Lem.\,\ref{lem:UperpV->Uop(U^perpcapV^perp)opV}, we obtain
	\begin{equation}
		\mathcal{F} = (\redspn{\lnsp{W}_1}{\mathcal{E}}{\mathcal{F}})^{\perp} \oplus (\redspn{\lnsp{W}_0}{\mathcal{E}}{\mathcal{F}} \cap \redspn{\lnsp{W}_1}{\mathcal{E}}{\mathcal{F}})
		\oplus (\redspn{\lnsp{W}_0}{\mathcal{E}}{\mathcal{F}})^{\perp}.
	\end{equation}
\end{proof}

\begin{coro}
	\label{cor:EF=V0opV1->F0=V1perpFop(F0capF1)}
	Let $\mathcal{E}$, $\mathcal{F}$ be two finite-dimensional Hilbert spaces and $\lnsp{W}_0$, $\lnsp{W}_1$ be subspaces of $\mathcal{E} \otimes \mathcal{F}$ such that $\mathcal{E} \otimes \mathcal{F} = \lnsp{W}_0 \oplus \lnsp{W}_1$.
	$\redspn{\lnsp{W}_0}{\mathcal{E}}{\mathcal{F}}$ is decomposed into
	\begin{equation}
		\redspn{\lnsp{W}_0}{\mathcal{E}}{\mathcal{F}} = \mathcal{F}_0^{\mathrm{share}} \oplus \mathcal{F}_0^{\mathrm{perp}}
	\end{equation}
	by using the subspaces $\mathcal{F}_0^{\mathrm{share}}$, $\mathcal{F}_0^{\mathrm{perp}}$ of 	$\redspn{\lnsp{W}_0}{\mathcal{E}}{\mathcal{F}}$ satisfying
	\begin{equation}
		\mathcal{F}_0^{\mathrm{share}} \subset \redspn{\lnsp{W}_1}{\mathcal{E}}{\mathcal{F}},
	\end{equation}
	\begin{equation}
		\mathcal{F}_0^{\mathrm{perp}} \perp \redspn{\lnsp{W}_1}{\mathcal{E}}{\mathcal{F}}.
	\end{equation}
	Note that $\mathcal{F}_0^{\mathrm{share}}, \mathcal{F}_0^{\mathrm{perp}}$ are uniquely determined from Lem.\,\ref{lem:g7g9ga7g0}.
\end{coro}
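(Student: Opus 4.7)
The plan is to obtain the corollary as a direct refinement of the three-part decomposition provided by Lem.\,\ref{lem:5-1-2-1}. Specifically, I would define the candidate subspaces
\begin{equation}
\mathcal{F}_0^{\mathrm{share}} := \redspn{\lnsp{W}_0}{\mathcal{E}}{\mathcal{F}} \cap \redspn{\lnsp{W}_1}{\mathcal{E}}{\mathcal{F}}, \qquad \mathcal{F}_0^{\mathrm{perp}} := \redspn{\lnsp{W}_0}{\mathcal{E}}{\mathcal{F}} \cap (\redspn{\lnsp{W}_1}{\mathcal{E}}{\mathcal{F}})^{\perp}.
\end{equation}
By construction $\mathcal{F}_0^{\mathrm{share}} \subset \redspn{\lnsp{W}_1}{\mathcal{E}}{\mathcal{F}}$ and $\mathcal{F}_0^{\mathrm{perp}} \perp \redspn{\lnsp{W}_1}{\mathcal{E}}{\mathcal{F}}$, so the two desired properties are automatic; the only thing that needs argument is that these subspaces actually decompose $\redspn{\lnsp{W}_0}{\mathcal{E}}{\mathcal{F}}$ as an orthogonal direct sum.

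To establish the decomposition, I would fix an arbitrary $\ket{v} \in \redspn{\lnsp{W}_0}{\mathcal{E}}{\mathcal{F}}$ and write it along the three orthogonal pieces produced by Lem.\,\ref{lem:5-1-2-1}:
\begin{equation}
\ket{v} = \ket{v}_1 + \ket{v}_2 + \ket{v}_3,
\end{equation}
with $\ket{v}_1 \in (\redspn{\lnsp{W}_1}{\mathcal{E}}{\mathcal{F}})^{\perp}$, $\ket{v}_2 \in \redspn{\lnsp{W}_0}{\mathcal{E}}{\mathcal{F}} \cap \redspn{\lnsp{W}_1}{\mathcal{E}}{\mathcal{F}}$, and $\ket{v}_3 \in (\redspn{\lnsp{W}_0}{\mathcal{E}}{\mathcal{F}})^{\perp}$.

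The key step, which I expect to be the most delicate, is showing $\ket{v}_3=0$. I would compute $\braket{v_3|v}$ in two ways: since $\ket{v}_3 \perp \redspn{\lnsp{W}_0}{\mathcal{E}}{\mathcal{F}}$ and $\ket{v}\in\redspn{\lnsp{W}_0}{\mathcal{E}}{\mathcal{F}}$, this inner product vanishes, whereas expanding through the orthogonal decomposition gives $\braket{v_3|v} = \|\ket{v}_3\|^2$. Hence $\ket{v}_3 = 0$, so $\ket{v} = \ket{v}_1 + \ket{v}_2$. Because $\ket{v}_2 \in \redspn{\lnsp{W}_0}{\mathcal{E}}{\mathcal{F}}$, subtraction gives $\ket{v}_1 = \ket{v}-\ket{v}_2 \in \redspn{\lnsp{W}_0}{\mathcal{E}}{\mathcal{F}}$, hence $\ket{v}_1 \in \mathcal{F}_0^{\mathrm{perp}}$ and $\ket{v}_2\in\mathcal{F}_0^{\mathrm{share}}$. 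Since $\mathcal{F}_0^{\mathrm{share}} \perp \mathcal{F}_0^{\mathrm{perp}}$ (they live in orthogonal pieces of the Lem.\,\ref{lem:5-1-2-1} decomposition), I conclude
\begin{equation}
\redspn{\lnsp{W}_0}{\mathcal{E}}{\mathcal{F}} = \mathcal{F}_0^{\mathrm{share}} \oplus \mathcal{F}_0^{\mathrm{perp}}.
\end{equation}

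Finally, uniqueness of $\mathcal{F}_0^{\mathrm{share}}$ and $\mathcal{F}_0^{\mathrm{perp}}$ follows immediately by applying Lem.\,\ref{lem:g7g9ga7g0} with $\mathcal{X}=\redspn{\lnsp{W}_0}{\mathcal{E}}{\mathcal{F}}$ and $\mathcal{Y}=\redspn{\lnsp{W}_1}{\mathcal{E}}{\mathcal{F}}$, which identifies $\mathcal{F}_0^{\mathrm{share}}$ and $\mathcal{F}_0^{\mathrm{perp}}$ with the intersections written above. The main conceptual obstacle is really just the bookkeeping needed to recognize that the $(\redspn{\lnsp{W}_0}{\mathcal{E}}{\mathcal{F}})^{\perp}$ component of any vector inside $\redspn{\lnsp{W}_0}{\mathcal{E}}{\mathcal{F}}$ must vanish; beyond that, the corollary is an immediate corollary of Lem.\,\ref{lem:5-1-2-1}.
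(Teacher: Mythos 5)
Your proof is correct and follows exactly the route the paper intends: the paper states this as an immediate consequence of Lem.\,\ref{lem:5-1-2-1} (with uniqueness via Lem.\,\ref{lem:g7g9ga7g0}), and your argument simply fills in the routine verification that the $(\redspn{\lnsp{W}_0}{\mathcal{E}}{\mathcal{F}})^{\perp}$ component of any vector in $\redspn{\lnsp{W}_0}{\mathcal{E}}{\mathcal{F}}$ vanishes, so that intersecting the lemma's three-part decomposition of $\mathcal{F}$ with $\redspn{\lnsp{W}_0}{\mathcal{E}}{\mathcal{F}}$ yields the claimed orthogonal direct sum. No gaps; the identification of $\mathcal{F}_0^{\mathrm{share}}$ and $\mathcal{F}_0^{\mathrm{perp}}$ with the intersections matches the paper's uniqueness remark.
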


Let $\mathcal{A}$, $\mathcal{B}$ be two finite-dimensional Hilbert spaces and $\lnsp{W}$ be a subspace of $\mathcal{A} \otimes \mathcal{B} \otimes \mathcal{F}$.
\begin{itemize}
	\item $\redspn{\lnsp{W}}{\mathcal{A} \otimes \mathcal{B}}{\mathcal{F}} = \redspn{\lnsp{W}}{\mathcal{AB}}{\mathcal{F}} = \redspn{\lnsp{W}}{\mathcal{BA}}{\mathcal{F}}$.
	\item $\redsp{B \otimes \lnsp{W}}{\mathcal{A}}{\mathcal{BF}} = B \otimes \redspn{\lnsp{W}}{\mathcal{A}}{\mathcal{F}}$.
\end{itemize}

\subsection{Unitary operators representing pure superchannels}
\label{secApp:unitaryRepresentingPureProcess}
In this section, we construct from Def.\,\ref{def:ProcessMatrix} and Def.\,\ref{def:ChoiOpOfQuantumComb} the conditions that unitary operators represent a \ReviseFurther{quantum} superchannel and a quantum comb, respectively.

\begin{lemm}
	\label{lem:Uisprocess<=>UGisunitary}
	Let $U : \bigotimes_{n=0}^N \mathcal{H}_{2n} \to \bigotimes_{n=0}^N \mathcal{H}_{2n+1}$ be a unitary operator.
	$U$ represents a quantum superchannel if and only if
	for all auxiliary Hilbert spaces $\mathcal{H}_{m}^\prime$, $m \in \{ 1, \cdots, 2N\}$, and unitary operators $U_n : \mathcal{H}_{2n-1} \otimes \mathcal{H}_{2n-1}^\prime \to \mathcal{H}_{2n} \otimes \mathcal{H}_{2n}^\prime$, $n \in \{ 1, \cdots, N\}$,
	$ \dket{U} \dbra{U} \ast (\bigotimes_{n=1}^N \dket{U_n} \dbra{U_n})$ is the Choi operator of a unitary channel, that is,
	\begin{equation}
		U_G := \Tr_{\mathcal{H}_{2} \mathcal{H}_{4} \cdots \mathcal{H}_{2N}} \left[ \left( \1^{\mathcal{H}_{2N+1}} \otimes \bigotimes_{n=1}^N U_n \right) \left( U \otimes \bigotimes_{n=1}^N \1^{\mathcal{H}_{2n-1}^\prime} \right) \right]
	\end{equation}
	is a unitary operator.
\end{lemm}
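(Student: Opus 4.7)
The plan is to prove the two directions separately, with the forward direction following almost immediately from earlier results and the reverse direction relying on Stinespring dilation to reduce arbitrary channels to unitaries on larger spaces.

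For the forward direction, suppose $U$ represents a quantum superchannel, meaning $\dket{U}\dbra{U}$ is the Choi operator of a valid superchannel $\smap{W}$. Since the Choi operator is rank one and of the form $\dket{U_W}\dbra{U_W}$ with $U_W=U$ unitary, Theorem \ref{thm:W=UWUW} says that $\smap{W}$ is pure. Then Definition \ref{def:PureProcessMatrix} applied with the chosen unitaries $U_n$ (with arbitrary auxiliary spaces $\mathcal{H}_{2n-1}^\prime$ and $\mathcal{H}_{2n}^\prime$) immediately yields a unitary $U_G$ with $\dket{U_G}\dbra{U_G} = \dket{U}\dbra{U} \ast \bigotimes_n \dket{U_n}\dbra{U_n}$, and the explicit formula \eqref{eq:ug8d0a0a0a0g9fg9} (obtained by iterating the previous lemma) identifies $U_G$ with the operator written in the statement.

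For the reverse direction, assume that inserting any unitaries $U_n$ into the slots of $U$ produces a unitary $U_G$, and let $\map{E}_n \colon \mathcal{L}(\mathcal{H}_{2n-1}\otimes \mathcal{H}_{2n-1}^\prime) \to \mathcal{L}(\mathcal{H}_{2n}\otimes \mathcal{H}_{2n}^\prime)$ be arbitrary quantum channels. I would invoke Stinespring dilation on each $\map{E}_n$ to write
\begin{equation}
\map{E}_n(\rho) = \Tr_{\mathcal{H}_{2n}^{\prime\prime}}\bigl[\tilde{U}_n\bigl(\rho \otimes \ketbra{0}{0}^{\mathcal{H}_{2n-1}^{\prime\prime}}\bigr)\tilde{U}_n^{\dagger}\bigr],
\end{equation}
where $\tilde{U}_n \colon \mathcal{H}_{2n-1}\otimes(\mathcal{H}_{2n-1}^\prime\otimes\mathcal{H}_{2n-1}^{\prime\prime}) \to \mathcal{H}_{2n}\otimes(\mathcal{H}_{2n}^\prime\otimes\mathcal{H}_{2n}^{\prime\prime})$ is unitary on a suitably enlarged ancillary space. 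Applying the hypothesis with these enlarged auxiliary spaces and the unitaries $\tilde{U}_n$ produces a unitary $\tilde{U}_G$ on $\mathcal{H}_0\otimes\bigotimes_n(\mathcal{H}_{2n-1}^\prime\otimes\mathcal{H}_{2n-1}^{\prime\prime}) \to \mathcal{H}_{2N+1}\otimes\bigotimes_n(\mathcal{H}_{2n}^\prime\otimes\mathcal{H}_{2n}^{\prime\prime})$, and by linearity of the link product together with the formula for $U_G$ from the previous lemma, the Choi operator $\dket{U}\dbra{U}\ast\bigotimes_n E_n$ coincides with the Choi operator of the CPTP map obtained by preparing $\ketbra{0}{0}$ on each $\mathcal{H}_{2n-1}^{\prime\prime}$, applying $\tilde{U}_G$, and tracing out each $\mathcal{H}_{2n}^{\prime\prime}$. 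Since composition of state preparation, a unitary channel, and a partial trace is manifestly CPTP, this shows that $W = \dket{U}\dbra{U}$ acts as a valid quantum superchannel in the sense of Definition \ref{def:ProcessMatrix}.

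The main technical point is keeping track of the bookkeeping between the various auxiliary systems in the Stinespring step and verifying that link-product with the Stinespring-extended unitary indeed reduces to link-product with the original channel once the ancillary state is prepared and the ancillary output is traced out. This is routine once the identifications are made carefully using the formula for $U_G$ in Eq.\,\eqref{eq:ug8d0a0a0a0g9fg9}, so no substantial difficulty is expected beyond this accounting.
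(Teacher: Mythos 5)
Your proposal matches the paper's own proof: the forward direction via Thm.\,\ref{thm:W=UWUW} and the definition of pure superchannels, and the reverse direction by Stinespring-dilating arbitrary CPTP inputs to unitaries, invoking the hypothesis to get a unitary output, and tracing out the ancillas to recover a CPTP output. Your sketch is, if anything, more explicit about the link-product bookkeeping than the paper's brief argument, so no gap to report.
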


\begin{proof}

	Sufficiency is \Add{implied} from Thm.\,\ref{thm:W=UWUW}. To show necessity,
	take $N$ arbitrary CPTP maps $\tilde{E_n} : \mathcal{L}(\mathcal{H}_{2n-1} \otimes \mathcal{H}_{2n-1}^\prime) \to \mathcal{L}(\mathcal{H}_{2n} \otimes \mathcal{H}_{2n}^\prime)$, $n \in \{ 1, \cdots, N\}$, as the input of $W_U := \dket{U} \dbra{U}$.
	Next, extend these CPTP maps into a unitary transformation with suitable auxiliary systems and then we obtain a unitary operation with the auxiliary systems as the output of $W_U$ since $W_U$ is pure. Finally, reduce the auxiliary systems then we obtain a CPTP map as the output of $W_U$ from the arbitrary CPTP maps $\tilde{E_n}$, $n \in \{ 1, \cdots, N\}$.
\end{proof}
\begin{lemm}
	\label{lem:2mperp->2m+1perp}
	Let $U^{(N+1)} : \bigotimes_{n=0}^N \mathcal{H}_{2n} \to \bigotimes_{n=0}^N \mathcal{H}_{2n+1}$ be a unitary operator.
	$U^{(N+1)}$ represents a quantum comb
	with $N$ slots
	if and only if
	$U^{(N+1)}$ satisfies for each $n=1 \cdots N$,
	for all $\ket{\alpha}^{2n} \in \mathcal{H}_{2n}$,
	\begin{equation}
		\label{eq:2mperp->2m+1perp}
		\redspn{\lnsp{V}^{(n)}_{\alpha}}{1 3 \cdots 2n-1}{2n+1 \cdots 2N+1} \perp
		\redspn{\lnsp{V}^{(n)}_{\overline{\alpha}}}{1 3 \cdots 2n-1}{2n+1 \cdots 2N+1}
	\end{equation}
	where
	\begin{equation}
		\label{eq:defVnphi}
		\lnsp{V}^{(n)} ( \mathcal{H}_{2n,\, \mathrm{sub}} ) := U^{(N+1)} \left( \left( \bigotimes_{k \neq n} \mathcal{H}_{2k} \right) \otimes \mathcal{H}_{2n,\, \mathrm{sub}} \right)
	\end{equation}
	for a given subspace $\mathcal{H}_{2n,\, \mathrm{sub}}$ of $\mathcal{H}_{2n}$. We have used abbreviations similarly to the one introduced in Appendix \ref{secApp:Function from vectors to a subspace} as follows:
	\begin{alignat}{2}
		  & \lnsp{V}^{(n)}_{\alpha}            &   & := \lnsp{V}^{(n)} \left(\vspan{\ket{\alpha}^{2n}} \right)                         \\
		  & \lnsp{V}^{(n)}_{\overline{\alpha}} &   & := \lnsp{V}^{(n)} \left( \left( \vspan{\ket{\alpha}^{2n}} \right)^{\perp} \right)
	\end{alignat}
	(Notations $\redsp{\bullet}{m}{}$ and $\ket{\bullet}^{m}$ are shorthands respectively for $\redsp{\bullet}{\mathcal{H}_m}{}$ and $\ket{\bullet}^{\mathcal{H}_m}$.)
\end{lemm}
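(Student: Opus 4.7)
The plan is to reduce both directions of the equivalence to the Choi-operator characterization of combs given in Theorem \ref{thm:purecombcausalityforall}. The bridge is the observation that the stated orthogonality of reduced subspaces is equivalent to a vanishing partial trace: using the definition $\redspn{\lnsp{W}}{\mathcal{E}}{\mathcal{F}} = \vspan{\bra{\epsilon}^{\mathcal{E}}\dket{\eta}^{\mathcal{E}\mathcal{F}}}$, two such subspaces are orthogonal in $\mathcal{F}$ precisely when $\bra{\eta}(X\otimes \1_{\mathcal{F}})\dket{\eta'} = 0$ for every operator $X$ on $\mathcal{E}$ and every $\dket{\eta}, \dket{\eta'}$ in the respective subspaces, which in turn is equivalent to $\Tr_{\mathcal{F}}[\dket{\eta'}\dbra{\eta}] = 0$. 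Applied to Eq.\,\eqref{eq:2mperp->2m+1perp}, combined with linearity and the freedom to vary $\alpha$ over $\mathcal{H}_{2n}$, the lemma's condition at level $n$ is equivalent to
\begin{equation}
\Tr_{\mathcal{H}_{2n+1}\cdots\mathcal{H}_{2N+1}}\bigl[U^{(N+1)}\ket{\vec{v}'}\bra{\vec{v}}U^{(N+1)\dagger}\bigr] = 0 \text{ whenever } v_n \perp v'_n.
\end{equation}

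I would next take matrix elements of the comb condition from Theorem \ref{thm:purecombcausalityforall} on the input side $\bigotimes_{k=0}^{N}\mathcal{H}_{2k}$: using $\bra{\vec{w}}_{\text{in}}\dket{U^{(N+1)}} = U^{(N+1)}\ket{\vec{w}^*}_{\text{out}}$, the left-hand side reproduces exactly the partial trace above (modulo a complex-conjugation bijection), while the right-hand side $\1^{\mathcal{H}_{2n}\cdots \mathcal{H}_{2N}}\otimes R^{(n)}$ gives matrix element $\prod_{k\geq n}\langle w'_k|w_k\rangle\cdot \bra{w'_{\mathrm{early}}} R^{(n)}\ket{w_{\mathrm{early}}}$. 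For the $(\Rightarrow)$ direction this vanishes when $w_n\perp w'_n$, immediately yielding the claimed orthogonality.

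For the $(\Leftarrow)$ direction I would exploit the factorization $\Tr_{\mathcal{H}_{2n+1}\cdots \mathcal{H}_{2N+1}} = \Tr_{\mathcal{H}_{2n+1}\cdots \mathcal{H}_{2k-1}}\circ\Tr_{\mathcal{H}_{2k+1}\cdots \mathcal{H}_{2N+1}}$ for each $k\geq n$, so that the orthogonality assumed at level $k$ forces the matrix element of $\Tr_{\mathcal{H}_{2n+1}\cdots \mathcal{H}_{2N+1}}[\dket{U^{(N+1)}}\dbra{U^{(N+1)}}]$ to vanish whenever $w_k\perp w'_k$, for every $k\in\{n,\ldots,N\}$. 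An iterated application of Lemma \ref{lem:sesquiperp} to each tensor factor $\mathcal{H}_{2k}$ in turn upgrades these zero-off-diagonal conditions into the tensor-product-with-identity structure $\1^{\mathcal{H}_{2n}\cdots \mathcal{H}_{2N}}\otimes R^{(n)}$. Positivity of $R^{(n)}$ is automatic as a partial trace of $\dket{U^{(N+1)}}\dbra{U^{(N+1)}}\geq 0$, and $R^{(n)}$ itself satisfies the sub-comb conditions of Theorem \ref{thm:purecombcausalityforall} because its own marginals coincide with marginals of $R^{(N+1)}$ at smaller indices; the $n=0$ case is immediate from the unitarity of $U^{(N+1)}$.

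The main technical obstacle is the joint identity-structure extraction in $(\Leftarrow)$: Lemma \ref{lem:sesquiperp} handles a single tensor factor at a time, so coordinating its iterative application across $\mathcal{H}_{2n},\ldots,\mathcal{H}_{2N}$ and verifying recursively that the resulting $R^{(n)}$ is indeed the Choi of a comb with fewer slots require careful bookkeeping. The complex-conjugation housekeeping from $\bra{\vec{w}}\dket{U}=U\ket{\vec{w}^*}$ adds further notational weight but is not a genuine obstruction.
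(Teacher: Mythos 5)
Your necessity argument is essentially the paper's own: the printed proof likewise takes partial matrix elements of the comb condition $\Tr_{2N+1\,2N-1\cdots 2n+1}\dket{U^{(N+1)}}\dbra{U^{(N+1)}} = \1^{2N\,2N-2\cdots 2n}\otimes R^{(n)}$ between input vectors whose $\mathcal{H}_{2n}$-components are orthogonal and reads off the vanishing inner product of the reduced vectors; your partial-trace bridge (orthogonality of $\redspn{\lnsp{W}}{\mathcal{E}}{\mathcal{F}}$ and $\redspn{\lnsp{W}'}{\mathcal{E}}{\mathcal{F}}$ iff $\Tr_{\mathcal{F}}[\dket{\eta'}\dbra{\eta}]=0$ for all cross pairs, which is correct since rank-one $\ket{\epsilon}\bra{\epsilon'}$ span all operators on $\mathcal{E}$) is just a more packaged form of the same computation. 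The genuine difference is the converse: the paper's proof explicitly says ``We show only the only-if part'' and never establishes sufficiency, whereas you supply a viable route. Your sketch checks out: level-$k$ orthogonality for each $k\in\{n,\ldots,N\}$, pushed down through the extra traces $\Tr_{\mathcal{H}_{2n+1}\cdots\mathcal{H}_{2k-1}}$, forces the input-side matrix elements of $T^{(n)}:=\Tr_{\mathcal{H}_{2n+1}\cdots\mathcal{H}_{2N+1}}[\dket{U^{(N+1)}}\dbra{U^{(N+1)}}]$ to vanish whenever the $\mathcal{H}_{2k}$-components are orthogonal; then, factor by factor, Lem.\,\ref{lem:sesquiperp} combined with the assumed off-diagonal vanishing gives each scalar sesquilinear form the shape $c\,\braket{v|v'}$ (note the lemma alone only yields constancy of the norm-normalized diagonal; you must also decompose $\ket{v'}$ into components parallel and orthogonal to $\ket{v}$ and use the off-diagonal hypothesis to kill the second piece), and since the extracted coefficient remains sesquilinear in the frozen arguments, the factors peel off one at a time to yield $T^{(n)} = \1^{\mathcal{H}_{2n}\cdots\mathcal{H}_{2N}}\otimes R^{(n)}$. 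Positivity of $R^{(n)}$ from positivity of $T^{(n)}$, the telescoping $\Tr_{2m+1\cdots 2n-1}[R^{(n)}] = \1^{2m\cdots 2n-2}\otimes R^{(m)}$ that certifies the recursive comb structure, and the $n=0$ case from $U^{(N+1)\dagger}U^{(N+1)} = \1$ are all correct. In short, your proposal is sound and in fact covers more ground than the paper's own proof of this lemma.
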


\begin{proof}
	\Add{We show the only-if part.}
	Let $n$ be any integer in $1...N$.
	\Add{
		For simplicity, define
		\begin{equation}
			\mathcal{H}_{\widehat{2n}} := \bigotimes_{k \neq n} \mathcal{H}_{2k}.
		\end{equation}
	}
	From \Add{Thm.\,\ref{thm:purecombcausalityforall}}, we have
	\begin{align}
		\label{eq:8dfa0g}
		  & \Tr_{2N+1 \ 2N-1 \ \cdots \ 2n+1} \dket{U^{(N+1)}} \dbra{U^{(N+1)}} = \1^{2N \ 2N-2 \ \cdots \ 2n} \otimes R^{(n)}
	\end{align}
	for some Choi operator $R^{(n)}$.
	Let $\ket{\pi}^{\widehat{2n}}, \ket{\pi^\prime}^{\widehat{2n}} \in \mathcal{H}_{\widehat{2n}}$, $\ket{\alpha^\perp}^{2n} \in \mathcal{H}_{2n}$
	such that $\ket{\alpha^\perp}^{2n} \perp \ket{\alpha}^{2n}$ and
	$\ket{\Phi}^{13\cdots2n-1}, \ket{\Phi^\prime}^{13\cdots2n-1} \in \otimes_{k = 0}^{n-1} \mathcal{H}_{2k+1}$, then
	\footnotesize
	\begin{align}
		  & \Tr_{2N+1 \ 2N-1 \ \cdots \ 2n+1} \left[ \left( {}^{\widehat{2n}}\bra{\pi^*} {}^{2n}\bra{\alpha^*} {}^{13\cdots2n-1}\bra{\Phi} \right) \dket{U^{(N+1)}} \dbra{U^{(N+1)}} \left(
			\ket{{\pi^\prime}^*}^{\widehat{2n}} \ket{{\alpha^\perp}^*}^{2n} \ket{{\Phi^\prime}}^{13\cdots2n-1} \right) \right] = 0. \notag
	\end{align}
	\small
	\begin{equation}
		\therefore \left( {}^{13\cdots2n-1}\bra{\Phi} U^{(N+1)} \left( \ket{\pi}^{\widehat{2n}} \ket{\alpha}^{2n} \right),
		{}^{13\cdots2n-1}\bra{\Phi^\prime} U^{(N+1)} \left( \ket{\pi^\prime}^{\widehat{2n}} \ket{\alpha^\perp}^{2n} \right) \right) = 0.
	\end{equation}
	\normalsize
	Therefore, we obtain
	\begin{align}
		\label{eq:8g0g0a}
		\redspn{\lnsp{V}^{(n)}_{\alpha}}{1 3 \cdots 2n-1}{} & \perp \redspn{\lnsp{V}^{(n)}_{\alpha^\perp}}{13\cdots 2n-1}{}.
	\end{align}
	Thus we have
	\begin{equation}
		\redspn{\lnsp{V}^{(n)}_{\alpha}}{1 3 \cdots 2n-1}{} \perp \redspn{\lnsp{V}^{(n)}_{\overline{\alpha}}}{13\cdots 2n-1}{}.
	\end{equation}

	\Add{
		The if part can be obtained by calculating backwards. Note that the case $n=0$ makes Eq.\,\eqref{eq:8dfa0g} hold from unitarity of $U^{(N+1)}$.
	}
\end{proof}

In Eq.\,\eqref{eq:2mperp->2m+1perp}, $\redsp{\bullet}{13\cdots2n-1}{2n+1 2n+3\cdots2N+1}$ means reducing the subsystems of $\mathcal{H}_1$, $\mathcal{H}_3$, $\cdots$, $\mathcal{H}_{2n-1}$ and
extracting the states on the subsystem of $\mathcal{H}_{2n+1}$, $\mathcal{H}_{2n+3}$, $\cdots$, $\mathcal{H}_{2N+1}$.
This lemma just rephrases the condition that a Choi operator satisfies the causality condition in terms of unitary operators.
\Add{In Appendix \ref{secApp:BipartiteProcessMatrix}, the special case of this lemma where $N = 2$ is given and we consider the interpretation of this lemma through the case.}

From this lemma, the condition that $U^{(N+1)}$ represents a quantum comb is formulated by using the reduced subspaces $\redspn{\lnsp{V}^{(n)}_{\alpha}}{13\cdots2n-1}{2n+1\cdots2N+1}$ generated from $U^{(N+1)}$.
In the following, we present our results in terms of unitary operators, not in terms of Choi matrices as the proof of our main statement (\ReviseFurther{Thm.}\,\ref{thm:newgoal}) is written in terms of unitary operators.

%%%%%%%%%%%%%%%%%%%%%%%%%%%%%%%%%%%%%%%%%%%%%%%%%%%%%%%%%%%%%%%%%%%%%%%%
%%%%%%%%%%%%%%%%%%%%%%%%      NEW SUBSECTION    %%%%%%%%%%%%%%%%%%%%%%%%
%%%%%%%%%%%%%%%%%%%%%%%%%%%%%%%%%%%%%%%%%%%%%%%%%%%%%%%%%%%%%%%%%%%%%%%%

\subsection{Quantum combs with two slots}
\label{secApp:BipartiteProcessMatrix}

%%%%%%%%%%%%%%%%%%%%%%%%%%%%%%%%%%%%%%%%%%%%%%%%%%%%%%%%%%%%%%%%%%%%%%%%
%%%%%%%%%%%%%%%%%%%%%%%%      NEW SUBSECTION    %%%%%%%%%%%%%%%%%%%%%%%%
%%%%%%%%%%%%%%%%%%%%%%%%%%%%%%%%%%%%%%%%%%%%%%%%%%%%%%%%%%%%%%%%%%%%%%%%

In the case of a quantum comb of $A \prec B$, Lem.\,\ref{lem:2mperp->2m+1perp} is in the following form:
\begin{coro}
	\label{cor:N=2purecomb}
	Let $U : P \otimes A_O \otimes B_O \to A_I \otimes B_I \otimes F$ be a unitary operator.
	$U$ represents a quantum comb of $A \prec B$
	if and only if
	for all $\ket{\alpha}^{A_O} \in A_O$ and $\ket{\beta}^{B_O} \in B_O$,
	\begin{equation}
		\label{eq:bperp->fperp}
		\redspn{\lnsp{V}_{A_O \beta}}{A_I B_I}{F} \perp \redspn{\lnsp{V}_{A_O \overline{\beta}} }{A_I B_I}{F},
	\end{equation}
	\begin{equation}
		\label{eq:abperp->bfperp}
		\redspn{\lnsp{V}_{\alpha B_O}}{A_I}{B_I F} \perp \redspn{\lnsp{V}_{\overline{\alpha} B_O}}{A_I}{B_I F}
	\end{equation}
	where
	\begin{equation}
		\label{eq:defVabincor:N=2purecomb}
		\lnsp{V}(A_{\mathrm{sub}}, B_{\mathrm{sub}}) := \left\{ U \left( \ket{\pi}^{P} \ket{\alpha}^{A_O} \ket{\beta}^{B_O} \right) \relmiddle|
		\ket{\pi}^{P} \in P,\, \ket{\alpha}^{A_O} \in A_{\mathrm{sub}},\, \ket{\beta}^{B_O} \in B_{\mathrm{sub}} \right\}.
	\end{equation}
	for a subspace $A_{\mathrm{sub}}$, $B_{\mathrm{sub}}$ of $A_O$, $B_O$, respectively.
\end{coro}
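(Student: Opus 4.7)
The plan is to obtain Cor.\,\ref{cor:N=2purecomb} as a direct specialization of Lem.\,\ref{lem:2mperp->2m+1perp} to the two-slot case $N = 2$, with the slots ordered as $A \prec B$. Since Lem.\,\ref{lem:2mperp->2m+1perp} already furnishes the biconditional characterization for arbitrary $N$, the whole task reduces to matching the generic indexed notation of the lemma with the explicit $A, B$ notation of the corollary.

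First I would fix the identification $\mathcal{H}_0 = P$, $\mathcal{H}_1 = A_I$, $\mathcal{H}_2 = A_O$, $\mathcal{H}_3 = B_I$, $\mathcal{H}_4 = B_O$, and $\mathcal{H}_5 = F$. With this identification, Def.\,\ref{def:ChoiOpOfQuantumComb} specialized to $N = 2$ is precisely the definition of a quantum comb of $A \prec B$, so applying Lem.\,\ref{lem:2mperp->2m+1perp} to the unitary operator $U : P \otimes A_O \otimes B_O \to A_I \otimes B_I \otimes F$ yields an if-and-only-if characterization in terms of the auxiliary subspaces $\lnsp{V}^{(n)}$ from Eq.\,\eqref{eq:defVnphi}.

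Next I would check that $\lnsp{V}^{(n)}$ reduces to the $\lnsp{V}$ of Eq.\,\eqref{eq:defVabincor:N=2purecomb}: for $n = 1$ the varying slot corresponds to $A_O$, so $\lnsp{V}^{(1)}(A_{\mathrm{sub}}) = U(P \otimes A_{\mathrm{sub}} \otimes B_O) = \lnsp{V}(A_{\mathrm{sub}}, B_O)$; for $n = 2$ the varying slot corresponds to $B_O$, so $\lnsp{V}^{(2)}(B_{\mathrm{sub}}) = U(P \otimes A_O \otimes B_{\mathrm{sub}}) = \lnsp{V}(A_O, B_{\mathrm{sub}})$. The subscript shorthand then translates the $\alpha, \overline{\alpha}, \beta, \overline{\beta}$ decorations of $\lnsp{V}^{(n)}$ into the $\alpha B_O, \overline{\alpha} B_O, A_O \beta, A_O \overline{\beta}$ decorations of $\lnsp{V}$.

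Finally, the orthogonality condition in Eq.\,\eqref{eq:2mperp->2m+1perp} for $n = 1$ is a reduction on $A_I$ with extraction on $B_I F$, so it becomes Eq.\,\eqref{eq:abperp->bfperp}; for $n = 2$ it is a reduction on $A_I B_I$ with extraction on $F$, so it becomes Eq.\,\eqref{eq:bperp->fperp}. I do not expect any substantive obstacle: the whole argument is a notational bookkeeping exercise, and the only thing to watch is that the partial-trace indices of $\redsp{\bullet}{\mathcal{E}}{\mathcal{F}}$ are relabelled consistently. Once that is verified, the two conditions of the corollary correspond one-to-one with the $n = 1$ and $n = 2$ instances of Lem.\,\ref{lem:2mperp->2m+1perp}, closing the proof in both directions.
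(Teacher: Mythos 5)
Your proof is correct and coincides with the paper's own treatment: the corollary is presented there precisely as the $N=2$ specialization of Lem.\,\ref{lem:2mperp->2m+1perp} under the identification $\mathcal{H}_0 = P$, $\mathcal{H}_1 = A_I$, $\mathcal{H}_2 = A_O$, $\mathcal{H}_3 = B_I$, $\mathcal{H}_4 = B_O$, $\mathcal{H}_5 = F$, with the $n=1$ and $n=2$ instances yielding Eq.\,\eqref{eq:abperp->bfperp} and Eq.\,\eqref{eq:bperp->fperp}, respectively. The one cosmetic point you gloss over --- that Eq.\,\eqref{eq:defVabincor:N=2purecomb} is a set of product-vector images rather than the subspace $U(P \otimes A_{\mathrm{sub}} \otimes B_{\mathrm{sub}})$ of Eq.\,\eqref{eq:defVnphi} --- is harmless, since the reduced-subspace operation is defined via a span, and the paper itself only remarks on this difference in form.
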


In Eq.\,\eqref{eq:bperp->fperp} and Eq.\,\eqref{eq:abperp->bfperp}, $\redspn{\bullet}{A_I B_I}{F}$, $\redspn{\bullet}{A_I}{B_I F}$ means reducing the subsystem of $A_I \otimes B_I$, $A_I$ and extracting the states on the subsystem of $F$, $B_I \otimes F$, respectively.
\Add{
Eq.\,\eqref{eq:bperp->fperp} means that if two orthogonal pure states are ``fed'' into $B_O$, that is, two orthogonal pure states are used as input-states in $B_O$ for $U$ and arbitrary input-states for $U$ in \ReviseFurther{$P$ and $A_O$} are used, the corresponding output-states are orthogonal on the reduced states on the subsystems of $F$.
Eq.\,\eqref{eq:abperp->bfperp} means that if two orthogonal pure states are ``fed'' into $A_O$, the corresponding output-states are orthogonal on the reduced states on the subsystems of $B_I \otimes F$.
From this corollary, the condition that $U$ represents a quantum comb with two slots is formulated by using the reduced subspaces of subspaces $\lnsp{V}_{A_{\mathrm{sub}} B_{\mathrm{sub}}}$ generated from $U$.
}

\Add{Consider why this corollary holds.
	Although Thm.\,\ref{thm:PureCombDecomposition} has not been shown here yet and is proved later in Appendix.\,\ref{secApp:ProofOfPureCombDecomposition}, we assume Thm.\,\ref{thm:PureCombDecomposition} in the case of two-slot combs in advance, \ie, assume that two-slot combs of $A \prec B$ are in the form of concatenation of unitary channels as in Fig.\,\ref{fig:f89ga89hujap}.
	First, we consider ``feeding'' orthogonal pure states into $B_O$.
	From unitarity of $U_2$, ``feeding'' orthogonal pure states into $B_O$ causes orthogonal reduced states in $F$.
	This can be simply formulated by using reduced subspaces into Eq.\,\eqref{eq:bperp->fperp}.
	Next, from unitarity of $U_1$ and $U_2$, ``feeding'' orthogonal pure states into $A_O$ causes orthogonal reduced states in $B_I \otimes F$. This corresponds to Eq.\,\eqref{eq:abperp->bfperp}.
	Of course, we have not yet proved that pure combs are concatenations of unitary channels, but we can use these properties in the proof of that.}

%%%%%%%%%%%%%%%%%%%%%%%%%%%%%%%%%%%%%%%%%%%%%%%%%%%%%%%%%%%%%%%%%%%%%%%%
%%%%%%%%%%%%%%%%%%%%%%%%      NEW    SECTION    %%%%%%%%%%%%%%%%%%%%%%%%
%%%%%%%%%%%%%%%%%%%%%%%%%%%%%%%%%%%%%%%%%%%%%%%%%%%%%%%%%%%%%%%%%%%%%%%%

\section{Proof of Thm.\,\ref{thm:PureCombDecomposition}}
\label{secApp:ProofOfPureCombDecomposition}

%%%%%%%%%%%%%%%%%%%%%%%%%%%%%%%%%%%%%%%%%%%%%%%%%%%%%%%%%%%%%%%%%%%%%%%%
%%%%%%%%%%%%%%%%%%%%%%%%      NEW SUBSECTION    %%%%%%%%%%%%%%%%%%%%%%%%
%%%%%%%%%%%%%%%%%%%%%%%%%%%%%%%%%%%%%%%%%%%%%%%%%%%%%%%%%%%%%%%%%%%%%%%%

We prove Thm.\,\ref{thm:PureCombDecomposition} inductively by showing \\
(i) there exists an integer $k_N$ such that $d_{2N} k_N = d_{2N+1}$, \label{appendix a i} \\
(ii) $U^{(N+1)}$ can be decomposed as:
\begin{equation}
	\label{eq:UR=V_(N-1)UN}
	U^{(N+1)} = U_N U^{({N})}
\end{equation}
for some unitary operators $U^{(N)} \colon \bigotimes_{n=0}^{N-1} \mathcal{H}_{2n} \to \bigotimes_{n=0}^{N-1} \mathcal{H}_{2n+1} \otimes \mathcal{A}_{N}$ and $U_N \colon \mathcal{H}_{2N} \otimes \mathcal{A}_{N} \to \mathcal{H}_{2N+1}$
for some Hilbert space $\mathcal{A}_{N}$ such that $\dim \mathcal{A}_{N} = k_N$ and \\
(iii) $U^{(N)}$ represents a quantum comb
with $N-1$ slots ($N \geq 2$). \\
When $N = 1$, we define $k_0 := 1$ and $U_0 := U^{(1)}$ then the proof completes.

Define
\begin{alignat}{4}
	  & P   & := & \bigotimes_{n=0}^{N-1} &   & \mathcal{H}_{2n},   \\
	  & A_I & := & \bigotimes_{n=0}^{N-1} &   & \mathcal{H}_{2n+1}, \\
	  & A_O & := &                        &   & \mathcal{H}_{2N},   \\
	  & F   & := &                        &   & \mathcal{H}_{2N+1}.
\end{alignat}
Then we can regard $U^{(N+1)} \colon P \otimes A_O \to A_I \otimes F$ as a unitary operator
representing a quantum comb
with only one slot $A$ (see Figure \ref{fig:unitaryCombOneSlotA}).
\begin{figure}
	\centering \includegraphics[keepaspectratio, scale=0.30]{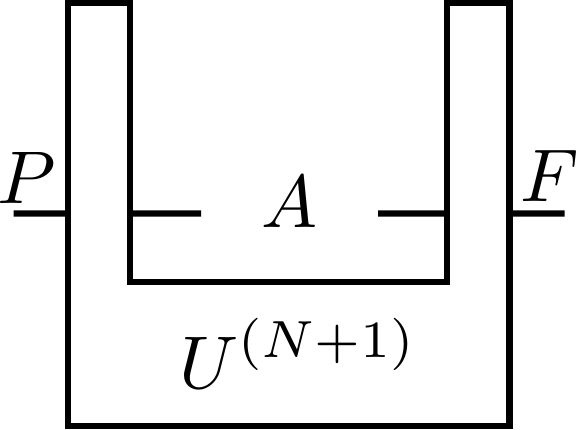}
	\caption{$U^{(N+1)}$ can be seen as a unitary operator representing a quantum comb $[\mathcal{L}({A_I}) \to \mathcal{L}({A_O})] \to [\mathcal{L}({P}) \to \mathcal{L}({F})]$
		with only one slot $A$.}
	\label{fig:unitaryCombOneSlotA}
\end{figure}
We denote the dimension of a Hilbert space $X$ by $d_X$. From unitarity of $U^{(N+1)}$, we have
\begin{equation}
	\label{eq:f98f0aa0}
	d_{P} d_{A_O} = d_{A_I} d_{F}.
\end{equation}
First, we show (i), that is, there exists a positive integer $k$ such that $d_{A_O} k = d_F$. Let $\ket{\alpha}^{A_O} \in A_O$. Using $\lnsp{V} := \lnsp{V}^{(N)}$ defined in Eq.\,\eqref{eq:defVnphi}, we have
\begin{equation}
	\label{eq:f8h0afh8ah9}
	A_I \otimes F = \lnsp{V}_{\alpha} \oplus \lnsp{V}_{\overline{\alpha}}.
\end{equation}
From Lem.\,\ref{lem:2mperp->2m+1perp}, we have
\begin{equation}
	\label{eq:f8h0afh8ah92}
	\redspn{\lnsp{V}_{\alpha}}{A_I}{F} \perp \redspn{\lnsp{V}_{\overline{\alpha}}}{A_I}{F}.
\end{equation}
In Cor.\,\ref{cor:F0perpF1=>V0=EtensorF0}, take $A_I$ as $\mathcal{E}$, $F$ as $\mathcal{F}$, $\lnsp{V}_{\alpha}$ as $\lnsp{W}_0$ and
$\lnsp{V}_{\overline{\alpha}}$ as $\lnsp{W}_1$. Then from Eq.\,\eqref{eq:f8h0afh8ah9} and Eq.\,\eqref{eq:f8h0afh8ah92}, we obtain
\begin{equation}
	\lnsp{V}_{\alpha} = A_I \otimes \redspn{\lnsp{V}_{\alpha}}{A_I}{F}.
\end{equation}
\begin{equation}
	\label{eq:f98f0aa02}
	\therefore d_P = d_{A_I} k
\end{equation}
holds for $k := \dim \redspn{\lnsp{V}_{\alpha}}{A_I}{F}$. From Eq.\,\eqref{eq:f98f0aa0} and Eq.\,\eqref{eq:f98f0aa02}, we obtain
\begin{equation}
	\label{eq:f98f0aa03}
	k d_{A_O} = d_{F}.
\end{equation}

Next, we prove (ii). Take the computational basis $\{ \ket{a}^{A_O} \}_{a=0}^{d_{A_O}-1}$ in $A_O$, the computational basis $\{ \ket{i}^{A_I} \}_{i=0}^{d_{A_I}-1}$ in $A_I$, and
an orthonormal basis $\{ \ket{x,\, 0}^{F} \}_{x=0}^{k-1}$ in $\redspn{\lnsp{V}_{0}}{A_I}{F}$.
By definition of $\lnsp{V}_{0}$, there exists $\ket{i,\, x}^P \in P$ such that
\begin{equation}
	\label{eq:r8qg8a0e9}
	U^{(N+1)} \left( \ket{i,\, x}^P \ket{0}^{A_O} \right) = \ket{i}^{A_I} \ket{x,\, 0}^{F}.
\end{equation}
Since $\{ \ket{i,\, x}^{P} \}_{i=0,\, x=0}^{d_{A_I}-1,\, k-1}$ is an orthonormal set from Eq.\,\eqref{eq:r8qg8a0e9} and a complete basis in $P$ from Eq.\,\eqref{eq:f98f0aa02},
\begin{equation}
	\label{eq:g8a9h0ejj3kd}
	\{ \ket{i,\, x}^{P} \}_{i=0,\, x=0}^{d_{A_I}-1,\, k-1} \text{ is an orthonormal basis in $P$}.
\end{equation}
For each $a$,
\begin{align}
	  & \left( {}^{A_I}\bra{i=0} U^{(N+1)} \left( \ket{i=0,\, x}^P \ket{a}^{A_O} \right), {}^{A_I}\bra{i^\prime} U^{(N+1)} \left( \ket{i^\prime,\, x^\prime}^P \ket{a}^{A_O} \right) \right) \notag \\
	= & \left( {}^{A_I}\bra{i=0} U^{(N+1)} \left( \ket{i=0,\, x}^P \ket{0}^{A_O} \right), {}^{A_I}\bra{i^\prime} U^{(N+1)} \left( \ket{i^\prime,\, x^\prime}^P \ket{0}^{A_O} \right) \right) \notag \\
	= & \delta_{x x^\prime} \label{eq:fad8u8gfa9a}
\end{align}
where the first equality follows from Lem.\,\ref{lem:sesquiperp} and Eq.\,\eqref{eq:f8h0afh8ah92} (cf. the proof of Prop.\,\ref{prop:fgijoiaojhb}).
Equation \eqref{eq:fad8u8gfa9a} for $x = x^\prime$ gives
\begin{equation}
	\label{eq:8fda98g8a}
	U^{(N+1)} \left( \ket{i,\, x}^P \ket{a}^{A_O} \right) = \ket{i}^{A_I} \ket{x,\, a}^{F},
\end{equation}
where
\begin{equation}
	\label{eq:8g9gf9h0s9}
	\ket{x,\, a}^{F} := {}^{A_I}\bra{i=0} U^{(N+1)} \left( \ket{i=0,\, x}^P \ket{a}^{A_O} \right).
\end{equation}
Note that Eq.\,\eqref{eq:8g9gf9h0s9} is defined using only $\ket{i=0,\, x}^P$, not with general $\ket{i, x}^P$ for $i \neq 0$, and $\ket{x,\, a}^{F}$ defined here for $a = 0$ is equal to $\ket{x,\, 0}^{F}$ defined previously.
For each $a$, $\{ \ket{x,\, a}^{F} \}_{x=0}^{k-1}$ is an orthonormal set as seen from Eq.\,\eqref{eq:g8a9h0ejj3kd} and Eq.\,\eqref{eq:8fda98g8a}.
Thus, $\{ \ket{x,\, a}^{F} \}_{x=0,\, a=0}^{k-1,\, d_{A_O}-1}$ is an orthonormal set from Eq.\,\eqref{eq:f8h0afh8ah92} and
is a complete basis in $F$ from Eq.\,\eqref{eq:f98f0aa03}. Therefore,	$\{ \ket{x,\, a}^{F} \}_{x=0,\, a=0}^{k-1,\, d_{A_O}-1}$ is an orthonormal basis in $F$.

Take a Hilbert space $\mathcal{A}$ such that $\dim \mathcal{A} = k$ and orthonormal basis $\{ \ket{x}^{\mathcal{A}} \}_{x=0}^{k-1}$.
Define a unitary operator $U^{(N)} \colon P \to A_I \otimes \mathcal{A}$ as	$U^{(N)} := \sum_{i,\, x} \ket{i}^{A_I} \ket{x}^{\mathcal{A}} {}^P \bra{i,\, x}$
and define a unitary operator $U_N \colon \mathcal{A} \otimes A_O \to F$ as	$U_N := \sum_{x,\, a} \ket{x,\, a}^{F} {}^{\mathcal{A}}\bra{x} {}^{A_O}\bra{a}$,
then
\begin{align}
	\label{eq:UNU(N)|ix>|a>=|i>|xa>}
	U_N U^{(N)} \ket{i,\, x}^P \ket{a}^{A_O} =  \ket{i}^{A_I} U_N \left( \ket{x}^{\mathcal{A}} \ket{a}^{A_O} \right) = \ket{i}^{A_I} \ket{x,\, a}^{F}.
\end{align}
Comparing Eqs.\,\eqref{eq:UNU(N)|ix>|a>=|i>|xa>} and \eqref{eq:8fda98g8a}, we conclude that $U^{(N+1)} = U_N U^{(N)}$.

Finally, we prove (iii).
Let $n$ be an integer in $\{1...N-1\}$. Define a subspace $\mathcal{X}^{(n)}(\ket{\alpha}^{2n}) \subset \otimes_{k} \mathcal{H}_{2k}$ as
\begin{equation}
	\mathcal{X}^{(n)}( \ket{\alpha}^{2n} ) := 	 \left\{ U^{(N+1)} \left( \ket{\tilde{\pi}}^{\widehat{2n,\, 2N}} \ket{\alpha}^{2n} \ket{0}^{2N} \right)
	\relmiddle| \ket{\tilde{\pi}}^{\widehat{2n,\, 2N}} \in \bigotimes_{k \neq n,\,N} \mathcal{H}_{2k} \right\}.
\end{equation}
From $\mathcal{X}^{(n)}_{\alpha} \subset \lnsp{V}^{(n)}_{\alpha}$, Lem.\,\ref{lem:2mperp->2m+1perp} leads to
$\redspn{\mathcal{X}^{(n)}_{\alpha}}{1 3 \cdots 2n-1}{2n+1 \cdots 2N+1}  \perp \redspn{\mathcal{X}^{(n)}_{\overline{\alpha}}}{1 3 \cdots 2n-1}{2n+1 \cdots 2N+1}$.
This implies
\begin{equation}
	\label{eq:fd8a9gu}
	\redspn{\mathcal{W}^{(n)}_{\alpha} }{1 3 \cdots 2n-1}{2n+1 \cdots 2N+1}
	\perp \redspn{\mathcal{W}^{(n)}_{\overline{\alpha}}}{1 3 \cdots 2n-1}{2n+1 \cdots 2N+1},
\end{equation}
because
\begin{alignat}{2}
	  & \left( {}^{13\cdots2n-1}\bra{\Phi} U^{(N+1)} \left( \ket{\pi}^{\widehat{2N}} \ket{0}^{2N} \right),\,
	{}^{13\cdots2n-1}\bra{\Phi^\prime} U^{(N+1)} \left( \ket{\pi^\prime}^{\widehat{2N}} \ket{0}^{2N} \right) \right) \notag \\
	= & \left( U_N \left( {}^{13\cdots2n-1}\bra{\Phi} U^{(N)} \ket{\pi}^{\widehat{2N}} \right)  \ket{0}^{2N},\,
	U_N \left( {}^{13\cdots2n-1}\bra{\Phi^\prime} U^{(N)} \ket{\pi^\prime}^{\widehat{2N}} \right) \ket{0}^{2N} \right) \notag \\
	= & \left( {}^{13\cdots2n-1}\bra{\Phi} U^{(N)} \ket{\pi}^{\widehat{2N}},\,
	{}^{13\cdots2n-1}\bra{\Phi^\prime} U^{(N)} \ket{\pi^\prime}^{\widehat{2N}}\right),
\end{alignat}
holds for all $\ket{\alpha}^{2n} \in \mathcal{H}_{2n}$, where
\begin{equation}
	\mathcal{W}^{(n)}( \ket{\alpha}^{2n} ) := \left\{ U^{(N)} \left( \ket{\tilde{\pi}}^{\widehat{2n,\, 2N}} \ket{\alpha}^{2n} \right)
	\relmiddle| \ket{\tilde{\pi}}^{\widehat{2n,\, 2N}} \in \bigotimes_{k \neq n,\,N} \mathcal{H}_{2k} \right\}.
\end{equation}
From Lem.\,\ref{lem:2mperp->2m+1perp}, Eq.\,\eqref{eq:fd8a9gu} indicates that  $U^{(N)}$ represents a quantum comb with $N-1$ slots.
\qed

%%%%%%%%%%%%%%%%%%%%%%%%%%%%%%%%%%%%%%%%%%%%%%%%%%%%%%%%%%%%%%%%%%%%%%%%
%%%%%%%%%%%%%%%%%%%%%%%%      NEW SECTION    %%%%%%%%%%%%%%%%%%%%%%%%
%%%%%%%%%%%%%%%%%%%%%%%%%%%%%%%%%%%%%%%%%%%%%%%%%%%%%%%%%%%%%%%%%%%%%%%%

\section{The proof of (1) $\implies$ (2) in Thm.\,\ref{thm:equivalent}}
\label{secApp:ProofsOfpreparation}

%%%%%%%%%%%%%%%%%%%%%%%%%%%%%%%%%%%%%%%%%%%%%%%%%%%%%%%%%%%%%%%%%%%%%%%%
%%%%%%%%%%%%%%%%%%%%%%%%      NEW SUBSECTION    %%%%%%%%%%%%%%%%%%%%%%%%
%%%%%%%%%%%%%%%%%%%%%%%%%%%%%%%%%%%%%%%%%%%%%%%%%%%%%%%%%%%%%%%%%%%%%%%%

We start by pointing that when we insert any unitary operator $\dket{U_B}\dbra{U_B}$ into the slot $B$ of a \ReviseFurther{two-slot pure} superchannel described by $\dket{U}\dbra{U}$, the link product $\dket{U}\dbra{U} * \dket{U_B}\dbra{U_B}$ is a \ReviseFurther{one-slot pure} superchannel. This holds true because if we insert any unitary operation $\dket{U_A}\dbra{U_A}$ into the slot $A$ of the one-slot supermap described by \ReviseFurther{$\dket{U}\dbra{U} * \dket{U_B}\dbra{U_B}$}, the output operation must be unitary.
Thus, for all $U_B \colon B_I \otimes B_I^\prime \to B_O \otimes B_O^\prime$,
the product $\dket{U}\dbra{U} * \dket{U_B}\dbra{U_B}$
is the Choi operator of a unitary operation.

Hence, Lem.\,\ref{lem:Uisprocess<=>UGisunitary} shows that $\dket{U}\dbra{U}$ is a process matrix with one slot $B$.
Thus, $\dket{U}\dbra{U}$ is the Choi operator of a quantum comb whose global past is $P \otimes A_O$, whose global future is $A_I \otimes F$ and whose slot has $B_I$ ($B_O$) as an input (output) space, respectively. Thus, Lem.\,\ref{lem:2mperp->2m+1perp} shows
$\redspn{\lnsp{V}_{A_O \beta}}{B_I}{A_I F} \perp \redspn{\lnsp{V}_{A_O \overline{\beta}}}{B_I}{A_I F}$
for all $\ket{\beta}^{B_O} \in B_O$, which implies Eq.\,\eqref{b}. Similarly, we can obtain
$\redspn{\lnsp{V}_{\alpha B_O}}{A_I}{B_I F} \perp \redspn{\lnsp{V}_{\overline{\alpha} B_O}}{A_I}{B_I F}$
for all $\ket{\alpha}^{A_O} \in A_O$, which implies Eq.\,\eqref{c}.

To show Eq.\,\eqref{a}, we start with the case where $\dim A_I = \dim A_O =: d_A$ and $\dim B_I = \dim B_O =: d_B$. And since the dimensions of the input-channels are greater than one, we only consider the cases $d_A \geq 2$ and $d_B \geq 2$. We now start with the following lemma:
\begin{lemm} \
	\label{lem:3-1}
	Let $U : P \otimes A_O \otimes B_O \to A_I \otimes B_I \otimes F$ be a unitary operator representing a quantum superchannel where $\dim A_I = \dim A_O =: d_A$ and $\dim B_I = \dim B_O =: d_B$.
	For all $\ket{\pi}^P,\, \ket{\pi^\prime}^P \in P$,
	$\ket{\alpha}^{A_O},\, \ket{\alpha^\perp}^{A_O} \in A_O$ such that $\ket{\alpha}^{A_O} \perp \ket{\alpha^\perp}^{A_O}$,
	$\ket{\beta}^{B_O},\, \ket{\beta^\perp}^{B_O} \in B_O$ such that $\ket{\beta}^{B_O} \perp \ket{\beta^\perp}^{B_O}$,
	$\ket{\phi}^{A_I},\, \ket{\phi^\perp}^{A_I} \in A_I$ such that
	$\ket{\phi}^{A_I} \perp \ket{\phi^\perp}^{A_I}$,
	$\ket{\psi}^{B_I},\, \text{ and } \ket{\psi^\perp}^{B_I} \in B_I$ such that
	$\ket{\psi}^{B_I} \perp \ket{\psi^\perp}^{B_I}$, we have
	\small
	\begin{multline}
		\label{eq:3_3-1}
		\left(
		\left( {}^{A_I}\bra{\phi} {}^{B_I}\bra{\psi} \right)
		U \left( \ket{\pi}^P \ket{\alpha}^{A_O} \ket{\beta}^{B_O} \right), \right. \\
		\left. \left( {}^{A_I}\bra{\phi^\perp} {}^{B_I}\bra{\psi^\perp} \right)
		U \left( \ket{\pi^\prime}^P \ket{\alpha^\perp}^{A_O} \ket{\beta^\perp}^{B_O} \right)
		\right) = 0.
	\end{multline}
\end{lemm}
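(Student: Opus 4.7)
The plan is to exploit the defining property of a pure superchannel: plugging any unitary inputs into its slots produces a unitary output. Since $\dim A_I = \dim A_O$ and $\dim B_I = \dim B_O$, I would take the inputs to be unitaries $U_A : A_I \to A_O$ and $U_B : B_I \to B_O$ with trivial auxiliary systems, so that the induced output $U_G : P \to F$ is unitary. The key is to engineer a four-parameter family $(U_A^{\theta,\eta}, U_B^{\zeta,\mu})$ for which the unitarity relation $\bra{\pi'} U_G^\dagger U_G \ket{\pi} = \braket{\pi'|\pi}$, after phase-averaging, isolates the single scalar on the left-hand side of Eq.\,\eqref{eq:3_3-1}.

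Concretely, I would extend $\{\ket{\phi}, \ket{\phi^\perp}\}$ to an orthonormal basis $\{\ket{\phi_k}\}_{k=0}^{d_A-1}$ of $A_I$ with $\ket{\phi_0} = \ket{\phi}$, $\ket{\phi_1} = \ket{\phi^\perp}$, and similarly fix bases $\{\ket{\alpha_k}\}$ of $A_O$, $\{\ket{\psi_l}\}$ of $B_I$, $\{\ket{\beta_l}\}$ of $B_O$ with the analogous two-element initialization. I then define
\begin{equation*}
U_A^{\theta,\eta} \ket{\phi_0} = e^{i\theta}\ket{\alpha_0}, \quad U_A^{\theta,\eta} \ket{\phi_1} = e^{i\eta}\ket{\alpha_1}, \quad U_A^{\theta,\eta} \ket{\phi_k} = \ket{\alpha_k}\ \text{for}\ k \geq 2,
\end{equation*}
and analogously $U_B^{\zeta,\mu}$ with phases $\zeta, \mu$ on the first two basis vectors. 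The two-slot extension of the formula in Lem.\,\ref{lem:Uisprocess<=>UGisunitary} (obtained by applying the one-slot formula to each slot in turn) then gives
\begin{equation*}
\bra{f}^F U_G \ket{\pi}^P = \sum_{k, l} e^{i(\phi_A(k) + \phi_B(l))} T_{kl}(\pi, f), \quad T_{kl}(\pi, f) := \bra{\phi_k}\bra{\psi_l}\bra{f} U \ket{\pi, \alpha_k, \beta_l},
\end{equation*}
where $\phi_A(0) = \theta$, $\phi_A(1) = \eta$, $\phi_A(k) = 0$ for $k \geq 2$, and analogously for $\phi_B$.

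Expanding $\bra{\pi'} U_G^\dagger U_G \ket{\pi} = \braket{\pi'|\pi}$ yields
\begin{equation*}
\sum_{k, l, k', l'} e^{i(\phi_A(k) - \phi_A(k') + \phi_B(l) - \phi_B(l'))}\, S_{kl, k'l'}(\pi, \pi') = \braket{\pi'|\pi},
\end{equation*}
with $S_{kl, k'l'}(\pi, \pi') := \sum_f \overline{T_{k'l'}(\pi', f)} T_{kl}(\pi, f)$. I would then integrate both sides against $e^{-i(\theta + \zeta - \eta - \mu)}$ over the four-torus $[0, 2\pi)^4$: the right-hand side vanishes (it is independent of the phases), while on the left, matching the coefficients of the four independent phase variables forces $(k, l, k', l') = (0, 0, 1, 1)$ as the sole surviving term. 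The resulting equation $S_{00, 11}(\pi, \pi') = 0$ unpacks exactly to Eq.\,\eqref{eq:3_3-1}, up to the trivial conjugation of the inner product.

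The principal technical point is to verify that no other index combination produces the phase $\theta + \zeta - \eta - \mu$. This is immediate from the observation that $\phi_A(k) = 0$ for $k \geq 2$, so a nonzero contribution proportional to $\theta$ (respectively $-\eta$) requires $k = 0$ (respectively $k' = 1$), and analogously for $l$ and $l'$; no other $(k,l,k',l')$ reproduces the target character. The only other bookkeeping step is writing down the two-slot analogue of the formula in Lem.\,\ref{lem:Uisprocess<=>UGisunitary}, which follows by iterating the single-slot case.
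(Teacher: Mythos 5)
Your proposal is correct and takes essentially the same approach as the paper's proof: both insert slot unitaries that map the chosen $A_I$, $B_I$ bases onto the chosen $A_O$, $B_O$ bases (the paper's Lem.\,\ref{lem:Uisprocess<=>UGisunitary} already covers multiple slots, so no iteration of the one-slot formula is needed), expand the unitarity relation $U_G^\dagger U_G = \1^P$, and isolate the single cross term $U_{00}^\dagger U_{11} = O$ by phase averaging. The only difference is cosmetic: where you integrate continuous phases over the four-torus against the character $e^{-i(\theta+\zeta-\eta-\mu)}$, the paper achieves the identical cancellation with the discrete substitutions $\ket{\phi_0} \mapsto -\ket{\phi_0}$, $\ket{\phi_0} \mapsto i\ket{\phi_0}$, $\ket{\phi_1} \mapsto -\ket{\phi_1}$ and their $B$-side analogues.
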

\begin{proof}

	We show only the case where $d_A \geq 2$ and $d_B \geq 2$.
	Take the computational bases $\{ \ket{j}^{A_O} \}_j$, $\{ \ket{k}^{B_O} \}_k$ in $A_O$, $B_O$, respectively.
	Without loss of generality, assume
	$\ket{\alpha}^{A_O} = \ket{0}^{A_O}$,
	$\ket{\alpha^\perp}^{A_O} = \ket{1}^{A_O}$, $\ket{\beta}^{B_O} = \ket{0}^{B_O}$ and  $\ket{\beta^\perp}^{B_O} = \ket{1}^{B_O}$.

	Let $\{ \ket{\phi_j}^{A_I} \}_{j}$ be an orthonormal basis in $A_I$ such that $\ket{\phi_0}^{A_I} = \ket{\phi}^{A_I}$, $\ket{\phi_1}^{A_I} = \ket{\phi^\perp}^{A_I}$ and
	let $\{ \ket{\psi_k}^{B_I} \}_{k}$ an orthonormal basis in $B_I$ such that $\ket{\psi_0}^{B_I} = \ket{\psi}^{B_I}$, $\ket{\psi_1}^{B_I} = \ket{\psi^\perp}^{B_I}$.
	Since $ U_A^{A_I \to A_O} \coloneqq \sum_{j} \ket{j}^{A_O} \bra{\phi_j}^{A_I}$ and $U_B^{B_I \to B_O} \coloneqq \sum_{k} \ket{k}^{B_O} \bra{\psi_k}^{B_I}$ are unitary,
	the unitary operator $U_G$ representing the output $\dket{U} \dbra{U} * \dket{U_A} \dbra{U_A} * \dket{U_B} \dbra{U_B}$ is obtained by using Eq.\,\eqref{eq:ug8d0a0a0a0g9fg9} as:
	\begin{equation}
		U_G^{P \to F}
		= \sum_{j,k} \left( {}^{A_I}\bra{\phi_j} {}^{B_I}\bra{\psi_k} \right) U^{P A_O B_O \to A_I B_I F} \left( \ket{j}^{A_O} \ket{k}^{B_O} \right)
		= \sum_{j,k} U_{j k}
	\end{equation}
	where $U_{j k} := \left( {}^{A_I}\bra{\phi_{j}} {}^{B_I}\bra{\psi_{k}} \right) U \left( \ket{j}^{A_O} \ket{k}^{B_O} \right)$.
	Thus we have,
	\begin{equation}
		I^P = U_G^\dagger U_G
		= \sum_{j,\, k,\, j^\prime,\, k^\prime} U_{jk}^\dagger U_{j^\prime k^\prime}.
		\label{eq:3-1}
	\end{equation}
	If we substitute $\ket{\phi_0}$ with $-\ket{\phi_0}$ in $U_A$, then $U_A$ remains unitary. Then Eq.\,\eqref{eq:3-1} becomes
	\begin{align}
		\label{eq:3-2}
		\sum_{j,\, k,\, j^\prime,\, k^\prime} (-1)^{\delta_{0 j} + \delta_{0 j^\prime}}
		U_{j k}^\dagger U_{j^\prime k^\prime}
		= I^P.
	\end{align}
	By subtracting Eq.\,\eqref{eq:3-2} from Eq.\,\eqref{eq:3-1}, we obtain
	\begin{align}
		\label{eq:3-3}
		\sum_{k,\, k^\prime} \biggl\{
		\sum_{j^\prime \neq 0} U_{0 k}^\dagger U_{j^\prime k^\prime} + \sum_{j \neq 0} U_{j k}^\dagger U_{0 k^\prime} \biggr\} = O.
	\end{align}
	In Eq.\,\eqref{eq:3-3}, we substitute $\ket{\phi_0}$ with $i \ket{\phi_0}$ then
	\begin{align}
		\label{eq:3-3-1}
		\sum_{k,\, k^\prime} \biggl\{ i
		\sum_{j^\prime \neq 0} U_{0 k}^\dagger U_{j^\prime k^\prime} - i \sum_{j \neq 0} U_{j k}^\dagger U_{0 k^\prime} \biggr\} = O.
	\end{align}
	By multiplying $- i$ on Eq.\,\eqref{eq:3-3-1} and subtracting from Eq.\,\eqref{eq:3-3},
	\begin{equation}
		\label{eq:3-3-2}
		\sum_{k,\, k^\prime}
		\sum_{j^\prime \neq 0} U_{0 k}^\dagger U_{j^\prime k^\prime} = O.
	\end{equation}
	We substitute $\ket{\phi_1}$ with $-\ket{\phi_1}$ then Eq.\,\eqref{eq:3-3-2} becomes
	\begin{equation}
		\label{eq:3-3-3}
		\sum_{k,\, k^\prime}
		\sum_{j^\prime \neq 0} (-1)^{\delta_{1 j^\prime} + 1} U_{0 k}^\dagger U_{j^\prime k^\prime} = O.
	\end{equation}
	By adding Eq.\,\eqref{eq:3-3-2} and Eq.\,\eqref{eq:3-3-3},
	\begin{equation}
		\label{eq:3-3-4}
		\sum_{k,\, k^\prime} U_{0 k}^\dagger U_{1 k^\prime} = O.
	\end{equation}
	We apply the same argument to $k$ and $k^\prime$
	in Eq.\,\eqref{eq:3-3-4} then acquire
	\begin{equation}
		U_{0 0}^\dagger U_{1 1} = O.
	\end{equation}
	Therefore,
	\begin{align}
		  & \left(
		\left( {}^{A_I}\bra{\phi} {}^{B_I}\bra{\psi} \right)
		U \left( \ket{\pi}^P \ket{\alpha}^{A_O} \ket{\beta}^{B_O} \right),\,
		\left( {}^{A_I}\bra{\phi^\perp} {}^{B_I}\bra{\psi^\perp} \right)
		U \left( \ket{\pi^\prime}^P \ket{\alpha^\perp}^{A_O} \ket{\beta^\perp}^{B_O} \right)
		\right) \notag \\
		= & \left(
		\left( {}^{A_I}\bra{\phi_0} {}^{B_I}\bra{\psi_0} \right)
		U \left( \ket{\pi}^P \ket{0}^{A_O} \ket{0}^{B_O} \right),
		\left( {}^{A_I}\bra{\phi_1} {}^{B_I}\bra{\psi_1} \right)
		U \left( \ket{\pi^\prime}^P \ket{1}^{A_O} \ket{1}^{B_O} \right)
		\right) \notag \\
		= & \left( U_{00} \ket{\pi}^P,\, U_{11} \ket{\pi^\prime}^P \right) \notag
		= {}^{P}\bra{\pi} U_{0 0}^{\dagger} U_{1 1} \ket{\pi^\prime}^P = 0.
	\end{align}
\end{proof}

Next, we show
\begin{equation}
	\left(
	\left( {}^{A_I}\bra{\phi} {}^{B_I}\bra{\psi} \right)
	U \left( \ket{\pi}^P \ket{\alpha}^{A_O} \ket{\beta}^{B_O} \right),
	\left( {}^{A_I}\bra{\phi^\prime} {}^{B_I}\bra{\psi^\perp} \right)
	U \left( \ket{\pi^\prime}^P \ket{\alpha^\perp}^{A_O} \ket{\beta^\perp}^{B_O} \right) \right) = 0
\end{equation}
for all $\ket{\phi}^{A_I},\, \ket{\phi^\prime}^{A_I} \in A_I$ and for all $\ket{\psi}^{B_I},\, \ket{\psi^\perp}^{B_I} \in B_I$ such that $\ket{\psi}^{B_I} \perp \ket{\psi^\perp}^{B_I}$. Define the sesquilinear function
\small
\begin{align}
	  & f(\ket{\phi^\prime}^{A_I},\, \ket{\phi}^{A_I}) := \\
	  & \quad \quad \left(
	\left( {}^{A_I}\bra{\phi} {}^{B_I}\bra{\psi} \right)
	U \left( \ket{\pi}^P \ket{\alpha}^{A_O} \ket{\beta}^{B_O} \right),
	\left( {}^{A_I}\bra{\phi^\prime} {}^{B_I}\bra{\psi^\perp} \right)
	U \left( \ket{\pi^\prime}^P \ket{\alpha^\perp}^{A_O} \ket{\beta^\perp}^{B_O} \right) \right). \notag
\end{align}
\normalsize
From Lem.\,\ref{lem:3-1}, we have
\begin{equation}
	\label{eq:fa98fa-1}
	f(\ket{\phi^\perp}^{A_I},\, \ket{\phi}^{A_I}) = 0
\end{equation}
for all $\ket{\phi}^{A_I},\, \ket{\phi^\perp}^{A_I} \in A_I$ such that
$\ket{\phi}^{A_I} \perp \ket{\phi^\perp}^{A_I}$. Lem.\,\ref{lem:sesquiperp} and
this show $f(\ket{\phi}^{A_I},\, \ket{\phi}^{A_I}) = f(\ket{\phi^\prime}^{A_I},\, \ket{\phi^\prime}^{A_I})$
for all $\ket{\phi}^{A_I},\, \ket{\phi^\prime}^{A_I} \in A_I$ such that $\| \ket{\phi}^{A_I} \| = \| \ket{\phi^\prime}^{A_I} \|$. Thus, for all $\ket{\phi}^{A_I} \in A_I,\, \neq 0$, we have
\begin{equation}
	\label{eq:f(phi,phi)InLemma3-1}
	d_A \frac{f(\ket{\phi}^{A_I},\, \ket{\phi}^{A_I}) }{\| \ket{\phi}^{A_I} \|^2} = \sum_{j = 0}^{d_A - 1} f \left( \frac{\ket{\phi}^{A_I}}{\| \ket{\phi}^{A_I} \|},\, \frac{\ket{\phi}^{A_I}}{\| \ket{\phi}^{A_I} \|} \right) = \sum_{j = 0}^{d_A - 1} f(\ket{j}^{A_O},\, \ket{j}^{A_O}),
\end{equation}
where $\{ \ket{j}^{A_I} \}_{j=0}^{d_A-1}$ is the computational basis in $A_I$.
Recalling the definition of $f$ and completeness relation of $\{ \ket{j}^{A_I} \}_{j}$,
\begin{align}
	  & \text{Eq.\,\eqref{eq:f(phi,phi)InLemma3-1}} \notag                                                                                                   \\
	= & \sum_{j = 0}^{d_A - 1} \left( {}^P\bra{\pi} {}^{A_O}\bra{\alpha} {}^{B_O}\bra{\beta} \right) U^\dagger \left( \ket{j}^{A_I} \ket{\psi}^{B_I} \right)
	\left( {}^{A_I}\bra{j} {}^{B_I}\bra{\psi^\perp} \right)
	U \left( \ket{\pi^\prime}^P \ket{\alpha^\perp}^{A_O} \ket{\beta^\perp}^{B_O} \right) \notag \\
	= & \left( {}^P\bra{\pi} {}^{A_O}\bra{\alpha} {}^{B_O}\bra{\beta} \right) U^\dagger  \ket{\psi}^{B_I}
	{}^{B_I}\bra{\psi^\perp}
	U \left( \ket{\pi^\prime}^P \ket{\alpha^\perp}^{A_O} \ket{\beta^\perp}^{B_O} \right) \notag \\
	= & \left(
	{}^{B_I}\bra{\psi}
	U \left( \ket{\pi}^P \ket{\alpha}^{A_O} \ket{\beta}^{B_O} \right),
	{}^{B_I}\bra{\psi^\perp}
	U \left( \ket{\pi^\prime}^P \ket{\alpha^\perp}^{A_O} \ket{\beta^\perp}^{B_O} \right) \right) = 0,
\end{align}
where the last equality follows from Eq.\,\eqref{b}.
Therefore, we obtain
\begin{equation}
	\label{eq:fa98fa}
	f(\ket{\phi}^{A_I},\, \ket{\phi}^{A_I}) = 0
\end{equation}
for all $\ket{\phi}^{A_I} \in A_I$. Eq.\,\eqref{eq:fa98fa-1} and Eq.\,\eqref{eq:fa98fa} implies $f(\ket{\phi^\prime}^{A_I},\, \ket{\phi}^{A_I}) = 0$, that is,
\small
\begin{equation}
	\label{eq:f8a9fa9f}
	\left(
	\left( {}^{A_I}\bra{\phi} {}^{B_I}\bra{\psi} \right)
	U \left( \ket{\pi}^P \ket{\alpha}^{A_O} \ket{\beta}^{B_O} \right),
	\left( {}^{A_I}\bra{\phi^\prime} {}^{B_I}\bra{\psi^\perp} \right)
	U \left( \ket{\pi^\prime}^P \ket{\alpha^\perp}^{A_O} \ket{\beta^\perp}^{B_O} \right) \right) = 0
\end{equation}
\normalsize
for all $\ket{\phi}^{A_I},\, \ket{\phi^\prime}^{A_I} \in A_I$ and for all $\ket{\psi}^{B_I},\, \ket{\psi^\perp}^{B_I} \in B_I$ such that $\ket{\psi}^{B_I} \perp \ket{\psi^\perp}^{B_I}$. Similarly to the case of Eq.\,\eqref{eq:f8a9fa9f}, we can show
\small
\begin{equation}
	\label{eq:resultOf1to2WhereDimensionRestricted}
	\left(
	\left( {}^{A_I}\bra{\phi} {}^{B_I}\bra{\psi} \right)
	U \left( \ket{\pi}^P \ket{\alpha}^{A_O} \ket{\beta}^{B_O} \right),
	\left( {}^{A_I}\bra{\phi^\prime} {}^{B_I}\bra{\psi^\prime} \right)
	U \left( \ket{\pi^\prime}^P \ket{\alpha^\perp}^{A_O} \ket{\beta^\perp}^{B_O} \right) \right) = 0
\end{equation}
\normalsize
using Lem.\,\ref{lem:sesquiperp},\, Eq.\,\eqref{eq:f8a9fa9f} and Eq.\,\eqref{c}.

Now we investigate the case where $\dim A_I \neq \dim A_O$ or $\dim B_I \neq \dim B_O$. Define a unitary operator
\begin{equation}
	\tilde{U} := U \otimes I^{A_P} \otimes I^{A_F} \otimes I^{B_P} \otimes I^{B_F}
\end{equation}
where $A_P$, $A_F$, $B_P$, $B_F$ are Hilbert spaces such that $A_P \cong A_O$, $A_F \cong A_I$, $B_P \cong B_O$, $B_F \cong B_I$, respectively.
Then, $\tilde{U} \colon \tilde{P} \otimes \tilde{A}_O \otimes \tilde{B}_O \to \tilde{A}_I \otimes \tilde{B}_I \otimes \tilde{F}$ represents a two-slot superchannel whose global past is $\tilde{P} := P \otimes A_P \otimes B_P$, whose global future is $\tilde{F} := F \otimes A_F \otimes B_F$, one of whose slots has $\tilde{A}_I :=A_I \otimes A_P$
($\tilde{A}_O := A_O \otimes A_F$) as an input (output) space and the other of whose slots has $\tilde{B}_I := B_I \otimes B_P$ ($\tilde{B}_O := B_O \otimes B_F$) as an input (output) space.
Now we have $\dim \tilde{A}_I = \dim A_I \dim A_P = \dim A_F \dim A_O = \dim \tilde{A}_O$ and similarly we have $\dim \tilde{B}_I = \dim \tilde{B}_O$. Thus, Eq.\,\eqref{eq:resultOf1to2WhereDimensionRestricted} can be replaced by
\small
\begin{equation}
	\label{eq:fajoib}
	\left(
	\left( {}^{\tilde{A}_I}\bra{\tilde{\phi}} {}^{\tilde{B}_I}\bra{\tilde{\psi}} \right)
	\tilde{U} \left( \ket{\tilde{\pi}}^{\tilde{P}} \ket{\tilde{\alpha}}^{\tilde{A}_O} \ket{\tilde{\beta}}^{\tilde{B}_O} \right),
	\left( {}^{\tilde{A}_I}\bra{\tilde{\phi}^\prime} {}^{\tilde{B}_I}\bra{\tilde{\psi}^\prime} \right)
	\tilde{U} \left( \ket{\tilde{\pi}^\prime}^{\tilde{P}} \ket{\tilde{\alpha}^\perp}^{\tilde{A}_O} \ket{\tilde{\beta}^\perp}^{\tilde{B}_O} \right) \right) = 0
\end{equation}
\normalsize
for all $\ket{\tilde{\pi}}^{\tilde{P}},\, \ket{\tilde{\pi}^\prime}^{\tilde{P}} \in {\tilde{P}}$,
$\ket{\tilde{\alpha}}^{\tilde{A}_O},\, \ket{\tilde{\alpha}^\perp}^{\tilde{A}_O} \in \tilde{A}_O$ such that $\ket{\tilde{\alpha}}^{\tilde{A}_O} \perp \ket{\tilde{\alpha}^\perp}^{\tilde{A}_O}$,
$\ket{\tilde{\beta}}^{\tilde{B}_O},\, \ket{\tilde{\beta}^\perp}^{\tilde{B}_O} \in \tilde{B}_O$ such that $\ket{\tilde{\beta}}^{\tilde{B}_O} \perp \ket{\tilde{\beta}^\perp}^{\tilde{B}_O}$,
$\ket{\tilde{\phi}}^{\tilde{A}_I},\, \ket{\tilde{\phi}^\perp}^{\tilde{A}_I} \in \tilde{A}_I$ and
$\ket{\tilde{\psi}}^{\tilde{B}_I},\, \ket{\tilde{\psi}^\perp}^{\tilde{B}_I} \in \tilde{B}_I$.
Substitute
$\ket{\tilde{\pi}}^{\tilde{P}} = \ket{\pi}^{P} \ket{0}^{A_P} \ket{0}^{B_P}$,
$\ket{\tilde{\pi}^\prime}^{\tilde{P}} = \ket{\pi^\prime}^{P} \ket{0}^{A_P} \ket{0}^{B_P}$,
$\ket{\tilde{\alpha}}^{\tilde{A}_O} = \ket{\alpha}^{A_O} \ket{0}^{A_F}$,
$\ket{\tilde{\alpha}^\perp}^{\tilde{A}_O} = \ket{\alpha^\perp}^{A_O} \ket{0}^{A_F}$,
$\ket{\tilde{\beta}}^{\tilde{B}_O} = \ket{\beta}^{B_O} \ket{0}^{B_F}$,
$\ket{\tilde{\beta}^\perp}^{\tilde{B}_O} = \ket{\beta^\perp}^{B_O} \ket{0}^{B_F}$,
$\ket{\tilde{\phi}}^{\tilde{A}_I} = \ket{\phi}^{A_I} \ket{0}^{A_P}$,
$\ket{\tilde{\phi}^\prime}^{\tilde{A}_I} = \ket{\phi^\prime}^{A_I} \ket{0}^{A_P}$,
$\ket{\tilde{\psi}}^{\tilde{B}_I} = \ket{\psi}^{B_I} \ket{0}^{B_P}$ and
$\ket{\tilde{\psi}^\prime}^{\tilde{B}_I} = \ket{\psi^\prime}^{B_I} \ket{0}^{B_P}$ where $\ket{0}^{\mathcal{X}}$ is a normalized vector in a Hilbert space $\mathcal{X}$. Then,
\begin{align}
	0 = & \text{ (LHS of Eq.\,\eqref{eq:fajoib})} \\
	=   & \begin{multlined}
		\left(
		\left( {}^{A_I}\bra{\phi} {}^{B_I}\bra{\psi} \right)
		U \left( \ket{\pi}^P \ket{\alpha}^{A_O} \ket{\beta}^{B_O} \right) \ket{0}^{A_F} \ket{0}^{B_F}, \right. \\
		\left. \left( {}^{A_I}\bra{\phi^\prime} {}^{B_I}\bra{\psi^\prime} \right)
		U \left( \ket{\pi^\prime}^P \ket{\alpha^\perp}^{A_O} \ket{\beta^\perp}^{B_O} \right) \ket{0}^{A_F} \ket{0}^{B_F} \right)
	\end{multlined}
	\\
	=   & \left(
	\left( {}^{A_I}\bra{\phi} {}^{B_I}\bra{\psi} \right)
	U \left( \ket{\pi}^P \ket{\alpha}^{A_O} \ket{\beta}^{B_O} \right),
	\left( {}^{A_I}\bra{\phi^\prime} {}^{B_I}\bra{\psi^\prime} \right)
	U \left( \ket{\pi^\prime}^P \ket{\alpha^\perp}^{A_O} \ket{\beta^\perp}^{B_O} \right) \right).
\end{align}
\qed

\section{The proof of (2) $\implies$ (3) in Thm.\,\ref{thm:equivalent}}
\label{secApp:idea}

First, we rephrase the condition (2) using the notation of reduced subspaces introduced in Appendix \ref{secApp:partsp} \Add{so that we can grasp the interpretation of the condition and we effectively utilize the condition in the proof}.
\begin{description}
	\item[(2$^\prime$)] For all $\ket{\alpha}^{A_O} \in A_O$ and $\ket{\beta}^{B_O} \in B_O$,
	\begin{equation}
		\label{A}
		\redspn{ \lnsp{V}_{\alpha \beta} }{A_I B_I}{F} \perp \redspn{ \lnsp{V}_{\overline{\alpha} \overline{\beta}} }{A_I B_I}{F},
		\tag{A}
	\end{equation}
	\begin{equation}
		\label{C}
		\redspn{ \lnsp{V} _{A_O \beta} }{B_I}{A_I F} \perp \redspn{ \lnsp{V} _{A_O \overline{\beta}} }{B_I}{A_I F},
		\tag{B}
	\end{equation}
	\begin{equation}
		\label{B}
		\redspn{ \lnsp{V}_{\alpha B_O} }{A_I}{B_I F} \perp \redspn{ \lnsp{V}_{\overline{\alpha} B_O} }{A_I}{B_I F},
		\tag{C}
	\end{equation}
	where $\alpha:=\vspan{\ket{\alpha}}$, \, $\overline{\alpha}:=\vspan{\ket{\alpha}}^\perp$,
	\begin{equation}
		\label{eq:defVWab}
		\lnsp{V}_{A_{\mathrm{sub}} B_{\mathrm{sub}}}:=\lnsp{V}(A_{\mathrm{sub}}, B_{\mathrm{sub}}) := U \left( P \otimes A_{\mathrm{sub}} \otimes B_{\mathrm{sub}} \right).
	\end{equation}
	for a subspace $A_{\mathrm{sub}}$, $B_{\mathrm{sub}}$ of $A_O$, $B_O$, respectively,
	and $U(\bullet)$ means the image of a subspace $\bullet$ under $U$.
\end{description}

In Eq.\,\eqref{A}, Eq.\,\eqref{C}, and Eq.\,\eqref{B},
$\redspn{\bullet}{A_I B_I}{F}$, $\redspn{\bullet}{A_I}{B_I F}$, and  $\redspn{\bullet}{B_I}{A_I F}$ mean reducing the subsystem of $A_I \otimes B_I$, $A_I$, $B_I$ and extracting the states on the subsystem of $F$, $B_I \otimes F$, $A_I \otimes F$, respectively.
\Add{
	Eq.\,\eqref{A} means that if two orthogonal pure states are ``fed'' into $A_O$ and $B_O$, that is, two orthogonal pure states are used as input-states in $A_O$ and $B_O$ for $U$, respectively, and arbitrary input-states in $P$ for $U$ are used, then \ReviseFurther{the reduced states on $F$ of the two corresponding output-states are orthogonal}.
	Briefly, Eq.\,\eqref{A} means that ``feeding'' orthogonal pure states into $A_O$ and $B_O$ causes orthogonal reduced states on $F$. Similarly, \ReviseFurther{Eq.\,\eqref{C} (Eq.\,\eqref{B})} means that ``feeding'' orthogonal pure states into \ReviseFurther{$B_O$ ($A_O$)} causes orthogonal reduced states on \ReviseFurther{$A_I \otimes F$ ($B_I \otimes F$)}, respectively.
}

\Add{Next, to figure out how to show (2)$^\prime$ $\implies$ (3), that is, how to decompose $U$ into $U^{A \prec B} \oplus U^{B \prec A}$ by using the condition (2)$^\prime$, we consider properties of unitary operators representing two-slot pure combs.
	We start from how to distinguish the case $U = U^{A \prec B}$ and the case $U = U^{B \prec A}$. As seen in Cor.\,\ref{cor:N=2purecomb}, in the case of $U = U^{B \prec A}$, ``feeding'' orthogonal pure states into $A_O$ causes orthogonal reduced states in $F$. Thus, if ``feeding'' orthogonal pure states into $A_O$ causes intersection in $F$, then we can identify $U = U^{A \prec B}$. }

\Add{To decompose $U$ into $U^{A \prec B} \oplus U^{B \prec A}$, we use similar strategies. We focus on $\redspn{ \lnsp{V}_{\alpha \beta} }{A_I B_I}{F} \in F$, which includes reduced states in $F$ by ``feeding'' $\ket{\alpha}^{A_O}$ into $A_O$ and $\ket{\beta}^{B_O}$ into $B_O$. We consider extracting intersecting vectors in $\redspn{ \lnsp{V}_{\alpha \beta} }{A_I B_I}{F}$ with $\redspn{ \lnsp{V}_{\overline{\alpha} \beta} }{A_I B_I}{F} \in F$, which includes reduced states in $F$ by ``feeding'' orthogonal pure states to $\ket{\alpha}^{A_O}$ into $A_O$ and $\ket{\beta}^{B_O}$ into $B_O$, and gather these vectors into a space named $F^|_{\alpha \beta}$, supposed to include only ``$A \prec B$ vectors.'' Similarly, we construct $F^-_{\alpha \beta}$, supposed to include only ``$B \prec A$ vectors.'' Besides $F^|_{\alpha \beta}$ and $F^-_{\alpha \beta}$, we have a remnant subspace $F^\square_{\alpha \beta}$ orthogonal to both $\redspn{ \lnsp{V}_{\overline{\alpha} \beta} }{A_I B_I}{F}$ and $\redspn{ \lnsp{V}_{\alpha \overline{\beta}} }{A_I B_I}{F}$, supposed to include ``$A \prec B$ vectors'' and ``$B \prec A$ vectors.'' We prove that we can decompose $\redspn{ \lnsp{V}_{\alpha \beta} }{A_I B_I}{F} = F^|_{\alpha \beta} \oplus F^\square_{\alpha \beta} \oplus F^-_{\alpha \beta}$ in Appendix \ref{secApp:divideFintoF|F-Fsq} (Fig.\,\ref{fig:csqctable}). }

\begin{figure}
	\centering \includegraphics[width=0.8	\columnwidth]{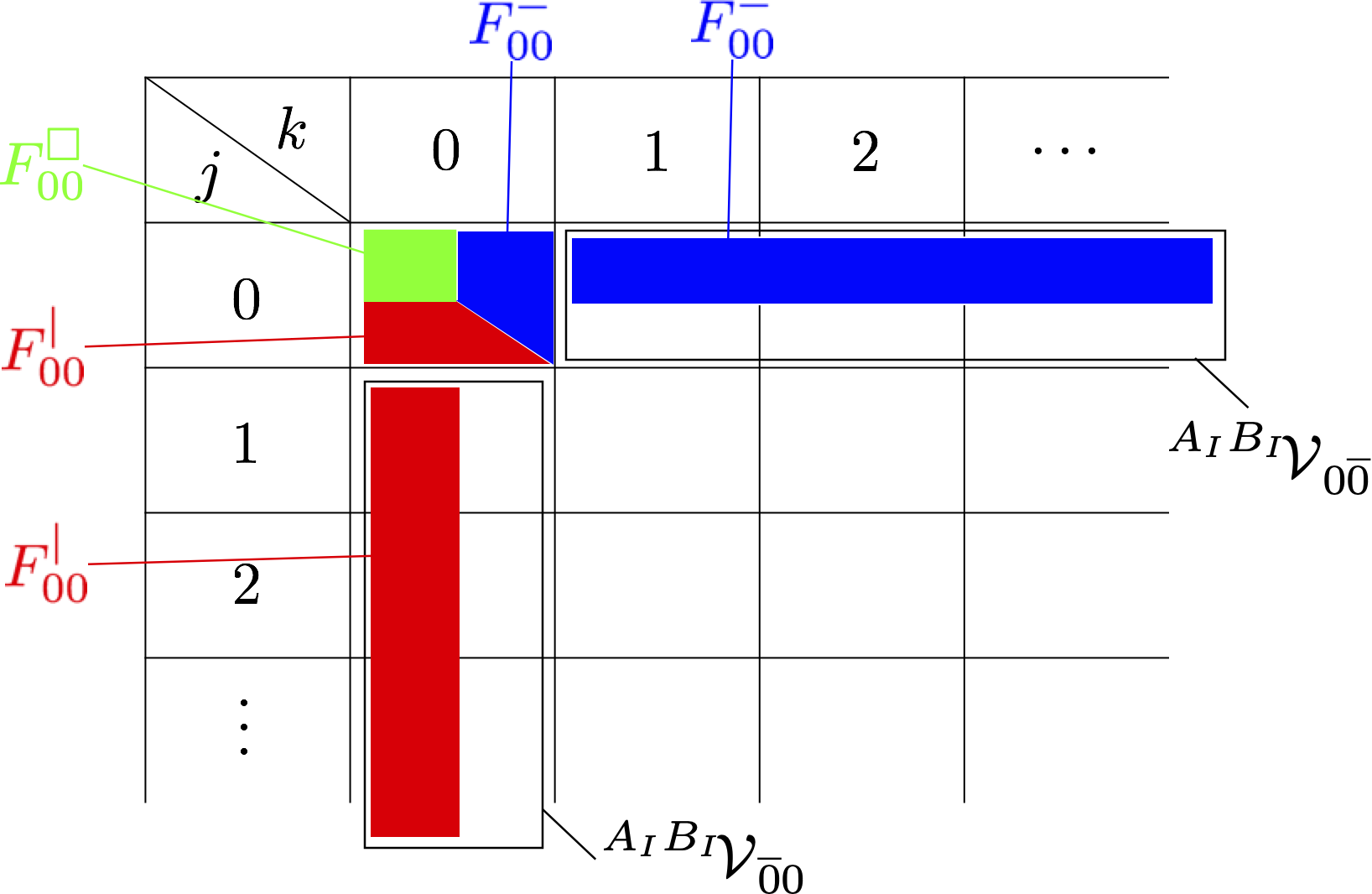}
	\caption{
	\label{fig:csqctable}
	Pictorial illustration of the main idea used in Appendix \ref{secApp:divideFintoF|F-Fsq} \Add{to show $\redspn{ \lnsp{V}_{\alpha \beta} }{A_I B_I}{F} = F^|_{\alpha \beta} \oplus F^\square_{\alpha \beta} \oplus F^-_{\alpha \beta}$}.
	The cell at $j$ row, $k$ column represents $\redspn{\lnsp{V}_{jk}}{A_I B_I}{F}$ where $\{ \ket{j}^{A_O} \}_j$, $\{ \ket{k}^{A_O} \}_k$ are computational bases in $A_O$, $B_O$, respectively.
	The parts colored with red, green and blue represent the subspaces $F^|_{00}$, $F^\square_{00}$ and $F^-_{00}$, \ReviseFurther{respectively}.
	$F^|_{00}$ is included in both $\redspn{\lnsp{V}_{00}}{A_I B_I}{F}$ and $\redspn{\lnsp{V}_{\overline{0} 0}}{A_I B_I}{F}$ and
	$F^-_{00}$ is included in both $\redspn{\lnsp{V}_{00}}{A_I B_I}{F}$ and $\redspn{\lnsp{V}_{0 \overline{0}}}{A_I B_I}{F}$.
	}
\end{figure}

\Add{In Appendix \ref{secApp:divideV}, we construct three subspaces $\lnsp{V}^{|}_{\alpha \beta}$, $\lnsp{V}^{\square}_{\alpha \beta}$, $\lnsp{V}^{-}_{\alpha \beta}$ of $\lnsp{V}_{\alpha \beta}$ corresponding with the three subspaces $F^|_{\alpha \beta}, F^\square_{\alpha \beta}, F^-_{\alpha \beta}$ of $F$ and show decompositions $\lnsp{V}_{\alpha \beta} = \lnsp{V}^|_{\alpha \beta} \oplus \lnsp{V}^\square_{\alpha \beta} \oplus \lnsp{V}^-_{\alpha \beta}$.}
\Add{In Appendix \ref{secApp:dividePintoPABopPBAopPparallel}, we construct three subspaces $P^|_{\alpha \beta}$, $P^\square_{\alpha \beta}$, $P^-_{\alpha \beta}$ of $P$ corresponding with the three subspaces $\lnsp{V}^{|}_{\alpha \beta}$, $\lnsp{V}^{\square}_{\alpha \beta}$, $\lnsp{V}^{-}_{\alpha \beta}$ of $\lnsp{V}_{\alpha \beta}$ and show decompositions $P = P^|_{\alpha \beta} \oplus P^\square_{\alpha \beta} \oplus P^-_{\alpha \beta}$.}

\Add{From what \ReviseFurther{we have} seen so far, we can guess that $P^|_{\alpha \beta}$, $P^-_{\alpha \beta}$ relate to $A \prec B$, $B \prec A$ while the two subspaces do not coincide with $P^{A \prec B}$, $P^{B \prec A}$, respectively. In Appendix \ref{secApp:dividePintoPABopPBAopPparallel}, we also decompose $P$ into three subspaces which are directly related to the subspaces $P^{A \prec B}$, $P^{B \prec A}$ we want. We construct three subspaces $\tilde{P}^| := +_{\alpha \beta} P^|_{\alpha \beta}$, $\tilde{P}^\square := \cap_{\alpha \beta} P^\square_{\alpha \beta}$, $\tilde{P}^- := +_{\alpha \beta} P^-_{\alpha \beta}$ and show a decomposition $P = \tilde{P}^| \oplus \tilde{P}^\square \oplus \tilde{P}^-$. These three subspaces $\tilde{P}^|$, $\tilde{P}^\square$, $\tilde{P}^-$ correspond to subspaces of $P$ with $A \prec B$, $A \parallel B$, $B \prec A$, respectively. Since $A \parallel B$ can be seen as $A \prec B$, we define $P^{A \prec B} := \tilde{P}^| \oplus \tilde{P}^\square$, $P^{B \prec A} := \tilde{P}^-$ later.
In Appendix \ref{secApp:divideFintoFABoplFBAoplFpal}, we construct three subspaces $\tilde{F}^|$, $\tilde{F}^\square$, $\tilde{F}^-$ of $F$ corresponding with the three subspaces $\tilde{P}^|$, $\tilde{P}^\square$, $\tilde{P}^-$ of $P$, respectively, and show a decomposition $F = \tilde{F}^| \oplus \tilde{F}^\square \oplus \tilde{F}^-$. We also define $F^{A \prec B} := \tilde{F}^| \oplus \tilde{F}^\square$, $F^{B \prec A} := \tilde{F}^-$ later.
In Appendix \ref{secApp:causality}, we define $P^{A \prec B}$, $P^{B \prec A}$, $F^{A \prec B}$, and $F^{B \prec A}$ properly as mentioned and show a decomposition $U = U^{A \prec B} \oplus U^{B \prec A}$ where $U^{A \prec B} \colon P^{A \prec B} \otimes A_O \otimes B_O \to A_I \otimes B_I \otimes F^{A \prec B}$, $U^{B \prec A}\colon P^{B \prec A} \otimes A_O \otimes B_O \to A_I \otimes B_I \otimes F^{B \prec A}$ are unitary operators representing quantum combs of $A \prec B$, $B \prec A$, respectively. It completes the proof.}
\Add{Appendix \ref{secApp:proofsInPDecomp}, Appendix \ref{secApp:proofFFFFdecmpdoijfaj}, and Appendix \ref{secApp:proofofFeqFABoplFBAoplFpal} we have not referred provide proofs of propositions used in other subsections.}

We show (2)$^\prime$ $\implies$ (3) in the following.
Figure \ref{fig:dependencyGraphOfProof} presents implication relations between propositions shown in the following.
\begin{figure}
	\centering \includegraphics{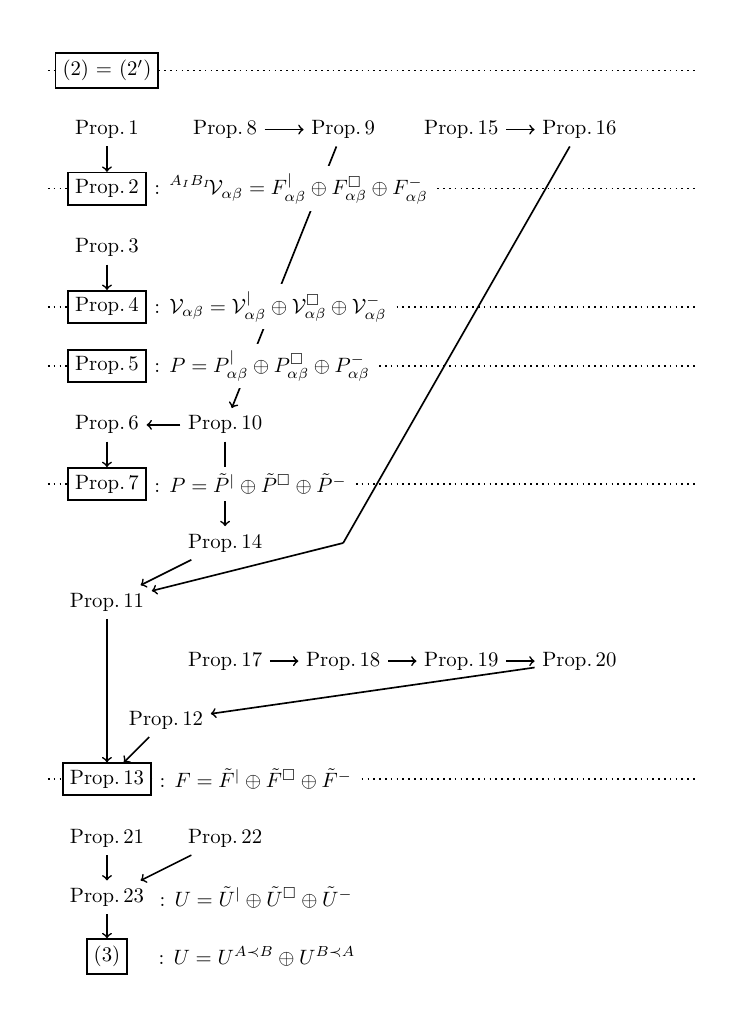}
	\caption{Implication relations between propositions for the proof of (2) $\implies$ (3) presented in Appendix \ref{secApp:idea}.
		Main propositions are boxed.
		The position of each proposition indicates that boxed propositions in a higher place are likely to be used. For details, see the main text.}
	\label{fig:dependencyGraphOfProof}
\end{figure}

%%%%%%%%%%%%%%%%%%%%%%%%%%%%%%%%%%%%%%%%%%%%%%%%%%%%%%%%%%%%%%%%%%%%%%%%
%%%%%%%%%%%%%%%%%%%%%%%%      NEW SUBSECTION    %%%%%%%%%%%%%%%%%%%%%%%%
%%%%%%%%%%%%%%%%%%%%%%%%%%%%%%%%%%%%%%%%%%%%%%%%%%%%%%%%%%%%%%%%%%%%%%%%

\subsection{Decomposing $\redspn{\lnsp{V}_{\alpha \beta}}{A_I B_I}{F}$}
\label{secApp:divideFintoF|F-Fsq}

%%%%%%%%%%%%%%%%%%%%%%%%%%%%%%%%%%%%%%%%%%%%%%%%%%%%%%%%%%%%%%%%%%%%%%%%
%%%%%%%%%%%%%%%%%%%%%%%%      NEW SUBSECTION    %%%%%%%%%%%%%%%%%%%%%%%%
%%%%%%%%%%%%%%%%%%%%%%%%%%%%%%%%%%%%%%%%%%%%%%%%%%%%%%%%%%%%%%%%%%%%%%%%
We decompose the reduced subspace $\redspn{\lnsp{V}_{\alpha \beta}}{A_I B_I}{F}$ into $F^|_{\alpha \beta} \oplus F^\square_{\alpha \beta} \oplus F^-_{\alpha \beta}$. First we show the following proposition.

\begin{prop}
	\label{prop:BF00=WopX}
	$\redspn{\lnsp{V}_{\alpha \beta}}{A_I B_I}{F}$ is decomposed into
	\begin{equation}
		\label{eq:BF00=WopXprp}
		\redspn{\lnsp{V}_{\alpha \beta}}{A_I B_I}{F} = F^|_{\alpha \beta} \oplus F^{\text{not}|}_{\alpha \beta}
	\end{equation}
	by using the subspaces $F^|_{\alpha \beta}$, $F^{\text{not}|}_{\alpha \beta}$ of $\redspn{\lnsp{V}_{\alpha \beta}}{A_I B_I}{F}$ satisfying
	\begin{equation}
		\label{eq:F|00subF0ex0}
		F^|_{\alpha \beta} \subset \redspn{\lnsp{V}_{\overline{\alpha} \beta}}{A_I B_I}{F},
	\end{equation}
	\begin{equation}
		\label{eq:Fex|00subF0ex0}
		F^{\text{not}|}_{\alpha \beta} \perp \redspn{\lnsp{V}_{\overline{\alpha} \beta}}{A_I B_I}{F}.
	\end{equation}
	Moreover, $\redspn{\lnsp{V}_{\alpha \beta}}{A_I B_I}{F}$ is decomposed into
	\begin{equation}
		\label{eq:F00=F-opFex-}
		\redspn{\lnsp{V}_{\alpha \beta}}{A_I B_I}{F} = F^-_{\alpha \beta} \oplus F^{\text{not}-}_{\alpha \beta}
	\end{equation}
	by using the subspaces $F^-_{\alpha \beta}$, $F^{\text{not}-}_{\alpha \beta}$ of $\redspn{\lnsp{V}_{\alpha \beta}}{A_I B_I}{F}$ satisfying
	\begin{equation}
		\label{eq:F-00subFex00}
		F^-_{\alpha \beta} \subset \redspn{\lnsp{V}_{\alpha \overline{\beta}}}{A_I B_I}{F},
	\end{equation}
	\begin{equation}
		\label{eq:Fex-00perpFex00}
		F^{\text{not}-}_{\alpha \beta} \perp \redspn{\lnsp{V}_{\alpha \overline{\beta}}}{A_I B_I}{F}.
	\end{equation}
\end{prop}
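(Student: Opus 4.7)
The plan is to produce the decomposition by an application of Cor.\,\ref{cor:EF=V0opV1->F0=V1perpFop(F0capF1)} on a cleverly chosen ambient tensor product, using conditions (B) and (C) of (2$^\prime$) to supply the required tensor-product structure. Condition (B), namely $\redspn{\lnsp{V}_{A_O\beta}}{B_I}{A_I F} \perp \redspn{\lnsp{V}_{A_O\overline{\beta}}}{B_I}{A_I F}$, together with the unitary splitting $A_I \otimes B_I \otimes F = \lnsp{V}_{A_O\beta} \oplus \lnsp{V}_{A_O\overline{\beta}}$, satisfies the hypotheses of Cor.\,\ref{cor:F0perpF1=>V0=EtensorF0} and yields $\lnsp{V}_{A_O\beta} = B_I \otimes M_\beta$, where $M_\beta := \redspn{\lnsp{V}_{A_O\beta}}{B_I}{A_I F}$. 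Analogously, condition (C) provides $\lnsp{V}_{\alpha B_O} = A_I \otimes N_\alpha$ with $N_\alpha := \redspn{\lnsp{V}_{\alpha B_O}}{A_I}{B_I F}$, and symmetrically $\lnsp{V}_{\overline{\alpha} B_O} = A_I \otimes N_{\overline{\alpha}}$ with $N_\alpha \perp N_{\overline{\alpha}}$ in $B_I \otimes F$.

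Since $\lnsp{V}_{\alpha\beta} \oplus \lnsp{V}_{\overline{\alpha}\beta} = \lnsp{V}_{A_O\beta} = B_I \otimes M_\beta$, I will apply Cor.\,\ref{cor:EF=V0opV1->F0=V1perpFop(F0capF1)} with $\mathcal{E} = B_I$, $\mathcal{F} = M_\beta$, $\lnsp{W}_0 = \lnsp{V}_{\alpha\beta}$, $\lnsp{W}_1 = \lnsp{V}_{\overline{\alpha}\beta}$, treating $M_\beta \subset A_I \otimes F$ as a Hilbert space in its own right. This produces a decomposition of $\redspn{\lnsp{V}_{\alpha\beta}}{B_I}{M_\beta}$ into a ``share'' piece contained in $\redspn{\lnsp{V}_{\overline{\alpha}\beta}}{B_I}{A_I F}$ and a ``perp'' piece orthogonal to it (within $M_\beta$, and hence within $A_I \otimes F$). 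Further reducing over $A_I$ produces the candidate subspaces for $F^|_{\alpha\beta}$ and $F^{\text{not}|}_{\alpha\beta}$ in $F$, and the containment $F^|_{\alpha\beta} \subset \redspn{\lnsp{V}_{\overline{\alpha}\beta}}{A_I B_I}{F}$ is then automatic by monotonicity of the reduced-subspace operation under set inclusion. The decomposition for $F^-_{\alpha\beta}$, $F^{\text{not}-}_{\alpha\beta}$ is obtained by the same argument with the roles of $A$ and $B$ swapped.

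The main technical obstacle is verifying $F^{\text{not}|}_{\alpha\beta} \perp \redspn{\lnsp{V}_{\overline{\alpha}\beta}}{A_I B_I}{F}$: orthogonality in $A_I \otimes F$ does not in general descend to orthogonality in $F$ after reducing $A_I$. My plan is to bypass this by exploiting the product structures $\lnsp{V}_{\alpha B_O} = A_I \otimes N_\alpha$ and $\lnsp{V}_{\overline{\alpha} B_O} = A_I \otimes N_{\overline{\alpha}}$ (together with $N_\alpha \perp N_{\overline{\alpha}}$) to extract an $A_I$-factorization of the ``perp'' piece inherited from its inclusion in $\lnsp{V}_{\alpha B_O}$; reducing $A_I$ then yields a clean $F$-component whose orthogonality to $\redspn{\lnsp{V}_{\overline{\alpha}\beta}}{A_I B_I}{F}$ follows directly. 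Once the existence of direct-sum decompositions with the stated containment and orthogonality properties is secured, Lem.\,\ref{lem:g7g9ga7g0} identifies the pieces uniquely as the intersections $F^|_{\alpha\beta} = \redspn{\lnsp{V}_{\alpha\beta}}{A_I B_I}{F} \cap \redspn{\lnsp{V}_{\overline{\alpha}\beta}}{A_I B_I}{F}$ and $F^{\text{not}|}_{\alpha\beta} = \redspn{\lnsp{V}_{\alpha\beta}}{A_I B_I}{F} \cap \redspn{\lnsp{V}_{\overline{\alpha}\beta}}{A_I B_I}{F}^{\perp}$, and analogously for the $\beta$-direction.
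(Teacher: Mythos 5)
Your first two moves coincide exactly with the paper's proof: condition (B) plus Cor.\,\ref{cor:F0perpF1=>V0=EtensorF0} gives $\lnsp{V}_{A_O \beta} = B_I \otimes M_\beta$, and applying Cor.\,\ref{cor:EF=V0opV1->F0=V1perpFop(F0capF1)} inside $B_I \otimes M_\beta = \lnsp{V}_{\alpha\beta} \oplus \lnsp{V}_{\overline{\alpha}\beta}$ produces precisely the paper's pieces $\lnsp{Z}^{|}$ and $\lnsp{Z}^{\text{not}|}$, with $F^{|}_{\alpha\beta} \subset \redspn{\lnsp{V}_{\overline{\alpha}\beta}}{A_I B_I}{F}$ by monotonicity. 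You also correctly isolate the only nontrivial step. But your bypass for that step has a genuine gap: containment of a subspace in a product $A_I \otimes N_\alpha$ does not allow you to ``extract an $A_I$-factorization'' of it, and $\lnsp{Z}^{\text{not}|}$ genuinely fails to factorize. For the quantum switch with fixed $\ket{\alpha}^{A_O}$, $\ket{\beta}^{B_O}$, the $B \prec A$ contribution makes $\lnsp{Z}^{\text{not}|} = \vspan{\ket{\beta}^{A_I} \otimes \ket{1}^{F_c} \ket{\alpha}^{F_t}}$, pinned to a single $A_I$ vector, so it is not of the form $A_I \otimes \mathcal{X}$ for any $\mathcal{X} \subset F$ once $\dim A_I \geq 2$. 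Moreover, the inclusion you describe as ``inherited'' is not automatic: a priori one only knows $B_I \otimes \lnsp{Z}^{\text{not}|} \subset B_I \otimes M_\beta = \lnsp{V}_{\alpha\beta} \oplus \lnsp{V}_{\overline{\alpha}\beta}$, and eliminating the second summand is exactly what has to be proved.

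The missing idea is Lem.\,\ref{lem:UperpV0FeqEotUsubV1}: since $\lnsp{Z}^{\text{not}|} \perp \redspn{\lnsp{V}_{\overline{\alpha}\beta}}{B_I}{A_I F}$ within the split $B_I \otimes M_\beta = \lnsp{V}_{\alpha\beta} \oplus \lnsp{V}_{\overline{\alpha}\beta}$, that lemma converts the orthogonality into the inclusion $B_I \otimes \lnsp{Z}^{\text{not}|} \subset \lnsp{V}_{\alpha\beta}$ --- it is this inclusion, not any factorization of $\lnsp{Z}^{\text{not}|}$, that the paper uses (its Eq.\,\eqref{eq:A(B_X)subAF00}). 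Once you have it, your plan does close, and along essentially the paper's lines: reducing $A_I$ gives $B_I \otimes F^{\text{not}|}_{\alpha\beta} \subset \redspn{\lnsp{V}_{\alpha\beta}}{A_I}{B_I F} \subset N_\alpha$; condition (C), i.e.\ $N_\alpha \perp N_{\overline{\alpha}}$, together with $\redspn{\lnsp{V}_{\overline{\alpha}\beta}}{A_I}{B_I F} \subset N_{\overline{\alpha}}$, yields $B_I \otimes F^{\text{not}|}_{\alpha\beta} \perp \redspn{\lnsp{V}_{\overline{\alpha}\beta}}{A_I}{B_I F}$; and one more application of the equivalence $\redspn{\mathcal{W}}{\mathcal{E}}{\mathcal{F}} \perp \mathcal{X} \iff \mathcal{W} \perp \mathcal{E} \otimes \mathcal{X}$ descends this to $F^{\text{not}|}_{\alpha\beta} \perp \redspn{\lnsp{V}_{\overline{\alpha}\beta}}{A_I B_I}{F}$, matching the paper's chain ending at Eq.\,\eqref{eq:F_XperpF0ex0} (the paper invokes the condition directly where you route it through the $N$-spaces, an equivalent phrasing). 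Your closing uniqueness remark via Lem.\,\ref{lem:g7g9ga7g0} and the symmetry argument for the $F^{-}_{\alpha\beta}$ decomposition are fine; the verdict is: right skeleton, same two corollaries as the paper, but the pivotal orthogonality-to-inclusion conversion is absent and is replaced by an unsound factorization claim.
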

\begin{proof} \
	We show only the first part. Since $B_I \otimes (A_I \otimes F) = \lnsp{V}_{A_O \beta} \oplus \lnsp{V}_{A_O \overline{\beta}}$ \ReviseFurther{and $\redspn{\lnsp{V}_{A_O \beta}}{B_I}{A_I F} \perp \redspn{\lnsp{V}_{A_O \overline{\beta}}}{B_I}{A_I F}$ due to Eq.\,\eqref{C}},
	by taking $B_I$ as $\mathcal{E}$, $A_I \otimes F$ as $\mathcal{F}$, $\lnsp{V}_{A_O \beta}$ as $\lnsp{W}_0$  and $\lnsp{V}_{A_O \overline{\beta}}$ as $\lnsp{W}_1$ in Cor.\,\ref{cor:F0perpF1=>V0=EtensorF0}, we obtain
	\begin{equation}
		\label{eq:V0B=A(BF)0B}
		\lnsp{V}_{A_O \beta} = B_I \otimes \redspn{\lnsp{V}_{A_O \beta}}{B_I}{A_I F}.
	\end{equation}
	Since $B_I \otimes \redspn{\lnsp{V}_{A_O \beta}}{B_I}{A_I F} = \lnsp{V}_{A_O \beta} = \lnsp{V}_{\alpha \beta} \oplus \lnsp{V}_{\overline{\alpha} \beta}$ from Eq.\,\eqref{eq:V0B=A(BF)0B}, taking $B_I$ as $\mathcal{E}$, $\redspn{\lnsp{V}_{A_O \beta}}{B_I}{A_I F}$ as $\mathcal{F}$,
	$\lnsp{V}_{\alpha \beta}$ as $\lnsp{W}_0$ and $\lnsp{V}_{\overline{\alpha} \beta}$ as $\lnsp{W}_1$ in  in Cor.\,\ref{cor:EF=V0opV1->F0=V1perpFop(F0capF1)}, then $\redspn{\lnsp{V}_{\alpha \beta}}{B_I}{A_I F}$ can be decomposed into
	\begin{equation}
		\label{eq:BF00=WopX}
		\redspn{\lnsp{V}_{\alpha \beta}}{B_I}{A_I F} = \lnsp{Z}^| \oplus \lnsp{Z}^{\text{not}|}
	\end{equation}
	by using subspaces $\lnsp{Z}^|$ and $\lnsp{Z}^{\text{not}|}$ of $\redspn{\lnsp{V}_{\alpha \beta}}{B_I}{A_I F}$ satisfying
	\begin{equation}
		\label{eq:Wsub(BF)0ex0}
		\lnsp{Z}^| \subset \redspn{\lnsp{V}_{\overline{\alpha} \beta}}{B_I}{A_I F},
	\end{equation}
	\begin{equation}
		\label{eq:Xperp(BF)0ex0}
		\lnsp{Z}^{\text{not}|} \perp \redspn{\lnsp{V}_{\overline{\alpha} \beta}}{B_I}{A_I F}.
	\end{equation}
	From Eq.\,\eqref{eq:Wsub(BF)0ex0}, we have
	\begin{equation}
		\label{eq:F_WsubF0ex0}
		\ReviseFurther{(F^|_{\alpha \beta} :=)} \ \redspn{\lnsp{Z}^|}{A_I}{F} \subset \redspn{\lnsp{V}_{\overline{\alpha} \beta}}{A_I B_I}{F}.
	\end{equation}
	Since $B_I \otimes \redspn{\lnsp{V}_{A_O \beta}}{B_I}{A_I F} = \lnsp{V}_{A_O \beta} = \lnsp{V}_{\alpha \beta} \oplus \lnsp{V}_{\overline{\alpha} \beta}$ from Eq.\,\eqref{eq:V0B=A(BF)0B}, taking $B_I$ as $\mathcal{E}$\ReviseFurther{,}
	$\redspn{\lnsp{V}_{A_O \beta}}{B_I}{A_I F}$ as $\mathcal{F}$, $\lnsp{V}_{\alpha \beta}$ as $\lnsp{W}_0$, $\lnsp{V}_{\overline{\alpha} \beta}$ as $\lnsp{W}_1$ and $\lnsp{Z}^{\text{not}|}$ as $\mathcal{X}$
	in Lem.\,\ref{lem:UperpV0FeqEotUsubV1} and using Eq.\,\eqref{eq:Xperp(BF)0ex0} we obtain
	\begin{alignat}{2}
		  &            & B_I \otimes \lnsp{Z}^{\text{not}|}                                          & \subset \lnsp{V}_{\alpha \beta},
		\label{eq:A(B_X)subAF00} \\
		  & \therefore & B_I \otimes \redspn{\lnsp{Z}^{\text{not}|}}{A_I}{F}                         & \subset \redspn{\lnsp{V}_{\alpha \beta}}{A_I}{B_I F},                                                 \\
		  & \therefore & B_I \otimes \redspn{\lnsp{Z}^{\text{not}|}}{A_I}{F}                         & \perp \redspn{\lnsp{V}_{\overline{\alpha} \beta}}{A_I}{B_I F}, \quad (\because \text{Eq.\,\eqref{B}}) \\
		\label{eq:F_XperpF0ex0}
		  & \therefore & (F^{\text{not}|}_{\alpha \beta} :=) \ \redspn{\lnsp{Z}^{\text{not}|}}{A_I}{F} & \perp \redspn{\lnsp{V}_{\overline{\alpha} \beta}}{A_I B_I}{F}.
	\end{alignat}
	By combining Eq.\,\eqref{eq:BF00=WopX}, Eq.\,\eqref{eq:F_WsubF0ex0} and Eq.\,\eqref{eq:F_XperpF0ex0}, we complete the proof.
\end{proof}

Since $\redspn{\lnsp{V}_{\overline{\alpha} \beta}}{A_I B_I}{F} \perp \redspn{\lnsp{V}_{\alpha \overline{\beta}}}{A_I B_I}{F}$ due to Eq.\,\eqref{A}, from Eq.\,\eqref{eq:F|00subF0ex0} and  Eq.\,\eqref{eq:F-00subFex00} we obtain
\begin{equation}
	\label{eq:F|00perpF0not0Succeed}
	F^|_{\alpha \beta} \perp \redspn{\lnsp{V}_{\alpha \overline{\beta}}}{A_I B_I}{F}, \quad F^-_{\alpha \beta} \perp \redspn{\lnsp{V}_{\overline{\alpha} \beta}}{A_I B_I}{F},
\end{equation}
\begin{equation}
	\label{F|00perpF-00}
	F^|_{\alpha \beta} \perp F^-_{\alpha \beta}.
\end{equation}
In Lem.\,\ref{lem:UperpV->Uop(U^perpcapV^perp)opV}, take $F^|_{\alpha \beta}$ as $\mathcal{X}$ and $F^-_{\alpha \beta}$ as $\mathcal{Y}$ and then by Eqs.\,\eqref{eq:BF00=WopXprp}, \eqref{eq:F00=F-opFex-}, \eqref{F|00perpF-00}, we obtain
\begin{equation}
	\label{eq:V00decompSucceed}
	\redspn{\lnsp{V}_{\alpha \beta}}{A_I B_I}{F} = F^|_{\alpha \beta} \oplus F^\square_{\alpha \beta} \oplus F^-_{\alpha \beta},
\end{equation}
where
\begin{equation}
	F^\square_{\alpha \beta} := F^{\text{not}|}_{\alpha \beta} \cap F^{\text{not}-}_{\alpha \beta}.
\end{equation}
From Eq.\,\eqref{eq:Fex|00subF0ex0} and Eq.\,\eqref{eq:Fex-00perpFex00}, we obtain
\begin{equation}
	\label{eq:rigoWFsqcharactertized}
	F^{\square}_{\alpha \beta} \perp \redspn{\lnsp{V}_{\overline{\alpha} \beta}}{A_I B_I}{F}, \quad
	F^{\square}_{\alpha \beta} \perp \redspn{\lnsp{V}_{\alpha \overline{\beta}}}{A_I B_I}{F}.
\end{equation}
By combining Eq.\,\eqref{eq:F|00subF0ex0}, Eq.\,\eqref{eq:F-00subFex00},
Eq.\,\eqref{eq:F|00perpF0not0Succeed}, Eq.\,\eqref{eq:V00decompSucceed} and Eq.\,\eqref{eq:rigoWFsqcharactertized},
we obtain the following decomposition of $\redspn{\lnsp{V}_{\alpha \beta}}{A_I B_I}{F}$.

\begin{prop}
	\label{prop:FabDecomposition}
	For all $\ket{\alpha}^{A_O} \in A_O$ and for all $\ket{\beta}^{B_O} \in B_O$, $\redspn{\lnsp{V}_{\alpha \beta}}{A_I B_I}{F}$ is decomposed into
	\begin{equation}
		\label{eq:Fab=F|abopFsqabopF-ab}
		\redspn{\lnsp{V}_{\alpha \beta}}{A_I B_I}{F} =
		F^{|}_{\alpha \beta} \oplus F^{\square}_{\alpha \beta} \oplus F^{-}_{\alpha \beta}
	\end{equation}
	using subspaces $F^|_{\alpha \beta}$, $F^\square_{\alpha \beta}$, $F^-_{\alpha \beta}$ of $F$ satisfying
	\begin{align}
		\label{eq:rigoWF|absubFexab}
		F^|_{\alpha \beta}                             & \subset \redspn{ \lnsp{V}_{\overline{\alpha} \beta} }{A_I B_I}{F}, \\
		\label{eq:rigoWF|abperpFexab}
		F^-_{\alpha \beta},\, F^\square_{\alpha \beta} & \perp \redspn{ \lnsp{V}_{\overline{\alpha} \beta} }{A_I B_I}{F},   \\
		F^-_{\alpha \beta}                             & \subset \redspn{ \lnsp{V}_{\alpha \overline{\beta}} }{A_I B_I}{F}, \\
		\label{eq:F|abandFsqabperpFanotb}
		F^|_{\alpha \beta},\, F^\square_{\alpha \beta} & \perp \redspn{ \lnsp{V}_{\alpha \overline{\beta}} }{A_I B_I}{F}.
	\end{align}
\end{prop}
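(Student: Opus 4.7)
The plan is to build the three-way decomposition \eqref{eq:Fab=F|abopFsqabopF-ab} from two intermediate two-way decompositions of $\redspn{\lnsp{V}_{\alpha \beta}}{A_I B_I}{F}$---one separating the part ``shared with'' $\redspn{\lnsp{V}_{\overline{\alpha} \beta}}{A_I B_I}{F}$ from the part perpendicular to it, and a symmetric one based on $\redspn{\lnsp{V}_{\alpha \overline{\beta}}}{A_I B_I}{F}$---and then to glue them together using the orthogonality between these two ambient reduced subspaces supplied by condition \eqref{A}. I expect the main obstacle to be the repeated need to promote orthogonalities from the coarser $B_I$- or $A_I$-reduced levels, where the clean tensor factorisations of \ref{sec:partsp} naturally live, down to the finer $A_I B_I$-reduced level in which the statement is phrased.

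For the first intermediate decomposition, I would fix $\ket{\beta}^{B_O}$ and use condition \eqref{C} together with Cor.\,\ref{cor:F0perpF1=>V0=EtensorF0} to deduce the tensor factorisation $\lnsp{V}_{A_O \beta} = B_I \otimes \redspn{\lnsp{V}_{A_O \beta}}{B_I}{A_I F}$. Since $\lnsp{V}_{A_O \beta}$ splits as $\lnsp{V}_{\alpha \beta} \oplus \lnsp{V}_{\overline{\alpha} \beta}$ inside this factorised space, Cor.\,\ref{cor:EF=V0opV1->F0=V1perpFop(F0capF1)} applied at the $B_I$-reduced level yields $\redspn{\lnsp{V}_{\alpha \beta}}{B_I}{A_I F} = \lnsp{Z}^{|} \oplus \lnsp{Z}^{\text{not}|}$ with $\lnsp{Z}^{|} \subset \redspn{\lnsp{V}_{\overline{\alpha} \beta}}{B_I}{A_I F}$ and $\lnsp{Z}^{\text{not}|} \perp \redspn{\lnsp{V}_{\overline{\alpha} \beta}}{B_I}{A_I F}$. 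A further $A_I$-reduction then gives $F^{|}_{\alpha\beta} := \redspn{\lnsp{Z}^{|}}{A_I}{F}$ and $F^{\text{not}|}_{\alpha\beta} := \redspn{\lnsp{Z}^{\text{not}|}}{A_I}{F}$; the inclusion $F^{|}_{\alpha\beta} \subset \redspn{\lnsp{V}_{\overline{\alpha} \beta}}{A_I B_I}{F}$ is automatic, while $F^{\text{not}|}_{\alpha\beta} \perp \redspn{\lnsp{V}_{\overline{\alpha} \beta}}{A_I B_I}{F}$ is obtained by promoting the $B_I$-level orthogonality via the tensor inclusion $B_I \otimes \lnsp{Z}^{\text{not}|} \subset \lnsp{V}_{\alpha \beta}$, condition \eqref{B}, and Lem.\,\ref{lem:UperpV0FeqEotUsubV1}. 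A symmetric construction interchanging the roles of $\alpha$ and $\beta$ (and of \eqref{B} and \eqref{C}) produces $\redspn{\lnsp{V}_{\alpha \beta}}{A_I B_I}{F} = F^{-}_{\alpha\beta} \oplus F^{\text{not}-}_{\alpha\beta}$ with the analogous inclusion and orthogonality relative to $\redspn{\lnsp{V}_{\alpha \overline{\beta}}}{A_I B_I}{F}$.

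To glue the two decompositions, I would invoke condition \eqref{A} to obtain $F^{|}_{\alpha\beta} \perp \redspn{\lnsp{V}_{\alpha \overline{\beta}}}{A_I B_I}{F}$ and $F^{-}_{\alpha\beta} \perp \redspn{\lnsp{V}_{\overline{\alpha} \beta}}{A_I B_I}{F}$, and in particular $F^{|}_{\alpha\beta} \perp F^{-}_{\alpha\beta}$. Setting $F^{\square}_{\alpha\beta} := F^{\text{not}|}_{\alpha\beta} \cap F^{\text{not}-}_{\alpha\beta}$ and applying Lem.\,\ref{lem:UperpV->Uop(U^perpcapV^perp)opV} inside the finite-dimensional space $\redspn{\lnsp{V}_{\alpha \beta}}{A_I B_I}{F}$ with $\mathcal{X} = F^{|}_{\alpha\beta}$ and $\mathcal{Y} = F^{-}_{\alpha\beta}$ then yields the required three-way orthogonal sum, and the perpendicularities of $F^{\square}_{\alpha\beta}$ to the two ambient ``complement'' reduced subspaces are inherited from the inclusions $F^{\square}_{\alpha\beta} \subset F^{\text{not}|}_{\alpha\beta}$ and $F^{\square}_{\alpha\beta} \subset F^{\text{not}-}_{\alpha\beta}$ established in the previous step.
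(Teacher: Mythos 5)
Your proposal is correct and takes essentially the same route as the paper: your two intermediate splittings are exactly the paper's Prop.~\ref{prop:BF00=WopX}, obtained the same way (tensor factorisation of $\lnsp{V}_{A_O \beta}$ via Cor.~\ref{cor:F0perpF1=>V0=EtensorF0}, the two-way split at the $B_I$-reduced level via Cor.~\ref{cor:EF=V0opV1->F0=V1perpFop(F0capF1)}, and promotion of the orthogonality to the $A_I B_I$-reduced level through Lem.~\ref{lem:UperpV0FeqEotUsubV1} together with condition \eqref{B}). The gluing step, using condition \eqref{A}, setting $F^{\square}_{\alpha \beta} := F^{\text{not}|}_{\alpha \beta} \cap F^{\text{not}-}_{\alpha \beta}$, and applying Lem.~\ref{lem:UperpV->Uop(U^perpcapV^perp)opV} inside $\redspn{\lnsp{V}_{\alpha \beta}}{A_I B_I}{F}$, is likewise the paper's argument verbatim, with the final perpendicularities of $F^{\square}_{\alpha \beta}$ inherited from the inclusions exactly as in the paper.
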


%%%%%%%%%%%%%%%%%%%%%%%%%%%%%%%%%%%%%%%%%%%%%%%%%%%%%%%%%%%%%%%%%%%%%%%%
%%%%%%%%%%%%%%%%%%%%%%%%      NEW SUBSECTION    %%%%%%%%%%%%%%%%%%%%%%%%
%%%%%%%%%%%%%%%%%%%%%%%%%%%%%%%%%%%%%%%%%%%%%%%%%%%%%%%%%%%%%%%%%%%%%%%%

\subsection{Decomposing $\lnsp{V}_{\alpha \beta}$}
\label{secApp:divideV}

%%%%%%%%%%%%%%%%%%%%%%%%%%%%%%%%%%%%%%%%%%%%%%%%%%%%%%%%%%%%%%%%%%%%%%%%
%%%%%%%%%%%%%%%%%%%%%%%%      NEW SUBSECTION    %%%%%%%%%%%%%%%%%%%%%%%%
%%%%%%%%%%%%%%%%%%%%%%%%%%%%%%%%%%%%%%%%%%%%%%%%%%%%%%%%%%%%%%%%%%%%%%%%

We decompose the subspace $\lnsp{V}_{\alpha \beta}$ into $\lnsp{V}^|_{\alpha \beta} \oplus \lnsp{V}^\square_{\alpha \beta} \oplus \lnsp{V}^-_{\alpha \beta}$.
Let $\ket{\alpha}^{A_I} \in A_I$ and $\ket{\beta}^{B_I} \in B_I$. Define $\lnsp{V}^{|}_{\alpha \beta}$, $\lnsp{V}^{\square}_{\alpha \beta}$, $\lnsp{V}^{-}_{\alpha \beta}$ as
\begin{equation}
	\lnsp{V}^|_{\alpha \beta} := \projop^|_{\alpha \beta} (\lnsp{V}_{\alpha \beta}),
\end{equation}
\begin{equation}
	\lnsp{V}^\square_{\alpha \beta} := \projop^\square_{\alpha \beta} (\lnsp{V}_{\alpha \beta}),
\end{equation}
\begin{equation}
	\lnsp{V}^-_{\alpha \beta} := \projop^-_{\alpha \beta} (\lnsp{V}_{\alpha \beta}),
\end{equation}
where $\projop^|_{\alpha \beta}$, $\projop^\square_{\alpha \beta}$, $\projop^-_{\alpha \beta}$ are projections onto $A_I \otimes B_I \otimes F^|_{\alpha \beta}$, $A_I \otimes B_I \otimes F^\square_{\alpha \beta}$, $A_I \otimes B_I \otimes F^-_{\alpha \beta}$, respectively.
By definition we have
\begin{align}
	\label{eq:VF|00subF|00}
	\redspn{\lnsp{V}^|_{\alpha \beta}}{A_I B_I}{F}       & = F^|_{\alpha \beta},       \\
	\label{eq:VFsq00subFsq00}
	\redspn{\lnsp{V}^\square_{\alpha \beta}}{A_I B_I}{F} & = F^\square_{\alpha \beta}, \\
	\label{eq:VF-00subF-00}
	\redspn{\lnsp{V}^-_{\alpha \beta}}{A_I B_I}{F}       & = F^-_{\alpha \beta}.
\end{align}
From $\lnsp{V}_{\alpha \beta} \subset A_I \otimes B_I \otimes \redspn{\lnsp{V}_{\alpha \beta}}{A_I B_I}{F} =  (A_I \otimes B_I \otimes F^|_{\alpha \beta}) \oplus (A_I \otimes B_I \otimes F^\square_{\alpha \beta}) \oplus (A_I \otimes B_I \otimes F^-_{\alpha \beta})$, we have
\begin{equation}
	\label{eq:V00subV|opVsqopV-grlijgrogj}
	\lnsp{V}_{\alpha \beta} \subset \lnsp{V}^|_{\alpha \beta} \oplus \lnsp{V}^\square_{\alpha \beta} \oplus \lnsp{V}^-_{\alpha \beta}.
\end{equation}
To show that $\lnsp{V}_{\alpha \beta} \supset \lnsp{V}^|_{\alpha \beta} \oplus \lnsp{V}^\square_{\alpha \beta} \oplus \lnsp{V}^-_{\alpha \beta}$, we prepare the following proposition.
\begin{prop}
	\label{prop:g78ayg}
	\begin{align}
		\label{eq:A(BF)00subV00prop}
		\lnsp{V}_{\alpha \beta} & \supset A_I \otimes \redspn{\lnsp{V}^|_{\alpha \beta}}{A_I}{B_I F},                   \\
		\label{eq:AB(F)00subV00prop}
		\lnsp{V}_{\alpha \beta} & \supset A_I \otimes B_I \otimes \redspn{\lnsp{V}^\square_{\alpha \beta}}{A_I B_I}{F}, \\
		\label{eq:B(AF)00subV00prop}
		\lnsp{V}_{\alpha \beta} & \supset B_I \otimes \redspn{\lnsp{V}^-_{\alpha \beta}}{B_I}{A_I F}.
	\end{align}
\end{prop}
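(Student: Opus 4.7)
Our plan is to reduce each of the three inclusions to an orthogonality check inside the full output space. Since $U$ is unitary and $P \otimes A_O \otimes B_O$ splits orthogonally into the four subspaces $P \otimes \alpha \otimes \beta$, $P \otimes \alpha \otimes \overline{\beta}$, $P \otimes \overline{\alpha} \otimes \beta$, $P \otimes \overline{\alpha} \otimes \overline{\beta}$, we obtain the orthogonal direct sum
\[
A_I \otimes B_I \otimes F = \lnsp{V}_{\alpha \beta} \oplus \lnsp{V}_{\alpha \overline{\beta}} \oplus \lnsp{V}_{\overline{\alpha} \beta} \oplus \lnsp{V}_{\overline{\alpha} \overline{\beta}}.
\]
Hence a subspace $\mathcal{K} \subset A_I \otimes B_I \otimes F$ is contained in $\lnsp{V}_{\alpha \beta}$ if and only if it is orthogonal to each of the other three summands, and our task is to verify this for the three candidate subspaces $\mathcal{K}$ appearing in Prop.\,\ref{prop:g78ayg}.

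The middle inclusion \eqref{eq:AB(F)00subV00prop} with $\mathcal{K} = A_I \otimes B_I \otimes F^{\square}_{\alpha\beta}$ is the easiest: by Eqs.\,\eqref{eq:rigoWF|abperpFexab} and \eqref{eq:F|abandFsqabperpFanotb}, $F^{\square}_{\alpha\beta}$ is orthogonal to both $\redspn{\lnsp{V}_{\overline{\alpha}\beta}}{A_I B_I}{F}$ and $\redspn{\lnsp{V}_{\alpha\overline{\beta}}}{A_I B_I}{F}$, and since $F^{\square}_{\alpha\beta} \subset \redspn{\lnsp{V}_{\alpha\beta}}{A_I B_I}{F}$, condition (A) gives orthogonality to $\redspn{\lnsp{V}_{\overline{\alpha}\overline{\beta}}}{A_I B_I}{F}$ as well, yielding all three desired orthogonalities.

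For the first inclusion \eqref{eq:A(BF)00subV00prop}, take $\mathcal{K} = A_I \otimes \redspn{\lnsp{V}^{|}_{\alpha\beta}}{A_I}{B_I F}$. Because $\lnsp{V}^{|}_{\alpha\beta} \subset A_I \otimes B_I \otimes F^{|}_{\alpha\beta}$, we have $\mathcal{K} \subset A_I \otimes B_I \otimes F^{|}_{\alpha\beta}$, and as in the previous case Eq.\,\eqref{eq:F|abandFsqabperpFanotb} together with condition (A) yield $\mathcal{K} \perp \lnsp{V}_{\alpha\overline{\beta}}$ and $\mathcal{K} \perp \lnsp{V}_{\overline{\alpha}\overline{\beta}}$. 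The nontrivial orthogonality is $\mathcal{K} \perp \lnsp{V}_{\overline{\alpha}\beta}$, and it cannot be deduced from $F^{|}_{\alpha\beta}$ alone, since $F^{|}_{\alpha\beta} \subset \redspn{\lnsp{V}_{\overline{\alpha}\beta}}{A_I B_I}{F}$ by construction. Instead we invoke condition (C), which via Cor.\,\ref{cor:F0perpF1=>V0=EtensorF0} gives $\lnsp{V}_{\alpha B_O} = A_I \otimes \redspn{\lnsp{V}_{\alpha B_O}}{A_I}{B_I F}$ and $\lnsp{V}_{\alpha B_O} \perp \lnsp{V}_{\overline{\alpha} B_O} \supset \lnsp{V}_{\overline{\alpha}\beta}$; it therefore suffices to establish the subclaim $\redspn{\lnsp{V}^{|}_{\alpha\beta}}{A_I}{B_I F} \subset \redspn{\lnsp{V}_{\alpha B_O}}{A_I}{B_I F}$, from which $\mathcal{K} \subset \lnsp{V}_{\alpha B_O}$ and hence $\mathcal{K} \perp \lnsp{V}_{\overline{\alpha}\beta}$ follow immediately.

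This subclaim is the main technical obstacle: $\redspn{\lnsp{V}^{|}_{\alpha\beta}}{A_I}{B_I F}$ is obtained from $\redspn{\lnsp{V}_{\alpha\beta}}{A_I}{B_I F}$ by projecting the $F$-factor onto $F^{|}_{\alpha\beta}$, so one must control how this projection interacts with the ambient $\redspn{\lnsp{V}_{\alpha B_O}}{A_I}{B_I F}$. The plan is to bring in condition (B), which via the same corollary yields the parallel factorization $\lnsp{V}_{A_O\beta} = B_I \otimes \redspn{\lnsp{V}_{A_O\beta}}{B_I}{A_I F}$, and then to apply Cor.\,\ref{cor:EF=V0opV1->F0=V1perpFop(F0capF1)} to the orthogonal decomposition $\lnsp{V}_{A_O\beta} = \lnsp{V}_{\alpha\beta} \oplus \lnsp{V}_{\overline{\alpha}\beta}$ in order to pin down precisely which part of $\redspn{\lnsp{V}_{\alpha\beta}}{A_I}{B_I F}$ survives the projection onto $B_I \otimes F^{|}_{\alpha\beta}$; it is this ``perpendicular-to-$\overline{\alpha}\beta$'' piece that should verify the subclaim. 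The third inclusion \eqref{eq:B(AF)00subV00prop} then follows by a symmetric argument with the roles of $A$ and $B$ (and of conditions (B) and (C)) interchanged.
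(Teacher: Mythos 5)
Your reduction of each inclusion to orthogonality against the three complementary subspaces $\lnsp{V}_{\alpha \overline{\beta}}$, $\lnsp{V}_{\overline{\alpha} \beta}$, $\lnsp{V}_{\overline{\alpha} \overline{\beta}}$ is sound (it is the paper's own mechanism, packaged in Lem.\,\ref{lem:UperpV0FeqEotUsubV1}), and your treatment of the middle inclusion \eqref{eq:AB(F)00subV00prop} is complete and coincides with the paper's. You have also correctly isolated the crux of the first inclusion: your ``subclaim'' is equivalent to the orthogonality $\redspn{\lnsp{V}^|_{\alpha \beta}}{A_I}{B_I F} \perp \redspn{\lnsp{V}_{\overline{\alpha} \beta}}{A_I}{B_I F}$, which is exactly Eq.\,\eqref{eq:gajifiv4} in the paper's proof. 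But you never prove it, and the plan you sketch cannot: applying Cor.\,\ref{cor:EF=V0opV1->F0=V1perpFop(F0capF1)} to $B_I \otimes \redspn{\lnsp{V}_{A_O \beta}}{B_I}{A_I F} = \lnsp{V}_{\alpha \beta} \oplus \lnsp{V}_{\overline{\alpha} \beta}$ merely reconstructs the decomposition $\redspn{\lnsp{V}_{\alpha \beta}}{B_I}{A_I F} = \lnsp{Z}^| \oplus \lnsp{Z}^{\text{not}|}$ by which $F^|_{\alpha \beta}$ was defined in Prop.\,\ref{prop:BF00=WopX}, and it identifies the \emph{wrong} piece. Since $\lnsp{Z}^| \subset A_I \otimes F^|_{\alpha \beta}$ and $\lnsp{Z}^{\text{not}|} \subset A_I \otimes F^{\text{not}|}_{\alpha \beta}$ with $F^|_{\alpha \beta} \perp F^{\text{not}|}_{\alpha \beta}$, the projection $\projop^|_{\alpha \beta}$ annihilates the component over $\lnsp{Z}^{\text{not}|}$, so $\lnsp{V}^|_{\alpha \beta} \subset B_I \otimes \lnsp{Z}^|$: the part of $\lnsp{V}_{\alpha \beta}$ that survives the projection sits over the \emph{share} piece $\lnsp{Z}^| \subset \redspn{\lnsp{V}_{\overline{\alpha} \beta}}{B_I}{A_I F}$, not the ``perpendicular-to-$\overline{\alpha}\beta$'' piece as you assert. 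Everything your construction delivers at the $B_I$-reduction and coarse-$F$ levels (recall also $F^|_{\alpha \beta} \subset \redspn{\lnsp{V}_{\overline{\alpha} \beta}}{A_I B_I}{F}$, Eq.\,\eqref{eq:rigoWF|absubFexab}) exhibits overlap of $\lnsp{V}^|_{\alpha \beta}$ with $\overline{\alpha}\beta$, pulling in the opposite direction from the orthogonality you need at the $A_I$-reduction level.

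The missing idea is the complementary-projection decomposition the paper uses: for $\tket{\eta}^{A_I B_I F} \in \lnsp{V}_{\alpha \beta}$ one writes $\projop^|_{\alpha \beta} \tket{\eta}^{A_I B_I F} = \tket{\eta}^{A_I B_I F} - \projop^\square_{\alpha \beta} \tket{\eta}^{A_I B_I F} - \projop^-_{\alpha \beta} \tket{\eta}^{A_I B_I F}$, so that $\lnsp{V}^|_{\alpha \beta} \subset \lnsp{V}_{\alpha \beta} + \lnsp{V}^\square_{\alpha \beta} + \lnsp{V}^-_{\alpha \beta}$. The first summand's $A_I$-reduction is orthogonal to $\redspn{\lnsp{V}_{\overline{\alpha} \beta}}{A_I}{B_I F}$ by your condition (C), i.e., $\redspn{\lnsp{V}_{\alpha B_O}}{A_I}{B_I F} \perp \redspn{\lnsp{V}_{\overline{\alpha} B_O}}{A_I}{B_I F}$; the other two are orthogonal to it already at the coarse level, because $\redspn{\lnsp{V}^\square_{\alpha \beta}}{A_I B_I}{F} = F^\square_{\alpha \beta}$ and $\redspn{\lnsp{V}^-_{\alpha \beta}}{A_I B_I}{F} = F^-_{\alpha \beta}$ are perpendicular to $\redspn{\lnsp{V}_{\overline{\alpha} \beta}}{A_I B_I}{F}$ by Eq.\,\eqref{eq:rigoWF|abperpFexab}, and coarse orthogonality descends to any finer reduction. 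Thus the proof of your subclaim unavoidably brings in $\lnsp{V}^\square_{\alpha \beta}$ and $\lnsp{V}^-_{\alpha \beta}$, which your proposal never touches in the analysis of $\lnsp{V}^|_{\alpha \beta}$; once Eq.\,\eqref{eq:gajifiv4} is established this way, your own reduction closes the first inclusion, and the third does follow by the $A \leftrightarrow B$ symmetry as you say.
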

\begin{proof}
	First, we show Eq.\,\eqref{eq:A(BF)00subV00prop}.
	From Eq.\,\eqref{eq:VF|00subF|00} and $F^|_{\alpha \beta} \perp \redspn{\lnsp{V}_{A_O \overline{\beta}}}{A_I B_I}{ F}$ (due to Eq.\,\eqref{A} and Eq.\,\eqref{eq:F|abandFsqabperpFanotb} in Prop.\,\ref{prop:FabDecomposition}),
	we obtain $\redspn{\lnsp{V}^|_{\alpha \beta}}{A_I B_I}{F} \perp \redspn{\lnsp{V}_{A_O \overline{\beta}}}{A_I B_I}{F}$. Thus, we obtain
	\begin{equation}
		\label{eq:gjjkjgieb}
		\redspn{\lnsp{V}^|_{\alpha \beta}}{A_I}{B_I F} \perp \redspn{\lnsp{V}_{A_O \overline{\beta}}}{A_I}{B_I F}.
	\end{equation}
	Let $\projop^|_{\alpha \beta} \tket{\eta}^{A_I B_I F}$ be an arbitrary vector of $\lnsp{V}^|_{\alpha \beta}$ where $\tket{\eta}^{A_I B_I F} \in \lnsp{V}_{\alpha \beta}$.
	From $\lnsp{V}_{\alpha \beta} \subset A_I \otimes B_I \otimes \redspn{\lnsp{V}_{\alpha \beta}}{A_I B_I}{F}$ and Eq.\,\eqref{eq:Fab=F|abopFsqabopF-ab} in Prop.\,\ref{prop:FabDecomposition},
	$\tket{\eta}^{A_I B_I F}$ can be decomposed into
	$\tket{\eta}^{A_I B_I F} = \projop^|_{\alpha \beta} \tket{\eta}^{A_I B_I F} + \projop^\square_{\alpha \beta} \tket{\eta}^{A_I B_I F} + \projop^-_{\alpha \beta} \tket{\eta}^{A_I B_I F}$. Thus,
	$\projop^|_{\alpha \beta} \tket{\eta}^{A_I B_I F} = \tket{\eta}^{A_I B_I F} - \projop^\square_{\alpha \beta} \tket{\eta}^{A_I B_I F} - \projop^-_{\alpha \beta} \tket{\eta}^{A_I B_I F} \in \lnsp{V}_{\alpha \beta} + \lnsp{V}^\square_{\alpha \beta} + \lnsp{V}^-_{\alpha \beta}$.
	Therefore, $\lnsp{V}^|_{\alpha \beta} \subset \lnsp{V}_{\alpha \beta} + \lnsp{V}^\square_{\alpha \beta} + \lnsp{V}^-_{\alpha \beta}$ holds. We obtain
	\begin{equation}
		\label{eq:gajifiv}
		\redspn{\lnsp{V}^|_{\alpha \beta}}{A_I}{B_I F} \subset \redspn{\lnsp{V}_{\alpha \beta}}{A_I}{B_I F} + \redspn{\lnsp{V}^\square_{\alpha \beta}}{A_I}{B_I F} + \redspn{\lnsp{V}^-_{\alpha \beta}}{A_I}{B_I F}.
	\end{equation}
	From Eq.\,\eqref{B}, we have
	\begin{equation}
		\label{eq:gajifiv2}
		\redspn{\lnsp{V}_{\alpha \beta}}{A_I}{B_I F} \perp \redspn{\lnsp{V}_{\overline{\alpha} \beta}}{A_I}{B_I F}.
	\end{equation}
	Equation \eqref{eq:VFsq00subFsq00}, Eq.\,\eqref{eq:VF-00subF-00} and Eq.\,\eqref{eq:rigoWF|abperpFexab} in Prop.\,\ref{prop:FabDecomposition}
	imply that $\redspn{\lnsp{V}^\square_{\alpha \beta}}{A_I B_I}{F}, \redspn{\lnsp{V}^-_{\alpha \beta}}{A_I B_I}{F} \perp \redspn{\lnsp{V}_{\overline{\alpha} \beta}}{A_I B_I}{F}$. Thus,
	\begin{equation}
		\label{eq:gajifiv3}
		\redspn{\lnsp{V}^\square_{\alpha \beta}}{A_I}{B_I F}, \redspn{\lnsp{V}^-_{\alpha \beta}}{A_I}{B_I F} \perp \redspn{\lnsp{V}_{\overline{\alpha} \beta}}{A_I}{B_I F}.
	\end{equation}
	From Eq.\,\eqref{eq:gajifiv}, Eq.\,\eqref{eq:gajifiv2} and Eq.\,\eqref{eq:gajifiv3}, we obtain
	\begin{equation}
		\label{eq:gajifiv4}
		\redspn{\lnsp{V}^|_{\alpha \beta}}{A_I}{B_I F} \perp \redspn{\lnsp{V}_{\overline{\alpha} \beta}}{A_I}{B_I F}.
	\end{equation}
	Since $A_I \otimes (B_I \otimes F) = \lnsp{V}_{\alpha \beta} \oplus (\lnsp{V}_{\overline{\alpha} \beta} \oplus \lnsp{V}_{A_O \overline{\beta}})$,
	taking $A_I$ as $\mathcal{E}$, $B_I \otimes F$ as $\mathcal{F}$, $\lnsp{V}_{\alpha \beta}$ as $\lnsp{W}_0$, $\lnsp{V}_{\overline{\alpha} \beta} \oplus \lnsp{V}_{A_O \overline{\beta}}$ as $\lnsp{W}_1$ and $\redspn{\lnsp{V}^|_{\alpha \beta}}{A_I}{B_I F}$ as $\mathcal{X}$ in Lem.\,\ref{lem:UperpV0FeqEotUsubV1}
	and  using Eq.\,\eqref{eq:gajifiv4} and Eq.\,\eqref{eq:gjjkjgieb}, we derive Eq.\,\eqref{eq:A(BF)00subV00prop}.
	Similarly, we derive Eq.\,\eqref{eq:B(AF)00subV00prop}.

	From Eq.\,\eqref{eq:VFsq00subFsq00},
	$F^\square_{\alpha \beta} \perp \redspn{\lnsp{V}_{\overline{\alpha} \overline{\beta}}}{A_I B_I}{F}$ (due to Eq.\,\eqref{A}) and using
	Eq.\,\eqref{eq:rigoWF|abperpFexab} and Eq.\,\eqref{eq:F|abandFsqabperpFanotb} in Prop.\,\ref{prop:FabDecomposition}, we obtain
	\begin{equation}
		\label{eq:gajifiv5}
		\redspn{\lnsp{V}^\square_{\alpha \beta}}{A_I B_I}{F} \perp \redspn{\lnsp{V}_{\overline{\alpha} \overline{\beta}}}{A_I B_I}{F}, \redspn{\lnsp{V}_{\overline{\alpha} \beta}}{A_I B_I}{F}, \redspn{\lnsp{V}_{\alpha \overline{\beta}}}{A_I B_I}{F}.
	\end{equation}
	Since $(A_I \otimes B_I) \otimes F = \lnsp{V}_{\alpha \beta} \oplus (\lnsp{V}_{\overline{\alpha} \overline{\beta}} \oplus \lnsp{V}_{\overline{\alpha} \beta} \oplus \lnsp{V}_{\alpha \overline{\beta}})$,
	taking $A_I \otimes B_I$ as $\mathcal{E}$, $F$ as $\mathcal{F}$,  $\lnsp{V}_{\alpha \beta}$ as $\lnsp{W}_0$, $\lnsp{V}_{\overline{\alpha} \overline{\beta}} \oplus \lnsp{V}_{\overline{\alpha} \beta} \oplus \lnsp{V}_{\alpha \overline{\beta}}$ as $\lnsp{W}_1$
	and $\redspn{\lnsp{V}^\square_{\alpha \beta}}{A_I B_I}{F}$ as $\mathcal{X}$ in Lem.\,\ref{lem:UperpV0FeqEotUsubV1}
	and
	using Eq.\,\eqref{eq:gajifiv5}, we derive Eq.\,\eqref{eq:AB(F)00subV00prop}.
\end{proof}
From Eq.\,\eqref{eq:V00subV|opVsqopV-grlijgrogj}, we have
$\lnsp{V}_{\alpha \beta} \subset (A_I \otimes \redspn{\lnsp{V}^|_{\alpha \beta}}{A_I}{B_I F}) \oplus (A_I \otimes B_I \otimes \redspn{\lnsp{V}^\square_{\alpha \beta}}{A_I B_I}{F}) \oplus (B_I \otimes \redspn{\lnsp{V}^-_{\alpha \beta}}{B_I}{A_I F})$.
Note that from Eq.\,\eqref{eq:Fab=F|abopFsqabopF-ab} in Prop.\,\ref{prop:FabDecomposition},
$A_I \otimes \redspn{\lnsp{V}^|_{\alpha \beta}}{A_I}{B_I F}$, $A_I \otimes B_I \otimes \redspn{\lnsp{V}^\square_{\alpha \beta}}{A_I B_I}{F}$ and $B_I \otimes \redspn{\lnsp{V}^-_{\alpha \beta}}{B_I}{A_I F}$ are orthogonal.
Thus, using Prop.\,\ref{prop:g78ayg} we obtain
\begin{equation}
	\label{eq:ghbuomld}
	\lnsp{V}_{\alpha \beta} = (A_I \otimes \redspn{\lnsp{V}^|_{\alpha \beta}}{A_I}{B_I F}) \oplus (A_I \otimes B_I \otimes \redspn{\lnsp{V}^\square_{\alpha \beta}}{A_I B_I}{F}) \oplus (B_I \otimes \redspn{\lnsp{V}^-_{\alpha \beta}}{B_I}{A_I F}).
\end{equation}
From Eq.\,\eqref{eq:V00subV|opVsqopV-grlijgrogj} and Eq.\,\eqref{eq:ghbuomld}, we have
\begin{align}
	\lnsp{V}^|_{\alpha \beta}       & = A_I \otimes \redspn{\lnsp{V}^|_{\alpha \beta}}{A_I}{B_I F},                   \\
	\lnsp{V}^\square_{\alpha \beta} & = A_I \otimes B_I \otimes \redspn{\lnsp{V}^\square_{\alpha \beta}}{A_I B_I}{F}, \\
	\lnsp{V}^-_{\alpha \beta}       & = B_I \otimes \redspn{\lnsp{V}^-_{\alpha \beta}}{B_I}{A_I F},
\end{align}
and thus
\begin{equation}
	\lnsp{V}_{\alpha \beta} = \lnsp{V}^|_{\alpha \beta} \oplus \lnsp{V}^\square_{\alpha \beta} \oplus \lnsp{V}^-_{\alpha \beta}.
\end{equation}

To summarize, we obtain the following the decomposition of $\lnsp{V}_{\alpha \beta}$.
\begin{prop}
	\label{prop:DecompositionOfVab}
	For all $\ket{\alpha}^{A_O} \in A_O$ and for all $\ket{\beta}^{B_O} \in B_O$,
	$\lnsp{V}_{\alpha \beta}$ is decomposed into
	\begin{equation}
		\label{eq:Vab=V|abopVsqabopV-absgahra8h9u4}
		\lnsp{V}_{\alpha \beta} = \lnsp{V}^|_{\alpha \beta} \oplus \lnsp{V}^\square_{\alpha \beta} \oplus \lnsp{V}^-_{\alpha \beta},
	\end{equation}
	using subspaces $\lnsp{V}^|_{\alpha \beta}$, $\lnsp{V}^\square_{\alpha \beta}$, $\lnsp{V}^-_{\alpha \beta}$ of $A_I \otimes B_I \otimes F$ satisfying
	\begin{align}
		\label{eq:VF|absubF|ab}
		\redspn{\lnsp{V}^|_{\alpha \beta}}{A_I B_I}{F}       & = F^|_{\alpha \beta},       \\
		\label{eq:VFsqabsubFsqab}
		\redspn{\lnsp{V}^\square_{\alpha \beta}}{A_I B_I}{F} & = F^\square_{\alpha \beta}, \\
		\label{eq:VF-absubF-ab}
		\redspn{\lnsp{V}^-_{\alpha \beta}}{A_I B_I}{F}       & = F^-_{\alpha \beta},
	\end{align}
	and
	\begin{align}
		\lnsp{V}^|_{\alpha \beta}       & = A_I \otimes \redspn{\lnsp{V}^|_{\alpha \beta}}{A_I}{B_I F}, \label{eq:V=AAVfuahurbu} \\
		\lnsp{V}^\square_{\alpha \beta} & = A_I \otimes B_I \otimes \redspn{\lnsp{V}^\square_{\alpha \beta}}{A_I B_I}{F},        \\
		\lnsp{V}^-_{\alpha \beta}       & = B_I \otimes \redspn{\lnsp{V}^-_{\alpha \beta}}{B_I}{A_I F}.
	\end{align}
\end{prop}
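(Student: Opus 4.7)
\medskip

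\textbf{Proof plan for Prop.\,\ref{prop:DecompositionOfVab}.} The plan is to define the three summand subspaces as projections of $\lnsp{V}_{\alpha\beta}$ onto the factors suggested by the ambient $F$-decomposition already established in Prop.\,\ref{prop:FabDecomposition}, and then show the reverse inclusion by proving that each of the projected pieces carries a clean tensor structure on the $A_I$, $B_I$ or $A_I \otimes B_I$ side. Concretely, first I would set
\begin{equation}
\lnsp{V}^{|}_{\alpha\beta} := \projop^{|}_{\alpha\beta}(\lnsp{V}_{\alpha\beta}), \qquad
\lnsp{V}^{\square}_{\alpha\beta} := \projop^{\square}_{\alpha\beta}(\lnsp{V}_{\alpha\beta}), \qquad
\lnsp{V}^{-}_{\alpha\beta} := \projop^{-}_{\alpha\beta}(\lnsp{V}_{\alpha\beta}),
\end{equation}
where the projections are onto $A_I \otimes B_I \otimes F^{?}_{\alpha\beta}$ for $? \in \{|,\square,-\}$. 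Since these three target subspaces are pairwise orthogonal by Prop.\,\ref{prop:FabDecomposition}, the reduced-subspace identities \eqref{eq:VF|absubF|ab}--\eqref{eq:VF-absubF-ab} are immediate, and $\lnsp{V}_{\alpha\beta} \subset \lnsp{V}^{|}_{\alpha\beta} \oplus \lnsp{V}^{\square}_{\alpha\beta} \oplus \lnsp{V}^{-}_{\alpha\beta}$ follows from $\lnsp{V}_{\alpha\beta} \subset A_I \otimes B_I \otimes \redspn{\lnsp{V}_{\alpha\beta}}{A_I B_I}{F}$ and \eqref{eq:Fab=F|abopFsqabopF-ab}.

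The real content is the reverse inclusion, which I will derive by proving the three tensor-factorization identities
\begin{equation}
\lnsp{V}^{|}_{\alpha\beta} = A_I \otimes \redspn{\lnsp{V}^{|}_{\alpha\beta}}{A_I}{B_I F}, \quad
\lnsp{V}^{\square}_{\alpha\beta} = A_I \otimes B_I \otimes \redspn{\lnsp{V}^{\square}_{\alpha\beta}}{A_I B_I}{F}, \quad
\lnsp{V}^{-}_{\alpha\beta} = B_I \otimes \redspn{\lnsp{V}^{-}_{\alpha\beta}}{B_I}{A_I F}.
\end{equation}
For each, the strategy is Lem.\,\ref{lem:UperpV0FeqEotUsubV1} applied with the ambient decomposition $A_I \otimes B_I \otimes F = \lnsp{V}_{\alpha\beta} \oplus (\lnsp{V}_{\overline\alpha \beta} \oplus \lnsp{V}_{\alpha \overline\beta} \oplus \lnsp{V}_{\overline\alpha \overline\beta})$, choosing the bipartition and the test subspace $\mathcal{X}$ to match the side on which the tensor factor is expected. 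The $\square$-piece is easiest: by \eqref{eq:VFsqabsubFsqab} its $F$-reduction equals $F^{\square}_{\alpha\beta}$, which is orthogonal to $\redspn{\lnsp{V}_{\overline\alpha \beta}}{A_I B_I}{F}$, $\redspn{\lnsp{V}_{\alpha \overline\beta}}{A_I B_I}{F}$ by Prop.\,\ref{prop:FabDecomposition}, and to $\redspn{\lnsp{V}_{\overline\alpha \overline\beta}}{A_I B_I}{F}$ by \eqref{A}, so Lem.\,\ref{lem:UperpV0FeqEotUsubV1} over the split $(A_I \otimes B_I) \otimes F$ delivers the factorization.

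The $|$-piece and the $-$-piece are the delicate cases and will be the main obstacle, because here $F^{|}_{\alpha\beta}$ is \emph{contained in} $\redspn{\lnsp{V}_{\overline\alpha \beta}}{A_I B_I}{F}$ rather than orthogonal to it, so the $F$-side orthogonality alone does not suffice. The trick is to pass to the $A_I$-reduction and invoke condition \eqref{B}: using that $\lnsp{V}^{|}_{\alpha\beta} = \lnsp{V}_{\alpha\beta} - (\lnsp{V}^{\square}_{\alpha\beta}+\lnsp{V}^{-}_{\alpha\beta})$ and that the latter two pieces already have their $A_I$-reductions orthogonal to $\redspn{\lnsp{V}_{\overline\alpha \beta}}{A_I}{B_I F}$ (thanks to \eqref{eq:rigoWF|abperpFexab} in Prop.\,\ref{prop:FabDecomposition}, combined with \eqref{B}), one obtains $\redspn{\lnsp{V}^{|}_{\alpha\beta}}{A_I}{B_I F} \perp \redspn{\lnsp{V}_{\overline\alpha \beta}}{A_I}{B_I F}$. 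Together with the $F$-side orthogonality $\redspn{\lnsp{V}^{|}_{\alpha\beta}}{A_I B_I}{F} = F^{|}_{\alpha\beta} \perp \redspn{\lnsp{V}_{\alpha \overline\beta}}{A_I B_I}{F}$ from \eqref{eq:F|abandFsqabperpFanotb}, and orthogonality to $\redspn{\lnsp{V}_{\overline\alpha \overline\beta}}{A_I B_I}{F}$ from \eqref{A}, Lem.\,\ref{lem:UperpV0FeqEotUsubV1} applied across the split $A_I \otimes (B_I \otimes F)$ yields the factorization. The $-$-piece is handled symmetrically, using \eqref{C} instead of \eqref{B}. Once all three factorizations are in hand, their sum contains $\lnsp{V}^{|}_{\alpha\beta} + \lnsp{V}^{\square}_{\alpha\beta} + \lnsp{V}^{-}_{\alpha\beta}$ inside $\lnsp{V}_{\alpha\beta}$, closing the direct-sum decomposition \eqref{eq:Vab=V|abopVsqabopV-absgahra8h9u4}.
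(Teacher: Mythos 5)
Your proposal is correct and follows essentially the same route as the paper: you define $\lnsp{V}^|_{\alpha \beta}$, $\lnsp{V}^\square_{\alpha \beta}$, $\lnsp{V}^-_{\alpha \beta}$ by the same projections, get the reduced-subspace identities and the forward inclusion identically, and prove the reverse inclusion via Lem.\,\ref{lem:UperpV0FeqEotUsubV1} at the same three bipartitions, including the paper's key manoeuvre (Prop.\,\ref{prop:g78ayg}) for the delicate $|$- and $-$-pieces of decomposing $\tket{\eta} = \projop^|_{\alpha \beta}\tket{\eta} + \projop^\square_{\alpha \beta}\tket{\eta} + \projop^-_{\alpha \beta}\tket{\eta}$ so that condition \eqref{B} (resp.\ \eqref{C}) transfers the $A_I$-side (resp.\ $B_I$-side) reduced-subspace orthogonality to the projected piece. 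The only cosmetic slip is writing $\lnsp{V}^|_{\alpha \beta} = \lnsp{V}_{\alpha \beta} - (\lnsp{V}^\square_{\alpha \beta} + \lnsp{V}^-_{\alpha \beta})$, where the correct (and clearly intended) statement is the inclusion $\lnsp{V}^|_{\alpha \beta} \subset \lnsp{V}_{\alpha \beta} + \lnsp{V}^\square_{\alpha \beta} + \lnsp{V}^-_{\alpha \beta}$, exactly as used in the paper.
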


%%%%%%%%%%%%%%%%%%%%%%%%%%%%%%%%%%%%%%%%%%%%%%%%%%%%%%%%%%%%%%%%%%%%%%%%
%%%%%%%%%%%%%%%%%%%%%%%%      NEW SUBSECTION    %%%%%%%%%%%%%%%%%%%%%%%%
%%%%%%%%%%%%%%%%%%%%%%%%%%%%%%%%%%%%%%%%%%%%%%%%%%%%%%%%%%%%%%%%%%%%%%%%

\subsection{Decomposing $P$}
\label{secApp:dividePintoPABopPBAopPparallel}

%%%%%%%%%%%%%%%%%%%%%%%%%%%%%%%%%%%%%%%%%%%%%%%%%%%%%%%%%%%%%%%%%%%%%%%%
%%%%%%%%%%%%%%%%%%%%%%%%      NEW SUBSECTION    %%%%%%%%%%%%%%%%%%%%%%%%
%%%%%%%%%%%%%%%%%%%%%%%%%%%%%%%%%%%%%%%%%%%%%%%%%%%%%%%%%%%%%%%%%%%%%%%%

From $U \left( P \otimes \vspan{\ket{\alpha}^{A_O}} \otimes \vspan{\ket{\beta}^{B_O}} \right) = \lnsp{V}_{\alpha \beta}$ and
Eq.\,\eqref{eq:Vab=V|abopVsqabopV-absgahra8h9u4} in Prop.\,\ref{prop:DecompositionOfVab}, we obtain the following decomposition of $P$.

\begin{prop} \label{prop:DecompositionOfPab}
	For all $\ket{\alpha}^{A_O} \in A_O$ and for all $\ket{\beta}^{B_O} \in B_O$, $P$ is decomposed into
	\begin{equation}
		\label{eq:Pdecomjfaijgwji3}
		P = P^|_{\alpha \beta} \oplus P^\square_{\alpha \beta} \oplus P^-_{\alpha \beta}.
	\end{equation}
	using subspaces $P^|_{\alpha \beta}$, $P^\square_{\alpha \beta}$, $P^-_{\alpha \beta}$ of $P$ satisfying
	\begin{equation}
		\label{eq:defOfP|ab}
		U \left( P^|_{\alpha \beta} \otimes \vspan{\ket{\alpha}^{A_O}} \otimes \vspan{\ket{\beta}^{B_O}} \right) = \lnsp{V}^|_{\alpha \beta},
	\end{equation}
	\begin{equation}
		U \left( P^\square_{\alpha \beta} \otimes \vspan{\ket{\alpha}^{A_O}} \otimes \vspan{\ket{\beta}^{B_O}} \right) = \lnsp{V}^\square_{\alpha \beta},
	\end{equation}
	\begin{equation}
		U \left( P^-_{\alpha \beta} \otimes \vspan{\ket{\alpha}^{A_O}} \otimes \vspan{\ket{\beta}^{B_O}} \right) = \lnsp{V}^-_{\alpha \beta},
	\end{equation}
\end{prop}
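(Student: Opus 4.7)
The plan is to transfer the decomposition of $\lnsp{V}_{\alpha\beta}$ established in Prop.\,\ref{prop:DecompositionOfVab} back to a decomposition of $P$ by exploiting the isometric structure obtained from restricting $U$ to the one-dimensional factor $\vspan{\ket{\alpha}^{A_O}} \otimes \vspan{\ket{\beta}^{B_O}}$ on the output slots. First, I would assume without loss of generality that $\ket{\alpha}^{A_O}$ and $\ket{\beta}^{B_O}$ are normalized (the subspaces $P^\star_{\alpha\beta}$ to be constructed depend only on the spanned lines, not on the chosen representatives). Next, I would define a linear map $U_{\alpha\beta} \colon P \to A_I \otimes B_I \otimes F$ by $U_{\alpha\beta}(\ket{\pi}^P) := U(\ket{\pi}^P \ket{\alpha}^{A_O} \ket{\beta}^{B_O})$. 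Because $U$ is unitary and $\ket{\alpha}^{A_O}$, $\ket{\beta}^{B_O}$ are unit vectors, $U_{\alpha\beta}$ is an isometry whose image is exactly $\lnsp{V}_{\alpha\beta} = U(P \otimes \vspan{\ket{\alpha}^{A_O}} \otimes \vspan{\ket{\beta}^{B_O}})$.

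I would then define $P^\star_{\alpha\beta} := U_{\alpha\beta}^{-1}(\lnsp{V}^\star_{\alpha\beta})$ for each $\star \in \{|,\, \square,\, -\}$; these are well-defined subspaces of $P$ because $U_{\alpha\beta}$ is linear. Since $U_{\alpha\beta}$ is injective and its image equals $\lnsp{V}_{\alpha\beta}$, and since Prop.\,\ref{prop:DecompositionOfVab} provides the orthogonal direct sum $\lnsp{V}_{\alpha\beta} = \lnsp{V}^|_{\alpha\beta} \oplus \lnsp{V}^\square_{\alpha\beta} \oplus \lnsp{V}^-_{\alpha\beta}$, pulling this decomposition back along $U_{\alpha\beta}$ yields $P = P^|_{\alpha\beta} \oplus P^\square_{\alpha\beta} \oplus P^-_{\alpha\beta}$. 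Pairwise orthogonality of the preimages holds because isometries preserve inner products, and every $\ket{\pi}^P \in P$ decomposes uniquely by applying $U_{\alpha\beta}^{-1}$ componentwise to the image decomposition of $U_{\alpha\beta}(\ket{\pi}^P)$ within $\lnsp{V}_{\alpha\beta}$.

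Finally, the image conditions $U(P^\star_{\alpha\beta} \otimes \vspan{\ket{\alpha}^{A_O}} \otimes \vspan{\ket{\beta}^{B_O}}) = \lnsp{V}^\star_{\alpha\beta}$ follow directly from the definitions: the left-hand side equals $U_{\alpha\beta}(P^\star_{\alpha\beta})$, which by construction of $P^\star_{\alpha\beta}$ as a preimage under the isometry $U_{\alpha\beta}$ equals $\lnsp{V}^\star_{\alpha\beta}$. I anticipate no real obstacle here: the substantive structural content lies in Prop.\,\ref{prop:DecompositionOfVab}, and this proposition is essentially the isometric pullback of that decomposition from the image $\lnsp{V}_{\alpha\beta}$ back to the domain $P$, requiring only the observation that a unitary operator restricted to a one-dimensional slice of its domain acts as an isometry.
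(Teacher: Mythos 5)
Your proposal is correct and takes essentially the same route as the paper, which derives the decomposition of $P$ in one line by pulling back the orthogonal decomposition $\lnsp{V}_{\alpha \beta} = \lnsp{V}^|_{\alpha \beta} \oplus \lnsp{V}^\square_{\alpha \beta} \oplus \lnsp{V}^-_{\alpha \beta}$ of Prop.\,\ref{prop:DecompositionOfVab} along the identity $U \left( P \otimes \vspan{\ket{\alpha}^{A_O}} \otimes \vspan{\ket{\beta}^{B_O}} \right) = \lnsp{V}_{\alpha \beta}$. Your explicit isometry $U_{\alpha\beta}$ and the preimage construction simply spell out the details the paper leaves implicit.
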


Now we have \ReviseFurther{a} critical proposition about $P^|_{\alpha \beta}$ and $P^-_{\alpha \beta}$.
\begin{prop}
	\label{prop:P|00=P|0kp}
	Let $\ket{\alpha}^{A_O} \in A_O$ and $\ket{\beta}^{B_O} \in B_O$. For all $\ket{\beta_0}^{B_O}, \ket{\beta_1}^{B_O} \in B_O$ and for all $\ket{\alpha_0}^{A_O}, \ket{\alpha_1}^{A_O} \in B_O$, the equalities
	\begin{equation}
		\label{eq:P|a0=Pab}
		P^|_{\alpha \beta_0} = P^|_{\alpha \beta_1},
	\end{equation}
	\begin{equation}
		\label{eq:P-0b=Pab}
		P^-_{\alpha_0 \beta} = P^-_{\alpha_1 \beta},
	\end{equation}
	hold.
\end{prop}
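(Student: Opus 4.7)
By the $A\leftrightarrow B$ symmetry of the hypotheses in condition (2$^\prime$), the equality \eqref{eq:P-0b=Pab} follows by running the argument for \eqref{eq:P|a0=Pab} with the roles of the two slots exchanged; the plan therefore focuses on proving $P^|_{\alpha\beta_0}=P^|_{\alpha\beta_1}$. The parallel subcase $\ket{\beta_0}\propto\ket{\beta_1}$ is immediate from the definition of $P^|_{\alpha\beta}$ in Prop.\,\ref{prop:DecompositionOfPab}. For the general case, I would first reduce to the orthogonal case $\ket{\beta_0}\perp\ket{\beta_1}$ by decomposing an arbitrary $\ket{\beta_1}$ in an orthonormal basis containing $\ket{\beta_0}$ and exploiting the bi-additivity of the assignment $\lnsp{V}(A_O,\cdot)$ in its second argument, as recorded in \ref{sec:Function from vectors to a subspace}.

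Fix $\ket{\beta_0}\perp\ket{\beta_1}$ and $\ket{\pi}\in P^|_{\alpha\beta_0}$, and set $\dket{\eta_1}:=U(\ket{\pi}^P\ket{\alpha}^{A_O}\ket{\beta_1}^{B_O})\in\lnsp{V}_{\alpha\beta_1}$. By Prop.\,\ref{prop:DecompositionOfVab} applied at $(\alpha,\beta_1)$, the vector $\dket{\eta_1}$ admits a unique decomposition $\dket{\eta_1^|}+\dket{\eta_1^\square}+\dket{\eta_1^-}$ with $\dket{\eta_1^\bullet}\in\lnsp{V}^\bullet_{\alpha\beta_1}$, so proving $\ket{\pi}\in P^|_{\alpha\beta_1}$ reduces to showing $\dket{\eta_1^\square}=\dket{\eta_1^-}=0$; the reverse inclusion then follows by the same argument after swapping $\beta_0$ and $\beta_1$. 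To eliminate $\dket{\eta_1^-}$ I would combine the inclusion $F^-_{\alpha\beta_1}\subset\redspn{\lnsp{V}_{\alpha\overline{\beta_1}}}{A_I B_I}{F}$ of Prop.\,\ref{prop:FabDecomposition} with the $A_I$-product structure $\lnsp{V}^|_{\alpha\beta_0}=A_I\otimes\redspn{\lnsp{V}^|_{\alpha\beta_0}}{A_I}{B_I F}$ of Eq.\,\eqref{eq:V=AAVfuahurbu}: since $\ket{\beta_0}$ lies in the orthogonal complement of $\ket{\beta_1}$, the vector $U(\ket{\pi}\ket{\alpha}\ket{\beta_0})\in\lnsp{V}^|_{\alpha\beta_0}$ has its $F$-reduction inside $\redspn{\lnsp{V}_{\alpha\overline{\beta_1}}}{A_I B_I}{F}$, and the freedom to substitute arbitrary $A_I$ registers provided by the product form allows one to extract an orthogonality in the $B_I\otimes F$ register between $\dket{\eta_1^-}$ and a subspace in which it must also lie.

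The principal obstacle, and the step I expect to require the most care, is the vanishing of the $\square$ component. Since $F^\square_{\alpha\beta_1}$ is perpendicular to \emph{both} $\redspn{\lnsp{V}_{\overline{\alpha}\beta_1}}{A_I B_I}{F}$ and $\redspn{\lnsp{V}_{\alpha\overline{\beta_1}}}{A_I B_I}{F}$, the single-flip perpendicularity that disposed of $\dket{\eta_1^-}$ is insufficient here: one cannot directly identify $\dket{\eta_1^\square}$ with an element of any $\lnsp{V}_{\alpha\overline{\beta_1}}$ or $\lnsp{V}_{\overline{\alpha}\beta_1}$. My plan for this step is to exploit bi-additivity of $\lnsp{V}$ in the $B_O$ argument to apply Prop.\,\ref{prop:DecompositionOfVab} at the superposed output vectors $U(\ket{\pi}\ket{\alpha}(c_0\ket{\beta_0}+c_1\ket{\beta_1}))$ for a continuously varying pair $(c_0,c_1)$, and to use linearity in $B_O$ together with the hypothesis $\ket{\pi}\in P^|_{\alpha\beta_0}$ and the $A_I$-factorization to force $\dket{\eta_1^\square}$ to lie simultaneously in $\lnsp{V}^\square_{\alpha\beta_1}$ and in a sum of $\lnsp{V}^|$-type subspaces at the superposed $B_O$-inputs; the orthogonality between the $|$ and $\square$ reduced subspaces furnished by Prop.\,\ref{prop:FabDecomposition} then forces $\dket{\eta_1^\square}=0$, completing the argument.
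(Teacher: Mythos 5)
You have a genuine gap, and it sits exactly where you flagged the difficulty. Your plan never invokes the one tool that makes any cross-$\beta$ comparison possible: condition \eqref{a}, read as a sesquilinear form in the $B_O$ argument and fed through Lem.\,\ref{lem:sesquiperp}, which shows that inner products between the $\alpha$-branch and the $\alpha^\perp$-branch are \emph{independent of} $\ket{\beta}$ (the paper's Prop.\,\ref{prop:fgijoiaojhb}), and hence that the orthogonality relation ${}^{A_I B_I}\dbra{\Gamma} U \left( \ket{\pi}^P \ket{\alpha}^{A_O} \ket{\beta}^{B_O} \right) \perp \redspn{\lnsp{V}_{\overline{\alpha} \beta}}{A_I B_I}{F}$ transports from $\beta_0$ to $\beta_1$ (Prop.\,\ref{prop:7g7hq9wah}). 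The paper then transports not $P^|_{\alpha \beta}$ itself but its orthocomplement $P^\square_{\alpha \beta} \oplus P^-_{\alpha \beta}$, because membership in the complement is characterized by exactly such an \emph{orthogonality} condition (via $F^-_{\alpha\beta},\, F^\square_{\alpha\beta} \perp \redspn{\lnsp{V}_{\overline{\alpha}\beta}}{A_I B_I}{F}$ from Prop.\,\ref{prop:FabDecomposition}; this is Prop.\,\ref{prop:P|perp00=P|perpkp}), whereas membership in $P^|_{\alpha\beta}$ is an \emph{inclusion} (the $F$-reduction lies inside $\redspn{\lnsp{V}_{\overline{\alpha}\beta}}{A_I B_I}{F}$), and inclusions do not transport under the sesquilinear invariance. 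Your plan attacks the inclusion side directly, trying to show $\dket{\eta_1^\square} = \dket{\eta_1^-} = 0$, and the mechanisms you offer --- substituting arbitrary $A_I$ registers, superposing $B_O$ inputs $c_0\ket{\beta_0} + c_1\ket{\beta_1}$ --- give you no relation between the decompositions $F^\bullet_{\alpha\beta_0}$, $F^\bullet_{\alpha\beta_1}$ and $F^\bullet$ at the superposed input: each is defined through Prop.\,\ref{prop:FabDecomposition} at its own fixed $(\alpha,\beta)$, and relating them across different $\beta$'s is precisely the content of the propositions your sketch skips. The paper's parenthetical remark that $P^-_{\alpha\beta_0} = P^-_{\alpha\beta_1}$ fails in general shows the three sectors cannot be handled symmetrically by a local vanishing argument: only the orthogonality-characterized sum $P^\square_{\alpha\beta} \oplus P^-_{\alpha\beta}$, and hence its complement $P^|_{\alpha\beta}$, is $\beta$-invariant.

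Separately, your preliminary reduction to the orthogonal case $\ket{\beta_0} \perp \ket{\beta_1}$ does not work. Bi-additivity gives $\lnsp{V}_{\alpha \beta_1} \subset \lnsp{V}_{\alpha\beta_0} + \lnsp{V}_{\alpha \overline{\beta_0}}$, but $P^|_{\alpha\beta_1}$ is defined through the decomposition \emph{at} $\beta_1$ and is not determined by the decompositions at the basis vectors without further argument. Worse, the reduction cannot be repaired by chaining orthogonal pairs: when $\dim B_O = 2$ the relation ``orthogonal or proportional'' on rays is a perfect matching --- each ray is orthogonal to exactly one other ray --- so no chain of orthogonal pairs connects a generic non-orthogonal, non-proportional pair $\ket{\beta_0}, \ket{\beta_1}$, and the qubit case is exactly the one you must cover. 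The paper sidesteps this entirely: the second half of Lem.\,\ref{lem:sesquiperp} handles arbitrary equal-norm, non-orthogonal pairs by a polarization-type computation, so Props.\,\ref{prop:fgijoiaojhb} and \ref{prop:7g7hq9wah} apply to all $\ket{\beta_0}, \ket{\beta_1}$ at once, and Prop.\,\ref{prop:P|00=P|0kp} then follows by taking orthocomplements in the decomposition of Prop.\,\ref{prop:DecompositionOfPab}.
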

\noindent
A proof of Prop.\,\ref{prop:P|00=P|0kp} is presented in Appendix \ref{secApp:proofsInPDecomp}.

Define $\tilde{P}^|$, $\tilde{P}^\square$, $\tilde{P}^-$ as
\begin{align} \label{def:P}
	\tilde{P}^|       & := \bigplus_{\substack{
			\ket{\alpha}^{A_O} \in A_O \\
			\ket{ \beta}^{B_O} \in B_O
		}} P^|_{\alpha \beta},\\
	\tilde{P}^\square & := \bigcap_{\substack{
			\ket{\alpha}^{A_O} \in A_O \\
			\ket{ \beta}^{B_O} \in B_O
		}} P^\square_{\alpha \beta}, \label{def:P2} \\
	\tilde{P}^-       & := \bigplus_{\substack{
			\ket{\alpha}^{A_O} \in A_O \\
			\ket{ \beta}^{B_O} \in B_O
		}} P^-_{\alpha \beta}. \label{def:P3}
\end{align}

We find that $\tilde{P}^| \perp \tilde{P}^-$ since for all $\ket{\alpha}^{A_O}, \ket{\alpha^\prime}^{A_O} \in A_O$ and $\ket{\beta}^{B_O}, \ket{\beta^\prime}^{B_O} \in B_O$,
\begin{alignat}{2}
	\label{prop:5gqga09}
	P^|_{\alpha \beta}
	  & = P^|_{\alpha \beta^\prime}        &   & \quad (\because \text{Eq.\,\eqref{eq:P|a0=Pab}})                                                 \\
	  & \perp P^-_{\alpha \beta^\prime}    &   & \quad (\because \text{Eq.\,\eqref{eq:Pdecomjfaijgwji3} in Prop.\,\ref{prop:DecompositionOfPab}}) \\
	  & = P^-_{\alpha^\prime \beta^\prime} &   & \quad (\because \text{Eq.\,\eqref{eq:P-0b=Pab}}).
\end{alignat}
For all $\ket{\alpha}^{A_O} \in A_O$ and $\ket{\beta}^{B_O} \in B_O$, since $P^\square_{\alpha \beta} \perp P^|_{\alpha \beta}$
from Eq.\,\eqref{eq:Pdecomjfaijgwji3} in Prop.\,\ref{prop:DecompositionOfPab} and $\tilde{P}^\square \subset P^\square_{\alpha \beta}$ by definition,
we have $\tilde{P}^\square \perp P^|_{\alpha \beta}$.
This implies $\tilde{P}^\square \perp \tilde{P}^|$. Analogously,
$\tilde{P}^\square \perp \tilde{P}^-$ holds true. Hence, $\tilde{P}^|$, $\tilde{P}^\square$ , $\tilde{P}^-$ are mutually orthogonal.

We show $P = \tilde{P}^| + \tilde{P}^\square + \tilde{P}^-$. 	It is clear that $P \supset \tilde{P}^| + \tilde{P}^\square + P^-$. We need to show that $P \subset \tilde{P}^| + \tilde{P}^\square + P^-$.
For all $\ket{\alpha}^{A_O} \in A_O$ and for all $\ket{\beta}^{B_O} \in B_O$, we have $\tilde{P}^| + \tilde{P}^- \supset P^|_{\alpha \beta} + P^-_{\alpha \beta}$,
thus implying $(\tilde{P}^| + \tilde{P}^-)^{\perp} \subset P^\square_{\alpha \beta}$ from Eq.\,\eqref{eq:Pdecomjfaijgwji3} in Prop.\,\ref{prop:DecompositionOfPab}.
Therefore, by definition we obtain
$(\tilde{P}^| + \tilde{P}^-)^{\perp} \subset \tilde{P}^\square$.
This shows that	$P = (\tilde{P}^| + \tilde{P}^-) \oplus (\tilde{P}^| + \tilde{P}^-)^{\perp}$ is a subspace of $\tilde{P}^| + \tilde{P}^\square + \tilde{P}^-$.
Therefore, we conclude that $P = \tilde{P}^| + \tilde{P}^\square + \tilde{P}^-$.

Finally, Prop.\,\ref{prop:P|00=P|0kp} implies
$\tilde{P}^| = \bigplus_{\ket{\alpha}^{A_O} \in A_O} P^|_{\alpha \beta_0}$ and
$\tilde{P}^- = \bigplus_{\ket{ \beta}^{B_O} \in B_O} P^-_{\alpha_0 \beta}$
for all $\ket{\alpha_0}^{A_O} \in A_O$ and $\ket{\beta_0}^{B_O} \in B_O$.

The analysis above summarizes to the following decomposition of $P$.
\begin{prop}
	\label{prop:DeconpositionOfP}
	$P$ decomposes as
	\begin{equation}
		\label{eq:P=PABopPBAopPparallel}
		P = \tilde{P}^| \oplus \tilde{P}^\square \oplus \tilde{P}^-
	\end{equation}
	using subspaces $\tilde{P}^|, \tilde{P}^\square, \tilde{P}^-$ of $P$ such that for all $\ket{\alpha_0}^{A_O} \in A_O$ and for all $\ket{\beta_0}^{B_O} \in B_O$,
	\begin{align}
		\tilde{P}^|       & = \bigplus_{\substack{
				\ket{\alpha}^{A_O} \in A_O
			}} P^|_{\alpha \beta_0},\\
		\tilde{P}^\square & = \bigcap_{\substack{
				\ket{\alpha}^{A_O} \in A_O \\
				\ket{ \beta}^{B_O} \in B_O
			}} P^\square_{\alpha \beta}, \\
		\tilde{P}^-       & = \bigplus_{\substack{
				\ket{ \beta}^{B_O} \in B_O
			}} P^-_{\alpha_0 \beta}.
	\end{align}
\end{prop}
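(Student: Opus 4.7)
The plan is to verify three things: the pairwise orthogonality of $\tilde P^{|}$, $\tilde P^{\square}$, $\tilde P^{-}$; the identity $P = \tilde P^{|} + \tilde P^{\square} + \tilde P^{-}$; and the reduction of the defining $\bigplus$ in the $|$ case and in the $-$ case to a sum over a single index. Throughout I would lean on two ingredients: the local decomposition $P = P^{|}_{\alpha\beta} \oplus P^{\square}_{\alpha\beta} \oplus P^{-}_{\alpha\beta}$ from Prop.\,\ref{prop:DecompositionOfPab}, and Prop.\,\ref{prop:P|00=P|0kp}, which asserts that $P^{|}_{\alpha\beta}$ is independent of $\beta$ and $P^{-}_{\alpha\beta}$ is independent of $\alpha$.

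First I would prove pairwise orthogonality. For arbitrary $(\alpha,\beta)$ and $(\alpha',\beta')$, I chain two applications of Prop.\,\ref{prop:P|00=P|0kp} together with Prop.\,\ref{prop:DecompositionOfPab}: $P^{|}_{\alpha\beta} = P^{|}_{\alpha\beta'} \perp P^{-}_{\alpha\beta'} = P^{-}_{\alpha'\beta'}$. Summing over all pairs of output rays then yields $\tilde P^{|} \perp \tilde P^{-}$. For orthogonality with $\tilde P^{\square}$, I would use that, by definition, $\tilde P^{\square} \subset P^{\square}_{\alpha\beta}$ for every $(\alpha,\beta)$; since $P^{\square}_{\alpha\beta}$ is orthogonal to both $P^{|}_{\alpha\beta}$ and $P^{-}_{\alpha\beta}$, this gives $\tilde P^{\square} \perp P^{|}_{\alpha\beta}$ and $\tilde P^{\square} \perp P^{-}_{\alpha\beta}$ for every $(\alpha,\beta)$, and summing over all rays yields $\tilde P^{\square} \perp \tilde P^{|}$ and $\tilde P^{\square} \perp \tilde P^{-}$.

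Second, I would prove $P = \tilde P^{|} + \tilde P^{\square} + \tilde P^{-}$. The inclusion $\supset$ is trivial. For the reverse direction, it suffices to show $(\tilde P^{|} + \tilde P^{-})^{\perp} \subset \tilde P^{\square}$. For every $(\alpha,\beta)$ one has $\tilde P^{|} + \tilde P^{-} \supset P^{|}_{\alpha\beta} + P^{-}_{\alpha\beta}$, hence $(\tilde P^{|} + \tilde P^{-})^{\perp} \subset (P^{|}_{\alpha\beta} + P^{-}_{\alpha\beta})^{\perp} = P^{\square}_{\alpha\beta}$ by Prop.\,\ref{prop:DecompositionOfPab}. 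Intersecting over all $(\alpha,\beta)$ yields $(\tilde P^{|} + \tilde P^{-})^{\perp} \subset \tilde P^{\square}$, and together with the trivial identity $P = (\tilde P^{|} + \tilde P^{-}) \oplus (\tilde P^{|} + \tilde P^{-})^{\perp}$ this gives $P \subset \tilde P^{|} + \tilde P^{\square} + \tilde P^{-}$. Combined with the pairwise orthogonality from the previous step, the sum is an orthogonal direct sum.

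The reduced descriptions of $\tilde P^{|}$ and $\tilde P^{-}$ are then an immediate consequence of Prop.\,\ref{prop:P|00=P|0kp}: since $P^{|}_{\alpha\beta}$ does not depend on $\beta$, one may fix any $\beta_0$ and write $\tilde P^{|} = \bigplus_{\ket{\alpha}^{A_O}\in A_O} P^{|}_{\alpha\beta_0}$, and symmetrically $\tilde P^{-} = \bigplus_{\ket{\beta}^{B_O}\in B_O} P^{-}_{\alpha_0\beta}$ for any fixed $\alpha_0$. The genuine difficulty of the decomposition is thus packaged entirely inside Prop.\,\ref{prop:P|00=P|0kp}; once that independence lemma is in hand, the present result reduces to routine manipulations with orthogonal sums and complements in the finite-dimensional space $P$, and no further appeal to the reversibility conditions \eqref{a}--\eqref{c} is needed at this stage.
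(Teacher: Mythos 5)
Your proof is correct and follows essentially the same route as the paper: the chain $P^{|}_{\alpha\beta} = P^{|}_{\alpha\beta'} \perp P^{-}_{\alpha\beta'} = P^{-}_{\alpha'\beta'}$ via Prop.\,\ref{prop:P|00=P|0kp} and Prop.\,\ref{prop:DecompositionOfPab}, the inclusion $\tilde{P}^{\square} \subset P^{\square}_{\alpha\beta}$ for the remaining orthogonality relations, and the argument $(\tilde{P}^{|} + \tilde{P}^{-})^{\perp} \subset P^{\square}_{\alpha\beta}$ for completeness are all exactly the steps in the paper's proof. Your closing observation is also accurate: the substantive content is carried by the independence statement Prop.\,\ref{prop:P|00=P|0kp}, and the present proposition requires no further use of conditions \eqref{a}--\eqref{c}.
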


%%%%%%%%%%%%%%%%%%%%%%%%%%%%%%%%%%%%%%%%%%%%%%%%%%%%%%%%%%%%%%%%%%%%%%%%
%%%%%%%%%%%%%%%%%%%%%%%%      NEW SUBSECTION    %%%%%%%%%%%%%%%%%%%%%%%%
%%%%%%%%%%%%%%%%%%%%%%%%%%%%%%%%%%%%%%%%%%%%%%%%%%%%%%%%%%%%%%%%%%%%%%%%

\subsection{Proof of Prop.\,\ref{prop:P|00=P|0kp}}
\label{secApp:proofsInPDecomp}

%%%%%%%%%%%%%%%%%%%%%%%%%%%%%%%%%%%%%%%%%%%%%%%%%%%%%%%%%%%%%%%%%%%%%%%%
%%%%%%%%%%%%%%%%%%%%%%%%      NEW SUBSECTION    %%%%%%%%%%%%%%%%%%%%%%%%
%%%%%%%%%%%%%%%%%%%%%%%%%%%%%%%%%%%%%%%%%%%%%%%%%%%%%%%%%%%%%%%%%%%%%%%%

To prove Prop.\,\ref{prop:P|00=P|0kp}, we need some preparations.
\begin{prop}
	\label{prop:fgijoiaojhb}
	Let $\ket{\pi}^P,\, \ket{\pi^\prime}^P \in P$, $\ket{\alpha}^{A_O},\, \ket{\alpha^\perp}^{A_O} \in A_O$ such that $\ket{\alpha}^{A_O} \perp \ket{\alpha^\perp}^{A_O}$
	and $\dket{\Gamma}^{A_I B_I}$, $\dket{\Gamma^\prime}^{A_I B_I} \in A_I \otimes B_I$.
	For all $\ket{\beta_0}^{B_O}, \ket{\beta_1}^{B_O} \in B_O$ such that $\| \ket{\beta_0}^{B_O} \| = \| \ket{\beta_1}^{B_O} \|$, it holds that
	\begin{equation}
		\label{eq:inde8ghr9a8}
		\begin{aligned}
			  & \left(
			{}^{A_I B_I}\dbra{\Gamma} U \left( \ket{\pi}^P \ket{\alpha}^{A_O} \ket{\beta_0}^{B_O} \right), {}^{A_I B_I}\dbra{\Gamma^\prime}
			U \left( \ket{\pi^\prime}^P \ket{\alpha^\perp}^{A_O} \ket{\beta_0}^{B_O} \right)
			\right)  \\
			= & \left(
			{}^{A_I B_I}\dbra{\Gamma} U \left( \ket{\pi}^P \ket{\alpha}^{A_O} \ket{\beta_1}^{B_O} \right), {}^{A_I B_I}\dbra{\Gamma^\prime}
			U \left( \ket{\pi^\prime}^P \ket{\alpha^\perp}^{A_O} \ket{\beta_1}^{B_O} \right)
			\right) .
		\end{aligned}
	\end{equation}
\end{prop}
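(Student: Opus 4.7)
The plan is to recognise the left-hand side of Eq.\,\eqref{eq:inde8ghr9a8} as a diagonal evaluation of a sesquilinear form in the $B_O$-arguments and then invoke Lem.\,\ref{lem:sesquiperp}. Concretely, I would fix $\ket{\pi}^P, \ket{\pi^\prime}^P, \ket{\alpha}^{A_O}, \ket{\alpha^\perp}^{A_O}, \dket{\Gamma}^{A_I B_I}, \dket{\Gamma^\prime}^{A_I B_I}$ as in the statement and define, for $\ket{\xi}^{B_O}, \ket{\eta}^{B_O} \in B_O$,
\begin{equation}
f(\ket{\xi}^{B_O},\, \ket{\eta}^{B_O}) := \left(
{}^{A_I B_I}\dbra{\Gamma} U(\ket{\pi}^P \ket{\alpha}^{A_O} \ket{\xi}^{B_O}),\,
{}^{A_I B_I}\dbra{\Gamma^\prime} U(\ket{\pi^\prime}^P \ket{\alpha^\perp}^{A_O} \ket{\eta}^{B_O})
\right).
\end{equation}
Since $U$ is linear in its last argument, taking the complex conjugate in the first slot and linearity in the second slot exactly give a sesquilinear form on $B_O$. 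The claim of the proposition is precisely $f(\ket{\beta_0}^{B_O},\, \ket{\beta_0}^{B_O}) = f(\ket{\beta_1}^{B_O},\, \ket{\beta_1}^{B_O})$ whenever $\|\ket{\beta_0}^{B_O}\| = \|\ket{\beta_1}^{B_O}\|$.

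To apply Lem.\,\ref{lem:sesquiperp}, I need the orthogonality hypothesis $f(\ket{\beta}^{B_O},\, \ket{\beta^\perp}^{B_O}) = 0$ for every $\ket{\beta}^{B_O} \perp \ket{\beta^\perp}^{B_O}$ in $B_O$. But this is exactly Eq.\,\eqref{a} from condition (2) of Thm.\,\ref{thm:equivalent}, applied with $\ket{\beta}^{B_O}$ and $\ket{\beta^\perp}^{B_O}$ in the two $B_O$-slots (and with the given $\ket{\pi}^P, \ket{\pi^\prime}^P, \ket{\alpha}^{A_O}, \ket{\alpha^\perp}^{A_O}, \dket{\Gamma}^{A_I B_I}, \dket{\Gamma^\prime}^{A_I B_I}$). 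Since we are working inside the proof of (2) $\implies$ (3) we are entitled to use Eq.\,\eqref{a}, so the hypothesis of Lem.\,\ref{lem:sesquiperp} is verified.

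Applying Lem.\,\ref{lem:sesquiperp} to $f$ then yields $f(\ket{\beta_0}^{B_O},\, \ket{\beta_0}^{B_O}) = f(\ket{\beta_1}^{B_O},\, \ket{\beta_1}^{B_O})$ for any two vectors of equal norm, which is precisely Eq.\,\eqref{eq:inde8ghr9a8}. No step here seems delicate: the only thing to be careful with is checking the sesquilinearity of $f$ (conjugate-linear in $\ket{\xi}^{B_O}$ through the inner product, linear in $\ket{\eta}^{B_O}$ through $U$), but this is routine. The proposition is thus essentially a direct corollary of condition \eqref{a} combined with Lem.\,\ref{lem:sesquiperp}, and serves as the technical bridge that will let us argue $B_O$-independence of the subspaces $P^{|}_{\alpha \beta}$ needed in Prop.\,\ref{prop:P|00=P|0kp}.
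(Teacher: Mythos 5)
Your proposal is correct and follows essentially the same route as the paper's own proof: define the sesquilinear form $f(\ket{\beta}^{B_O},\ket{\beta^\prime}^{B_O})$ on $B_O$ with the other arguments fixed, observe that Eq.\,\eqref{a} gives exactly the hypothesis $f(\ket{\beta}^{B_O},\ket{\beta^\perp}^{B_O})=0$, and apply Lem.\,\ref{lem:sesquiperp}. Your added remark on checking sesquilinearity (antilinear in the first slot via the inner product, linear in the second via $U$) is a routine verification the paper leaves implicit.
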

\begin{proof}
	Define the following sesquilinear form on $B_O$
	\begin{equation}
		f(\ket{\beta}^{B_O}, \ket{\beta^\prime}^{B_O}) := \left(
		{}^{A_I B_I}\dbra{\Gamma} U \left( \ket{\pi}^P \ket{\alpha}^{A_O} \ket{\beta}^{B_O} \right), {}^{A_I B_I}\dbra{\Gamma^\prime}
		U \left( \ket{\pi^\prime}^P \ket{\alpha^\perp}^{A_O} \ket{\beta^\prime}^{B_O} \right)
		\right). \notag
	\end{equation}
	Then Eq.\,\eqref{eq:inde8ghr9a8} leads to the relation $f(\ket{\beta_0}^{B_O}, \ket{\beta_0}^{B_O}) = f(\ket{\beta_1}^{B_O}, \ket{\beta_1}^{B_O})$.
	Equation \eqref{a}, which is equivalent to $f(\ket{\beta}^{B_O}, \ket{\beta^\perp}^{B_O}) = 0$, and Lem.\,\ref{lem:sesquiperp} yield $f(\ket{\beta_0}^{B_O}, \ket{\beta_0}^{B_O}) = f(\ket{\beta_1}^{B_O}, \ket{\beta_1}^{B_O})$.
\end{proof}
\begin{prop}
	\label{prop:7g7hq9wah}
	Let $\ket{\pi}^{P} \in P$, $\ket{\alpha}^{A_O} \in A_O$ and $\ket{\Gamma}^{A_I B_I} \in P$. For all $\ket{\beta_0}^{B_O}, \ket{\beta_1}^{B_O} \in B_O$,
	we have that
	\begin{equation}
		\label{eq:ajfdaia}
		{}^{A_I B_I}\dbra{\Gamma} U \left( \ket{\pi}^P \ket{\alpha}^{A_O} \ket{\beta_0}^{B_O} \right) \perp \redspn{\lnsp{V}_{\overline{\alpha} \beta_0}}{A_I B_I}{F}
	\end{equation}
	if and only if
	\begin{equation}
		\label{eq:ajfdaia1}
		{}^{A_I B_I}\dbra{\Gamma} U \left( \ket{\pi}^P \ket{\alpha}^{A_O} \ket{\beta_1}^{B_O} \right) \perp \redspn{\lnsp{V}_{\overline{\alpha} \beta_1}}{A_I B_I}{F}.
	\end{equation}
\end{prop}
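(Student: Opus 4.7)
The plan is to convert both orthogonality conditions into families of scalar inner-product equations and then show, equation by equation, that the scalar at $\ket{\beta_0}^{B_O}$ vanishes if and only if its counterpart at $\ket{\beta_1}^{B_O}$ vanishes, using only Prop.\,\ref{prop:fgijoiaojhb}.

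First I would unpack the reduced subspace. By its defining spanning set,
\begin{equation}
  \redspn{\lnsp{V}_{\overline{\alpha} \beta_0}}{A_I B_I}{F}
  =\vspan{\, {}^{A_I B_I}\dbra{\Gamma^\prime}\, U\!\left(\ket{\pi^\prime}^P \ket{\alpha^\perp}^{A_O} \ket{\beta_0}^{B_O}\right) \relmiddle| \ket{\pi^\prime}^P\in P,\; \ket{\alpha^\perp}^{A_O}\in\overline{\alpha},\; \ket{\Gamma^\prime}^{A_I B_I}\in A_I\otimes B_I \,},
\end{equation}
so Eq.\,\eqref{eq:ajfdaia} is equivalent to the vanishing, for every such triple $(\ket{\pi^\prime},\ket{\alpha^\perp},\ket{\Gamma^\prime})$, of the scalar
\begin{equation}
  f_{\beta_0} := \left( {}^{A_I B_I}\dbra{\Gamma}\, U\!\left(\ket{\pi}^P \ket{\alpha}^{A_O} \ket{\beta_0}^{B_O}\right),\;
  {}^{A_I B_I}\dbra{\Gamma^\prime}\, U\!\left(\ket{\pi^\prime}^P \ket{\alpha^\perp}^{A_O} \ket{\beta_0}^{B_O}\right)\right),
\end{equation}
and analogously Eq.\,\eqref{eq:ajfdaia1} is the vanishing of the corresponding $f_{\beta_1}$.

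Next I would reduce to the equal-norm situation expected by Prop.\,\ref{prop:fgijoiaojhb}. If either of $\ket{\beta_0}^{B_O}$, $\ket{\beta_1}^{B_O}$ is the zero vector, both sides collapse: the contracted vector on the left becomes $0$ and the reduced subspace on the right is $\{0\}$, so both Eq.\,\eqref{eq:ajfdaia} and Eq.\,\eqref{eq:ajfdaia1} are vacuous. When both are nonzero, I rescale $\ket{\beta_1}^{B_O}$ by the nonzero scalar $\|\ket{\beta_0}^{B_O}\|/\|\ket{\beta_1}^{B_O}\|$: by linearity of $U$ and of the contraction, this only multiplies $f_{\beta_1}$ by a nonzero factor and multiplies each spanning vector of $\redspn{\lnsp{V}_{\overline{\alpha}\beta_1}}{A_I B_I}{F}$ by a nonzero factor, leaving both the value-zero condition and the reduced subspace itself unchanged. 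After this normalization, Prop.\,\ref{prop:fgijoiaojhb} applies literally and gives $f_{\beta_0}=f_{\beta_1}$ for every fixed choice of $(\ket{\pi},\ket{\pi^\prime},\ket{\alpha},\ket{\alpha^\perp},\ket{\Gamma},\ket{\Gamma^\prime})$. Hence the family $\{f_{\beta_0}\}$ vanishes identically iff $\{f_{\beta_1}\}$ does, which is exactly the desired biconditional.

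The proof is essentially a repackaging: the heavy lifting is already done by Prop.\,\ref{prop:fgijoiaojhb}, whose proof uses condition \eqref{a} and Lem.\,\ref{lem:sesquiperp}. The only points requiring any care are (i) ensuring that the \emph{same} rescaling that turns Prop.\,\ref{prop:fgijoiaojhb} into an applicable statement preserves orthogonality with the reduced subspace (which it does, since scaling $\ket{\beta}$ scales every element of the spanning set of $\redspn{\lnsp{V}_{\overline{\alpha}\beta}}{A_I B_I}{F}$ by the same nonzero constant), and (ii) handling the $\ket{\beta}=0$ edge case separately, as above. I do not anticipate any further obstacle.
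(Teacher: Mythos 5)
Your proposal is correct and follows essentially the same route as the paper's proof: both unpack $\redspn{\lnsp{V}_{\overline{\alpha}\beta}}{A_I B_I}{F}$ as a span of contracted vectors, recast the orthogonality condition as the vanishing of a family of scalar inner products indexed by $(\ket{\pi^\prime},\ket{\alpha^\perp},\dket{\Gamma^\prime})$, and then invoke Prop.\,\ref{prop:fgijoiaojhb} to transfer that vanishing from $\ket{\beta_0}^{B_O}$ to $\ket{\beta_1}^{B_O}$. Your explicit handling of the rescaling needed to meet the equal-norm hypothesis of Prop.\,\ref{prop:fgijoiaojhb}, and of the $\ket{\beta}=0$ edge case, is in fact slightly more careful than the paper, which applies that proposition without comment on normalization.
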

\begin{proof}
	For all $\ket{\beta}^{B_O} \in B_O$, since
	\begin{multline}
		\redspn{\lnsp{V}_{\overline{\alpha} \beta}}{A_I B_I}{F} = \vspan{ {}^{A_I B_I}\dbra{\Gamma^\prime} U \left( \ket{\pi^\prime}^P \ket{\alpha^\perp}^{A_O} \ket{\beta}^{B_O} \right) \relmiddle| \right. \\
			\left. \ket{\pi^\prime}^P \in P,\, \ket{\alpha^\perp}^{A_O} \in A_O,\, \ket{\alpha^\perp}^{A_O} \perp \ket{\alpha}^{A_O},\, \dket{\Gamma^\prime}^{A_I B_I} \in A_I \otimes B_I },
	\end{multline}
	we have
	\begin{equation}
		\label{eq:fa8evccv0}
		{}^{A_I B_I}\dbra{\Gamma} U \left( \ket{\pi}^P \ket{\alpha}^{A_O} \ket{\beta}^{B_O} \right) \perp \redspn{\lnsp{V}_{\overline{\alpha} \beta}}{A_I B_I}{F}
	\end{equation}
	if and only if
	\begin{multline}
		\label{eq:fa8evccv1}
		\forall \ket{\pi^\prime}^P \in P,\, \ket{\alpha^\perp}^{A_O} \in A_O \text{ s.t. } \ket{\alpha}^{A_O} \perp \ket{\alpha^\perp}^{A_O},\,
		\dket{\Gamma^\prime}^{A_I B_I} \in A_I \otimes B_I, \\
		{}^{A_I B_I}\dbra{\Gamma} U \left( \ket{\pi}^P \ket{\alpha}^{A_O} \ket{\beta}^{B_O} \right)
		\perp {}^{A_I B_I}\dbra{\Gamma^\prime} U \left( \ket{\pi^\prime}^P \ket{\alpha^\perp}^{A_O} \ket{\beta}^{B_O} \right).
	\end{multline}
	Assume Eq.\,\eqref{eq:ajfdaia}.
	Since the relation presented in Eq.\,\eqref{eq:ajfdaia} is equivalent to relation presented in Eq.\,\eqref{eq:fa8evccv0} with $\ket{\beta}^{B_O} = \ket{\beta_0}^{B_O}$, we have Eq.\,\eqref{eq:fa8evccv1} with $\ket{\beta}^{B_O} = \ket{\beta_0}^{B_O}$.
	This and Prop.\,\ref{prop:fgijoiaojhb} imply Eq.\,\eqref{eq:fa8evccv1} with $\ket{\beta}^{B_O} = \ket{\beta_1}^{B_O}$.
	Thus we obtain Eq.\,\eqref{eq:fa8evccv0} with $\ket{\beta}^{B_O} = \ket{\beta_1}^{B_O}$, which is equivalent to Eq.\,\eqref{eq:ajfdaia1}.
\end{proof}
\begin{prop}
	\label{prop:P|perp00=P|perpkp}
	Let $\ket{\alpha}^{A_O} \in A_O$ and $\ket{\beta_0}^{B_O}, \ket{\beta_1}^{B_O} \in B_O$, then we have
	\begin{equation}
		\label{eq:gh6ah7a}
		P^-_{\alpha \beta_0} \oplus P^\square_{\alpha \beta_0}
		= P^-_{\alpha \beta_1} \oplus P^\square_{\alpha \beta_1}.
	\end{equation}
	(Note that $P^-_{\alpha \beta_0} = P^-_{\alpha \beta_1}$ does  not hold in general.)

	Let $\ket{\alpha_0}^{A_O}, \ket{\alpha_1}^{A_O} \in A_O$ and $\ket{\beta}^{B_O} \in B_O$, then
	\begin{align}
		\label{eq:Pfjdasigjaroiag}
		P^|_{\alpha_0 \beta} \oplus P^\square_{\alpha_0 \beta} & = P^|_{\alpha_1 \beta} \oplus P^\square_{\alpha_1 \beta}.
	\end{align}
\end{prop}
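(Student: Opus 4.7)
The plan is to reduce the claim to Prop.\,\ref{prop:7g7hq9wah} by first establishing a characterization of $P^-_{\alpha\beta} \oplus P^\square_{\alpha\beta}$ whose $\ket{\beta}^{B_O}$-dependence is exactly of the form that Prop.\,\ref{prop:7g7hq9wah} controls. Specifically, I would show that $\ket{\pi}^P \in P^-_{\alpha\beta} \oplus P^\square_{\alpha\beta}$ if and only if
\begin{equation}
\label{eq:planchar}
{}^{A_I B_I}\dbra{\Gamma}\, U \left( \ket{\pi}^P \ket{\alpha}^{A_O} \ket{\beta}^{B_O} \right) \perp \redspn{\lnsp{V}_{\overline{\alpha}\beta}}{A_I B_I}{F}
\end{equation}
holds for every $\dket{\Gamma}^{A_I B_I} \in A_I \otimes B_I$.

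To prove this characterization I would use the decomposition $P = P^|_{\alpha\beta} \oplus P^\square_{\alpha\beta} \oplus P^-_{\alpha\beta}$ from Prop.\,\ref{prop:DecompositionOfPab} to split $\ket{\pi}^P$ into three components, thereby splitting $U \left( \ket{\pi}^P \ket{\alpha}^{A_O} \ket{\beta}^{B_O} \right) = u^| + u^\square + u^-$ with each $u^X$ in the corresponding $\lnsp{V}^X_{\alpha\beta}$. By Prop.\,\ref{prop:DecompositionOfVab} the contractions ${}^{A_I B_I}\dbra{\Gamma}\, u^X$ lie in $F^X_{\alpha\beta}$, and by Prop.\,\ref{prop:FabDecomposition} the subspaces $F^\square_{\alpha\beta}$ and $F^-_{\alpha\beta}$ are orthogonal to $\redspn{\lnsp{V}_{\overline{\alpha}\beta}}{A_I B_I}{F}$ while $F^|_{\alpha\beta}$ is a subspace of it. Consequently \eqref{eq:planchar} collapses to ${}^{A_I B_I}\dbra{\Gamma}\, u^| = 0$ for every $\dket{\Gamma}^{A_I B_I}$, i.e., $u^| = 0$; unitarity of $U$ together with $\ket{\alpha}^{A_O}, \ket{\beta}^{B_O} \neq 0$ then forces the $P^|_{\alpha\beta}$-component of $\ket{\pi}^P$ to vanish, and the reverse direction is immediate from the same orthogonality relations.

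With the characterization at hand, Eq.\,\eqref{eq:gh6ah7a} is immediate from Prop.\,\ref{prop:7g7hq9wah}: that proposition says \eqref{eq:planchar} at $\ket{\beta}^{B_O} = \ket{\beta_0}^{B_O}$ is equivalent to the same condition at $\ket{\beta}^{B_O} = \ket{\beta_1}^{B_O}$, so $P^-_{\alpha\beta_0} \oplus P^\square_{\alpha\beta_0}$ and $P^-_{\alpha\beta_1} \oplus P^\square_{\alpha\beta_1}$ consist of the same vectors $\ket{\pi}^P$. The second identity, Eq.\,\eqref{eq:Pfjdasigjaroiag}, follows by the symmetric route: interchanging the roles of $A$ and $B$ and using Eq.\,\eqref{c} of Thm.\,\ref{thm:equivalent} in place of Eq.\,\eqref{b}, one obtains the analogous characterization of $P^|_{\alpha\beta} \oplus P^\square_{\alpha\beta}$ in terms of orthogonality to $\redspn{\lnsp{V}_{\alpha\overline{\beta}}}{A_I B_I}{F}$ together with an $\alpha$-independent counterpart of Prop.\,\ref{prop:7g7hq9wah}.

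The hard part is the ``if'' direction of the characterization: it hinges on the fact that $F^|_{\alpha\beta}$ simultaneously lies inside $\redspn{\lnsp{V}_{\overline{\alpha}\beta}}{A_I B_I}{F}$ (so that orthogonality of a vector in $F^|_{\alpha\beta}$ to that subspace forces it to vanish) while $F^\square_{\alpha\beta}$ and $F^-_{\alpha\beta}$ are orthogonal to it (so that they drop out of the orthogonality condition). Once this bookkeeping is in place, the remainder of the argument is a direct appeal to Prop.\,\ref{prop:7g7hq9wah}.
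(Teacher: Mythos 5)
Your proposal is correct and follows essentially the same route as the paper: the paper's proof establishes exactly your characterization, phrased there as $\ket{\pi}^P \in P^-_{\alpha \beta} \oplus P^\square_{\alpha \beta}$ if and only if $\redspn{\lnsp{V}_{\pi \alpha \beta}}{A_I B_I}{F} \perp \redspn{\lnsp{V}_{\overline{\alpha} \beta}}{A_I B_I}{F}$ (your condition \emph{for all} $\dket{\Gamma}^{A_I B_I}$, since that reduced subspace is the span of the contractions ${}^{A_I B_I}\dbra{\Gamma}\, U ( \ket{\pi}^P \ket{\alpha}^{A_O} \ket{\beta}^{B_O} )$), using the same facts from Prop.\,\ref{prop:FabDecomposition} and Prop.\,\ref{prop:DecompositionOfVab}, and then invokes Prop.\,\ref{prop:7g7hq9wah} exactly as you do. One small attribution correction that does not affect validity: the $\alpha$-counterpart of Prop.\,\ref{prop:7g7hq9wah} needed for Eq.\,\eqref{eq:Pfjdasigjaroiag} traces back to Eq.\,\eqref{a} via the $A \leftrightarrow B$-swapped sesquilinear form of Prop.\,\ref{prop:fgijoiaojhb}, not to Eq.\,\eqref{c}; the conditions \eqref{b} and \eqref{c} enter only through the background decompositions.
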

\begin{proof}
	We only show the proof of Eq.\,\eqref{eq:gh6ah7a}. The other relation follows by similar methods.
	We must show that $P^-_{\alpha \beta} \oplus P^\square_{\alpha \beta}$ is independent of $\ket{\beta}^{B_O}$.
	Define $\lnsp{V}_{\pi \alpha \beta} := \vspan{U \left( \ket{\pi}^P \ket{\alpha}^{A_O} \ket{\beta}^{B_O} \right)}$.
	Let $\ket{\pi}^P \in P$ and $\ket{\beta}^{B_O} \in B_O$.
	Then, $\ket{\pi}^P \in P^-_{\alpha \beta} \oplus P^\square_{\alpha \beta}$ is equivalent to
	\begin{equation}
		\label{eq:8agg7ga8}
		\lnsp{V}_{\pi \alpha \beta} \subset \lnsp{V}^-_{\alpha \beta} \oplus \lnsp{V}^\square_{\alpha \beta}.
	\end{equation}
	We first prove that Eq.\,\eqref{eq:8agg7ga8} is satisfied if and only if
	\begin{equation}
		\label{eq:f89ga0}
		\redspn{\lnsp{V}_{\pi \alpha \beta}}{A_I B_I}{F} \subset F^-_{\alpha \beta} \oplus F^\square_{\alpha \beta}.
	\end{equation}
	The only-if part is implied by Eq.\,\eqref{eq:VFsqabsubFsqab} and Eq.\,\eqref{eq:VF-absubF-ab} in Prop.\,\ref{prop:DecompositionOfVab}.
	If $\redspn{\lnsp{V}_{\pi \alpha \beta}}{A_I B_I}{F} \subset F^-_{\alpha \beta} \oplus F^\square_{\alpha \beta}$ is satisfied,
	then $\redspn{\lnsp{V}_{\pi \alpha \beta}}{A_I B_I}{F} \perp F^|_{\alpha \beta}$ holds.
	This implies $\lnsp{V}_{\pi \alpha \beta} \perp \lnsp{V}^|_{\alpha \beta}$, hence Eq.\,\eqref{eq:8agg7ga8}.
	Next, we prove that Eq.\,\eqref{eq:f89ga0} is satisfied if and only if
	\begin{equation}
		\label{eq:b8a7b9dbs98}
		\redspn{\lnsp{V}_{\pi \alpha \beta}}{A_I B_I}{F} \perp \redspn{\lnsp{V}_{\overline{\alpha} \beta}}{A_I B_I}{F}.
	\end{equation}
	The only-if part is implied from Eq.\,\eqref{eq:rigoWF|abperpFexab} in Prop.\,\ref{prop:FabDecomposition}.
	If $\redspn{\lnsp{V}_{\pi \alpha \beta}}{A_I B_I}{F} \perp \redspn{\lnsp{V}_{\overline{\alpha} \beta}}{A_I B_I}{F}$,
	then $\redspn{\lnsp{V}_{\pi \alpha \beta}}{A_I B_I}{F} \perp F^|_{\alpha \beta}$ due to Eq.\,\eqref{eq:rigoWF|absubFexab} in Prop.\,\ref{prop:FabDecomposition}.
	Since $\redspn{\lnsp{V}_{\pi \alpha \beta}}{A_I B_I}{F} \subset \redspn{\lnsp{V}_{\alpha \beta}}{A_I B_I}{F}$ and Eq.\,\eqref{eq:Fab=F|abopFsqabopF-ab} in Prop.\,\ref{prop:FabDecomposition} hold, the relation presented in Eq.\,\eqref{eq:f89ga0} is satisfied.

	Assume $\ket{\pi}^P \in P^-_{\alpha \beta_0} \oplus P^\square_{\alpha \beta_0}$.
	Equation \eqref{eq:b8a7b9dbs98} with $\ket{\beta}^{B_O} = \ket{\beta_0}^{B_O}$ and Prop.\,\ref{prop:7g7hq9wah} imply Eq.\,\eqref{eq:b8a7b9dbs98} with $\ket{\beta}^{B_O} = \ket{\beta_1}^{B_O}$.
	Thus we obtain $\ket{\pi}^P \in P^-_{\alpha \beta_1} \oplus P^\square_{\alpha \beta_1}$, which proves Eq.\,\eqref{eq:gh6ah7a}.
\end{proof}

\begin{proof} \ (Prop.\,\ref{prop:P|00=P|0kp})
	We now see that Prop.\,\ref{prop:P|00=P|0kp} can be proved by combining Prop.\,\ref{prop:P|perp00=P|perpkp} and Eq.\,\eqref{eq:Pdecomjfaijgwji3} in Prop.\,\ref{prop:DecompositionOfPab}.
\end{proof}

%%%%%%%%%%%%%%%%%%%%%%%%%%%%%%%%%%%%%%%%%%%%%%%%%%%%%%%%%%%%%%%%%%%%%%%%
%%%%%%%%%%%%%%%%%%%%%%%%      NEW SUBSECTION    %%%%%%%%%%%%%%%%%%%%%%%%
%%%%%%%%%%%%%%%%%%%%%%%%%%%%%%%%%%%%%%%%%%%%%%%%%%%%%%%%%%%%%%%%%%%%%%%%

\subsection{Decomposing $F$}
\label{secApp:divideFintoFABoplFBAoplFpal}

%%%%%%%%%%%%%%%%%%%%%%%%%%%%%%%%%%%%%%%%%%%%%%%%%%%%%%%%%%%%%%%%%%%%%%%%
%%%%%%%%%%%%%%%%%%%%%%%%      NEW SUBSECTION    %%%%%%%%%%%%%%%%%%%%%%%%
%%%%%%%%%%%%%%%%%%%%%%%%%%%%%%%%%%%%%%%%%%%%%%%%%%%%%%%%%%%%%%%%%%%%%%%%

For given subspaces $A_{\mathrm{sub}}$, $B_{\mathrm{sub}}$ of $A_O$, $B_O$, respectively, define subspaces of $A_I \otimes B_I \otimes F$ as
\begin{equation}
	\label{eq:defvtildevertab}
	\tilde{\lnsp{V}}^|(A_{\mathrm{sub}}, B_{\mathrm{sub}}) := U \left( \tilde{P}^| \otimes A_{\mathrm{sub}} \otimes B_{\mathrm{sub}} \right),
\end{equation}
\begin{equation}
	\tilde{\lnsp{V}}^\square(A_{\mathrm{sub}}, B_{\mathrm{sub}}) := U \left( \tilde{P}^\square \otimes A_{\mathrm{sub}} \otimes B_{\mathrm{sub}} \right),
\end{equation}
\begin{equation}
	\tilde{\lnsp{V}}^-(A_{\mathrm{sub}}, B_{\mathrm{sub}}) := U \left( \tilde{P}^- \otimes A_{\mathrm{sub}} \otimes B_{\mathrm{sub}} \right).
\end{equation}
From Eq.\,\eqref{eq:P=PABopPBAopPparallel} in Prop.\,\ref{prop:DeconpositionOfP}, $A_I \otimes B_I \otimes F$ decomposes to $\tilde{\lnsp{V}}^|_{A_O B_O} \oplus \tilde{\lnsp{V}}^\square_{A_O B_O} \oplus \tilde{\lnsp{V}}^-_{A_O B_O}$.
This implies that
\begin{equation}
	\label{eq:FeqFAB+FBA+Fpal}
	F = \redspn{\tilde{\lnsp{V}}^|_{A_O B_O}}{A_I B_I}{F} + \redspn{\tilde{\lnsp{V}}^\square_{A_O B_O}}{A_I B_I}{F} + \redspn{\tilde{\lnsp{V}}^-_{A_O B_O}}{A_I B_I}{F}.
\end{equation}
We now show that $\redspn{\tilde{\lnsp{V}}^|_{A_O B_O}}{A_I B_I}{F}, \redspn{\tilde{\lnsp{V}}^\square_{A_O B_O}}{A_I B_I}{F}, \redspn{\tilde{\lnsp{V}}^-_{A_O B_O}}{A_I B_I}{F}$ in Eq.\,\eqref{eq:FeqFAB+FBA+Fpal} are orthogonal to each other in Prop.\,\ref{prop:FABperpFBA} and Prop.\,\ref{prop:FeqFABoplFBAoplFpal}.
\begin{prop}
	\label{prop:FABperpFBA}
	The subspaces $\redspn{\tilde{\lnsp{V}}^|_{A_O B_O}}{A_I B_I}{F}$ and $\redspn{\tilde{\lnsp{V}}^-_{A_O B_O}}{A_I B_I}{F}$ are orthogonal.
\end{prop}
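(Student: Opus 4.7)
My strategy is to unravel the definitions and push the orthogonality down to the elementary vectors that generate the two reduced subspaces. By bi-additivity of the reduced-subspace operation $\redsp{\,\cdot\,}{A_IB_I}{F}$ together with the description of $\tilde P^{|}$ and $\tilde P^{-}$ in Prop.\,\ref{prop:DeconpositionOfP}, the claim reduces to showing that for every $\alpha_1\in A_O$, $\beta_1\in B_O$, every $\pi^{|}\in P^{|}_{\alpha_1\beta_0}$ and $\pi^{-}\in P^{-}_{\alpha_0\beta_1}$, and for all $\alpha,\alpha'\in A_O$, $\beta,\beta'\in B_O$, $\dket\Gamma,\dket{\Gamma'}\in A_I\otimes B_I$,
\[
I\;:=\;\bigl({}^{A_IB_I}\!\dbra{\Gamma}\,U(\pi^{|}\otimes\alpha\otimes\beta),\;{}^{A_IB_I}\!\dbra{\Gamma'}\,U(\pi^{-}\otimes\alpha'\otimes\beta')\bigr)\;=\;0.
\]

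The key enabling fact is Prop.\,\ref{prop:P|00=P|0kp}, which lets me view $\pi^{|}$ as an element of $P^{|}_{\alpha_1\beta}$ for \emph{every} $\beta\in B_O$ and $\pi^{-}$ as an element of $P^{-}_{\alpha\beta_1}$ for \emph{every} $\alpha\in A_O$. Consequently, whenever $\alpha=\alpha_1$ the left-hand vector lies in $\lnsp V^{|}_{\alpha_1\beta}$, so its $F$-reduction ${}^{A_IB_I}\!\dbra{\Gamma}\,U(\pi^{|}\otimes\alpha_1\otimes\beta)$ sits in $F^{|}_{\alpha_1\beta}$ by Prop.\,\ref{prop:DecompositionOfVab}, and symmetrically whenever $\beta'=\beta_1$ the right-hand vector's $F$-reduction lies in $F^{-}_{\alpha'\beta_1}$.

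I would then expand sesquilinearly, writing $\alpha=a_{\parallel}\alpha_1+a_{\perp}\alpha_1^{\perp}$ and $\beta'=b_{\parallel}\beta_1+b_{\perp}\beta_1^{\perp}$ with $\alpha_1^{\perp}\perp\alpha_1$ and $\beta_1^{\perp}\perp\beta_1$, and split $I$ into the four resulting subterms. The subterms with $\alpha\in\overline{\alpha_1}$ or $\beta'\in\overline{\beta_1}$ can be handled by a further sesquilinear expansion in $\alpha'$ and $\beta$: after expansion one arrives at configurations where $\alpha\perp\alpha'$ and $\beta\perp\beta'$, so that condition (a) of Thm.\,\ref{thm:equivalent} immediately forces the contribution to vanish. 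The genuinely subtle subterm is $\alpha=\alpha_1,\,\beta'=\beta_1$: here I combine the Fab orthogonalities of Prop.\,\ref{prop:FabDecomposition}. From Eq.\,\eqref{eq:F|abandFsqabperpFanotb} together with (A) (via $F^{|}_{\alpha_1\beta}\subset\redspn{\lnsp V_{\alpha_1\beta}}{A_IB_I}{F}$) and bi-additivity, I obtain $F^{|}_{\alpha_1\beta}\perp\redspn{\lnsp V_{A_O\overline\beta}}{A_IB_I}{F}$; symmetrically $F^{-}_{\alpha'\beta_1}\perp\redspn{\lnsp V_{\overline{\alpha'}B_O}}{A_IB_I}{F}$. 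A further sesquilinear expansion of $\beta$ along $\{\beta_1,\overline{\beta_1}\}$ and of $\alpha'$ along $\{\alpha_1,\overline{\alpha_1}\}$ then collapses the problem to the basic orthogonality $F^{|}_{\alpha_1\beta_1}\perp F^{-}_{\alpha_1\beta_1}$ (Eq.\,\eqref{F|00perpF-00}).

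The main obstacle is precisely this central case $\alpha=\alpha_1,\,\beta'=\beta_1$ with unrelated $\alpha',\beta$: no single one of (a), (b), (c) applies in one shot, and the subspaces $F^{|}_{\alpha_1\beta}$ and $F^{-}_{\alpha'\beta_1}$ are not orthogonal in general. Success hinges on carefully interleaving Prop.\,\ref{prop:P|00=P|0kp} (to get the $\beta$- and $\alpha$-independent inclusions), Prop.\,\ref{prop:FabDecomposition} (to identify where the $F$-reductions live), and condition (A) (to close the bi-additive orthogonality); the bookkeeping of this reduction is the delicate part of the argument.
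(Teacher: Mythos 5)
Your reduction to elementary vectors, your use of Prop.\,\ref{prop:P|00=P|0kp} to regard $\ket{\pi^{|}}^P$ as an element of $P^{|}_{\alpha_1 \beta}$ for \emph{every} $\ket{\beta}^{B_O}$ (and $\ket{\pi^{-}}^P$ of $P^{-}_{\alpha' \beta_1}$ for every $\ket{\alpha'}^{A_O}$), and your treatment of the aligned case $\alpha = \alpha_1$, $\beta' = \beta_1$ are all sound: in that configuration the four-term expansion of $\beta$ and $\alpha'$ does close, using Eq.\,\eqref{F|00perpF-00}, the orthogonality relations of Prop.\,\ref{prop:FabDecomposition}, and condition \eqref{A}, much as you describe. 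The genuine gap is your dismissal of the remaining subterms. A sesquilinear expansion of $\alpha'$ along $\{\alpha, \overline{\alpha}\}$ and of $\beta$ along $\{\beta', \overline{\beta'}\}$ always leaves a \emph{parallel} component, and condition \eqref{a} requires $\ket{\alpha} \perp \ket{\alpha^\perp}$ \emph{and} $\ket{\beta} \perp \ket{\beta^\perp}$ simultaneously, so it cannot kill everything. Concretely, the term
\begin{equation*}
\left( {}^{A_I B_I}\dbra{\Gamma}\, U \left( \ket{\pi^{|}}^{P} \ket{\alpha_0}^{A_O} \ket{\beta_0}^{B_O} \right),\;
{}^{A_I B_I}\dbra{\Gamma^\prime}\, U \left( \ket{\pi^{-}}^{P} \ket{\alpha_0}^{A_O} \ket{\beta_0}^{B_O} \right) \right)
\end{equation*}
with $\ket{\pi^{|}}^P \in P^{|}_{\alpha_1 \beta_0}$, $\ket{\pi^{-}}^P \in P^{-}_{\alpha_0 \beta_1}$, $\ket{\alpha_0}^{A_O} \perp \ket{\alpha_1}^{A_O}$ and $\ket{\beta_0}^{B_O} \perp \ket{\beta_1}^{B_O}$ survives all your expansions: both sides have the \emph{same} product input, so \eqref{a} is vacuous; the left vector has no $F^{|}$-membership (that would need $\ket{\pi^{|}}^P \in P^{|}_{\alpha_0 \beta_0}$, and $P^{|}_{\alpha\beta}$ genuinely depends on $\alpha$), and the right vector has none in any $F^{-}$ (since $P^{-}_{\alpha\beta}$ depends on $\beta$, as the paper explicitly warns in Prop.\,\ref{prop:P|perp00=P|perpkp}). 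Both reductions merely lie in the common subspace $\redspn{\lnsp{V}_{\alpha_0 \beta_0}}{A_I B_I}{F}$, and your toolkit is powerless there: Eq.\,\eqref{eq:rigoWF|absubFexab} is an \emph{inclusion} $F^{|}_{\alpha\beta} \subset \redspn{\lnsp{V}_{\overline{\alpha}\beta}}{A_I B_I}{F}$, not an orthogonality, so no bookkeeping with \eqref{a}/\eqref{A}/Prop.\,\ref{prop:FabDecomposition} can reach the conclusion --- this term is precisely the hard diagonal case, an instance of the statement itself, so your reduction is circular at this point.

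The paper closes exactly this case with machinery absent from your outline. It characterizes $P^{|}_{\alpha_1 \beta_0}$ as the span of those $\ket{\pi}^P$ for which $U \left( \ket{\pi}^P \ket{\alpha_1}^{A_O} \ket{\beta_0}^{B_O} \right)$ is \emph{separable} between $A_I$ and $B_I \otimes F$ (via Eq.\,\eqref{eq:V=AAVfuahurbu} in Prop.\,\ref{prop:DecompositionOfVab}); it then uses Prop.\,\ref{prop:a_bf0->a_bf1} --- itself resting on Prop.\,\ref{prop:fag4ghhas}, Lem.\,\ref{lem:sesquiperp} and condition \eqref{c} --- to transfer this separability to the misaligned input $\ket{\alpha_0}^{A_O}$ \emph{with the same $A_I$ tensor factor}; it realizes arbitrary factors $\ket{\phi}^{A_I}$ and $\ket{\psi}^{B_I}$; and it finally pulls the inner product back through the unitary $U$ to obtain $\left( \ket{\phi}^{A_I} \dket{\Psi_{\alpha_0}}^{B_I F}, \ket{\psi}^{B_I} \dket{\Phi_{\beta_0}}^{A_I F} \right) = \left( \ket{\pi^{|}_{\alpha_1\beta_0,\phi}}^P, \ket{\pi^{-}_{\alpha_0\beta_1,\psi}}^P \right) = 0$ from $\tilde{P}^{|} \perp \tilde{P}^{-}$, which is what forces the orthogonality of the arbitrary $\dbra{\Gamma}$- and $\dbra{\Gamma^\prime}$-reductions. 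Without this separability-transfer-plus-unitarity step, your argument cannot be completed.
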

\begin{prop}
	\label{prop:FeqFABoplFBAoplFpal}
	The subspace $\redspn{\tilde{\lnsp{V}}^\square_{A_O B_O}}{A_I B_I}{F} $ is orthogonal to $\redspn{\tilde{\lnsp{V}}^|_{A_O B_O}}{A_I B_I}{F}$ and to $\redspn{\tilde{\lnsp{V}}^-_{A_O B_O}}{A_I B_I}{F}$.
\end{prop}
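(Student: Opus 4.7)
The plan is to reduce Prop.\,\ref{prop:FeqFABoplFBAoplFpal} to a single tensor-structure identity for $\tilde{\lnsp{V}}^\square_{A_O B_O}$, and then establish that identity using the intersection characterization of $\tilde{P}^\square$ together with the full $(A_I \otimes B_I)$-tensor structure of each $\lnsp{V}^\square_{\alpha\beta}$. First, from Prop.\,\ref{prop:DeconpositionOfP} we have the orthogonal direct sum $P = \tilde{P}^| \oplus \tilde{P}^\square \oplus \tilde{P}^-$; since $U$ is unitary and therefore preserves orthogonal direct sums, applying $U$ to $P \otimes A_O \otimes B_O = A_I \otimes B_I \otimes F$ yields
\[
A_I \otimes B_I \otimes F \;=\; \tilde{\lnsp{V}}^|_{A_O B_O} \oplus \tilde{\lnsp{V}}^\square_{A_O B_O} \oplus \tilde{\lnsp{V}}^-_{A_O B_O}.
\]

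Next I would apply Cor.\,\ref{cor:F0perpF1=>V0=EtensorF0} with $\mathcal{E} = A_I \otimes B_I$, $\mathcal{F} = F$, $\lnsp{W}_0 = \tilde{\lnsp{V}}^\square_{A_O B_O}$, and $\lnsp{W}_1 = \tilde{\lnsp{V}}^|_{A_O B_O} \oplus \tilde{\lnsp{V}}^-_{A_O B_O}$. The corollary shows that $\redspn{\tilde{\lnsp{V}}^\square_{A_O B_O}}{A_I B_I}{F} \perp \redspn{\lnsp{W}_1}{A_I B_I}{F}$ if and only if
\[
A_I \otimes B_I \otimes \redspn{\tilde{\lnsp{V}}^\square_{A_O B_O}}{A_I B_I}{F} \;=\; \tilde{\lnsp{V}}^\square_{A_O B_O}.
\]
By additivity of the reduced-subspace operation one has $\redspn{\lnsp{W}_1}{A_I B_I}{F} = \redspn{\tilde{\lnsp{V}}^|_{A_O B_O}}{A_I B_I}{F} + \redspn{\tilde{\lnsp{V}}^-_{A_O B_O}}{A_I B_I}{F}$, so the orthogonality on the left is exactly the pair of orthogonalities claimed by the proposition. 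The whole proof thus reduces to the tensor-structure identity displayed above. Its $\subset$ direction is automatic, since $\lnsp{W} \subset \mathcal{E} \otimes \redspn{\lnsp{W}}{\mathcal{E}}{\mathcal{F}}$ holds for every subspace $\lnsp{W}$; as a useful refinement, since $\tilde{P}^\square \subset P^\square_{\alpha\beta}$ for every $\alpha, \beta$ by the definition in Eq.\,\eqref{def:P2}, and $\lnsp{V}^\square_{\alpha\beta} = A_I \otimes B_I \otimes F^\square_{\alpha\beta}$ by Prop.\,\ref{prop:DecompositionOfVab}, one has $\tilde{\lnsp{V}}^\square_{A_O B_O} \subset A_I \otimes B_I \otimes \bigplus_{\alpha,\beta} F^\square_{\alpha\beta}$, so that $\redspn{\tilde{\lnsp{V}}^\square_{A_O B_O}}{A_I B_I}{F}$ lies inside $\bigplus_{\alpha,\beta} F^\square_{\alpha\beta}$.

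The reverse inclusion is the main obstacle: one must show that $\tilde{\lnsp{V}}^\square_{A_O B_O}$ has full $(A_I \otimes B_I)$-tensor structure, equivalently that it is invariant under $\mathcal{L}(A_I \otimes B_I) \otimes \1^{F}$. The plan, in close analogy with the proof of Prop.\,\ref{prop:FABperpFBA}, is as follows: given a generator $\tket{\xi} = U(\ket{\pi^\square}\ket{\alpha}\ket{\beta})$ with $\ket{\pi^\square} \in \tilde{P}^\square$ and an arbitrary $M \in \mathcal{L}(A_I \otimes B_I)$, the vector $(M \otimes \1^F)\tket{\xi}$ remains in $\lnsp{V}^\square_{\alpha\beta}$ because $\lnsp{V}^\square_{\alpha\beta} = A_I \otimes B_I \otimes F^\square_{\alpha\beta}$ has the full $A_I \otimes B_I$ factor, and hence equals $U(\ket{\pi^*}\ket{\alpha}\ket{\beta})$ for a unique $\ket{\pi^*} \in P^\square_{\alpha\beta}$; the hard step is then to show that $\ket{\pi^*}$ actually lies in $P^\square_{\alpha'\beta'}$ for every other $\alpha', \beta'$, placing it in $\bigcap_{\alpha',\beta'} P^\square_{\alpha'\beta'} = \tilde{P}^\square$ and hence $(M \otimes \1^F)\tket{\xi}$ in $\tilde{\lnsp{V}}^\square_{A_O B_O}$. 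I expect this last step to be the hardest, and suspect it will require combining Prop.\,\ref{prop:P|perp00=P|perpkp} with the orthogonality conditions $F^\square_{\alpha\beta} \perp \redspn{\lnsp{V}_{\overline{\alpha}\beta}}{A_I B_I}{F}$ (Eq.\,\eqref{eq:rigoWF|abperpFexab}) and $F^\square_{\alpha\beta} \perp \redspn{\lnsp{V}_{\alpha\overline{\beta}}}{A_I B_I}{F}$ (Eq.\,\eqref{eq:F|abandFsqabperpFanotb}), and invoking the already-proved Prop.\,\ref{prop:FABperpFBA}, in order to control the $\lnsp{V}^|_{\alpha'\beta'}$ and $\lnsp{V}^-_{\alpha'\beta'}$ components of $U(\ket{\pi^*}\ket{\alpha'}\ket{\beta'})$ and show that they vanish.
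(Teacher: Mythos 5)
Your reduction coincides exactly with the paper's: you apply Cor.\,\ref{cor:F0perpF1=>V0=EtensorF0} with $\mathcal{E} = A_I \otimes B_I$, $\mathcal{F} = F$, $\lnsp{W}_0 = \tilde{\lnsp{V}}^\square_{A_O B_O}$ and $\lnsp{W}_1 = \tilde{\lnsp{V}}^|_{A_O B_O} \oplus \tilde{\lnsp{V}}^-_{A_O B_O}$, so that everything rests on the identity $A_I \otimes B_I \otimes \redspn{\tilde{\lnsp{V}}^\square_{A_O B_O}}{A_I B_I}{F} = \tilde{\lnsp{V}}^\square_{A_O B_O}$; and your framing of the hard direction --- lift a generator $U ( \ket{\tilde{\pi}^\square}^P \ket{\alpha}^{A_O} \ket{\beta}^{B_O} )$ by $M \otimes \1^F$, write the result as $U ( \ket{\pi^*}^P \ket{\alpha}^{A_O} \ket{\beta}^{B_O} )$ with $\ket{\pi^*}^P \in P^\square_{\alpha \beta}$, then show $\ket{\pi^*}^P \in P^\square_{\alpha^\prime \beta^\prime}$ for all $(\alpha^\prime, \beta^\prime)$ --- is precisely the paper's Prop.\,\ref{prop:g78awyjivb3} (there with rank-one $M = \dket{\Gamma^\prime} \dbra{\Gamma}$, which suffices by linearity). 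But you stop exactly where the work is, and the ingredients you nominate for the remaining step would not close it.

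Concretely: Prop.\,\ref{prop:P|perp00=P|perpkp} only controls the \emph{sums} $P^- \oplus P^\square$ (invariant under changing $\beta$ at fixed $\alpha$, Eq.\,\eqref{eq:gh6ah7a}) and $P^| \oplus P^\square$ (invariant under changing $\alpha$ at fixed $\beta$, Eq.\,\eqref{eq:Pfjdasigjaroiag}). Starting from $\ket{\pi^*}^P \in P^\square_{\alpha \beta}$, chaining these relations yields $\ket{\pi^*}^P \in P^-_{\alpha \beta^\prime} \oplus P^\square_{\alpha \beta^\prime}$ and $\ket{\pi^*}^P \in P^|_{\alpha^\prime \beta} \oplus P^\square_{\alpha^\prime \beta}$, but never \emph{both} memberships at the same pair $(\alpha^\prime, \beta^\prime)$, which is what $P^\square_{\alpha^\prime \beta^\prime} = (P^|_{\alpha^\prime \beta^\prime} \oplus P^\square_{\alpha^\prime \beta^\prime}) \cap (P^-_{\alpha^\prime \beta^\prime} \oplus P^\square_{\alpha^\prime \beta^\prime})$ requires; moreover $P^\square_{\alpha \beta}$ genuinely depends on $(\alpha, \beta)$ in general --- this is exactly why $\tilde{P}^\square$ is defined as an intersection --- so ``showing the $\lnsp{V}^|_{\alpha^\prime \beta^\prime}$ and $\lnsp{V}^-_{\alpha^\prime \beta^\prime}$ components vanish'' restates the goal rather than proving it, and Prop.\,\ref{prop:FABperpFBA}, which concerns the mutual orthogonality of the $|$ and $-$ reduced subspaces, does not bear on locating the $\square$ component. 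What actually closes the step in the paper is a quantitative rigidity argument absent from your plan: Prop.\,\ref{prop:AB00perp01sq} (whose hypothesis is $\ket{\tilde{\pi}^\square}^P \in \tilde{P}^\square$, not merely $P^\square_{\alpha \beta}$ --- a fact your sketch never exploits) feeds Lem.\,\ref{lem:sesquiperp} to give Props.\,\ref{prop:fd78gd9sg8} and \ref{prop:fa79agg}, \ie, the inner products $\bigl( {}^{A_I B_I}\dbra{\Gamma} U ( \ket{\tilde{\pi}^\square}^P \ket{\alpha_1}^{A_O} \ket{\beta_1}^{B_O} ),\, {}^{A_I B_I}\dbra{\Gamma^\prime} U ( \ket{\pi^*}^P \ket{\alpha_1}^{A_O} \ket{\beta_1}^{B_O} ) \bigr)$ and the norms $\| {}^{A_I B_I}\dbra{\Gamma} U ( \ket{\tilde{\pi}^\square}^P \ket{\alpha_1}^{A_O} \ket{\beta_1}^{B_O} ) \|$ are independent of the normalized $(\alpha_1, \beta_1)$. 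Evaluating at $(\alpha_0, \beta_0)$ then gives inner product $1$ between unit vectors, and this Cauchy--Schwarz saturation forces $U ( \ket{\pi^*}^P \ket{\alpha_1}^{A_O} \ket{\beta_1}^{B_O} ) = \dket{\Gamma^\prime}^{A_I B_I} \ket{f^\square_{\alpha_1 \beta_1}}^F \in \lnsp{V}^\square_{\alpha_1 \beta_1}$, hence $\ket{\pi^*}^P \in P^\square_{\alpha_1 \beta_1}$ for every pair. Without this rigidity mechanism, your proposal is a correct reduction plus an unproven conjecture at its core.
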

A proof of Prop.\,\ref{prop:FABperpFBA} is given in Appendix \ref{secApp:proofFFFFdecmpdoijfaj} and
a proof of Prop.\,\ref{prop:FeqFABoplFBAoplFpal} is given in Appendix \ref{secApp:proofofFeqFABoplFBAoplFpal}.
From Eq.\,\eqref{eq:FeqFAB+FBA+Fpal}, Prop.\,\ref{prop:FABperpFBA} and Prop.\,\ref{prop:FeqFABoplFBAoplFpal}, we obtain
\begin{equation}
	\label{eq:FeqFABoplFBAoplFpalinproof}
	F = \redspn{\tilde{\lnsp{V}}^|_{A_O B_O}}{A_I B_I}{F} \oplus \redspn{\tilde{\lnsp{V}}^\square_{A_O B_O}}{A_I B_I}{F} \oplus \redspn{\tilde{\lnsp{V}}^-_{A_O B_O}}{A_I B_I}{F},
\end{equation}
that is, these subspaces are mutually orthogonal.
We define each subspace in Eq.\,\eqref{eq:FeqFABoplFBAoplFpalinproof} as
$\tilde{F}^| := \redspn{\tilde{\lnsp{V}}^|_{A_O B_O}}{A_I B_I}{F}$,
$\tilde{F}^\square := \redspn{\tilde{\lnsp{V}}^\square_{A_O B_O}}{A_I B_I}{F}$, and
$\tilde{F}^- := \redspn{\tilde{\lnsp{V}}^-_{A_O B_O}}{A_I B_I}{F}$.
Since $A_I \otimes B_I \otimes F = \tilde{\lnsp{V}}^|_{A_O B_O} \oplus \tilde{\lnsp{V}}^\square_{A_O B_O} \oplus \tilde{\lnsp{V}}^-_{A_O B_O}$, by taking
$A_I \otimes B_I$ as $\mathcal{E}$, $F$ as $\mathcal{F}$, $\lnsp{V}^|_{A_O B_O}$ as $\lnsp{W}_0$ and $\lnsp{V}^\square_{A_O B_O} \oplus \lnsp{V}^-_{A_O B_O}$ as $\lnsp{W}_1$
in Cor.\,\ref{cor:F0perpF1=>V0=EtensorF0} and using Eq.\,\eqref{eq:FeqFABoplFBAoplFpalinproof}, we obtain $\tilde{\lnsp{V}}^|_{A_O B_O} = A_I \otimes B_I \otimes \tilde{F}^|$.
Similarly, we obtain $\tilde{\lnsp{V}}^|_{A_O B_O} = A_I \otimes B_I \otimes \tilde{F}^|$,
$\tilde{\lnsp{V}}^\square_{A_O B_O} = A_I \otimes B_I \otimes \tilde{F}^\square$, and
$\tilde{\lnsp{V}}^-_{A_O B_O} = A_I \otimes B_I \otimes \tilde{F}^-$.

To summarize, we obtain the following the decomposition of $F$.
\begin{prop}
	\label{prop:DecompositionOfF}
	$F$ is decomposed into
	\begin{equation}
		F = \tilde{F}^| \oplus \tilde{F}^\square \oplus \tilde{F}^-
	\end{equation}
	using subspaces $\tilde{F}^|, \tilde{F}^\square, \tilde{F}^-$ of $F$ satisfying
	\begin{equation}
		\label{VvTilde=AotimesBotimesFvTilde}
		\tilde{\lnsp{V}}^|_{A_O B_O} = A_I \otimes B_I \otimes \tilde{F}^|,
	\end{equation}
	\begin{equation}
		\label{VsTilde=AotimesBotimesFsTilde}
		\tilde{\lnsp{V}}^\square_{A_O B_O} = A_I \otimes B_I \otimes \tilde{F}^\square,
	\end{equation}
	\begin{equation}
		\label{VhTilde=AotimesBotimesFhTilde}
		\tilde{\lnsp{V}}^-_{A_O B_O} = A_I \otimes B_I \otimes \tilde{F}^-.
	\end{equation}
\end{prop}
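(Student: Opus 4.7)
I plan to prove Prop.\,\ref{prop:FeqFABoplFBAoplFpal} by separately establishing $\tilde{F}^\square \perp \tilde{F}^|$ and $\tilde{F}^\square \perp \tilde{F}^-$, noting that the latter follows from the former via the $A \leftrightarrow B$ symmetry inherent in conditions (a)--(c) of Thm.\,\ref{thm:equivalent} (under which $\tilde{P}^| \leftrightarrow \tilde{P}^-$ and $\tilde{F}^| \leftrightarrow \tilde{F}^-$). I therefore focus on showing that
\begin{equation*}
	I := \left( {}^{A_IB_I}\dbra{\Gamma_1} U(\ket{\pi}^\square \otimes \ket{\alpha_1} \otimes \ket{\beta_1}),\; {}^{A_IB_I}\dbra{\Gamma_2} U(\ket{\pi}^| \otimes \ket{\alpha_2} \otimes \ket{\beta_2}) \right)_F = 0
\end{equation*}
for every $\ket{\pi}^\square \in \tilde{P}^\square$, $\ket{\pi}^| \in \tilde{P}^|$, $\ket{\alpha_j} \in A_O$, $\ket{\beta_j} \in B_O$, and $\dket{\Gamma_j} \in A_I \otimes B_I$, since such vectors span the two reduced subspaces.

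The structural inputs I will use are $\tilde{P}^\square \subset P^\square_{\alpha\beta}$ for every $\alpha,\beta$ (from the definition in Prop.\,\ref{prop:DeconpositionOfP}) and $\tilde{P}^| \subset P^|_{\alpha\beta} \oplus P^\square_{\alpha\beta}$ for every $\alpha,\beta$ (a consequence of Prop.\,\ref{prop:P|00=P|0kp} together with the $\alpha$-independence of $P^|_{\alpha\beta} \oplus P^\square_{\alpha\beta}$ proved in Prop.\,\ref{prop:P|perp00=P|perpkp}). By Prop.\,\ref{prop:DecompositionOfVab} the left factor of $I$ lies in $F^\square_{\alpha_1\beta_1}$. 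Expanding $\ket{\alpha_2} = a\ket{\alpha_1} + a^\prime\ket{\alpha_1^\perp}$ and $\ket{\beta_2} = b\ket{\beta_1} + b^\prime\ket{\beta_1^\perp}$ with $\ket{\alpha_1^\perp} \perp \ket{\alpha_1}$, $\ket{\beta_1^\perp} \perp \ket{\beta_1}$, the second factor of $I$ splits into four contributions; the three involving $\alpha_1^\perp$ or $\beta_1^\perp$ have $F$-reductions in $\redspn{\lnsp{V}_{\alpha_1\overline{\beta_1}}}{A_IB_I}{F}$, $\redspn{\lnsp{V}_{\overline{\alpha_1}\beta_1}}{A_IB_I}{F}$, or $\redspn{\lnsp{V}_{\overline{\alpha_1}\overline{\beta_1}}}{A_IB_I}{F}$. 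Each is orthogonal to $F^\square_{\alpha_1\beta_1}$: the first two by Prop.\,\ref{prop:FabDecomposition}, and the third because $F^\square_{\alpha_1\beta_1} \subset \redspn{\lnsp{V}_{\alpha_1\beta_1}}{A_IB_I}{F}$ together with Eq.\,\eqref{A}. Hence $I$ reduces to $ab \cdot J$ where both vectors in $J$ are evaluated at $(\alpha_1,\beta_1)$.

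It then remains to prove $J = 0$. Decomposing $\ket{\pi}^| = \ket{\pi}^|_{\alpha_1\beta_1} + \ket{\pi}^\square_{\alpha_1\beta_1}$ in $P^|_{\alpha_1\beta_1} \oplus P^\square_{\alpha_1\beta_1}$, the $P^|_{\alpha_1\beta_1}$ piece produces a reduction in $F^|_{\alpha_1\beta_1}$, orthogonal to the left factor by Prop.\,\ref{prop:FabDecomposition}. Combining $\ket{\pi}^| \perp \tilde{P}^\square$ with $\ket{\pi}^|_{\alpha_1\beta_1} \in P^|_{\alpha_1\beta_1} \perp P^\square_{\alpha_1\beta_1} \supset \tilde{P}^\square$ forces $\ket{\pi}^\square_{\alpha_1\beta_1} \in P^\square_{\alpha_1\beta_1} \cap (\tilde{P}^\square)^\perp$, and the remaining inner product
\begin{equation*}
	J^\prime := \left( {}^{A_IB_I}\dbra{\Gamma_1} U(\ket{\pi}^\square \otimes \alpha_1 \otimes \beta_1),\; {}^{A_IB_I}\dbra{\Gamma_2} U(\ket{\pi}^\square_{\alpha_1\beta_1} \otimes \alpha_1 \otimes \beta_1) \right)_F
\end{equation*}
has both sides sitting in $F^\square_{\alpha_1\beta_1}$.

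Proving $J^\prime = 0$ is the main obstacle, since $P$-orthogonality of the two arguments alone does not imply orthogonality of arbitrary $F$-reductions. I plan to exploit the much stronger defining property $\ket{\pi}^\square \in \bigcap_{\alpha,\beta} P^\square_{\alpha\beta}$, which forces $U(\ket{\pi}^\square \otimes \alpha \otimes \beta) \in A_I \otimes B_I \otimes F^\square_{\alpha\beta}$ for \emph{every} $\alpha,\beta$. Varying $(\alpha,\beta)$ and applying Prop.\,\ref{prop:fgijoiaojhb}-style invariance identities (themselves derived from conditions (a)--(c) of Thm.\,\ref{thm:equivalent}) should generate enough linear constraints to conclude $J^\prime = 0$. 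As an equivalent reformulation, Cor.\,\ref{cor:F0perpF1=>V0=EtensorF0} applied to the unitarity decomposition $A_I \otimes B_I \otimes F = \tilde{\lnsp{V}}^\square_{A_OB_O} \oplus (\tilde{\lnsp{V}}^|_{A_OB_O} \oplus \tilde{\lnsp{V}}^-_{A_OB_O})$ reduces the whole proposition to the tensor-factorization identity $\tilde{\lnsp{V}}^\square_{A_OB_O} = A_I \otimes B_I \otimes \tilde{F}^\square$, whose nontrivial $\supset$ inclusion is the same technical step in a different guise.
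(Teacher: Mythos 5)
Your reduction of $I$ to the diagonal term $J$, and of $J$ to $J^\prime$, is sound and in fact tracks the paper's own intermediate machinery: the inclusions $\tilde{P}^\square \subset P^\square_{\alpha\beta}$ and $\tilde{P}^| \subset P^|_{\alpha\beta} \oplus P^\square_{\alpha\beta}$ you use are exactly what underlie Prop.\,\ref{prop:AB00perp01sq} and Prop.\,\ref{prop:AB00perp01}, and your $A \leftrightarrow B$ symmetry claim for deducing $\tilde{F}^\square \perp \tilde{F}^-$ from $\tilde{F}^\square \perp \tilde{F}^|$ is legitimate, since the swap permutes conditions \eqref{b} and \eqref{c} and fixes \eqref{a}. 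The genuine gap is the step you defer: $J^\prime = 0$ is precisely the paper's Prop.\,\ref{prop:g78awyjivb3}, and ``enough linear constraints'' from invariance identities will not produce it. What the relevant invariance statement (Prop.\,\ref{prop:fd78gd9sg8}, which requires its first argument in $\tilde{P}^\square$) gives you is that, for a \emph{fixed} $\ket{\sigma}^P \in P^\square_{\alpha_1\beta_1} \cap (\tilde{P}^\square)^\perp$, the quantity $J^\prime$ is independent of $(\alpha,\beta)$ at fixed norms; but constancy together with ${}^P\braket{\tilde{\pi}^\square|\sigma} = 0$ does not force the constant to vanish --- summing the constant identity over product bases of $A_O \otimes B_O$ yields only a partial-trace identity that $P$-orthogonality does not imply, and moreover your $\ket{\pi}^\square_{\alpha_1\beta_1}$ itself varies with $(\alpha_1,\beta_1)$. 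The paper closes this with a saturation argument, not constraint counting: by Prop.\,\ref{prop:fa79agg} the reduced vectors ${}^{A_I B_I}\dbra{\Gamma} U ( \ket{\tilde{\pi}^\square}^P \ket{\alpha}^{A_O} \ket{\beta}^{B_O} )$ share one norm $\sqrt{\lambda}$ over all normalized $(\alpha,\beta)$; one takes the preimage $\ket{{\pi^\prime}^\square_{\alpha_0\beta_0}}^P \in P^\square_{\alpha_0\beta_0}$ of $\dket{\Gamma^\prime}^{A_I B_I} \ket{{f}^\square_{\alpha_0\beta_0}}^F$ and shows, via an inner product computed to equal $1$ (the equality case between unit vectors), that this \emph{same} vector produces $\dket{\Gamma^\prime}^{A_I B_I} \ket{{f}^\square_{\alpha_1\beta_1}}^F$ at every other $(\alpha_1,\beta_1)$, hence lies in $\bigcap_{\alpha,\beta} P^\square_{\alpha\beta} = \tilde{P}^\square$. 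That membership statement is the actual content; once it is available, $J^\prime = 0$ follows in one line from unitarity of $U$ and $\ket{\sigma}^P \perp \tilde{P}^\square$. You correctly identified the reformulation (the $\supset$ inclusion in $\tilde{\lnsp{V}}^\square_{A_O B_O} = A_I \otimes B_I \otimes \tilde{F}^\square$), but by your own admission left its proof open.

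Separately, even granting $J^\prime = 0$, you would have proved only Prop.\,\ref{prop:FeqFABoplFBAoplFpal}, whereas the statement is Prop.\,\ref{prop:DecompositionOfF}: you still need $\tilde{F}^| \perp \tilde{F}^-$ (Prop.\,\ref{prop:FABperpFBA}) before Cor.\,\ref{cor:F0perpF1=>V0=EtensorF0} can be applied to each summand of $A_I \otimes B_I \otimes F = \tilde{\lnsp{V}}^|_{A_O B_O} \oplus \tilde{\lnsp{V}}^\square_{A_O B_O} \oplus \tilde{\lnsp{V}}^-_{A_O B_O}$ to obtain all three factorizations \eqref{VvTilde=AotimesBotimesFvTilde}--\eqref{VhTilde=AotimesBotimesFhTilde}. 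This orthogonality is not delivered by the $A \leftrightarrow B$ swap you invoke: the swap exchanges $\tilde{F}^| \leftrightarrow \tilde{F}^-$ and fixes $\tilde{F}^\square$, so it converts $\tilde{F}^\square \perp \tilde{F}^|$ into $\tilde{F}^\square \perp \tilde{F}^-$ but maps the claim $\tilde{F}^| \perp \tilde{F}^-$ to itself. In the paper it rests on a different mechanism entirely --- the characterization of $P^|_{\alpha_1\beta_0}$ (resp.\ $P^-_{\alpha_0\beta_1}$) as spanned by vectors whose images under $U$ are separable between $A_I$ and $B_I \otimes F$ (resp.\ between $B_I$ and $A_I \otimes F$), the transport of that separability across different $(\alpha,\beta)$ via Prop.\,\ref{prop:a_bf0->a_bf1}, and a reduction of the $F$-inner products to inner products in $P$, where $\tilde{P}^| \perp \tilde{P}^-$ (Prop.\,\ref{prop:DeconpositionOfP}) finishes. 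None of the tools you list substitutes for that step, so the proposal as it stands proves strictly less than the proposition.
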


%%%%%%%%%%%%%%%%%%%%%%%%%%%%%%%%%%%%%%%%%%%%%%%%%%%%%%%%%%%%%%%%%%%%%%%%
%%%%%%%%%%%%%%%%%%%%%%%%      NEW SUBSECTION    %%%%%%%%%%%%%%%%%%%%%%%%
%%%%%%%%%%%%%%%%%%%%%%%%%%%%%%%%%%%%%%%%%%%%%%%%%%%%%%%%%%%%%%%%%%%%%%%%

\subsection{Proof of Prop.\,\ref{prop:FABperpFBA}}
\label{secApp:proofFFFFdecmpdoijfaj}

%%%%%%%%%%%%%%%%%%%%%%%%%%%%%%%%%%%%%%%%%%%%%%%%%%%%%%%%%%%%%%%%%%%%%%%%
%%%%%%%%%%%%%%%%%%%%%%%%      NEW SUBSECTION    %%%%%%%%%%%%%%%%%%%%%%%%
%%%%%%%%%%%%%%%%%%%%%%%%%%%%%%%%%%%%%%%%%%%%%%%%%%%%%%%%%%%%%%%%%%%%%%%%

To prove Prop.\,\ref{prop:FABperpFBA}, we need some preparations.
\begin{prop}
	\label{prop:AB00perp01}
	Let $\ket{\alpha}^{A_O} \in A_O$ and $\ket{\beta}^{B_O} \in B_O$. The following orthogonality relations hold:
	\begin{equation}
		\label{eq:AB00perp01}
		\redspn{\tilde{\lnsp{V}}^|_{\alpha \beta}}{A_I B_I}{F} \perp \redspn{\lnsp{V}_{\alpha \overline{\beta}}}{A_I B_I}{F},
	\end{equation}
	\begin{equation} \label{eq:ABABAB}
		\redspn{\tilde{\lnsp{V}}^-_{\alpha \beta}}{A_I B_I}{F} \perp \redspn{\lnsp{V}_{\overline{\alpha} \beta}}{A_I B_I}{F}.
	\end{equation}
\end{prop}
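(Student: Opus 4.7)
The plan is to reduce \eqref{eq:AB00perp01} to a statement about individual generators of $\tilde{P}^|$ and then piggyback on the machinery already developed in Prop.\,\ref{prop:DecompositionOfPab} and Prop.\,\ref{prop:FabDecomposition}. First, combining Prop.\,\ref{prop:DeconpositionOfP} with Prop.\,\ref{prop:P|00=P|0kp}, I will write $\tilde{P}^| = \bigplus_{\ket{\alpha'}^{A_O} \in A_O} P^|_{\alpha' \beta}$ for the specific $\ket{\beta}^{B_O}$ appearing in the statement. Using $\tilde{\lnsp{V}}^|_{\alpha \beta} = U(\tilde{P}^| \otimes \vspan{\ket{\alpha}^{A_O}} \otimes \vspan{\ket{\beta}^{B_O}})$ together with linearity, it will suffice to prove that for every $\ket{\alpha'}^{A_O} \in A_O$, every $\ket{\pi}^P \in P^|_{\alpha' \beta}$, and every $\dket{\Gamma}^{A_I B_I} \in A_I \otimes B_I$,
\begin{equation*}
{}^{A_I B_I}\dbra{\Gamma}\, U\!\left(\ket{\pi}^P \ket{\alpha}^{A_O} \ket{\beta}^{B_O}\right) \perp \redspn{\lnsp{V}_{\alpha \overline{\beta}}}{A_I B_I}{F}.
\end{equation*}

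The corresponding statement with $\ket{\alpha}^{A_O}$ replaced by $\ket{\alpha'}^{A_O}$ is essentially immediate: Eq.\,\eqref{eq:defOfP|ab} in Prop.\,\ref{prop:DecompositionOfPab} gives $U(\ket{\pi}^P \ket{\alpha'}^{A_O} \ket{\beta}^{B_O}) \in \lnsp{V}^|_{\alpha' \beta}$, so the reduction by $\dbra{\Gamma}$ lies in $\redspn{\lnsp{V}^|_{\alpha' \beta}}{A_I B_I}{F} = F^|_{\alpha' \beta}$ by Eq.\,\eqref{eq:VF|absubF|ab}, and Eq.\,\eqref{eq:F|abandFsqabperpFanotb} in Prop.\,\ref{prop:FabDecomposition} then ensures $F^|_{\alpha' \beta} \perp \redspn{\lnsp{V}_{\alpha' \overline{\beta}}}{A_I B_I}{F}$. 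The task therefore reduces to transferring this orthogonality from $\ket{\alpha'}^{A_O}$ to the arbitrary $\ket{\alpha}^{A_O}$ with $\ket{\beta}^{B_O}$ held fixed. To do this I would invoke the $A \leftrightarrow B$ dual of Prop.\,\ref{prop:7g7hq9wah}, which asserts that ${}^{A_I B_I}\dbra{\Gamma} U(\ket{\pi}^P \ket{\alpha_0}^{A_O} \ket{\beta}^{B_O}) \perp \redspn{\lnsp{V}_{\alpha_0 \overline{\beta}}}{A_I B_I}{F}$ is independent of $\ket{\alpha_0}^{A_O} \in A_O$; applying it first with $\ket{\alpha_0}^{A_O}=\ket{\alpha'}^{A_O}$ and then with $\ket{\alpha_0}^{A_O}=\ket{\alpha}^{A_O}$ will close \eqref{eq:AB00perp01}.

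For \eqref{eq:ABABAB} I plan to follow a fully analogous route: using Prop.\,\ref{prop:DeconpositionOfP} and Prop.\,\ref{prop:P|00=P|0kp}, write $\tilde{P}^- = \bigplus_{\ket{\beta'}^{B_O} \in B_O} P^-_{\alpha \beta'}$ with the $\ket{\alpha}^{A_O}$ of the statement, derive the base orthogonality at $\ket{\beta'}^{B_O}$ from Eq.\,\eqref{eq:VF-absubF-ab} and Eq.\,\eqref{eq:rigoWF|abperpFexab} exactly as above, and transfer it to $\ket{\beta}^{B_O}$. Crucially, this transfer is now precisely the original statement of Prop.\,\ref{prop:7g7hq9wah} applied verbatim, so no further symmetry invocation is needed for the second half.

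The delicate step I expect will be formalizing the $A \leftrightarrow B$ dual of Prop.\,\ref{prop:7g7hq9wah} used in the first half. Its proof rests only on Eq.\,\eqref{a} from condition (2) of Thm.\,\ref{thm:equivalent}, which is manifestly symmetric under the exchange $(A_O, A_I, \ket{\alpha}, \ket{\alpha^\perp}) \leftrightarrow (B_O, B_I, \ket{\beta}, \ket{\beta^\perp})$, together with Lem.\,\ref{lem:sesquiperp}, so the dual follows by transcribing the proofs of Prop.\,\ref{prop:fgijoiaojhb} and Prop.\,\ref{prop:7g7hq9wah} with these roles swapped. I would either insert a brief auxiliary proposition recording this dual explicitly or inline the two-line sesquilinear argument at the point of use; no genuinely new estimate is required.
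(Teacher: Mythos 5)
Your proposal is correct, and it is in essence the paper's own argument with the $\alpha$-transfer carried out one level lower. The paper likewise begins from $\tilde{P}^| = \bigplus_{\ket{\alpha^\prime}^{A_O} \in A_O} P^|_{\alpha^\prime \beta}$, but performs the transfer at the subspace level: Eq.\,\eqref{eq:Pfjdasigjaroiag} of Prop.\,\ref{prop:P|perp00=P|perpkp} gives $P^|_{\alpha^\prime \beta} \oplus P^\square_{\alpha^\prime \beta} = P^|_{\alpha \beta} \oplus P^\square_{\alpha \beta}$, hence $\tilde{P}^| \subset P^|_{\alpha \beta} \oplus P^\square_{\alpha \beta}$, so $\tilde{\lnsp{V}}^|_{\alpha \beta} \subset \lnsp{V}^|_{\alpha \beta} \oplus \lnsp{V}^\square_{\alpha \beta}$ and $\redspn{\tilde{\lnsp{V}}^|_{\alpha \beta}}{A_I B_I}{F} \subset F^|_{\alpha \beta} \oplus F^\square_{\alpha \beta}$ by Prop.\,\ref{prop:DecompositionOfVab}, after which Eq.\,\eqref{eq:F|abandFsqabperpFanotb} concludes, with Eq.\,\eqref{eq:ABABAB} dispatched by analogous steps. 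Your vector-by-vector transfer through the $A \leftrightarrow B$ dual of Prop.\,\ref{prop:7g7hq9wah} is sound: Eq.\,\eqref{a} is indeed symmetric under exchanging the $(A_O, \ket{\alpha}, \ket{\alpha^\perp})$ and $(B_O, \ket{\beta}, \ket{\beta^\perp})$ data, Lem.\,\ref{lem:sesquiperp} is slot-agnostic, so the duals of Prop.\,\ref{prop:fgijoiaojhb} and Prop.\,\ref{prop:7g7hq9wah} follow by transcription; and your base case (membership of the reduced vector in $F^|_{\alpha^\prime \beta}$ via Eq.\,\eqref{eq:defOfP|ab} and Eq.\,\eqref{eq:VF|absubF|ab}, orthogonality via Eq.\,\eqref{eq:F|abandFsqabperpFanotb}) is exactly right. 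In fact the two routes coincide at bottom: the paper proves only Eq.\,\eqref{eq:gh6ah7a} of Prop.\,\ref{prop:P|perp00=P|perpkp} explicitly and settles Eq.\,\eqref{eq:Pfjdasigjaroiag} ``by similar methods'', and those similar methods are precisely the dual you propose to record. The paper's packaging buys brevity, since the $\alpha$-invariance is stated once for the subspaces $P^| \oplus P^\square$ and then reused here in three lines; your version buys transparency, making explicit that the entire transfer rests only on Eq.\,\eqref{a} and the sesquilinear lemma, at the cost of re-deriving in place what Prop.\,\ref{prop:P|perp00=P|perpkp} already stores. Your treatment of Eq.\,\eqref{eq:ABABAB} via Prop.\,\ref{prop:7g7hq9wah} applied verbatim, with Eq.\,\eqref{eq:VF-absubF-ab} and Eq.\,\eqref{eq:rigoWF|abperpFexab} supplying the base case, matches the paper's intended analogous argument exactly; no gap remains.
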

\begin{proof}
	We can show Eq.\,\eqref{eq:AB00perp01} by decomposing $\tilde{P}^|$ as
	\begin{align}
		\tilde{P}^|
		  & = \bigplus_{\ket{\alpha^\prime}^{A_O} \in A_O} P^|_{\alpha^\prime \beta} \notag                                                                                        \\
		  & \subset \bigplus_{\ket{\alpha^\prime}^{A_O} \in A_O} P^|_{\alpha^\prime \beta} \oplus P^\square_{\alpha^\prime \beta} \notag                                           \\
		  & = \bigplus_{\ket{\alpha^\prime}^{A_O} \in A_O} P^|_{\alpha \beta} \oplus P^\square_{\alpha \beta} \quad (\because \mathrm{Prop.\,\ref{prop:P|perp00=P|perpkp}}) \notag \\
		  & = P^|_{\alpha \beta} \oplus P^\square_{\alpha \beta}.
	\end{align}
	Thus, considering the images of $U$ from $\tilde{P}^| \otimes \vspan{\ket{\alpha}^{A_O}} \otimes \vspan{\ket{\beta}^{B_O}}$ and from $(P^|_{\alpha \beta} \oplus P^\square_{\alpha \beta}) \otimes \vspan{\ket{\alpha}^{A_O}} \otimes \vspan{\ket{\beta}^{B_O}}$, we obtain $\tilde{\lnsp{V}}^|_{\alpha \beta} \subset \lnsp{V}^|_{\alpha \beta} \oplus \lnsp{V}^\square_{\alpha \beta}$,
	and thus
	$\redspn{\tilde{\lnsp{V}}^|_{\alpha \beta}}{A_I B_I}{F} \subset \redspn{\lnsp{V}^|_{\alpha \beta}}{A_I B_I}{F} + \redspn{\lnsp{V}^\square_{\alpha \beta}}{A_I B_I}{F}$, where the latter is equal to $F^|_{\alpha \beta} \oplus F^\square_{\alpha \beta}$ from Prop.\,\ref{prop:DecompositionOfVab}\ReviseFurther{.}
	Thus,
	using Eq.\,\eqref{eq:F|abandFsqabperpFanotb} in Prop.\,\ref{prop:FabDecomposition} we obtain $\redspn{\tilde{\lnsp{V}}^|_{\alpha \beta}}{A_I B_I}{F} \perp \redspn{\lnsp{V}_{\alpha \overline{\beta}}}{A_I B_I}{F}$.
	The proof of Eq.\,\eqref{eq:ABABAB} follows analogous steps.
\end{proof}
\begin{prop}
	\label{prop:fag4ghhas}
	Let $\ket{\pi}^P, \ket{\pi^\prime}^P \in P$, $\ket{\beta}^{B_O} \in B_O$, $\ket{\phi}^{A_I}$, $\ket{\phi^\prime}^{A_I} \in A_I$.
	For all $\ket{\alpha_0}^{A_O},\, \ket{\alpha_1}^{A_O} \in A_O$ such that ${}^{A_O}\braket{\alpha_0 | \alpha_0}^{A_O} = {}^{A_O}\braket{\alpha_1 | \alpha_1}^{A_O}$, it holds that
	\begin{equation}
		\label{eq:fadsffasg}
		\begin{aligned}
			  & \left(
			{}^{A_I}\bra{\phi} U \left( \ket{\pi}^P \ket{\alpha_0}^{A_O} \ket{\beta}^{B_O} \right), {}^{A_I}\bra{\phi^\prime}
			U \left( \ket{\pi^\prime}^P \ket{\alpha_0}^{A_O} \ket{\beta}^{B_O} \right)
			\right) \\
			= & \left(
			{}^{A_I}\bra{\phi} U \left( \ket{\pi}^P \ket{\alpha_1}^{A_O} \ket{\beta}^{B_O} \right), {}^{A_I}\bra{\phi^\prime}
			U \left( \ket{\pi^\prime}^P \ket{\alpha_1}^{A_O} \ket{\beta}^{B_O} \right)
			\right).
		\end{aligned}
	\end{equation}
\end{prop}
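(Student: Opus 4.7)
The plan is to mimic the argument of Prop.\,\ref{prop:fgijoiaojhb}, swapping the roles of the $A_O$ and $B_O$ slots. That earlier proposition showed a sesquilinear form in $\ket{\beta}^{B_O}$ was independent of $\ket{\beta}^{B_O}$ (at fixed norm) by combining the orthogonality condition \eqref{a} with the norm-preservation Lem.\,\ref{lem:sesquiperp}. Here we want the analogous statement on the $A_O$ side, and the orthogonality condition at hand is \eqref{c} instead of \eqref{a}.

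Concretely, first I would fix all the data $\ket{\pi}^P,\,\ket{\pi'}^P,\,\ket{\beta}^{B_O},\,\ket{\phi}^{A_I},\,\ket{\phi'}^{A_I}$ and define the sesquilinear form
\begin{equation}
g(\ket{\alpha}^{A_O},\,\ket{\alpha'}^{A_O}) := \left( {}^{A_I}\bra{\phi} U(\ket{\pi}^P \ket{\alpha}^{A_O} \ket{\beta}^{B_O}),\, {}^{A_I}\bra{\phi'} U(\ket{\pi'}^P \ket{\alpha'}^{A_O} \ket{\beta}^{B_O}) \right)
\end{equation}
on $A_O$. The target equality Eq.\,\eqref{eq:fadsffasg} is precisely $g(\ket{\alpha_0}^{A_O},\,\ket{\alpha_0}^{A_O}) = g(\ket{\alpha_1}^{A_O},\,\ket{\alpha_1}^{A_O})$ for vectors of equal norm.

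Next I would invoke Lem.\,\ref{lem:sesquiperp} applied to $g$. To do so I must verify the hypothesis $g(\ket{\alpha}^{A_O},\,\ket{\alpha^{\perp}}^{A_O}) = 0$ whenever $\ket{\alpha}^{A_O} \perp \ket{\alpha^{\perp}}^{A_O}$. But this is exactly Eq.\,\eqref{c} in condition (2) of Thm.\,\ref{thm:equivalent} with the substitution $\ket{\beta'}^{B_O} = \ket{\beta}^{B_O}$, i.e.,
\begin{equation}
\left( {}^{A_I}\bra{\phi} U(\ket{\pi}^P \ket{\alpha}^{A_O} \ket{\beta}^{B_O}),\, {}^{A_I}\bra{\phi'} U(\ket{\pi'}^P \ket{\alpha^{\perp}}^{A_O} \ket{\beta}^{B_O}) \right) = 0,
\end{equation}
which is available since we are in the course of proving (2)$\Rightarrow$(3) and may freely use (2). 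Applying Lem.\,\ref{lem:sesquiperp} then yields $g(\ket{\alpha_0}^{A_O},\,\ket{\alpha_0}^{A_O}) = g(\ket{\alpha_1}^{A_O},\,\ket{\alpha_1}^{A_O})$ whenever $\|\ket{\alpha_0}^{A_O}\| = \|\ket{\alpha_1}^{A_O}\|$, which is Eq.\,\eqref{eq:fadsffasg}.

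No serious obstacle is expected: the proof is a direct structural analogue of Prop.\,\ref{prop:fgijoiaojhb}, with \eqref{a} replaced by \eqref{c}. The only thing to be careful about is that the orthogonality condition \eqref{c} fixes $\ket{\beta}^{B_O}$ on both sides (unlike \eqref{a}, which mixes $\ket{\beta}$ and $\ket{\beta^{\perp}}$), so when transplanting the form one must keep the $\ket{\beta}^{B_O}$-argument identical on both entries of $g$, as done above. Everything else is a mechanical application of the sesquilinearity lemma.
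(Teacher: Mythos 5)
Your proposal is correct and is essentially identical to the paper's own proof: the paper also defines the sesquilinear form $f(\ket{\alpha}^{A_O},\ket{\alpha^\prime}^{A_O})$ on $A_O$ with all other data fixed, observes that Eq.\,\eqref{c} (with $\ket{\beta^\prime}^{B_O}=\ket{\beta}^{B_O}$) gives $f(\ket{\alpha}^{A_O},\ket{\alpha^\perp}^{A_O})=0$, and concludes via Lem.\,\ref{lem:sesquiperp}. Your cautionary remark about keeping the $\ket{\beta}^{B_O}$-argument identical on both entries is exactly the right point of care, and nothing further is needed.
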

\begin{proof}
	Define the sesquilinear form  $f$ on $A_O$ as
	\begin{equation}
		f(\ket{\alpha}^{A_O}, \ket{\alpha^\prime}^{A_O}) := \left(
		{}^{A_I}\bra{\phi} U \left( \ket{\pi}^P \ket{\alpha}^{A_O} \ket{\beta}^{B_O} \right), {}^{A_I}\bra{\phi^\prime}
		U \left( \ket{\pi^\prime}^P \ket{\alpha^\prime}^{A_O} \ket{\beta}^{B_O} \right)
		\right) .\notag
	\end{equation}
	Then Eq.\,\eqref{c}, which is equivalent to $f(\ket{\alpha}^{A_O}, \ket{\alpha^\perp}^{A_O}) = 0$, and Lem.\,\ref{lem:sesquiperp} prove this proposition.
\end{proof}
\begin{prop}
	\label{prop:a_bf0->a_bf1}
	Let $\ket{\pi}^P, \ket{\pi^\prime}^P \in P$, $\ket{\alpha_0}^{A_O} \in A_O$, $\ket{\beta}^{B_O} \in B_O$ and $\ket{\phi}^{A_I}, \ket{\phi^\prime}^{A_I} \in A_I$.
	If there exists $\dket{\Psi_{\alpha_0}}^{B_I F} \in B_I \otimes F$ such that
	\begin{align}
		\label{eq:r38g89q}
		\left\{
		\begin{aligned}
			U \left(\ket{\pi}^P \ket{\alpha_0}^{A_O} \ket{\beta}^{B_O} \right)        & = \ket{\phi}^{A_I} \dket{\Psi_{\alpha_0}}^{B_I F},        \\
			U \left(\ket{\pi^\prime}^P \ket{\alpha_0}^{A_O} \ket{\beta}^{B_O} \right) & = \ket{\phi^\prime}^{A_I} \dket{\Psi_{\alpha_0}}^{B_I F},
		\end{aligned} \right.
	\end{align}
	then for all $\ket{\alpha_1}^{A_O} \in A_O$, there exists $\dket{\Psi_{\alpha_1}}^{B_I F} \in B_I \otimes F$ such that
	\begin{align}
		\label{eq:r38g89q2}
		\left\{
		\begin{aligned}
			U \left(\ket{\pi}^P \ket{\alpha_1}^{A_O} \ket{\beta}^{B_O} \right)        & = \ket{\phi}^{A_I} \dket{\Psi_{\alpha_1}}^{B_I F},        \\
			U \left(\ket{\pi^\prime}^P \ket{\alpha_1}^{A_O} \ket{\beta}^{B_O} \right) & = \ket{\phi^\prime}^{A_I} \dket{\Psi_{\alpha_1}}^{B_I F}.
		\end{aligned} \right.
	\end{align}
\end{prop}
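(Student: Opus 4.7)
The strategy is to split the conclusion in two: first, show that $U(\ket{\pi}^P \ket{\alpha_1}^{A_O} \ket{\beta}^{B_O})$ has its $A_I$-component supported on $\ket{\phi}^{A_I}$ (and analogously on $\ket{\phi'}^{A_I}$ for $\pi'$); second, show that once these $A_I$-components are extracted, the residual vectors in $B_I \otimes F$ coincide, providing the common $\dket{\Psi_{\alpha_1}}^{B_I F}$. Both parts rest on the same mechanism: a sesquilinear form on $A_O$ that vanishes on orthogonal pairs (which condition \eqref{c} guarantees) must, by Lem.\,\ref{lem:sesquiperp} combined with sesquilinearity, satisfy $f(\alpha,\alpha) = c\|\alpha\|^2$; the constant $c$ is then pinned down by evaluating at $\alpha_0$ via the hypothesis \eqref{eq:r38g89q}.

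For the first part, fix $\ket{\phi^\perp}^{A_I} \perp \ket{\phi}^{A_I}$ and consider
\begin{equation*}
g(\alpha, \alpha') := \left( {}^{A_I}\bra{\phi^\perp} U\bigl(\ket{\pi}^P \ket{\alpha}^{A_O} \ket{\beta}^{B_O}\bigr),\; {}^{A_I}\bra{\phi^\perp} U\bigl(\ket{\pi}^P \ket{\alpha'}^{A_O} \ket{\beta}^{B_O}\bigr) \right).
\end{equation*}
Condition \eqref{c} yields $g(\alpha,\alpha^\perp) = 0$ whenever $\ket{\alpha}\perp\ket{\alpha^\perp}$, and Lem.\,\ref{lem:sesquiperp} upgrades this to $g(\alpha,\alpha) = c\|\alpha\|^2$ for some $c \geq 0$. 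The hypothesis forces $g(\alpha_0,\alpha_0) = |\braket{\phi^\perp|\phi}|^2\|\Psi_{\alpha_0}\|^2 = 0$, whence $c = 0$ (assuming $\alpha_0 \neq 0$), and so ${}^{A_I}\bra{\phi^\perp} U(\ket{\pi}^P \ket{\alpha_1}^{A_O} \ket{\beta}^{B_O}) = 0$ for every $\phi^\perp \perp \phi$. Therefore $U(\ket{\pi}^P \ket{\alpha_1}^{A_O} \ket{\beta}^{B_O}) = \ket{\phi}^{A_I} \dket{\Psi_\pi(\alpha_1)}^{B_I F}$ for some $\dket{\Psi_\pi(\alpha_1)}$; swapping $(\pi,\phi)$ with $(\pi',\phi')$ gives $U(\ket{\pi'}^P \ket{\alpha_1}^{A_O} \ket{\beta}^{B_O}) = \ket{\phi'}^{A_I} \dket{\Psi_{\pi'}(\alpha_1)}^{B_I F}$.

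For the second part, introduce the vector-valued function
\begin{equation*}
G(\alpha) := \|\ket{\phi'}\|^2\, {}^{A_I}\bra{\phi} U\bigl(\ket{\pi}^P \ket{\alpha}^{A_O} \ket{\beta}^{B_O}\bigr) - \|\ket{\phi}\|^2\, {}^{A_I}\bra{\phi'} U\bigl(\ket{\pi'}^P \ket{\alpha}^{A_O} \ket{\beta}^{B_O}\bigr) \in B_I \otimes F.
\end{equation*}
At $\alpha = \alpha_0$ the hypothesis collapses both terms to $\|\ket{\phi}\|^2\|\ket{\phi'}\|^2 \dket{\Psi_{\alpha_0}}$, so $G(\alpha_0) = 0$. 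Expanding $\|G(\alpha)\|^2$ produces a real linear combination of scalar diagonal sesquilinear quantities of the type $({}^{A_I}\bra{\phi_i} U(\ket{\pi_i}^P\ket{\alpha}^{A_O}\ket{\beta}^{B_O}),\, {}^{A_I}\bra{\phi_j} U(\ket{\pi_j}^P\ket{\alpha}^{A_O}\ket{\beta}^{B_O}))$, each of which is proportional to $\|\alpha\|^2$ by Prop.\,\ref{prop:fag4ghhas}; hence $\|G(\alpha)\|^2 = D\|\alpha\|^2$ with $D \geq 0$, and $G(\alpha_0) = 0$ with $\alpha_0 \neq 0$ forces $D = 0$, so $G(\alpha_1) = 0$. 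Combined with the first part this reads $\|\ket{\phi}\|^2\|\ket{\phi'}\|^2 \dket{\Psi_\pi(\alpha_1)} = \|\ket{\phi}\|^2\|\ket{\phi'}\|^2 \dket{\Psi_{\pi'}(\alpha_1)}$, so in the generic regime $\ket{\phi},\ket{\phi'} \neq 0$ the two reduced vectors agree, and setting $\dket{\Psi_{\alpha_1}} := \dket{\Psi_\pi(\alpha_1)}$ closes the argument.

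The main technical obstacle I anticipate is the clean dispatch of the degenerate configurations $\alpha_0 = 0$, $\ket{\phi} = 0$, or $\ket{\phi'} = 0$, where the sesquilinear argument is vacuous or the proportionality cancellation in $G$ breaks down. When both of $\ket{\phi},\ket{\phi'}$ vanish the conclusion trivializes; when $\alpha_0 = 0$ one may replace $\alpha_0$ by any nonzero $\alpha$ in the complementary subspace after re-deriving the hypothesis there via linearity, reducing to the generic case treated above.
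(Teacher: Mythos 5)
Your proof is correct in the intended regime (all data nonzero, equivalently normalized), but it takes a genuinely different route from the paper's. The paper proves the proposition in a single stroke: after normalizing everything it applies Prop.\,\ref{prop:fag4ghhas} once to the cross term $h(\alpha) := \left( {}^{A_I}\bra{\phi}\, U \left( \ket{\pi}^P \ket{\alpha}^{A_O} \ket{\beta}^{B_O} \right),\ {}^{A_I}\bra{\phi^\prime}\, U \left( \ket{\pi^\prime}^P \ket{\alpha}^{A_O} \ket{\beta}^{B_O} \right) \right)$, evaluates it at $\alpha_0$ via the hypothesis to get $h(\alpha_1) = \left( \dket{\Psi_{\alpha_0}}, \dket{\Psi_{\alpha_0}} \right) = 1$, and then (implicitly) invokes saturation of the Cauchy--Schwarz inequality: both partial inner products are contractions of unit vectors, hence have norm at most $1$, so an inner product equal to $1$ forces both to be unit vectors \emph{and} equal; unit norm of ${}^{A_I}\bra{\phi}\, U(\cdots)$ in turn forces the product form $U(\cdots) = \ket{\phi}^{A_I} \dket{\Psi_{\alpha_1}}^{B_I F}$. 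Thus the single scalar identity $h(\alpha_1) = 1$ simultaneously delivers the separability and the commonality of $\dket{\Psi_{\alpha_1}}$, which you instead establish in two separate steps: (i) product form via the vanishing of ${}^{A_I}\bra{\phi^\perp}\, U(\cdots)$ --- note your form $g$ is exactly the special case of Prop.\,\ref{prop:fag4ghhas} with $\ket{\pi^\prime}^P = \ket{\pi}^P$ and $\ket{\phi}^{A_I} = \ket{\phi^\prime}^{A_I} = \ket{\phi^\perp}^{A_I}$, so you could cite that proposition instead of re-deriving from Eq.\,\eqref{c} and Lem.\,\ref{lem:sesquiperp}; (ii) equality of the residuals via the auxiliary vector $G(\alpha)$, whose squared norm expands into diagonal sesquilinear quantities that are constant on spheres and quadratically homogeneous, giving $\| G(\alpha) \|^2 = D \|\ket{\alpha}\|^2$, with $G(\alpha_0) = 0$ killing $D$. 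Your version is longer but every step is explicit and it handles unnormalized data honestly; the paper's is slicker but leaves the decisive Cauchy--Schwarz step entirely unstated. One caveat: your patch for the degenerate case $\alpha_0 = 0$ does not work --- there the hypothesis is vacuous (it holds with $\dket{\Psi_{\alpha_0}} = 0$ for arbitrary $\ket{\phi}, \ket{\phi^\prime}, \ket{\pi}, \ket{\pi^\prime}$) and the conclusion genuinely fails in general, so there is no hypothesis to ``re-derive by linearity'' at a nonzero $\alpha$; this case must simply be excluded, as the paper does implicitly by opening its proof with the assumption that all the vectors involved are normalized.
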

\begin{proof}
	We can assume $\ket{\pi}^P, \ket{\pi^\prime}^P, \ket{\beta}^{B_O}, \ket{\phi}^{A_I}, \ket{\phi}^{A_I}, \ket{\alpha_0}^{A_O}, \ket{\alpha_1}^{A_O}$ are all normalized. Assume Eq.\,\eqref{eq:r38g89q} holds and then
	$\dket{\Psi_{\alpha_0}}^{B_I F}$ is also normalized. Thus, we obtain
	\begin{align}
		  & \left(
		{}^{A_I}\bra{\phi} U \left( \ket{\pi}^P \ket{\alpha_1}^{A_O} \ket{\beta}^{B_O} \right), {}^{A_I}\bra{\phi^\prime}
		U \left( \ket{\pi^\prime}^P \ket{\alpha_1}^{A_O} \ket{\beta}^{B_O} \right)
		\right) \notag \\
		= & \left(
		{}^{A_I}\bra{\phi} U \left( \ket{\pi}^P \ket{\alpha_0}^{A_O} \ket{\beta}^{B_O} \right), {}^{A_I}\bra{\phi^\prime}
		U \left( \ket{\pi^\prime}^P \ket{\alpha_0}^{A_O} \ket{\beta}^{B_O} \right)
		\right) \quad (\because \mathrm{Prop.\,\ref{prop:fag4ghhas}}) \notag \\
		= & \left( \dket{\Psi_{\alpha_0}}^{B_I F}, \dket{\Psi_{\alpha_0}}^{B_I F} \right) = 1
	\end{align}
	where the second equality follows from Eq.\,\eqref{eq:r38g89q}.
	Therefore, Eq.\,\eqref{eq:r38g89q2} is satisfied.
\end{proof}

\begin{proof} (Prop.\,\ref{prop:FABperpFBA})
	From bi-additivity, $\tilde{\lnsp{V}}^|_{A_O B_O}$ can be written by
	\begin{equation}
		\tilde{\lnsp{V}}^|_{A_O B_O} = \bigplus_{\substack{\ket{\alpha_0}^{A_O} \in A_O \\ \ket{\beta_0}^{B_O} \in B_O}} \tilde{\lnsp{V}}^|_{\alpha_0 \beta_0},
	\end{equation}
	which leads to
	\begin{equation}
		\redspn{\tilde{\lnsp{V}}^|_{A_O B_O}}{A_I B_I}{F} = \bigplus_{\substack{\ket{\alpha_0}^{A_O} \in A_O \\ \ket{\beta_0}^{B_O} \in B_O}}  \redspn{\tilde{\lnsp{V}}^|_{\alpha_0 \beta_0}}{A_I B_I}{F}.
	\end{equation}
	Thus it suffices to show $\redspn{\tilde{\lnsp{V}}^|_{\alpha_0 \beta_0}}{A_I B_I}{F} \perp \redspn{\tilde{\lnsp{V}}^-_{A_O B_O}}{A_I B_I}{F}$ for all $\ket{\alpha_0}^{A_O}$ and $\ket{\beta_0}^{B_O}$.
	Since $A_O = \vspan{\ket{\alpha_0}^{A_O}} \oplus \vspan{\ket{\alpha_0}^{A_O}}^\perp$ and $B_O = \vspan{\ket{\beta_0}^{B_O}} \oplus \vspan{\ket{\beta_0}^{B_O}}^\perp$, we have
	\begin{align}
		\tilde{\lnsp{V}}^-_{A_O B_O} & = \tilde{\lnsp{V}}^-_{\alpha_0 B_O} + \tilde{\lnsp{V}}^-_{\overline{\alpha_0} B_O}          \\
		                             & = \tilde{\lnsp{V}}^-_{\alpha_0 \beta_0} +\tilde{\lnsp{V}}^-_{\alpha_0 \overline{\beta_0}} +
		\tilde{\lnsp{V}}^-_{\overline{\alpha_0} \beta_0} + \tilde{\lnsp{V}}^-_{\overline{\alpha_0} \overline{\beta_0}}.
	\end{align}
	Therefore we have
	$\redspn{\tilde{\lnsp{V}}^-_{A_O B_O}}{A_I B_I}{F} = \redspn{\tilde{\lnsp{V}}^-_{\alpha_0 \beta_0}}{A_I B_I}{F} + \redspn{\tilde{\lnsp{V}}^-_{\alpha_0 \overline{\beta_0}}}{A_I B_I}{F} +
		\redspn{\tilde{\lnsp{V}}^-_{\overline{\alpha_0} \beta_0}}{A_I B_I}{F} + \redspn{\tilde{\lnsp{V}}^-_{\overline{\alpha_0} \overline{\beta_0}}}{A_I B_I}{F}$.
	Since $\redspn{\tilde{\lnsp{V}}^|_{\alpha_0 \beta_0}}{A_I B_I}{F} \subset \redspn{\lnsp{V}_{\alpha_0 \beta_0}}{A_I B_I}{F}$ and
	$\redspn{\tilde{\lnsp{V}}^-_{\overline{\alpha_0} \overline{\beta_0}}}{A_I B_I}{F} \subset \redspn{\lnsp{V}_{\overline{\alpha_0} \overline{\beta_0}}}{A_I B_I}{F}$,
	Eq.\,\eqref{A} yields
	$\redspn{\tilde{\lnsp{V}}^|_{\alpha_0 \beta_0}}{A_I B_I}{F} \perp \redspn{\tilde{\lnsp{V}}^-_{\overline{\alpha_0} \overline{\beta_0}}}{A_I B_I}{F}$
	and Prop.\,\ref{prop:AB00perp01} imply that
	$\redspn{\tilde{\lnsp{V}}^|_{\alpha_0 \beta_0}}{A_I B_I}{F} \perp \redspn{\tilde{\lnsp{V}}^-_{\alpha_0 \overline{\beta_0}}}{A_I B_I}{F}$ and
	$\redspn{\tilde{\lnsp{V}}^|_{\alpha_0 \beta_0}}{A_I B_I}{F} \perp \redspn{\tilde{\lnsp{V}}^-_{\overline{\alpha_0} \beta_0}}{A_I B_I}{F}$.

	We also need to show that $\redspn{\tilde{\lnsp{V}}^|_{\alpha_0 \beta_0}}{A_I B_I}{F} \perp \redspn{\tilde{\lnsp{V}}^-_{\alpha_0 \beta_0}}{A_I B_I}{F}$. For that we reduce the orthogonality between the reduced subspaces into a condition in terms of their vectors given by
	\begin{align}
		\redspn{\tilde{\lnsp{V}}^|_{\alpha_0 \beta_0}}{A_I B_I}{F} = & \vspan{{}^{A_I B_I}\dbra{\Gamma} U \left( \ket{\tilde{\pi}^|}^P \ket{\alpha_0}^{A_O} \ket{\beta_0}^{B_O} \right) \relmiddle| \ket{\tilde{\pi}^|}^P \in \tilde{P}^|,\, \dket{\Gamma}^{A_I B_I} \in A_I \otimes B_I} \notag \\
		\label{eq:f7aab0b33}
		=                                                            &
		\begin{multlined}
			\vspan{{}^{A_I B_I}\dbra{\Gamma} U \left( \ket{\pi^|_{\alpha_1 \beta_0}}^P \ket{\alpha_0}^{A_O} \ket{\beta_0}^{B_O} \right) \relmiddle| \right. \\ \left. \qquad \quad
			\ket{\alpha_1}^{A_O} \in A_O,\, \ket{\pi^|_{\alpha_1 \beta_0}}^P \in P^|_{\alpha_1 \beta_0},\, \dket{\Gamma}^{A_I B_I} \in A_I \otimes B_I}.
		\end{multlined}
	\end{align}
	We now show that
	\small
	\begin{equation*}
		P^|_{\alpha_1 \beta_0} = \vspan{ \ket{\pi^|_{\alpha_1 \beta_0}}^P \in P^|_{\alpha_1 \beta_0} \relmiddle| U \left( \ket{\pi^|_{\alpha_1 \beta_0}}^P \ket{\alpha_1}^{A_O} \ket{\beta_0}^{B_O} \right)
		\text{ is separable between } A_I \text{ and } B_I \otimes F }.
	\end{equation*}
	\normalsize
	The subspace $P^|_{\alpha_1 \beta_0}$ can be written as
	\begin{align}
		P^|_{\alpha_1 \beta_0}
		  & = \left\{ \ket{\pi}^P \in P \relmiddle| U \left( \ket{\pi}^P \ket{\alpha_1}^{A_O} \ket{\beta_0}^{B_O} \right) \in \lnsp{V}^|_{\alpha_1 \beta_0} \right\} \notag \\
		  & \subset \left\{ \ket{\pi}^P \in P \relmiddle| \ket{\pi}^P \ket{\alpha_1}^{A_O} \ket{\beta_0}^{B_O} =
		U^\dagger \tket{\eta^|_{\alpha_1 \beta_0}}^{A_I B_I F},\, \tket{\eta^|_{\alpha_1 \beta_0}}^{A_I B_I F}
		\in \lnsp{V}^|_{\alpha_1 \beta_0} \right\} \notag \\
		  & \subset \left\{ \left( {}^{A_O}\bra{\alpha_1} {}^{B_O}\bra{\beta_0} \right) U^\dagger \tket{\eta^|_{\alpha_1 \beta_0}}^{A_I B_I F}
		\relmiddle| \tket{\eta^|_{\alpha_1 \beta_0}}^{A_I B_I F} \in \lnsp{V}^|_{\alpha_1 \beta_0} \right\}. \label{eq:7f78ag0f0f}
	\end{align}
	From Eq.\,\eqref{eq:V=AAVfuahurbu} in Prop.\,\ref{prop:DecompositionOfVab}, we obtain
	\begin{equation}
		\lnsp{V}^|_{\alpha_1 \beta_0} = \vspan{
		\ket{\phi}^{A_I} \dket{\Psi^|_{\alpha_1 \beta_0}}^{B_I F}
		\relmiddle| \ket{\phi}^{A_I} \in A_I,\, \dket{\Psi^|_{\alpha_1 \beta_0}}^{B_I F} \in
		\redspn{\lnsp{V}^|_{\alpha_1 \beta_0}}{A_I}{B_I F}
		}.
	\end{equation}
	Thus, we derive
	\small
	\begin{multline}
		\mathrm{Eq.\,\eqref{eq:7f78ag0f0f}} =
		\vspan{ \left( {}^{A_O}\bra{\alpha_1} {}^{B_O}\bra{\beta_0} \right) U^\dagger \left( \ket{\phi}^{A_I} \dket{\Psi^|_{\alpha_1 \beta_0}}^{B_I F} \right) \relmiddle| \right. \\ \left.
		\ket{\phi}^{A_I} \in A_I,\, \dket{\Psi^|_{\alpha_1 \beta_0}}^{B_I F} \in
		\redspn{\lnsp{V}^|_{\alpha_1 \beta_0}}{A_I}{B_I F} }. \label{eq:f8a9u8g8ha9}
	\end{multline}
	\normalsize
	For all $\ket{\phi}^{A_I} \in A_I$ and $\dket{\Psi^|_{\alpha_1 \beta_0}}^{B_I F} \in
		\redspn{\lnsp{V}^|_{\alpha_1 \beta_0}}{A_I}{B_I F}$,
	from Eq.\,\eqref{eq:V=AAVfuahurbu} in Prop.\,\ref{prop:DecompositionOfVab}, we have $\ket{\phi}^{A_I} \dket{\Psi^|_{\alpha_1 \beta_0}}^{B_I F} \in \lnsp{V}^|_{\alpha_1 \beta_0} \subset \lnsp{V}_{\alpha_1 \beta_0}$,
	hence there exists $\ket{\pi}^P \in P$ satisfying $U \left( \ket{\pi}^P \ket{\alpha_1}^{A_O} \ket{\beta_0}^{B_O} \right) = \ket{\phi}^{A_I} \dket{\Psi^|_{\alpha_1 \beta_0}}^{B_I F}$. Thus,
	\small
	\begin{multline}
		\label{eq:sadgaueabbdfa}
		\text{Eq.\,\eqref{eq:f8a9u8g8ha9}} \subset
		\vspan{ \ket{\pi}^P \in P \relmiddle| U \left( \ket{\pi}^P \ket{\alpha_1}^{A_O} \ket{\beta_0}^{B_O} \right) = \ket{\phi}^{A_I}
		\dket{\Psi^|_{\alpha_1 \beta_0}}^{B_I F},\, \right. \\
		\left. \ket{\phi}^{A_I} \in A_I,\, \dket{\Psi^|_{\alpha_1 \beta_0}}^{B_I F} \in \redspn{\lnsp{V}^|_{\alpha_1 \beta_0}}{A_I}{B_I F} }.
	\end{multline}
	\normalsize
	Moreover, by Eq.\,\eqref{eq:V=AAVfuahurbu} in Prop.\,\ref{prop:DecompositionOfVab} and Eq.\,\eqref{eq:defOfP|ab} in Prop.\,\ref{prop:DecompositionOfPab},
	\small
	\begin{multline}
		\text{Eq.\,\eqref{eq:sadgaueabbdfa}} \subset
		\text{SPAN} \bigg[ \ket{\pi^|_{\alpha_1 \beta_0}}^P \in P^|_{\alpha_1 \beta_0} \bigg| U \left( \ket{\pi^|_{\alpha_1 \beta_0}}^P \ket{\alpha_1}^{A_O} \ket{\beta_0}^{B_O} \right) \\
		\text{ is separable between } A_I \text{ and } B_I \otimes F \bigg].
	\end{multline}
	\normalsize
	The inclusion relation
	\small
	\begin{multline}
		P^|_{\alpha_1 \beta_0} \supset \vspan{ \ket{\pi^|_{\alpha_1 \beta_0}}^P \in P^|_{\alpha_1 \beta_0} \relmiddle| U \left( \ket{\pi^|_{\alpha_1 \beta_0}}^P \ket{\alpha_1}^{A_O} \ket{\beta_0}^{B_O} \right) \right. \\ \left. \rule{0pt}{15pt}
		\text{ is separable between } A_I \text{ and } B_I \otimes F }
	\end{multline}
	\normalsize
	is trivial.
	We have shown that
	\begin{multline}
		P^|_{\alpha_1 \beta_0} = \vspan{ \ket{\pi^|_{\alpha_1 \beta_0}}^P \in P^|_{\alpha_1 \beta_0} \relmiddle| U \left( \ket{\pi^|_{\alpha_1 \beta_0}}^P \ket{\alpha_1}^{A_O} \ket{\beta_0}^{B_O} \right) \right. \\ \left. \rule{0pt}{15pt}
		\text{ is separable between } A_I \text{ and } B_I \otimes F }.
	\end{multline}
	Thus, we derive
	\small
	\begin{multline}
		\redspn{\tilde{\lnsp{V}}^|_{\alpha_0 \beta_0}}{A_I B_I}{F} =
		\vspan{{}^{A_I B_I}\dbra{\Gamma} U \left( \ket{\pi^|_{\alpha_1 \beta_0}}^P \ket{\alpha_0}^{A_O} \ket{\beta_0}^{B_O} \right) \relmiddle|
		\ket{\pi^|_{\alpha_1 \beta_0}}^P \in P^|_{\alpha_1 \beta_0},\, \ket{\alpha_1}^{A_O} \in A_O,\, \right. \\
		\left. \dket{\Gamma}^{A_I B_I} \in A_I \otimes B_I,\,
		U \left( \ket{\pi^|_{\alpha_1 \beta_0}}^P \ket{\alpha_1}^{A_O} \ket{\beta_0}^{B_O} \right) \text{ is separable between } A_I \text{ and } B_I \otimes F }.
	\end{multline}
	\normalsize
	Similarly, we can derive
	\small
	\begin{multline}
		\redspn{\tilde{\lnsp{V}}^-_{\alpha_0 \beta_0}}{A_I B_I}{F} =
		\vspan{{}^{A_I B_I}\dbra{\Gamma^\prime} U \left( \ket{\pi^-_{\alpha_0 \beta_1}}^P \ket{\alpha_0}^{A_O} \ket{\beta_0}^{B_O} \right) \relmiddle|
		\ket{\pi^-_{\alpha_0 \beta_1}}^P \in P^-_{\alpha_0 \beta_1},\, \ket{\beta_1}^{B_O} \in B_O,\, \right. \\
		\left. \dket{\Gamma^\prime}^{A_I B_I} \in A_I \otimes B_I,\, U \left( \ket{\pi^-_{\alpha_0 \beta_1}}^P \ket{\alpha_0}^{A_O} \ket{\beta_1}^{B_O} \right) \text{ is separable between } B_I \text{ and } A_I \otimes F }.
	\end{multline}
	\normalsize
	Thus, it suffices to show that
	\begin{equation}
		{}^{A_I B_I}\dbra{\Gamma} U \left( \ket{\pi^|_{\alpha_1 \beta_0}}^P \ket{\alpha_0}^{A_O} \ket{\beta_0}^{B_O} \right) \perp
		{}^{A_I B_I}\dbra{\Gamma^\prime} U \left( \ket{\pi^-_{\alpha_0 \beta_1}}^P \ket{\alpha_0}^{A_O} \ket{\beta_0}^{B_O} \right)
	\end{equation}
	for arbitrary $\dket{\Gamma}^{A_I B_I}, \dket{\Gamma^\prime}^{A_I B_I} \in A_I \otimes B_I$, $\ket{\alpha_1}^{A_O} \in A_O$, $\ket{\beta_1}^{B_O} \in B_O$,
	$\ket{\pi^|_{\alpha_1 \beta_0}}^P \in P^|_{\alpha_1 \beta_0}$ and
	$\ket{\pi^-_{\alpha_0 \beta_1}}^P \in P^-_{\alpha_0 \beta_1}$
	such that
	\begin{equation}
		\label{eq:7fg78ag87gdg}
		U \left( \ket{\pi^|_{\alpha_1 \beta_0}}^P \ket{\alpha_1}^{A_O} \ket{\beta_0}^{B_O} \right) = \ket{\phi_{\pi^|}}^{A_I} \dket{\Psi_{\alpha_1 }}^{B_I F},
	\end{equation}
	for some $\ket{\phi_{\pi^|}}^{A_I} \in A_I$ and $\dket{\Psi_{\alpha_1}}^{B_I F} \in B_I \otimes F$, and
	\begin{equation}
		\label{eq:g78gy7hr93jf}
		U \left( \ket{\pi^-_{\alpha_0 \beta_1}}^P \ket{\alpha_0}^{A_O} \ket{\beta_1}^{B_O} \right) = \ket{\psi_{\pi^-}}^{B_I} \dket{\Phi_{ \beta_1}}^{A_I F}
	\end{equation}
	for some $\ket{\psi_{\pi^-}}^{B_I} \in B_I$ and $\dket{\Phi_{ \beta_1}}^{A_I F} \in A_I \otimes F$.
	From Eq.\,\eqref{eq:7fg78ag87gdg} and Prop.\,\ref{prop:a_bf0->a_bf1}, we obtain
	\begin{equation}
		\label{eq:4btqtbn46a}
		U \left( \ket{\pi^|_{\alpha_1 \beta_0}}^P \ket{\alpha_0}^{A_O} \ket{\beta_0}^{B_O} \right) =
		\ket{\phi_{\pi^|}}^{A_I} \dket{\Psi_{\alpha_0}}^{B_I F}
	\end{equation}
	with $\dket{\Psi_{\beta_0}}^{B_I F} \in B_I \otimes F$ and
	\begin{equation}
		\label{eq:4btqtbn46a2}
		U \left( \ket{\pi^-_{\alpha_0 \beta_1}}^P \ket{\alpha_0}^{A_O} \ket{\beta_0}^{B_O} \right) =
		\ket{\psi_{\pi^-}}^{B_I} \dket{\Phi_{\beta_0}}^{A_I F}
	\end{equation}
	with $\dket{\Phi_{\beta_0}}^{A_I F} \in A_I \otimes F$.

	Let $\ket{\phi}^{A_I} \in A_I$ and $\ket{\psi}^{B_I} \in B_I$.
	From Eq.\,\eqref{eq:V=AAVfuahurbu} in Prop.\,\ref{prop:DecompositionOfVab} and Eq.\,\eqref{eq:7fg78ag87gdg}, there exists $\ket{\pi^|_{\alpha_1 \beta_0,\, \phi}}^P \in P^|_{\alpha_1 \beta_0}$ such that
	\begin{equation}
		\label{eq:fg89uqb5nj}
		U \left( \ket{\pi^|_{\alpha_1 \beta_0,\, \phi}}^P \ket{\alpha_1}^{A_O} \ket{\beta_0}^{B_O} \right) = \ket{\phi}^{A_I} \dket{\Psi_{\alpha_1}}^{B_I F}.
	\end{equation}
	Then, from Eq.\,\eqref{eq:7fg78ag87gdg}, Eq.\,\eqref{eq:g78gy7hr93jf} and Eq.\,\eqref{eq:fg89uqb5nj} and Prop.\,\ref{prop:a_bf0->a_bf1}, we obtain
	\begin{equation}
		\label{eq:g98a9vjj}
		U \left( \ket{\pi^|_{\alpha_1 \beta_0,\, \phi}}^P \ket{\alpha_0}^{A_O} \ket{\beta_0}^{B_O} \right) = \ket{\phi}^{A_I} \dket{\Psi_{\alpha_0}}^{B_I F}.
	\end{equation}
	Similarly, for all $\ket{\psi}^{B_O} \in B_O$, there exists $\ket{\pi^-_{\alpha_0 \beta_1,\, \psi}}^P \in P^-_{\alpha_0 \beta_1}$ such that
	\begin{equation}
		\label{eq:g98a9vjj2}
		U \left( \ket{\pi^-_{\alpha_0 \beta_1,\, \psi}}^P \ket{\alpha_0}^{A_O} \ket{\beta_0}^{B_O} \right) = \ket{\psi}^{B_I} \dket{\Phi_{\beta_0}}^{A_I F}.
	\end{equation}
	Therefore, we can transform the inner product as
	\begin{align}
		  & \left( {}^{B_I}\bradket{\psi | \Psi_{\alpha_0}}^{B_I F}, {}^{A_I}\bradket{\phi | \Phi_{\beta_0}}^{A_I F} \right)
		= \left( \ket{\phi}^{A_I} \dket{\Psi_{\alpha_0}}^{B_I F}, \ket{\psi}^{B_I} \dket{\Phi_{\beta_0}}^{A_I F} \right) \notag \\
		= & \left( U \left( \ket{\pi^|_{\alpha_1 \beta_0,\, \phi}}^P \ket{\alpha_0}^{A_O} \ket{\beta_0}^{B_O} \right), U \left( \ket{\pi^-_{\alpha_0 \beta,\, \psi}}^P \ket{\alpha_0}^{A_O} \ket{\beta_0}^{B_O} \right) \right)
		\quad (\because \text{Eq.\,\eqref{eq:g98a9vjj}, Eq.\,\eqref{eq:g98a9vjj2}}) \notag \\
		= & \left( \ket{\pi^|_{\alpha_1 \beta_0,\, \phi}}^P, \ket{\pi^-_{\alpha_0 \beta_1,\, \psi}}^P \right) = 0,
	\end{align}
	where in the last equality we have used Eq.\,\eqref{eq:P=PABopPBAopPparallel} in Prop.\,\ref{prop:DeconpositionOfP}.
	Thus, we have
	\begin{equation}
		\label{eq:bq469aubjv}
		\left( {}^{A_I B_I}\dbra{\Gamma} \left( \ket{\phi_{\pi^|}}^{A_I} \dket{\Psi_{\alpha_0}}^{B_I F} \right), {}^{A_I B_I}\dbra{\Gamma^\prime} \left( \ket{\psi_{\pi^-}}^{B_I} \dket{\Phi_{\beta_0}}^{A_I F} \right) \right) = 0.
	\end{equation}
	From Eq.\,\eqref{eq:4btqtbn46a}, Eq.\,\eqref{eq:4btqtbn46a2} and Eq.\,\eqref{eq:bq469aubjv}, we obtain
	\begin{equation}
		{}^{A_I B_I}\dbra{\Gamma} U \left( \ket{\pi^|_{\alpha_1 \beta_0}}^P \ket{\alpha_0}^{A_O} \ket{\beta_0}^{B_O} \right) \perp
		{}^{A_I B_I}\dbra{\Gamma^\prime} U \left( \ket{\pi^-_{\alpha_0 \beta_1}}^P \ket{\alpha_0}^{A_O} \ket{\beta_0}^{B_O} \right),
	\end{equation}
	which proves
	$\redspn{\tilde{\lnsp{V}}^|_{\alpha_0 \beta_0}}{A_I B_I}{F} \perp \redspn{\tilde{\lnsp{V}}^-_{\alpha_0 \beta_0}}{A_I B_I}{F}.$
\end{proof}

%%%%%%%%%%%%%%%%%%%%%%%%%%%%%%%%%%%%%%%%%%%%%%%%%%%%%%%%%%%%%%%%%%%%%%%%
%%%%%%%%%%%%%%%%%%%%%%%%      NEW SUBSECTION    %%%%%%%%%%%%%%%%%%%%%%%%
%%%%%%%%%%%%%%%%%%%%%%%%%%%%%%%%%%%%%%%%%%%%%%%%%%%%%%%%%%%%%%%%%%%%%%%%

\subsection{Proof of Prop.\,\ref{prop:FeqFABoplFBAoplFpal}}
\label{secApp:proofofFeqFABoplFBAoplFpal}

%%%%%%%%%%%%%%%%%%%%%%%%%%%%%%%%%%%%%%%%%%%%%%%%%%%%%%%%%%%%%%%%%%%%%%%%
%%%%%%%%%%%%%%%%%%%%%%%%      NEW SUBSECTION    %%%%%%%%%%%%%%%%%%%%%%%%
%%%%%%%%%%%%%%%%%%%%%%%%%%%%%%%%%%%%%%%%%%%%%%%%%%%%%%%%%%%%%%%%%%%%%%%%

To prove Prop.\,\ref{prop:FeqFABoplFBAoplFpal}, we need some preparations.
\begin{prop}
	\label{prop:AB00perp01sq}
	Let $\ket{\alpha}^{A_O} \in A_O$ and $\ket{\beta}^{B_O} \in B_O$, then $\redspn{\tilde{\lnsp{V}}^\square_{\alpha \beta}}{A_I B_I}{F} \perp \redspn{\lnsp{V}_{\alpha \overline{\beta}}}{A_I B_I}{F}, \redspn{\lnsp{V}_{\overline{\alpha} {\beta}}}{A_I B_I}{F}$.
\end{prop}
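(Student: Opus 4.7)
The plan is to exploit the fact that $\tilde{P}^\square$ is defined as an intersection over all $(\ket{\alpha}^{A_O},\ket{\beta}^{B_O})$, so that $\tilde{P}^\square \subset P^\square_{\alpha\beta}$ for the specific $\ket{\alpha}^{A_O},\ket{\beta}^{B_O}$ under consideration. Applying $U$ and restricting the output spaces appropriately, this containment propagates to $\tilde{\lnsp{V}}^\square_{\alpha\beta} \subset \lnsp{V}^\square_{\alpha\beta}$, and then to the reduced subspace $\redspn{\tilde{\lnsp{V}}^\square_{\alpha\beta}}{A_I B_I}{F} \subset \redspn{\lnsp{V}^\square_{\alpha\beta}}{A_I B_I}{F}$. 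From that point the already-established structure of $F^\square_{\alpha\beta}$ does all the work.

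Concretely, I would proceed as follows. First, from the definition in Eq.\,\eqref{def:P2} we get $\tilde{P}^\square \subset P^\square_{\alpha\beta}$ directly, hence
\begin{equation}
\tilde{\lnsp{V}}^\square_{\alpha\beta} = U\!\left(\tilde{P}^\square \otimes \vspan{\ket{\alpha}^{A_O}} \otimes \vspan{\ket{\beta}^{B_O}}\right) \subset U\!\left(P^\square_{\alpha\beta} \otimes \vspan{\ket{\alpha}^{A_O}} \otimes \vspan{\ket{\beta}^{B_O}}\right) = \lnsp{V}^\square_{\alpha\beta},
\end{equation}
where the last equality is the defining property of $P^\square_{\alpha\beta}$ in Prop.\,\ref{prop:DecompositionOfPab}. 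Second, taking reduced subspaces is monotone under inclusion, so combined with Eq.\,\eqref{eq:VFsqabsubFsqab} of Prop.\,\ref{prop:DecompositionOfVab}, we obtain
\begin{equation}
\redspn{\tilde{\lnsp{V}}^\square_{\alpha\beta}}{A_I B_I}{F} \subset \redspn{\lnsp{V}^\square_{\alpha\beta}}{A_I B_I}{F} = F^\square_{\alpha\beta}.
\end{equation}

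Third, I invoke the orthogonality properties of $F^\square_{\alpha\beta}$ already established in Prop.\,\ref{prop:FabDecomposition}: Eq.\,\eqref{eq:rigoWF|abperpFexab} gives $F^\square_{\alpha\beta} \perp \redspn{\lnsp{V}_{\overline{\alpha}\beta}}{A_I B_I}{F}$, and Eq.\,\eqref{eq:F|abandFsqabperpFanotb} gives $F^\square_{\alpha\beta} \perp \redspn{\lnsp{V}_{\alpha\overline{\beta}}}{A_I B_I}{F}$. Since any subspace of $F^\square_{\alpha\beta}$ inherits both orthogonalities, we conclude
\begin{equation}
\redspn{\tilde{\lnsp{V}}^\square_{\alpha\beta}}{A_I B_I}{F} \perp \redspn{\lnsp{V}_{\overline{\alpha}\beta}}{A_I B_I}{F},\qquad \redspn{\tilde{\lnsp{V}}^\square_{\alpha\beta}}{A_I B_I}{F} \perp \redspn{\lnsp{V}_{\alpha\overline{\beta}}}{A_I B_I}{F},
\end{equation}
which is the claim.

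There is really no hard step here: the content is bookkeeping that propagates the set-theoretic inclusion $\tilde{P}^\square \subset P^\square_{\alpha\beta}$ through $U$ and through the reduced-subspace construction, landing in a space whose orthogonality to the two ``off-diagonal'' reduced subspaces was established when we decomposed $\redspn{\lnsp{V}_{\alpha\beta}}{A_I B_I}{F}$ in Prop.\,\ref{prop:FabDecomposition}. The only subtle point to check is that monotonicity of $\redspn{\bullet}{A_I B_I}{F}$ under inclusion is indeed one of the bullet-listed properties of the reduced subspace in \ref{sec:partsp}, so no extra work is needed there. This mirrors exactly the first step used in Prop.\,\ref{prop:AB00perp01} for the $F^|$ and $F^-$ pieces, and the same recipe applies verbatim to the $F^\square$ piece, the only difference being that $F^\square_{\alpha\beta}$ is perpendicular to \emph{both} off-diagonal reduced subspaces while $F^|_{\alpha\beta}$ and $F^-_{\alpha\beta}$ are each perpendicular to only one.
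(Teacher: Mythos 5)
Your proposal is correct and follows essentially the same route as the paper's own proof: from the definition $\tilde{P}^{\square} \subset P^\square_{\alpha \beta}$ one gets $\tilde{\lnsp{V}}^\square_{\alpha \beta} \subset \lnsp{V}^\square_{\alpha \beta}$, hence $\redspn{\tilde{\lnsp{V}}^\square_{\alpha \beta}}{A_I B_I}{F} \subset F^\square_{\alpha \beta}$ by Eq.\,\eqref{eq:VFsqabsubFsqab}, and the claim follows from Eq.\,\eqref{eq:rigoWF|abperpFexab} and Eq.\,\eqref{eq:F|abandFsqabperpFanotb} in Prop.\,\ref{prop:FabDecomposition}. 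Your additional remarks (monotonicity of the reduced subspace and the parallel with Prop.\,\ref{prop:AB00perp01}) are accurate and merely make explicit what the paper leaves implicit.
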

\begin{proof}
	Since $\tilde{P}^{\square} \subset P^\square_{\alpha \beta}$ by definition, we have $\tilde{\lnsp{V}}^\square_{\alpha \beta} \subset \lnsp{V}^\square_{\alpha \beta}$.
	Thus, we obtain $\redspn{\tilde{\lnsp{V}}^\square_{\alpha \beta}}{A_I B_I}{F} \subset \redspn{\lnsp{V}^\square_{\alpha \beta}}{A_I B_I}{F} = F^\square_{\alpha \beta}$. This and
	Eq.\,\eqref{eq:rigoWF|abperpFexab} and Eq.\,\eqref{eq:F|abandFsqabperpFanotb} in Prop.\,\ref{prop:FabDecomposition} imply Prop.\,\ref{prop:AB00perp01sq}.
\end{proof}
\begin{prop}
	\label{prop:fd78gd9sg8}
	Let $\ket{\tilde{\pi}^\square}^P \in \tilde{P}^\square$, $\ket{\pi^\prime}^P \in P$, $\dket{\Gamma}^{A_I B_I}, \dket{\Gamma^\prime}^{A_I B_I} \in A_I \otimes B_I$.
	For all $\ket{\alpha_0}^{A_O},\, \ket{\alpha_1}^{A_O} \in A_O$ and
	for all $\ket{\beta_0}^{B_O},\, \ket{\beta_1}^{B_O} \in B_O$ such that
	$\| \ket{\alpha_0}^{A_O} \| = \| \ket{\alpha_1}^{A_O} \|$ and $\| \ket{\beta_0}^{B_O} \| = \| \ket{\beta_1}^{B_O} \|$, we have
	\begin{equation}
		\label{eq:7e9d0s-w}
		\begin{aligned}
			  & \left( {}^{A_I B_I}\dbra{\Gamma} U \left( \ket{\tilde{\pi}^\square}^P \ket{\alpha_0}^{A_O} \ket{\beta_0}^{B_O} \right),
			{}^{A_I B_I}\dbra{\Gamma^\prime}  U \left( \ket{\pi^\prime}^P \ket{\alpha_0}^{A_O} \ket{\beta_0}^{B_O} \right) \right) \\
			= & \left( {}^{A_I B_I}\dbra{\Gamma} U \left( \ket{\tilde{\pi}^\square}^P \ket{\alpha_1}^{A_O} \ket{\beta_1}^{B_O} \right),
			{}^{A_I B_I}\dbra{\Gamma^\prime}  U \left( \ket{\pi^\prime}^P \ket{\alpha_1}^{A_O} \ket{\beta_1}^{B_O} \right) \right)
		\end{aligned}
	\end{equation}
\end{prop}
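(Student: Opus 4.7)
The plan is to prove the claimed invariance by decomposing the change $(\alpha_0,\beta_0) \to (\alpha_1,\beta_1)$ into two steps --- first changing $\alpha$ with $\beta$ held fixed, then changing $\beta$ with $\alpha$ held fixed --- and establishing each step via Lem.\,\ref{lem:sesquiperp}. The argument is structurally parallel to Prop.\,\ref{prop:fgijoiaojhb} and Prop.\,\ref{prop:fag4ghhas}, the crucial new twist being that the extra vector $\ket{\tilde{\pi}^\square}^P$ can be exploited through its membership in $\tilde{P}^\square$, which sits inside $P^\square_{\alpha\beta}$ for \emph{every} pair $(\alpha,\beta)$.

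For the first step, for fixed $\ket{\beta}^{B_O}\in B_O$ I will introduce the sesquilinear form on $A_O$
\[
f_\beta(\ket{\alpha}^{A_O},\ket{\alpha'}^{A_O}) := \left( {}^{A_I B_I}\dbra{\Gamma}\, U(\ket{\tilde{\pi}^\square}^P\ket{\alpha}^{A_O}\ket{\beta}^{B_O}),\; {}^{A_I B_I}\dbra{\Gamma'}\, U(\ket{\pi'}^P\ket{\alpha'}^{A_O}\ket{\beta}^{B_O}) \right).
\]
Because $\ket{\tilde{\pi}^\square}^P \in \tilde{P}^\square \subset P^\square_{\alpha\beta}$ for every choice of $\ket{\alpha}^{A_O}$ and $\ket{\beta}^{B_O}$, which is precisely the defining property \eqref{def:P2}, the vector $U(\ket{\tilde{\pi}^\square}^P\ket{\alpha}^{A_O}\ket{\beta}^{B_O})$ lies in $\lnsp{V}^\square_{\alpha\beta}$, and by Eq.\,\eqref{eq:VFsqabsubFsqab} its partial contraction ${}^{A_I B_I}\dbra{\Gamma}\,U(\ket{\tilde{\pi}^\square}^P\ket{\alpha}^{A_O}\ket{\beta}^{B_O})$ belongs to $F^\square_{\alpha\beta}$. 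Whenever $\ket{\alpha^\perp}^{A_O} \perp \ket{\alpha}^{A_O}$, the second entry of $f_\beta(\ket{\alpha},\ket{\alpha^\perp})$ belongs to $\redspn{\lnsp{V}_{\overline{\alpha}\beta}}{A_I B_I}{F}$, which by Eq.\,\eqref{eq:rigoWF|abperpFexab} is orthogonal to $F^\square_{\alpha\beta}$. Hence $f_\beta(\ket{\alpha},\ket{\alpha^\perp}) = 0$ on every orthogonal pair, and Lem.\,\ref{lem:sesquiperp} then gives $f_\beta(\ket{\alpha_0},\ket{\alpha_0}) = f_\beta(\ket{\alpha_1},\ket{\alpha_1})$ whenever $\|\ket{\alpha_0}^{A_O}\| = \|\ket{\alpha_1}^{A_O}\|$.

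For the second step, for fixed $\ket{\alpha}^{A_O} \in A_O$ I will introduce the analogous sesquilinear form $g_\alpha$ on $B_O$ where now $\beta$ varies in both slots, and repeat the argument using Eq.\,\eqref{eq:F|abandFsqabperpFanotb} (namely $F^\square_{\alpha\beta} \perp \redspn{\lnsp{V}_{\alpha\overline{\beta}}}{A_I B_I}{F}$) in place of Eq.\,\eqref{eq:rigoWF|abperpFexab}. This yields $g_\alpha(\ket{\beta_0},\ket{\beta_0}) = g_\alpha(\ket{\beta_1},\ket{\beta_1})$ whenever $\|\ket{\beta_0}^{B_O}\| = \|\ket{\beta_1}^{B_O}\|$. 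Chaining the two invariances as $f_{\beta_0}(\ket{\alpha_0},\ket{\alpha_0}) = f_{\beta_0}(\ket{\alpha_1},\ket{\alpha_1}) = g_{\alpha_1}(\ket{\beta_0},\ket{\beta_0}) = g_{\alpha_1}(\ket{\beta_1},\ket{\beta_1})$ yields precisely Eq.\,\eqref{eq:7e9d0s-w}.

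I do not foresee a real obstacle here: everything follows once one recognises that, thanks to the intersection definition of $\tilde{P}^\square$, the vector $\ket{\tilde{\pi}^\square}^P$ simultaneously belongs to $P^\square_{\alpha\beta}$ for \emph{every} $(\alpha,\beta)$, which is what upgrades the one-sided perpendicularity arguments used earlier into a bilateral one valid in both the $\alpha$-slot and the $\beta$-slot. The only bookkeeping care is in matching the correct perpendicularity from Prop.\,\ref{prop:FabDecomposition} to each of the two sesquilinear forms $f_\beta$ and $g_\alpha$.
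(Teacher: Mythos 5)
Your proof is correct and follows essentially the same route as the paper's: the paper likewise defines the two sesquilinear forms $f_{\beta_0}$ and $g_{\alpha_1}$, shows they vanish on orthogonal pairs, applies Lem.\,\ref{lem:sesquiperp} to each, and chains the equalities through the common value at $(\alpha_1, \beta_0)$. The only cosmetic difference is that the paper packages your inline orthogonality argument (the membership $\tilde{P}^\square \subset P^\square_{\alpha\beta}$ for every pair, combined with the perpendicularity relations of Prop.\,\ref{prop:FabDecomposition}) as the separate Prop.\,\ref{prop:AB00perp01sq}, which it then cites.
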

\begin{proof}
	Define the sesquilinear functions
	\small
	\begin{align}
		f_{\beta_0}(\ket{\alpha}^{A_O},\, \ket{\alpha^\prime}^{A_O}) & :=
		\left( {}^{A_I B_I}\dbra{\Gamma} U \left( \ket{\tilde{\pi}^\square}^P \ket{\alpha}^{A_O} \ket{\beta_0}^{B_O} \right),
		{}^{A_I B_I}\dbra{\Gamma^\prime}  U \left( \ket{\pi^\prime}^P \ket{\alpha^\prime}^{A_O} \ket{\beta_0}^{B_O} \right) \right), \notag \\
		g_{\alpha_1}(\ket{\beta}^{B_O},\, \ket{\beta^\prime}^{B_O})  & :=
		\left( {}^{A_I B_I}\dbra{\Gamma} U \left( \ket{\tilde{\pi}^\square}^P \ket{\alpha_1}^{A_O} \ket{\beta}^{B_O} \right),
		{}^{A_I B_I}\dbra{\Gamma^\prime}  U \left( \ket{\pi^\prime}^P \ket{\alpha_1}^{A_O} \ket{\beta^\prime}^{B_O} \right) \right). \notag
	\end{align}
	\normalsize
	Then Eq.\,\eqref{eq:7e9d0s-w} is written as $f_{\beta_0}(\ket{\alpha_0}^{A_O}, \ket{\alpha_0}^{A_O}) = g_{\alpha_1}(\ket{\beta_1}^{B_O}, \ket{\beta_1}^{B_O})$.
	Prop.\,\ref{prop:AB00perp01sq} implies $f_{\beta_0}(\ket{\alpha}^{A_O}, \ket{\alpha^\perp}^{A_O}) = 0$ and $g_{\alpha_1}(\ket{\beta}^{B_O}, \ket{\beta^\perp}^{B_O}) = 0$
	where $\ket{\alpha}^{A_O} \perp \ket{\alpha^\perp}^{A_O}$ and $\ket{\beta}^{B_O} \perp \ket{\beta^\perp}^{B_O}$.
	From these relations and Lem.\,\ref{lem:sesquiperp}, we obtain $f_{\beta_0}(\ket{\alpha_0}^{A_O}, \ket{\alpha_0}^{A_O}) = f_{\beta_0}(\ket{\alpha_1}^{A_O}, \ket{\alpha_1}^{A_O}) =
		g_{\alpha_1}(\ket{\beta_0}^{B_O}, \ket{\beta_0}^{B_O}) = g_{\alpha_1}(\ket{\beta_1}^{B_O}, \ket{\beta_1}^{B_O})$ where the second equality is from direct substitution.
\end{proof}
\begin{prop}
	\label{prop:fa79agg}
	Let $\ket{\tilde{\pi}^{\square}}^P \in \tilde{P}^{\square}$, $\dket{\Gamma}^{A_I B_I} \in A_I \otimes B_I$.
	For all $\ket{\alpha_0}^{A_O},\, \ket{\alpha_1}^{A_O} \in A_O$ and
	for all $\ket{\beta_0}^{B_O},\, \ket{\beta_1}^{B_O} \in B_O$ such that
	$\| \ket{\alpha_0}^{A_O} \| = \| \ket{\alpha_1}^{A_O} \|$ and $\| \ket{\beta_0}^{B_O} \| = \| \ket{\beta_1}^{B_O} \|$, it holds that
	\begin{equation}
		\left\| {}^{A_I B_I}\dbra{\Gamma} U \left( \ket{\tilde{\pi}^{\square}}^P \ket{\alpha_0}^{A_O} \ket{\beta_0}^{B_O} \right) \right\| =
		\left\| {}^{A_I B_I}\dbra{\Gamma} U \left( \ket{\tilde{\pi}^{\square}}^P \ket{\alpha_1}^{A_O} \ket{\beta_1}^{B_O} \right) \right\|.
	\end{equation}
\end{prop}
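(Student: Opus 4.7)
The plan is to obtain this proposition as an immediate specialization of the inner-product identity already established in Prop.\,\ref{prop:fd78gd9sg8}. Recall that Prop.\,\ref{prop:fd78gd9sg8} asserts that for any $\ket{\tilde{\pi}^\square}^P \in \tilde{P}^\square$, any $\ket{\pi^\prime}^P \in P$, and any $\dket{\Gamma}^{A_I B_I}, \dket{\Gamma^\prime}^{A_I B_I} \in A_I \otimes B_I$, the sesquilinear quantity
\[
\left( {}^{A_I B_I}\dbra{\Gamma} U \left( \ket{\tilde{\pi}^\square}^P \ket{\alpha}^{A_O} \ket{\beta}^{B_O} \right),\;
{}^{A_I B_I}\dbra{\Gamma^\prime} U \left( \ket{\pi^\prime}^P \ket{\alpha}^{A_O} \ket{\beta}^{B_O} \right) \right)
\]
depends on $\ket{\alpha}^{A_O}$ and $\ket{\beta}^{B_O}$ only through the norms $\|\ket{\alpha}^{A_O}\|$ and $\|\ket{\beta}^{B_O}\|$.

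The concrete step is to substitute $\ket{\pi^\prime}^P := \ket{\tilde{\pi}^\square}^P$ and $\dket{\Gamma^\prime}^{A_I B_I} := \dket{\Gamma}^{A_I B_I}$ in Eq.\,\eqref{eq:7e9d0s-w}. With this choice, both sides of that equation become a self-inner-product and hence equal to the squared norm of ${}^{A_I B_I}\dbra{\Gamma} U ( \ket{\tilde{\pi}^\square}^P \ket{\alpha}^{A_O} \ket{\beta}^{B_O} )$ for the respective pair $(\alpha_i, \beta_i)$. Since the hypothesis $\|\ket{\alpha_0}^{A_O}\| = \|\ket{\alpha_1}^{A_O}\|$ and $\|\ket{\beta_0}^{B_O}\| = \|\ket{\beta_1}^{B_O}\|$ coincides with the hypothesis of Prop.\,\ref{prop:fd78gd9sg8}, the identity of squared norms follows. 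Taking the nonnegative square root on both sides yields the stated equality of norms.

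There is essentially no obstacle here; the entire analytic content has already been absorbed into Prop.\,\ref{prop:fd78gd9sg8}, which in turn relies on Prop.\,\ref{prop:AB00perp01sq} and the sesquilinear polarization argument of Lem.\,\ref{lem:sesquiperp}. The only point worth a brief verification is that the legal choice $\ket{\pi^\prime}^P = \ket{\tilde{\pi}^\square}^P$ is permitted by the quantifier structure of Prop.\,\ref{prop:fd78gd9sg8} (whose $\ket{\pi^\prime}^P$ ranges over all of $P$, and hence over $\tilde{P}^\square \subset P$ in particular), and similarly that $\dket{\Gamma}^{A_I B_I} \in A_I \otimes B_I$ may be used in place of $\dket{\Gamma^\prime}^{A_I B_I}$. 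Both are immediate.
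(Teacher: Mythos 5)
Your proposal is correct and coincides with the paper's own proof, which consists exactly of the observation that Prop.\,\ref{prop:fd78gd9sg8} with $\ket{\pi^\prime}^P = \ket{\tilde{\pi}^{\square}}^P$ and $\dket{\Gamma^\prime}^{A_I B_I} = \dket{\Gamma}^{A_I B_I}$ turns both sides of Eq.\,\eqref{eq:7e9d0s-w} into squared norms, whence the claim follows by taking square roots. Your additional check that the quantifier structure of Prop.\,\ref{prop:fd78gd9sg8} permits these substitutions is sound, if unnecessary.
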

\begin{proof}
	Prop.\,\ref{prop:fd78gd9sg8} with $\ket{\pi^\prime}^P = \ket{\tilde{\pi}^{\square}}^P$ and $\dket{\Gamma^\prime}^{A_I B_I} = \dket{\Gamma}^{A_I B_I}$ implies this proposition.
\end{proof}

\begin{prop}
	\label{prop:g78awyjivb3}
	Let $\ket{\tilde{\pi}^\square}^{P} \in \tilde{P}^\square$ and $\dket{\Gamma}^{A_I B_I},\, \dket{\Gamma^\prime}^{A_I B_I} \in A_I \otimes B_I$. For all $\ket{\alpha_0}^{A_O} \in A_O$ and $\ket{\beta_0}^{B_O} \in B_O$, we have
	\begin{equation}
		\label{eq:resusol}
		\dket{\Gamma^\prime}^{A_I B_I} \dbra{\Gamma} U \left( \ket{\tilde{\pi}^\square}^P \ket{\alpha_0}^{A_O} \ket{\beta_0}^{B_O} \right) \in \tilde{\lnsp{V}}^\square_{\alpha_0 \beta_0}.
	\end{equation}
\end{prop}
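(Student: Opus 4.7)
The plan is to exhibit the target vector as $U(\ket{\pi^\square_0}^P \ket{\alpha_0}^{A_O}\ket{\beta_0}^{B_O})$ for some $\ket{\pi^\square_0}^P \in \tilde{P}^\square$, whereupon $\tilde{\lnsp{V}}^\square_{\alpha_0\beta_0}$ contains it by the very definition of that space. Since $\ket{\tilde{\pi}^\square}^P \in \tilde{P}^\square \subset P^\square_{\alpha_0\beta_0}$, Prop.\,\ref{prop:DecompositionOfVab} gives $U(\ket{\tilde{\pi}^\square}^P\ket{\alpha_0}^{A_O}\ket{\beta_0}^{B_O}) \in \lnsp{V}^\square_{\alpha_0\beta_0} = A_I \otimes B_I \otimes F^\square_{\alpha_0\beta_0}$. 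Therefore $\dket{\Gamma'}^{A_I B_I}\dbra{\Gamma}^{A_I B_I} U(\ket{\tilde{\pi}^\square}^P\ket{\alpha_0}^{A_O}\ket{\beta_0}^{B_O})$ also lies in $\lnsp{V}^\square_{\alpha_0\beta_0}$, and by Prop.\,\ref{prop:DecompositionOfPab} there exists $\ket{\pi^\square_0}^P \in P^\square_{\alpha_0\beta_0}$ with
\begin{equation}
U(\ket{\pi^\square_0}^P\ket{\alpha_0}^{A_O}\ket{\beta_0}^{B_O}) = \dket{\Gamma'}^{A_I B_I}\dbra{\Gamma}^{A_I B_I} U(\ket{\tilde{\pi}^\square}^P\ket{\alpha_0}^{A_O}\ket{\beta_0}^{B_O}).
\end{equation}
The proposition thus reduces to proving $\ket{\pi^\square_0}^P \in \tilde{P}^\square$.

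By Prop.\,\ref{prop:DeconpositionOfP}, $P = \tilde{P}^| \oplus \tilde{P}^\square \oplus \tilde{P}^-$, so it is enough to check $\ket{\pi^\square_0}^P \perp \tilde{P}^|$ and $\ket{\pi^\square_0}^P \perp \tilde{P}^-$. The same proposition writes $\tilde{P}^| = \bigplus_{\ket{\alpha}^{A_O} \in A_O} P^|_{\alpha\beta_0}$, so the first orthogonality amounts to $\braket{q|\pi^\square_0} = 0$ for every $\ket{q}^P \in P^|_{\alpha\beta_0}$ and every $\ket{\alpha}^{A_O}$, which can be rescaled to have $\|\ket{\alpha}^{A_O}\| = \|\ket{\alpha_0}^{A_O}\|$ without altering $P^|_{\alpha\beta_0}$.

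The central computation chains unitarity of $U$ with the transport identity of Prop.\,\ref{prop:fd78gd9sg8}; assuming the non-degenerate case $\ket{\alpha_0}^{A_O},\ket{\beta_0}^{B_O}\neq 0$, one obtains
\begin{align*}
\braket{q|\pi^\square_0}\,\|\ket{\alpha_0}^{A_O}\|^2\|\ket{\beta_0}^{B_O}\|^2
&= \bigl(U(\ket{q}^P\ket{\alpha_0}^{A_O}\ket{\beta_0}^{B_O}),\, U(\ket{\pi^\square_0}^P\ket{\alpha_0}^{A_O}\ket{\beta_0}^{B_O})\bigr) \\
&= \bigl(\dbra{\Gamma'}^{A_I B_I} U(\ket{q}^P\ket{\alpha_0}^{A_O}\ket{\beta_0}^{B_O}),\, \dbra{\Gamma}^{A_I B_I} U(\ket{\tilde{\pi}^\square}^P\ket{\alpha_0}^{A_O}\ket{\beta_0}^{B_O})\bigr) \\
&= \bigl(\dbra{\Gamma'}^{A_I B_I} U(\ket{q}^P\ket{\alpha}^{A_O}\ket{\beta_0}^{B_O}),\, \dbra{\Gamma}^{A_I B_I} U(\ket{\tilde{\pi}^\square}^P\ket{\alpha}^{A_O}\ket{\beta_0}^{B_O})\bigr),
\end{align*}
where the second equality substitutes the defining relation for $\ket{\pi^\square_0}^P$ and the third invokes Prop.\,\ref{prop:fd78gd9sg8}. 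On the final line the first factor lies in $\redspn{\lnsp{V}^|_{\alpha\beta_0}}{A_I B_I}{F} = F^|_{\alpha\beta_0}$ (because $\ket{q}^P \in P^|_{\alpha\beta_0}$) while the second lies in $F^\square_{\alpha\beta_0}$ (because $\ket{\tilde{\pi}^\square}^P \in P^\square_{\alpha\beta_0}$); by Prop.\,\ref{prop:FabDecomposition} these subspaces are orthogonal, so the inner product vanishes and $\braket{q|\pi^\square_0}=0$. Orthogonality $\ket{\pi^\square_0}^P \perp \tilde{P}^-$ follows from the symmetric argument, transporting the $B_O$-slot instead and invoking $F^-_{\alpha_0\beta} \perp F^\square_{\alpha_0\beta}$.

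The hard part is spotting that Prop.\,\ref{prop:fd78gd9sg8} is tailor-made to shift one $A_O$ (or $B_O$) argument of the inner product into a generic direction $\ket{\alpha}^{A_O}$ (or $\ket{\beta}^{B_O}$) where the reduced-subspace orthogonalities of Prop.\,\ref{prop:FabDecomposition} become applicable; once this shift is in place, everything else is routine bookkeeping on the decompositions in Prop.\,\ref{prop:DecompositionOfVab}, Prop.\,\ref{prop:DecompositionOfPab}, and Prop.\,\ref{prop:DeconpositionOfP}.
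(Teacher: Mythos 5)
Your proposal is correct, and after the shared opening move it takes a genuinely different route from the paper. Both arguments start identically: since $\ket{\tilde{\pi}^\square}^P \in \tilde{P}^\square \subset P^\square_{\alpha_0 \beta_0}$, the target vector lies in $\lnsp{V}^\square_{\alpha_0 \beta_0} = A_I \otimes B_I \otimes F^\square_{\alpha_0 \beta_0}$, so Prop.\,\ref{prop:DecompositionOfPab} supplies a preimage $\ket{\pi^\square_0}^P \in P^\square_{\alpha_0 \beta_0}$, and everything reduces to showing $\ket{\pi^\square_0}^P \in \tilde{P}^\square$. The paper establishes this through the intersection characterization $\tilde{P}^\square = \bigcap_{\alpha, \beta} P^\square_{\alpha \beta}$: it uses the norm-preservation corollary Prop.\,\ref{prop:fa79agg} to write ${}^{A_I B_I}\dbra{\Gamma} U \left( \ket{\tilde{\pi}^\square}^P \ket{\alpha_1}^{A_O} \ket{\beta_1}^{B_O} \right) = \sqrt{\lambda}\, \ket{f^\square_{\alpha_1 \beta_1}}^F$, and then a Cauchy--Schwarz saturation argument (an inner product of two unit vectors, evaluated via Prop.\,\ref{prop:fd78gd9sg8}, equals $1$) to identify $U \left( \ket{\pi^\square_0}^P \ket{\alpha_1}^{A_O} \ket{\beta_1}^{B_O} \right) = \dket{\Gamma^\prime}^{A_I B_I} \ket{f^\square_{\alpha_1 \beta_1}}^F \in \lnsp{V}^\square_{\alpha_1 \beta_1}$, hence membership in $P^\square_{\alpha_1 \beta_1}$ for every $(\alpha_1, \beta_1)$. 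You instead use the complement characterization from Prop.\,\ref{prop:DeconpositionOfP}: since $P = \tilde{P}^| \oplus \tilde{P}^\square \oplus \tilde{P}^-$ with $\tilde{P}^| = \bigplus_{\alpha} P^|_{\alpha \beta_0}$ and $\tilde{P}^- = \bigplus_{\beta} P^-_{\alpha_0 \beta}$, it suffices to show $\braket{q | \pi^\square_0} = 0$ for $\ket{q}^P$ in each $P^|_{\alpha \beta_0}$ (resp.\ $P^-_{\alpha_0 \beta}$), which you obtain from unitarity of $U$, the same transport identity Prop.\,\ref{prop:fd78gd9sg8} (legitimately applied with the arguments swapped, by conjugate symmetry of the inner product), and the orthogonality $F^|_{\alpha \beta_0} \perp F^\square_{\alpha \beta_0}$ (resp.\ $F^-_{\alpha_0 \beta} \perp F^\square_{\alpha_0 \beta}$) from Prop.\,\ref{prop:FabDecomposition}. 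The paper's route is more constructive---it pins down the image $U \left( \ket{\pi^\square_0}^P \ket{\alpha_1}^{A_O} \ket{\beta_1}^{B_O} \right)$ exactly at every $(\alpha_1, \beta_1)$, making the persistence of the product structure visible---whereas yours is shorter, avoids the normalization bookkeeping and the $\lambda = 0$ case split, and dispenses with Prop.\,\ref{prop:fa79agg}, at the cost of leaning on the full orthogonal decomposition of $P$, which is already available at this point so there is no circularity. Two small points you should make explicit in a final write-up: the degenerate cases $\ket{\alpha_0}^{A_O} = 0$ or $\ket{\beta_0}^{B_O} = 0$ are trivial because the target vector vanishes, and the rescaling of $\ket{\alpha}^{A_O}$ to match $\| \ket{\alpha_0}^{A_O} \|$ is harmless because $P^|_{\alpha \beta_0}$ depends only on $\vspan{\ket{\alpha}^{A_O}}$.
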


\begin{proof}
	Without loss of generality, assume $\ket{\tilde{\pi}^\square}^{P}$, $\dket{\Gamma}^{A_I B_I}$ and $\dket{\Gamma^\prime}^{A_I B_I}$ are normalized.
	Define the non-negative real number \ReviseFurther{$\lambda \geq 0$} and the normalized vector $\ket{{f}^\square_{\alpha_0 \beta_0}}^F \in F^\square_{\alpha_0 \beta_0}$ as
	\begin{equation}
		\label{eq:fa8dfg89a9ad}
		{}^{A_I B_I}\dbra{\Gamma} U \left( \ket{\tilde{\pi}^\square}^P \ket{\alpha_0}^{A_O} \ket{\beta_0}^{B_O} \right) =: \sqrt{\lambda} \ket{{f}^\square_{\alpha_0 \beta_0}}^F.
	\end{equation}
	When $\lambda = 0$, Eq.\,\eqref{eq:resusol} is satisfied. We assume $\lambda \neq 0$ in the following.
	It suffices to show $	\dket{\Gamma^\prime}^{A_I B_I} \ket{{f}^\square_{\alpha_0 \beta_0}}^F \in \tilde{\lnsp{V}}^\square_{\alpha_0 \beta_0}$.

	From $\dket{\Gamma^\prime}^{A_I B_I} \ket{{f}^\square_{\alpha_0 \beta_0}}^F \in A_I \otimes B_I \otimes F^\square_{\alpha_0 \beta_0} = \lnsp{V}^\square_{\alpha_0 \beta_0}$, there exists $\ket{{\pi^{\prime}}^{\square}_{\alpha_0 \beta_0}}^P \in P^\square_{\alpha_0 \beta_0}$ such that
	\begin{equation}
		\label{eq:f7d9a8fuugi}
		U \left( \ket{{\pi^{\prime}}^{\square}_{\alpha_0 \beta_0}}^P \ket{\alpha_0}^{A_O} \ket{\beta_0}^{B_O} \right) = \dket{\Gamma^\prime}^{A_I B_I} \ket{{f}^\square_{\alpha_0 \beta_0}}^F.
	\end{equation}
	We show that $\ket{{\pi^{\prime}}^{\square}_{\alpha_0 \beta_0}}^P \in P^\square_{\alpha_1 \beta_1}$ for all $\ket{\alpha_1}^{A_O} \in A_O$ and for all $\ket{\beta_1}^{B_O} \in B_O$.
	Without loss of generality, assume $\ket{\alpha_1}^{A_O}$ and $\ket{\beta_1}^{B_O}$ are normalized.
	Take $\ket{\tilde{\pi}^\square}^P \in \tilde{P}^\square $, Eq.\,\eqref{eq:fa8dfg89a9ad} and Prop.\,\ref{prop:fa79agg} imply
	$\left\| {}^{A_I B_I}\dbra{\Gamma} U \left( \ket{\tilde{\pi}^\square}^P \ket{\alpha_1}^{A_O} \ket{\beta_1}^{B_O} \right) \right\| = \sqrt{\lambda}$
	and thus there exists a normalized vector $\ket{{f}^\square_{\alpha_1 \beta_1}}^F \in F^\square_{\alpha_1 \beta_1}$ such that
	\begin{equation}
		\label{eq:8fyg9dlb}
		{}^{A_I B_I}\dbra{\Gamma} U \left( \ket{\tilde{\pi}^\square}^P \ket{\alpha_1}^{A_O} \ket{\beta_1}^{B_O} \right) = \sqrt{\lambda} \ket{{f}^\square_{\alpha_1 \beta_1}}^F.
	\end{equation}
	Therefore, we can calculate the inner product as
	\begin{align}
		  & \left( \dket{\Gamma^\prime}^{A_I B_I} \ket{{f}^\square_{\alpha_1 \beta_1}}^F, U \left( \ket{{\pi^{\prime}}^{\square}_{\alpha_0 \beta_0}}^P \ket{\alpha_1}^{A_O} \ket{\beta_1}^{B_O} \right) \right) \notag   \\
		= & \left( \ket{{f}^\square_{\alpha_1 \beta_1}}^F, {}^{A_I B_I}\dbra{\Gamma^\prime} U \left( \ket{{\pi^{\prime}}^{\square}_{\alpha_0 \beta_0}}^P \ket{\alpha_1}^{A_O} \ket{\beta_1}^{B_O} \right) \right) \notag \\
		= & \sqrt{\lambda^{-1}} \left( {}^{A_I B_I}\dbra{\Gamma} U \left( \ket{\tilde{\pi}^\square}^P \ket{\alpha_1}^{A_O} \ket{\beta_1}^{B_O} \right),
		{}^{A_I B_I}\dbra{\Gamma^\prime} U \left( \ket{{\pi^{\prime}}^{\square}_{\alpha_0 \beta_0}}^P \ket{\alpha_1}^{A_O} \ket{\beta_1}^{B_O} \right) \right) \notag \\
		= & \sqrt{\lambda^{-1}} \left( {}^{A_I B_I}\dbra{\Gamma} U \left( \ket{\tilde{\pi}^\square}^P \ket{\alpha_0}^{A_O} \ket{\beta_0}^{B_O} \right),
		{}^{A_I B_I}\dbra{\Gamma^\prime} U \left( \ket{{\pi^{\prime}}^{\square}_{\alpha_0 \beta_0}}^P
		\ket{\alpha_0}^{A_O} \ket{\beta_0}^{B_O} \right) \right) \quad (\because \mathrm{Prop.\,\ref{prop:fd78gd9sg8}}) \notag \\
		= & \sqrt{\lambda^{-1}} \left( \sqrt{\lambda} \ket{{f}^\square_{\alpha_0 \beta_0}}^F,\,
		\ket{{f}^\square_{\alpha_0 \beta_0}}^F \right) = 1.
	\end{align}
	Therefore, $U \left( \ket{{\pi^{\prime}}^{\square}_{\alpha_0 \beta_0}}^P \ket{\alpha_1}^{A_O} \ket{\beta_1}^{B_O} \right) = \dket{\Gamma^\prime}^{A_I B_I} \ket{{f}^\square_{\alpha_1 \beta_1}}^F$.
	Since $\dket{\Gamma^\prime}^{A_I B_I} \ket{{f}^\square_{\alpha_1 \beta_1}}^F \in A_I \otimes B_I \otimes F^\square_{\alpha_1 \beta_1} = \lnsp{V}^\square_{\alpha_1 \beta_1}$,
	it must be that
	$\ket{{\pi^{\prime}}^{\square}_{\alpha_0 \beta_0}}^P \in P^\square_{\alpha_1 \beta_1}$.
	Thus, we have shown $\ket{{\pi^{\prime}}^{\square}_{\alpha_0 \beta_0}}^P \in {P}^\square_{\alpha_1 \beta_1}$
	for all $\ket{\alpha_1}^{A_O} \in A_O$ and for all $\ket{\beta_1}^{B_O} \in B_O$, which implies $\ket{{\pi^{\prime}}^{\square}_{\alpha_0 \beta_0}}^P \in \tilde{P}^\square$. From this and Eq.\,\eqref{eq:f7d9a8fuugi}, we obtain
	$\dket{\Gamma^\prime}^{A_I B_I} \ket{{f}^\square_{\alpha_0 \beta_0}}^F \in \tilde{\lnsp{V}}^\square_{\alpha_0 \beta_0}$.
\end{proof}

Finally we show the proof of Prop.\,\ref{prop:FeqFABoplFBAoplFpal}.

\begin{proof} (Prop.\,\ref{prop:FeqFABoplFBAoplFpal})
	If $A_I \otimes B_I \otimes \redspn{\tilde{\lnsp{V}}^\square_{A_O B_O}}{A_I B_I}{F} = \tilde{\lnsp{V}}^\square_{A_O B_O}$,
	then we can show $\redspn{\tilde{\lnsp{V}}^\square_{A_O B_O}}{A_I B_I}{F} \perp \redspn{\tilde{\lnsp{V}}^|_{A_O B_O}}{A_I B_I}{F}$ and $\redspn{\tilde{\lnsp{V}}^\square_{A_O B_O}}{A_I B_I}{F} \perp  \redspn{\tilde{\lnsp{V}}^-_{A_O B_O}}{A_I B_I}{F}$ by using Cor.\,\ref{cor:F0perpF1=>V0=EtensorF0}
	with $\mathcal{E} = A_I \otimes B_I$, $\mathcal{F} = F$, $\lnsp{W}_0 = \tilde{\lnsp{V}}^\square_{A_O B_O}$ and
	$\lnsp{W}_1 = \tilde{\lnsp{V}}^|_{A_O B_O} \oplus \tilde{\lnsp{V}}^-_{A_O B_O}$.
	$A_I \otimes B_I \otimes \redspn{\tilde{\lnsp{V}}^\square_{A_O B_O}}{A_I B_I}{F} \supset \tilde{\lnsp{V}}^\square_{A_O B_O}$ is trivial. We show $A_I \otimes B_I \otimes \redspn{\tilde{\lnsp{V}}^\square_{A_O B_O}}{A_I B_I}{F} \subset \tilde{\lnsp{V}}^\square_{A_O B_O}$. Since we can transform
	\begin{align}
		\redspn{\tilde{\lnsp{V}}^\square_{A_O B_O}}{A_I B_I}{F}
		  & = \vspan{{}^{A_I B_I}\dbratket{\Gamma | \tilde{\eta}^\square}^{A_I B_I F} \relmiddle|  \tket{\tilde{\eta}^\square}^{A_I B_I F} \in \tilde{\lnsp{V}}^\square_{A_O B_O},\, \dket{\Gamma}^{A_I B_I} \in A_I \otimes B_I} \notag \\
		  & = \begin{multlined}
			\vspan{{}^{A_I B_I}\dbra{\Gamma} U \left( \ket{\tilde{\pi}^\square}^P \ket{\alpha}^{A_O} \ket{\beta}^{B_O} \right) \right. \\
				\left. \relmiddle|
				\ket{\tilde{\pi}^\square}^P \in \tilde{P}^\square,\, \ket{\alpha}^{A_O} \in A_O,\, \ket{\beta}^{B_O} \in B_O,\, \dket{\Gamma}^{A_I B_I} \in A_I \otimes B_I },
		\end{multlined} \notag
	\end{align}
	we have
	\begin{multline}
		A_I \otimes B_I \otimes \redspn{\tilde{\lnsp{V}}^\square_{A_O B_O}}{A_I B_I}{F} =
		\vspan{\dket{\Gamma^\prime}^{A_I B_I} \dbra{\Gamma} U \left( \ket{\tilde{\pi}^\square}^P \ket{\alpha}^{A_O} \ket{\beta}^{B_O} \right) \right. \\
			\left. \relmiddle|
			\ket{\tilde{\pi}^\square}^P \in \tilde{P}^\square,\, \ket{\alpha}^{A_O} \in A_O,\, \ket{\beta}^{B_O} \in B_O,\,
			\dket{\Gamma}^{A_I B_I},\,\dket{\Gamma^\prime}^{A_I B_I} \in A_I \otimes B_I}.
	\end{multline}
	From this and Prop.\,\ref{prop:g78awyjivb3}, we obtain
	$A_I \otimes B_I \otimes \redspn{\tilde{\lnsp{V}}^\square_{A_O B_O}}{A_I B_I}{F} \subset \tilde{\lnsp{V}}^\square_{A_O B_O}$.
\end{proof}

%%%%%%%%%%%%%%%%%%%%%%%%%%%%%%%%%%%%%%%%%%%%%%%%%%%%%%%%%%%%%%%%%%%%%%%%
%%%%%%%%%%%%%%%%%%%%%%%%      NEW SUBSECTION    %%%%%%%%%%%%%%%%%%%%%%%%
%%%%%%%%%%%%%%%%%%%%%%%%%%%%%%%%%%%%%%%%%%%%%%%%%%%%%%%%%%%%%%%%%%%%%%%%

\subsection{Causality}
\label{secApp:causality}

%%%%%%%%%%%%%%%%%%%%%%%%%%%%%%%%%%%%%%%%%%%%%%%%%%%%%%%%%%%%%%%%%%%%%%%%
%%%%%%%%%%%%%%%%%%%%%%%%      NEW SUBSECTION    %%%%%%%%%%%%%%%%%%%%%%%%
%%%%%%%%%%%%%%%%%%%%%%%%%%%%%%%%%%%%%%%%%%%%%%%%%%%%%%%%%%%%%%%%%%%%%%%%

Define the restriction $\tilde{U}^|$, $\tilde{U}^\square$, $\tilde{U}^-$ of $U$ as:
\begin{align}
	\tilde{U}^|       & := U|_{\tilde{P}^| \otimes A_O \otimes B_O},  \label{eq:defUtildevert} \\
	\tilde{U}^\square & := U|_{\tilde{P}^\square \otimes A_O \otimes B_O},                     \\
	\tilde{U}^-       & := U|_{\tilde{P}^- \otimes A_O \otimes B_O}.
\end{align}
Due to Eq.\,\eqref{VvTilde=AotimesBotimesFvTilde} - Eq.\,\eqref{VhTilde=AotimesBotimesFhTilde} in Prop.\,\ref{prop:DecompositionOfF}, $\Ima U^| = A_I \otimes B_I \otimes \tilde{F}^|$, $\Ima U^\square = A_I \otimes B_I \otimes \tilde{F}^\square$ and $\Ima U^- = A_I \otimes B_I \otimes \tilde{F}^-$ and we can define unitary operators
\begin{align}
	\tilde{U}^|       & \colon \tilde{P}^| \otimes A_O \otimes B_O \to A_I \otimes B_I \otimes \tilde{F}^|,             \\
	\tilde{U}^\square & \colon \tilde{P}^\square \otimes A_O \otimes B_O \to A_I \otimes B_I \otimes \tilde{F}^\square, \\
	\tilde{U}^-       & \colon \tilde{P}^- \otimes A_O \otimes B_O \to A_I \otimes B_I \otimes \tilde{F}^-.
\end{align}
Equation \eqref{eq:P=PABopPBAopPparallel} in Prop.\,\ref{prop:DeconpositionOfP} implies that $P \otimes A_O \otimes B_O = (\tilde{P}^| \otimes A_O \otimes B_O) \oplus (\tilde{P}^\square \otimes A_O \otimes B_O) \oplus(\tilde{P}^- \otimes A_O \otimes B_O)$ and $U$ can be decomposed as
$U = \tilde{U}^| \oplus \tilde{U}^\square \oplus \tilde{U}^-$.

\begin{prop}
	\label{prop:UtildevertisAprecB}
	$\tilde{U}^|$ and $\tilde{U}^-$ represent quantum combs of $A \prec B$ and $B \prec A$, respectively.
\end{prop}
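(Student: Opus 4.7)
The plan is to apply Cor.\,\ref{cor:N=2purecomb} separately to the restricted unitary operators $\tilde{U}^| \colon \tilde{P}^| \otimes A_O \otimes B_O \to A_I \otimes B_I \otimes \tilde{F}^|$ and $\tilde{U}^- \colon \tilde{P}^- \otimes A_O \otimes B_O \to A_I \otimes B_I \otimes \tilde{F}^-$. For $\tilde{U}^|$ to represent a comb of $A \prec B$, the corollary requires verifying, for all $\ket{\alpha}^{A_O} \in A_O$ and $\ket{\beta}^{B_O} \in B_O$, the two orthogonality relations
$\redspn{\tilde{\lnsp{V}}^|_{A_O \beta}}{A_I B_I}{\tilde{F}^|} \perp \redspn{\tilde{\lnsp{V}}^|_{A_O \overline{\beta}}}{A_I B_I}{\tilde{F}^|}$
and
$\redspn{\tilde{\lnsp{V}}^|_{\alpha B_O}}{A_I}{B_I \tilde{F}^|} \perp \redspn{\tilde{\lnsp{V}}^|_{\overline{\alpha} B_O}}{A_I}{B_I \tilde{F}^|}$.
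The analogous statements, with $A$ and $B$ swapped, must be shown for $\tilde{U}^-$.

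The second orthogonality follows essentially by restriction: since $\tilde{\lnsp{V}}^|_{\alpha B_O} \subset \lnsp{V}_{\alpha B_O}$ and similarly for $\overline{\alpha}$, the corresponding orthogonality of the reduced subspaces of $\lnsp{V}$ provided by Eq.\,\eqref{B} of (2$^\prime$) is inherited by $\tilde{\lnsp{V}}^|$. Moreover, Eq.\,\eqref{VvTilde=AotimesBotimesFvTilde} in Prop.\,\ref{prop:DecompositionOfF} gives $\tilde{\lnsp{V}}^|_{A_O B_O} = A_I \otimes B_I \otimes \tilde{F}^|$, so the reduced subspaces already sit inside $B_I \otimes \tilde{F}^|$ and no separate argument is required to pass from $F$ to $\tilde{F}^|$. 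The analogous condition for $\tilde{U}^-$ (now reducing $B_I$ instead) is obtained identically, invoking Eq.\,\eqref{C} of (2$^\prime$) together with Eq.\,\eqref{VhTilde=AotimesBotimesFhTilde}.

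For the first orthogonality, I would start from Prop.\,\ref{prop:AB00perp01}, which already provides $\redspn{\tilde{\lnsp{V}}^|_{\alpha \beta}}{A_I B_I}{F} \perp \redspn{\lnsp{V}_{\alpha \overline{\beta}}}{A_I B_I}{F}$ for each $\alpha$. Combining this with Eq.\,\eqref{A} of (2$^\prime$), which ensures that $\redspn{\lnsp{V}_{\alpha \beta}}{A_I B_I}{F} \perp \redspn{\lnsp{V}_{\alpha^\perp \overline{\beta}}}{A_I B_I}{F}$ for any $\ket{\alpha^\perp}^{A_O} \perp \ket{\alpha}^{A_O}$, and using bilinearity of $\lnsp{V}$ to decompose an arbitrary $\ket{\alpha'}^{A_O}$ along $\ket{\alpha}^{A_O}$ and $\ket{\alpha^\perp}^{A_O}$, one obtains $\redspn{\tilde{\lnsp{V}}^|_{\alpha \beta}}{A_I B_I}{F} \perp \redspn{\lnsp{V}_{A_O \overline{\beta}}}{A_I B_I}{F}$ for every $\alpha$. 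Taking the span over $\alpha$ then yields $\redspn{\tilde{\lnsp{V}}^|_{A_O \beta}}{A_I B_I}{F} \perp \redspn{\lnsp{V}_{A_O \overline{\beta}}}{A_I B_I}{F}$; since $\tilde{\lnsp{V}}^|_{A_O \overline{\beta}} \subset \lnsp{V}_{A_O \overline{\beta}}$ and the reduced subspaces already lie in $\tilde{F}^|$, this delivers the first condition. The corresponding statement for $\tilde{U}^-$ is derived symmetrically from the second half of Prop.\,\ref{prop:AB00perp01} together with Eq.\,\eqref{VhTilde=AotimesBotimesFhTilde}.

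Almost all the heavy lifting has been carried out in the preceding propositions, so no serious obstacle remains. The proof amounts to careful bookkeeping: identifying which orthogonality relations from (2$^\prime$) and Prop.\,\ref{prop:AB00perp01} to invoke, and verifying that the reduced subspaces can be restricted from the ambient $F$ to $\tilde{F}^|$ or $\tilde{F}^-$ thanks to the tensor decompositions of $\tilde{\lnsp{V}}^|_{A_O B_O}$ and $\tilde{\lnsp{V}}^-_{A_O B_O}$ established in Prop.\,\ref{prop:DecompositionOfF}.
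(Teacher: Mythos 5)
Your proposal is correct and takes essentially the same route as the paper, whose own proof is a one-line appeal to Cor.\,\ref{cor:N=2purecomb}, Prop.\,\ref{prop:AB00perp01} and Eq.\,\eqref{B}, with the restriction/monotonicity arguments left to the reader. You in fact supply bookkeeping the paper leaves implicit --- in particular that the cross terms with $\ket{\alpha'} \neq \ket{\alpha}$ in the first comb condition need Eq.\,\eqref{A} together with bi-additivity of $\lnsp{V}$ (the same decomposition the paper deploys in proving Prop.\,\ref{prop:FABperpFBA}), and that the reduced subspaces of $\tilde{\lnsp{V}}^|$ automatically land inside $\tilde{F}^|$ thanks to Prop.\,\ref{prop:DecompositionOfF}.
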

\begin{proof}
	\Add{
	By definitions Eq.\,\eqref{eq:defvtildevertab} and Eq.\,\eqref{eq:defUtildevert} of $\tilde{\lnsp{V}}^|_{A_{\mathrm{sub}}, B_{\mathrm{sub}}}$ and $\tilde{U}^|$,
	\begin{equation}
		\label{eq:VtildverteqUtildevertPtildevertab}
		\tilde{\lnsp{V}}^|_{A_{\mathrm{sub}}, B_{\mathrm{sub}}} = \tilde{U}^| \left( \tilde{P}^| \otimes A_{\mathrm{sub}} \otimes B_{\mathrm{sub}} \right).
	\end{equation}
	Let $\ket{\alpha}^{A_O} \in A_O$ and $\ket{\beta}^{B_O} \in B_O$. Prop.\,\ref{prop:AB00perp01} implies $\redspn{\tilde{\lnsp{V}}^|_{\alpha \beta}}{A_I B_I}{F} \perp \redspn{\tilde{\lnsp{V}}^|_{\alpha \overline{\beta}}}{A_I B_I}{F}$ and Eq.\,\eqref{A} implies $\redspn{\tilde{\lnsp{V}}^|_{\alpha \beta}}{A_I B_I}{F} \perp \redspn{\tilde{\lnsp{V}}^|_{\overline{\alpha} \overline{\beta}}}{A_I B_I}{F}$. Thus, considering $A_O = \alpha \oplus \overline{\alpha}$,
	\begin{equation}
		\label{eq:tildeborthforth}
		\redspn{\tilde{\lnsp{V}}^|_{A_O \beta}}{A_I B_I}{F} \perp \redspn{\tilde{\lnsp{V}}^|_{A_O \overline{\beta}}}{A_I B_I}{F}.
	\end{equation}
	Eq.\,\eqref{B} implies
	\begin{equation}
		\label{eq:tildeaorthbforth}
		\redspn{\tilde{\lnsp{V}}^|_{\alpha B_O}}{A_I}{B_I F} \perp \redspn{\tilde{\lnsp{V}}^|_{\overline{\alpha} B_O}}{A_I}{B_I F}.
	\end{equation}
	From Cor.\,\ref{cor:N=2purecomb}, Eq.\,\eqref{eq:VtildverteqUtildevertPtildevertab}, Eq.\,\eqref{eq:tildeborthforth}, and Eq.\,\eqref{eq:tildeaorthbforth} means that $\tilde{U}^|$ represents a quantum comb of $A \prec B$.
	}
	Similarly, we can show that $\tilde{U}^-$ represents a quantum comb of $B \prec A$.
\end{proof}
\begin{prop}
	$\tilde{U}^\square$ represents a quantum comb of $A \parallel B$.
\end{prop}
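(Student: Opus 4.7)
The plan is to apply Cor.\,\ref{cor:N=2purecomb} in both causal directions to verify that $\tilde{U}^\square$ is simultaneously a comb of $A \prec B$ and a comb of $B \prec A$; by definition, this makes it a comb of $A \parallel B$. For the $A \prec B$ direction, the two conditions to check (for all $\ket{\alpha}^{A_O}$, $\ket{\beta}^{B_O}$) are $\redspn{\tilde{\lnsp{V}}^\square_{A_O \beta}}{A_I B_I}{\tilde{F}^\square} \perp \redspn{\tilde{\lnsp{V}}^\square_{A_O \overline{\beta}}}{A_I B_I}{\tilde{F}^\square}$ and $\redspn{\tilde{\lnsp{V}}^\square_{\alpha B_O}}{A_I}{B_I \tilde{F}^\square} \perp \redspn{\tilde{\lnsp{V}}^\square_{\overline{\alpha} B_O}}{A_I}{B_I \tilde{F}^\square}$; the analogous pair with $A$ and $B$ swapped handles the $B \prec A$ direction. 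Since the image of $\tilde{U}^\square$ already lies in $A_I \otimes B_I \otimes \tilde{F}^\square$ by Prop.\,\ref{prop:DecompositionOfF}, these reductions can equivalently be computed inside $F$ throughout.

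The conditions involving only a single subsystem traced out are immediate: because $\tilde{\lnsp{V}}^\square_{\alpha B_O} \subset \lnsp{V}_{\alpha B_O}$ (and similarly for $\overline{\alpha}$) and reduced subspaces are monotone under inclusion, the orthogonality drops out of condition (C) of (2$'$) applied to the ambient $\lnsp{V}$. The $B \prec A$ analogue follows from condition (B) of (2$'$) in the same way.

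The main obstacle is each of the conditions with two subsystems traced out, since term-wise orthogonality need not survive a further reduction. The crucial structural input is that, by construction, $\tilde{\lnsp{V}}^\square_{\alpha \beta} \subset \lnsp{V}^\square_{\alpha \beta}$, hence $\redspn{\tilde{\lnsp{V}}^\square_{\alpha \beta}}{A_I B_I}{F} \subset F^\square_{\alpha \beta}$. Fixing an orthonormal basis $\{\ket{e_j}^{A_O}\}$ of $A_O$ and an orthonormal basis $\{\ket{\beta^\perp_k}^{B_O}\}$ of $\vspan{\ket{\beta}^{B_O}}^\perp$, bi-additivity of $\tilde{\lnsp{V}}^\square$ together with additivity of $\redspn{\cdot}{A_I B_I}{F}$ let me rewrite the two reduced subspaces as $\sum_j \redspn{\tilde{\lnsp{V}}^\square_{e_j \beta}}{A_I B_I}{F}$ and $\sum_{j',k} \redspn{\tilde{\lnsp{V}}^\square_{e_{j'} \beta^\perp_k}}{A_I B_I}{F}$, with each summand contained in the corresponding $F^\square_{e_j \beta}$ or $F^\square_{e_{j'} \beta^\perp_k}$.

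I then verify the pairwise orthogonalities of these summands in two cases. When $j = j'$, Eq.\,\eqref{eq:F|abandFsqabperpFanotb} of Prop.\,\ref{prop:FabDecomposition} gives $F^\square_{e_j \beta} \perp \redspn{\lnsp{V}_{e_j \overline{\beta}}}{A_I B_I}{F}$, and the latter contains $F^\square_{e_j \beta^\perp_k}$ since $\vspan{\ket{\beta^\perp_k}} \subset \overline{\beta}$. When $j \neq j'$, condition (A) of (2$'$) applied to the orthogonal pairs $e_j \perp e_{j'}$ and $\beta \perp \beta^\perp_k$ yields $\redspn{\lnsp{V}_{e_j \beta}}{A_I B_I}{F} \perp \redspn{\lnsp{V}_{e_{j'} \beta^\perp_k}}{A_I B_I}{F}$, which contain $F^\square_{e_j \beta}$ and $F^\square_{e_{j'} \beta^\perp_k}$ respectively. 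Summing these term-by-term orthogonalities delivers the required condition, and the symmetric two-reduction condition for $B \prec A$ is handled identically by fixing an orthonormal basis of $B_O$ in place of $A_O$.
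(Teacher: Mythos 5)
Your proof is correct and takes essentially the same route as the paper: the paper's own one-sentence proof combines Cor.\,\ref{cor:N=2purecomb} with Prop.\,\ref{prop:AB00perp01sq} and conditions \eqref{C} and \eqref{B}, and your diagonal-case argument via Prop.\,\ref{prop:FabDecomposition} merely inlines the content of Prop.\,\ref{prop:AB00perp01sq}. Your bi-additivity expansion over orthonormal bases and the off-diagonal appeal to condition \eqref{A} correctly supply exactly the details the paper leaves implicit.
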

\begin{proof}
	\Add{
		Similarly as the proof of Prop.\,\ref{prop:UtildevertisAprecB}, we can obtain by definitions of $\tilde{\lnsp{V}}^\square(A_{\mathrm{sub}}, B_{\mathrm{sub}})$ and $\tilde{U}^\square$,
		\begin{equation}
			\tilde{\lnsp{V}}^\square(A_{\mathrm{sub}}, B_{\mathrm{sub}}) = \tilde{U}^\square \left( \tilde{P}^\square \otimes A_{\mathrm{sub}} \otimes B_{\mathrm{sub}} \right)
		\end{equation}
		and from Prop.\,\ref{prop:AB00perp01sq}
		\begin{equation}
			\redspn{\tilde{\lnsp{V}}^\square_{A_O \beta}}{A_I B_I}{F} \perp \redspn{\tilde{\lnsp{V}}^\square_{A_O \overline{\beta}}}{A_I B_I}{F}.
		\end{equation}
		Eq.\,\eqref{B} implies
		\begin{equation}
			\redspn{\tilde{\lnsp{V}}^\square_{\alpha B_O}}{A_I}{B_I F} \perp \redspn{\tilde{\lnsp{V}}^\square_{\overline{\alpha} B_O}}{A_I}{B_I F}.
		\end{equation}
		Therefore, $\tilde{U}^\square$ represents a quantum comb of $A \prec B$.
		Similarly, we can show that $\tilde{U}^\square$ represents a quantum comb of $B \prec A$.
	}
\end{proof}

Now, we define
$P^{A \prec B} := \tilde{P}^| \oplus \tilde{P}^\square$, $P^{B \prec A} := \tilde{P}^-$,
$F^{A \prec B} := \tilde{F}^| \oplus \tilde{F}^\square$, $F^{B \prec A} := \tilde{F}^-$,
$U^{A \prec B} := \tilde{U}^| \oplus \tilde{U}^\square$, $U^{B \prec A} := \tilde{U}^-$. Then we obtain the following proposition.
\begin{prop}
	\label{prop:finaleqs}
	\begin{equation}
		\label{eq:finaleqsDecompositionOfP}
		P = P^{A \prec B} \oplus P^{B \prec A},
	\end{equation}
	\begin{equation}
		F = F^{A \prec B} \oplus F^{B \prec A},
	\end{equation}
	\begin{equation}
		U = U^{A \prec B} \oplus U^{B \prec A},
	\end{equation}
	\begin{equation}
		U^{A \prec B} \colon P^{A \prec B} \otimes A_O \otimes B_O \to A_I \otimes B_I \otimes F^{A \prec B},
	\end{equation}
	\begin{equation}
		U^{B \prec A} \colon P^{B \prec A} \otimes A_O \otimes B_O \to A_I \otimes B_I \otimes F^{B \prec A},
	\end{equation}
	\begin{equation}
		U^{A \prec B} \text{ represents a quantum comb of } A \prec B,
	\end{equation}
	\begin{equation}
		\label{eq:finaleqsUBtoAisQComb}
		U^{B \prec A} \text{ represents a quantum comb of } B \prec A.
	\end{equation}
\end{prop}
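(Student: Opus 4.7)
The plan is to assemble Prop.\,\ref{prop:finaleqs} by gluing together the three pieces $\tilde{U}^|$, $\tilde{U}^\square$, $\tilde{U}^-$ that were established in Sec.\,\ref{sec:causality}. The decompositions of $P$ and $F$ are immediate from Prop.\,\ref{prop:DeconpositionOfP} and Prop.\,\ref{prop:DecompositionOfF} together with the \emph{definitions} $P^{A \prec B} := \tilde{P}^| \oplus \tilde{P}^\square$, $P^{B \prec A} := \tilde{P}^-$, $F^{A \prec B} := \tilde{F}^| \oplus \tilde{F}^\square$ and $F^{B \prec A} := \tilde{F}^-$; these give Eq.\,\eqref{eq:finaleqsDecompositionOfP} and its $F$-analogue. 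The domain/codomain statements for $U^{A \prec B}$ and $U^{B \prec A}$ then follow by unitarity of $\tilde{U}^|$, $\tilde{U}^\square$ and $\tilde{U}^-$ as established just before the proposition. The splitting $U = U^{A \prec B} \oplus U^{B \prec A}$ is merely a regrouping of $U = \tilde{U}^| \oplus \tilde{U}^\square \oplus \tilde{U}^-$.

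The substantive content is to verify that $U^{A \prec B} = \tilde{U}^| \oplus \tilde{U}^\square$ represents a quantum comb of $A \prec B$ (the claim for $U^{B \prec A}=\tilde{U}^-$ is immediate since it coincides with $\tilde{U}^-$, which was already shown to be a comb of $B \prec A$). My approach is to use the characterisation in Cor.\,\ref{cor:N=2purecomb}: it suffices to prove, for $U^{A \prec B}$, the two reduced-subspace orthogonality conditions Eq.\,\eqref{eq:bperp->fperp} and Eq.\,\eqref{eq:abperp->bfperp}. Denoting by $\lnsp{V}^{A \prec B}(A_{\mathrm{sub}}, B_{\mathrm{sub}}) := U^{A \prec B}(P^{A \prec B} \otimes A_{\mathrm{sub}} \otimes B_{\mathrm{sub}})$ the subspace generated on the output side, one has $\lnsp{V}^{A \prec B}_{A_{\mathrm{sub}} B_{\mathrm{sub}}} = \tilde{\lnsp{V}}^|_{A_{\mathrm{sub}} B_{\mathrm{sub}}} \oplus \tilde{\lnsp{V}}^\square_{A_{\mathrm{sub}} B_{\mathrm{sub}}}$, and therefore its reductions distribute over the direct sum.

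The key step is then showing $\redspn{\lnsp{V}^{A \prec B}_{\alpha B_O}}{A_I}{B_I F} \perp \redspn{\lnsp{V}^{A \prec B}_{\overline{\alpha} B_O}}{A_I}{B_I F}$ and $\redspn{\lnsp{V}^{A \prec B}_{A_O \beta}}{B_I}{A_I F} \perp \redspn{\lnsp{V}^{A \prec B}_{A_O \overline{\beta}}}{B_I}{A_I F}$. The individual orthogonality for each summand is already available: $\tilde{U}^|$ being a comb of $A \prec B$ gives both, and $\tilde{U}^\square$ being a comb of $A \parallel B$ gives both for $\tilde{\lnsp{V}}^\square$. What remains is the \emph{cross} orthogonality between a reduction from $\tilde{\lnsp{V}}^|$ and one from $\tilde{\lnsp{V}}^\square$. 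The hard part will be precisely this cross check; however, it follows from the stronger orthogonality proved in Prop.\,\ref{prop:FeqFABoplFBAoplFpal}, namely $\tilde{F}^| \perp \tilde{F}^\square$, together with Prop.\,\ref{prop:DecompositionOfF}, which gives $\tilde{\lnsp{V}}^|_{A_O B_O} = A_I \otimes B_I \otimes \tilde{F}^|$ and $\tilde{\lnsp{V}}^\square_{A_O B_O} = A_I \otimes B_I \otimes \tilde{F}^\square$: hence any partial reduction of the first is orthogonal (in the complementary factor) to any partial reduction of the second. This reduces the cross term to zero and closes the verification.

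Finally, once Cor.\,\ref{cor:N=2purecomb} is invoked, $U^{A \prec B}$ qualifies as a quantum comb of $A \prec B$, which together with the already-known fact that $U^{B \prec A}=\tilde{U}^-$ is a quantum comb of $B \prec A$ establishes Eqs.\,\eqref{eq:finaleqsUBtoAisQComb} and its $A \prec B$ counterpart, completing Prop.\,\ref{prop:finaleqs}. I expect no further obstacle beyond bookkeeping of the direct-sum structure; the genuinely non-trivial work has been accomplished in Prop.\,\ref{prop:FABperpFBA} and Prop.\,\ref{prop:FeqFABoplFBAoplFpal}.
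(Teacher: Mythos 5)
Your proposal is correct and takes essentially the same route as the paper: the paper obtains Prop.\,\ref{prop:finaleqs} simply by defining $P^{A \prec B} := \tilde{P}^| \oplus \tilde{P}^\square$, $U^{A \prec B} := \tilde{U}^| \oplus \tilde{U}^\square$, etc., after the two propositions of \ref{sec:causality} establish that $\tilde{U}^|$, $\tilde{U}^\square$, $\tilde{U}^-$ are combs of $A \prec B$, $A \parallel B$, $B \prec A$; the gluing step you make explicit --- that the direct sum of the $|$ and $\square$ pieces is again a comb of $A \prec B$, with cross terms killed by $\tilde{F}^| \perp \tilde{F}^\square$ and the product form of $\tilde{\lnsp{V}}^|_{A_O B_O}$, $\tilde{\lnsp{V}}^\square_{A_O B_O}$ from Prop.\,\ref{prop:DecompositionOfF} --- is precisely what the paper leaves implicit, so your write-up is if anything more complete. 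One transcription slip to fix: the second condition you display, $\redspn{\lnsp{V}^{A \prec B}_{A_O \beta}}{B_I}{A_I F} \perp \redspn{\lnsp{V}^{A \prec B}_{A_O \overline{\beta}}}{B_I}{A_I F}$, is \emph{weaker} than what Cor.\,\ref{cor:N=2purecomb} requires, namely Eq.\,\eqref{eq:bperp->fperp}, i.e.\ $\redspn{\lnsp{V}^{A \prec B}_{A_O \beta}}{A_I B_I}{F} \perp \redspn{\lnsp{V}^{A \prec B}_{A_O \overline{\beta}}}{A_I B_I}{F}$ (orthogonality of the finer $F$-reductions implies that of the coarser $A_I F$-reductions, not conversely), so verifying only your displayed version would not license the corollary. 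The slip is harmless because your argument in fact delivers the stronger condition: the diagonal terms satisfy Eq.\,\eqref{eq:bperp->fperp} since $\tilde{U}^|$ and $\tilde{U}^\square$ are themselves combs of $A \prec B$ (the latter via $A \parallel B$), and your cross terms land in the mutually orthogonal $\tilde{F}^|$ and $\tilde{F}^\square$, which is orthogonality already at the level of the $F$-reductions.
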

\noindent Therefore, $U$ represents a direct sum of pure combs, that is, the condition (3) in Thm.\,\ref{thm:equivalent},
which completes the proof of (2) $\implies$ (3) in Thm.\,\ref{thm:equivalent}. \qed

%%%%%%%%%%%%%%%%%%%%%%%%%%%%%%%%%%%%%%%%%%%%%%%%%%%%%%%%%%%%%%%%%%%%%%%%
%%%%%%%%%%%%%%%%%%%%%%%%       END DOCUMENT     %%%%%%%%%%%%%%%%%%%%%%%%
%%%%%%%%%%%%%%%%%%%%%%%%%%%%%%%%%%%%%%%%%%%%%%%%%%%%%%%%%%%%%%%%%%%%%%%%

\end{document}